\tikzset{>={Latex[width=1mm,length=1mm]}}
\definecolor{lightgray}{gray}{0.7}
\definecolor{medgray}{gray}{0.55}
\definecolor{darkgray}{gray}{0.4}
\definecolor{pastelblue}{rgb}{0.6, 0.8, 0.91}
\definecolor{pastelred}{rgb}{1.0, 0.41, 0.38}
\definecolor{darkpastelgreen}{rgb}{0.01, 0.75, 0.24}
\definecolor{pastelyellow}{rgb}{0.01, 0.75, 0.24}
\definecolor{shade1}{rgb}{0.87, 0.91, 0.92} 
\definecolor{shade2}{rgb}{1, 1, 0.88} 
\definecolor{shade3}{rgb}{0.69, 0.94, 0.92} 
\definecolor{shade4}{rgb}{1, 0.79, 0.78} 
\definecolor{darkpastelblue}{rgb}{0.27, 0.47, 0.70}
\definecolor{darkpastelred}{rgb}{0.59, 0.18, 0.10}
\definecolor{darkpastelbrown}{rgb}{0.39, 0.31, 0.25}
\definecolor{darkpastelmagenta}{rgb}{0.59, 0.44, 0.84}
\tikzset{fontscale/.style = {font=\relsize{#1}}}
\newcommand{\mysquare}[3]{
\node [rectangle, fill=#3,minimum size=4.75mm] () at (#1,#2) {};
}
\newcommand{\mytwocoloursquare}[2]{
	\fill [pastelred]  (#1-0.5, #2-0.5) -- (#1-0.5, #2+0.5) -- (#1+0.5,#2+0.5) -- cycle;
	\fill [pastelblue] (#1-0.5, #2-0.5) -- (#1+0.5, #2-0.5) -- (#1+0.5,#2+0.5) -- cycle;
	\draw[-] (#1-0.5,#2-0.5) -- (#1+0.5,#2+0.5);
	\node [rectangle, draw=black, minimum size=4.75mm] () at (#1,#2) {};
}
\newcommand{\myuparrow}[3]{
	\mysquare{#1}{#2}{lightgray}
	\draw[->, white] (#1,#2-0.25) -- ($(#1,#2-0.25)!0.5cm!(#1,#2+1)$);
}
\newcommand{\mydnarrow}[3]{
	\draw[->, darkgray] (#1,#2+0.25) -- ($(#1,#2+0.25)!0.5cm!(#1,#2-1)$);
}
\newcommand{\myrightarrow}[3]{
	\mysquare{#1}{#2}{lightgray}
	\draw[->, white] (#1-0.25,#2) -- ($(#1-0.25,#2)!0.5cm!(#1+1,#2)$);
}
\newcommand{\myleftarrow}[3]{
	\draw[->, darkgray] (#1+0.25,#2) -- ($(#1+0.25,#2)!0.5cm!(#1-1,#2)$);
}
\newcommand{\zeroS}[3]{
	\node[white, fill=#3, minimum size=4.75mm] at (#1,#2) {$0$};
}
\newcommand{\updncol}[2]{ 
	\pgfmathsetmacro{\limbelow}{#2 - 1}
	\pgfmathsetmacro{\limabove}{#2 + 1}

	\foreach \j in {1,...,\limbelow}
		\myuparrow{#1}{\j}{black};
	\zeroS{#1}{#2}{pastelblue};
	\foreach \j in {\limabove,...,8}
		\mydnarrow{#1}{\j}{black};
}
\newcommand{\leftrightcol}[2]{ 
	\pgfmathsetmacro{\limleft}{#1 - 1}
	\pgfmathsetmacro{\limright}{#1 + 1}

	\foreach \i in {1,...,\limleft}
		\myrightarrow{\i}{#2}{black};
	\zeroS{#1}{#2}{pastelred};
	\foreach \i in {\limright,...,10}
		\myleftarrow{\i}{#2}{black};
}
\newcommand{\patharrow}[4]{
\draw[->, thick] (#1,#2) -- (#3,#4);
}
\setlist[itemize]{leftmargin=*}
\newtheorem{conjecture}{Conjecture}
\newtheorem{theorem}{Theorem}
\newtheorem{lemma}[theorem]{Lemma}
\newtheorem{definition}[theorem]{Definition}
\newtheorem{observation}[theorem]{Observation}
\newtheorem{corollary}[theorem]{Corollary}
\title{\huge End of Potential Line}
\author[1]{John Fearnley}
\author[2]{Spencer Gordon}
\author[3]{Ruta Mehta}
\author[1]{Rahul Savani}
\affil[1]{University of Liverpool\\
  \texttt{\{john.fearnley, rahul.savani\}@liverpool.ac.uk}}
\affil[2]{California Institute of Technology\\
\texttt{slgordon@caltech.edu}}
\affil[3]{University of Illinois at Urbana-Champaign\\
  \texttt{rutamehta@cs.illinois.edu}}
\newcommand{\tf}{{\tilde{f}}}
\newcommand{\mz}{{\times\zeta}}
\newcommand{\size}{{\mbox{size}}}
\renewcommand{\ss}{{\mathbf s}}
\renewcommand{\tt}{{\mathbf t}}
\newcommand{\yy}{{\mathbf y}}
\newcommand{\hh}{{\mathbf h}}
\newcommand{\ww}{{\mathbf w}}
\newcommand{\bb}{{\mathbf b}}
\newcommand{\ra}{\rightarrow}
\long\def\symbolfootnote[#1]#2{\begingroup%
\def\thefootnote{\fnsymbol{footnote}}\footnote[#1]{#2}\endgroup}
\DeclareMathOperator{\Slice}{Slice}
\DeclareMathOperator{\DSlice}{DSlice}
\def\cc#1{\mathsf{#1}}
\def\CLS{\ensuremath{\cc{CLS}}\xspace}
\def\P{\ensuremath{\cc{P}}\xspace}
\def\NP{\ensuremath{\cc{NP}}\xspace}
\def\coNP{\ensuremath{\cc{coNP}}\xspace}
\def\NPcoNP{\ensuremath{\cc{NP} \cap \cc{coNP}}\xspace}
\def\TFNP{\ensuremath{\cc{TFNP}}\xspace}
\def\PPAD{\ensuremath{\cc{PPAD}}\xspace}
\def\PLS{\ensuremath{\cc{PLS}}\xspace}
\def\PPADPLS{\ensuremath{\cc{PPAD} \cap \cc{PLS}}\xspace}
\def\EOPLc{\ensuremath{\cc{EOPL}}\xspace}
\def\UniqueEOMLc{\ensuremath{\cc{UniqueEOML}}\xspace}
\def\UniqueEOPLc{\ensuremath{\cc{UniqueEOPL}}\xspace}
\newcommand{\LinearFIXP}{\ensuremath{\cc{LinearFIXP}}\xspace}
\def\problem#1{{\scshape #1}}
\def\CM{\problem{Contraction}\xspace}
\def\LCM{\problem{PL-Contraction}\xspace}
\def\DCM{\problem{Discrete-Contraction}\xspace}
\def\GCM{\problem{GeneralContraction}\xspace}
\def\MMCM{\problem{MetametricContraction}\xspace}
\def\EOL{\problem{EndOfLine}\xspace}
\def\EOPL{\problem{EndOfPotentialLine}\xspace}
\def\UFEOPL{\problem{UniqueForwardEOPL}\xspace}
\def\EOML{\problem{EndOfMeteredLine}\xspace}
\def\SOVL{\problem{SinkOfVerifiableLine}\xspace}
\def\CLO{\problem{ContinuousLocalOpt}\xspace}
\def\PromisePLCP{\problem{Promise-P-LCP}\xspace}
\def\PLCP{\problem{P-LCP}\xspace}
\def\ContractionMap{\problem{ContractionMap}\xspace}
\def\MBanach{\problem{MetricBanach}\xspace}
\def\e{\epsilon}
\def\eps{\varepsilon}
\def\ite{\mbox{ItoE}}
\def\eti{\mbox{EtoI}}
\def\pot{\mbox{$V$}}
\def\isvalid{\mbox{IsValid}}
\def\PLo{\mbox{Q1}\xspace}
\def\PLt{\mbox{Q2}\xspace}
\def\Real{\mathbb{R}}
\def\Integer{\mathbb{Z}}
\def\Natural{\mathbb{N}}
\def\Rational{\mathbb{Q}}
\let\N\Natural
\let\R\Real
\def\poly{\operatorname{poly}}
\def\Set#1{\left\{ #1 \right\}}
\def\Abs#1{\left| #1 \right|}
\def\Card#1{\left| #1 \right|}
\def\Norm#1{\left\| #1 \right\|}
\def\Paren#1{\left( #1 \right)}		
\def\Brack#1{\left[ #1 \right]}		
\def\Bigbar#1{\mathrel{\left|\vphantom{#1}\right.\n@space}}
\def\Setbar#1#2{\Set{#1 \Bigbar{#1 #2} #2}}
\def\vert{\operatorname{\mathsf{vert}}}
\def\begin@lgo{\begin{minipage}{1in}\begin{tabbing}
		\quad\=\qquad\=\qquad\=\qquad\=\qquad\=\qquad\=\qquad\=\qquad\=\qquad\=\qquad\=\qquad\=\qquad\=\qquad\=\kill}
\def\end@lgo{\end{tabbing}\end{minipage}}
\newcommand{\CPol}{\mbox{${\mathcal P}$}}
\newcommand{\CI}{\mbox{${\mathcal I}$}}
\newcommand{\CL}{\mbox{${\mathcal L}$}}
\newcommand{\CE}{\mbox{${\mathcal E}$}}
\newcommand{\CQ}{\mbox{${\mathcal Q}$}}
\newcommand{\uu}{\mbox{\boldmath $u$}}
\newcommand{\vv}{\mbox{\boldmath $v$}}
\newcommand{\qq}{\mbox{\boldmath $q$}}
\newcommand{\xx}{\mbox{\boldmath $x$}}
\newcommand{\one}{\mbox{\boldmath $1$}}
\newcommand{\ones}{\mbox{\boldmath $1$}}
\newcommand{\zeros}{\mbox{\boldmath $0$}}
\newcommand{\MM}{\mbox{$M$}}
\newcommand{\ps}{\mbox{\boldmath $s$}}
\newcommand{\pq}{\mbox{\boldmath $q$}}
\newcommand{\blank}{\ensuremath{\mathtt{*}}}
\newcommand{\up}{\ensuremath{\mathsf{up}}}
\newcommand{\down}{\ensuremath{\mathsf{down}}}
\newcommand{\zero}{\ensuremath{\mathsf{zero}}}
\newcommand{\vblank}{\ensuremath{\mathtt{-}}}
\DeclareMathOperator{\point}{Point}
\DeclareMathOperator{\fixed}{fixed}
\DeclareMathOperator{\free}{free}
\DeclareMathOperator{\FindFP}{\mbox{\textsc{FindFP}}}
\DeclareMathOperator{\ApproxFindFP}{\mbox{\textsc{ApproxFindFP}}}
\newcommand{\restr}[2]{#1_{\left|#2\right.}}
\newcommand{\Restr}[2]{\tilde{#1}_{\left|#2\right.}}
\begin{document}

\maketitle
\thispagestyle{empty}

\begin{abstract}
We introduce the problem \EOPL and the corresponding complexity class \EOPLc of
all problems that can be reduced to it in polynomial time. This class captures
problems that admit a single combinatorial proof of their joint membership in
the complexity classes \PPAD of fixpoint problems and \PLS of local search
problems. \EOPLc is a combinatorially-defined alternative to the class \CLS (for
Continuous Local Search), which was introduced in
\cite{daskalakis2011continuous} with the goal of capturing the complexity of
some well-known problems in $\PPAD \cap \PLS$ that have resisted, in some cases
for decades, attempts to put them in polynomial time. Two of these are \CM, the
problem of finding a fixpoint of a contraction map, and \PLCP,
the problem of solving a P-matrix Linear Complementarity Problem.

We show that \EOPL is in \CLS via a two-way reduction to \EOML. The latter was
defined in~\cite{hubavcek2017hardness} to show query and cryptographic 
lower bounds for \CLS.
Our two main results are to show that both \LCM (Piecewise-Linear \CM) 
and \PLCP are in \EOPLc.
Our reductions imply that the promise versions of
\LCM and \PLCP are in the promise class \UniqueEOPLc, which corresponds to
the case of a single potential line. 
This also shows that simple-stochastic, discounted, mean-payoff, and parity
games are in \EOPLc.

Using the insights from our reduction for \LCM, we obtain the first
polynomial-time algorithms for finding fixed points of 
contraction maps in fixed dimension for any $\ell_p$
norm, where previously such algorithms were only known for the $\ell_2$ and $\ell_\infty$ norms.
Our reduction from \PLCP to \EOPL allows a technique of Aldous to be applied, which
in turn gives the fastest-known randomized algorithm for~\PLCP.
	
\end{abstract}

\newpage
\thispagestyle{empty}


\tableofcontents

\newpage

\clearpage
\pagenumbering{arabic} 

\section{Introduction}

\textbf{Total function problems in NP.} The complexity class \TFNP
contains search problems that are guaranteed to have a solution, and whose
solutions can be verified in polynomial time~\cite{megiddo1991total}.
While it is a semantically defined complexity class and thus unlikely to
contain complete problems, a number of syntactically defined subclasses of
\TFNP have proven very successful at capturing the complexity of total search 
problems.
In this paper, we focus on two in particular, \PPAD and \PLS.
The class \PPAD was introduced in
\cite{papadimitriou1994complexity} to capture the difficulty of problems
that are guaranteed total by a parity argument. It has attracted intense
attention in the past decade, culminating in a series of papers showing that the
problem of computing a Nash-equilibrium in two-player games is \PPAD-complete
\cite{chen2009settling,daskalakis2009complexity}, and more recently a conditional
lower bound that rules out a PTAS for the problem~\cite{Rubinstein16}.
No polynomial-time algorithms for \PPAD-complete problems are known, and recent work
suggests that no such algorithms are likely to exist~\cite{BPR15,garg2016revisiting}. 
\PLS is the class of problems that
can be solved by local search algorithms (in perhaps exponentially-many steps).
It has also attracted much interest since it was introduced in
\cite{johnson1988easy}, and looks similarly unlikely to have polynomial-time
algorithms. Examples of problems that are complete for \PLS include the problem
of computing a pure Nash equilibrium in a congestion
game~\cite{fabrikant2004complexity}, a locally optimal max
cut~\cite{schaffer1991simple}, or a stable outcome in a hedonic game~\cite{GairingS10}.

\medskip

\noindent \textbf{Continuous Local Search.} 
If a problem lies in both \PPAD and \PLS then it is unlikely to be complete for 
either class, since this would imply an extremely surprising containment of one class in the other.
In their 2011 paper~\cite{daskalakis2011continuous}, Daskalakis and
Papadimitriou observed that there are several prominent total function problems
in \PPADPLS for which researchers have not been able to design polynomial-time
algorithms. Motivated by this they introduced
the class $\CLS$, a syntactically defined subclass of $\cc{PPAD} \cap \cc{PLS}$.
\CLS is intended to capture the class of optimization problems over a continuous
domain in which a continuous potential function is being minimized and the
optimization algorithm has access to a polynomial-time continuous improvement
function.
They showed that many classical problems of unknown
complexity are in $\CLS$, including the problem of solving a simple
stochastic game, the more general problems of solving a Linear Complementarity
Problem with a P-matrix, and finding an approximate fixpoint
to a contraction map. Moreover, $\CLS$ is the smallest known subclass of
$\cc{TFNP}$ not known to be in \P, and hardness results for it imply hardness
results for $\cc{PPAD}$ and $\cc{PLS}$ simultaneously. 

Recent work by Hub\'{a}\v{c}ek and Yogev~\cite{hubavcek2017hardness} proved 
lower bounds for \CLS. They introduced
a problem known as \EOML which they showed was in \CLS, and for which they
proved a query complexity lower bound of $\Omega(2^{n/2}/\sqrt{n})$ and
hardness under the assumption that there were one-way permutations and
indistinguishability obfuscators for problems in $\cc{P_{/poly}}$.
Another recent result showed that the search version of the Colorful
Carath\'eodory Theorem is in $\cc{PPAD} \cap \cc{PLS}$, and left open
whether the problem is also in $\CLS$~\cite{colorfulcara2017}.

Until recently, it was not known whether there was a natural \CLS-complete problem 
.
In their original paper, Daskalakis and Papadimitriou suggested two natural
candidates for \CLS-complete problems, \ContractionMap and \PLCP, which we 
study in this paper.
Recently, two variants of \ContractionMap have been shown to be \CLS-complete.
Whereas in the original definition of \ContractionMap it is assumed that 
an $\ell_p$ or $\ell_\infty$ norm is fixed, and the contraction property 
is measured with respect to the metric induced by this fixed norm, 
in these two new complete variants, a metric~\cite{DTZ17} and
meta-metric~\cite{FGMS17} 
are given as input to the problem\footnote{
The result for meta-metrics appeared in an unpublished earlier version of 
this paper~\cite{FGMS17}. We include it Section~\ref{sec:MMCMisCLScomplete} with a 
comparison to~\cite{DTZ17}. This current version
of our paper has new results and a corresponding different title.}. 

\medskip

\noindent
\textbf{Our contribution.}
Our main conceptual contribution is to define the problem \EOPL, which unifies
in an extremely natural way the circuit-based views of \PPAD and of \PLS.
The canonical
\PPAD-complete problem is \EOL, a problem that provides us with an
exponentially large graph consisting of lines and cycles, and asks us to find
the end of one of the lines. The canonical \PLS-complete problem provides us
with an exponentially large DAG, whose acyclicity is guaranteed by the
existence of a \emph{potential function} that increases along each edge.
The problem \EOPL is an instance of \EOL that \emph{also} has a potential
function that increases along each edge.

We define the corresponding complexity class \EOPLc, consisting of all problems
that can be reduced to \EOPL. This class captures problems that admit a
\emph{single} combinatorial proof of their joint membership in the classes \PPAD
of fixpoint problems and \PLS of local search problems, and is a
combinatorially-defined alternative to the class~\CLS. 


The problem \EOML was introduced in~\cite{hubavcek2017hardness} to show query
and cryptographic lower bounds for \CLS. We show that \EOPL is contained in \CLS via a
two-way reduction to \EOML, and in doing so,
we show that \EOPLc is the closest class to \P among the syntactic sub-classes of \TFNP.
\EOML captures problems that have a \PPAD directed
path structure where it is possible to keep count of \emph{exactly} how far each
vertex is from the start of the path. In a sense, this may seem rather
unnatural, as many common problems do not seem to have this property. However,
our reduction from \EOPL to \EOML shows that the latter is more general than it 
may at first seem.
The reduction from \EOML to \EOPLc implies that the query 
and cryptographic lower bounds for \CLS also hold for~\EOPLc.

For our two main technical results, we show that two problems with
long-standing unresolved complexity belong to \EOPLc. These problems are:

\begin{itemize}
\item The P-matrix Linear Complementarity Problem (\PLCP).
Designing a polynomial-time algorithm for this problem has been open for
decades, at least since the 1978 paper of Murty~\cite{murty1978computational}
that provided exponential-time examples for \emph{Lemke's algorithm}~\cite{lemke1965bimatrix} 
for P-matrix LCPs. 

\item Finding a fixpoint of a Piecewise-Linear Contraction Map (\LCM).
The problem of finding a fixpoint of a \LinearFIXP circuit
with purported Lipschitz constant $c$ is complete for \PPAD. 
In \LCM we are asked to solve this 
problem when $c$ is strictly less than~1.
A contraction map on a 
continuous bounded domain has a \emph{unique} fixpoint and captures 
many important problems including discounted games and simple stochastic games.
\end{itemize}
We summarize our complexity-theoretic results in Figure~\ref{fig:reductions}.
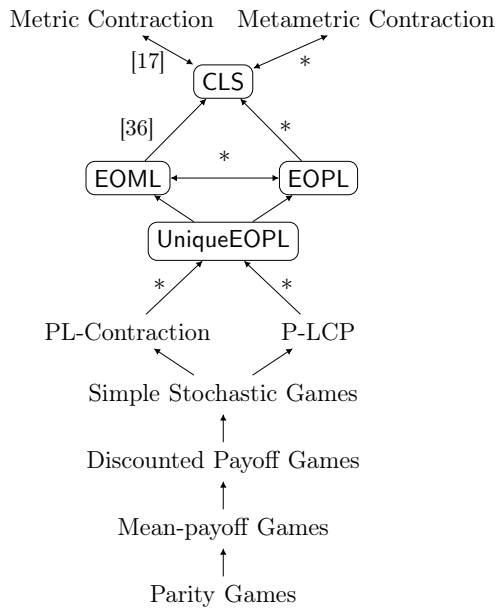
\begin{wrapfigure}{R}{0.4\textwidth}
\scalebox{0.85}{ 
\begin{tikzpicture}[node distance=1.4cm]

\tikzstyle{cc}=[rectangle, draw, rounded corners]

\node (ssg) {Simple Stochastic Games};
\node[below of=ssg,yshift=3.5mm] (dpg) {Discounted Payoff Games};
\node[below of=dpg,yshift=3.5mm] (mpg) {Mean-payoff Games};
\node[below of=mpg,yshift=3.5mm] (parity) {Parity Games};
\node[above left of=ssg, xshift=-0.5cm] (contraction) {PL-Contraction};
\node[above right of=ssg, xshift=0.5cm] (plcp) {P-LCP};
\node[cc, above of=ssg, yshift=1cm] (ueopl) {$\mathsf{UniqueEOPL}$};
\node[cc, above left of=ueopl, xshift=-0.5cm] (eoml) {$\mathsf{EOML}$};
\node[cc, above right of=ueopl, xshift=0.5cm] (eopl) {$\mathsf{EOPL}$};
\node[cc, above of=ssg, yshift=3.5cm] (cls) {$\mathsf{CLS}$};

\node[above left of=cls, xshift=-0.75cm] (mcontraction) {Metric Contraction};
\node[above right of=cls, xshift=1.25cm] (mmcontraction) {Metametric Contraction};

\path[->] (parity) edge (mpg);
\path[->] (mpg) edge (dpg);
\path[->] (dpg) edge (ssg);

\path[->] (ssg) edge (contraction);
\path[->] (ssg) edge (plcp);

\path[->] (contraction) edge node[left] {*} (ueopl);
\path[->] (plcp) edge node[right] {*} (ueopl);


\path[<->] (eoml) edge node[above] {*} (eopl);

\path[->] (eoml) edge node[left, xshift=-2mm] {{\small \cite{hubavcek2017hardness}}} (cls);
\path[->] (eopl) edge node[right] {*} (cls);
 
\path[->] (ueopl) edge (eoml);
\path[->] (ueopl) edge (eopl);

\path[<->] (mcontraction) edge node[left, xshift=1mm, yshift=-2mm] {{\small \cite{DTZ17}}} (cls);
\path[<->] (mmcontraction) edge node[right,yshift=-2mm] {*} (cls);

\end{tikzpicture}
}
\caption{The relationship of \EOPLc to other classes and problems. 
Results from this paper are shown as stars.}
\label{fig:reductions}
\end{wrapfigure}
Our reductions imply that the promise versions of \LCM and \PLCP are in the
promise class \UniqueEOPLc, which corresponds to instances of \EOPL that have a
single line.  
Other well-known problems also lie in \UniqueEOPLc:
Our two reductions that put \PLCP and \LCM in \EOPLc both independently also
show that simple stochastic, discounted, mean-payoff, and  
parity games are all in \UniqueEOPLc (unique because these 
games are \emph{guaranteed} to produce contraction maps and P-matrix LCPs).
Recently several papers have studied the problem ARRIVAL, which is a
reachability problem for a switching system.
ARRIVAL was introduced in \cite{DGKMW17} and shown to be in \NPcoNP,
then in \PLS~\cite{Kar17}, and then most recently in \CLS~\cite{GHHKMS18}.
In fact, the latter proof also shows that ARRIVAL belongs to \UniqueEOPLc.
In addition, we reduce the non-promise version of \PLCP, the problem of solving the LCP or returning a
violation of the P-matrix property, to (non-unique) \EOPLc.

Our final contributions are algorithmic and arise from the structural
insights provided by our two main reductions.
Using the ideas from our reduction for \LCM, we obtain the first
polynomial-time algorithms for finding fixpoints of 
contraction maps in fixed dimension for any $\ell_p$
norm, where previously such algorithms were only known for $\ell_2$ and $\ell_\infty$ norms.
We also show that these results can be extended to the case where the contraction map is given by a general arithmetic circuit.
As noted in~\cite{GHHKMS18},
our reduction from \PLCP to \EOPL allows a technique of Aldous to be applied, which
in turn gives the fastest known randomized algorithm for~\PLCP.

\medskip
\noindent
\textbf{Related work.}
Papadimitriou showed that \PLCP, the problem of solving the LCP or returning a
violation of the P-matrix property, is in
\PPAD~\cite{papadimitriou1994complexity} using Lemke's algorithm.
The relationship between Lemke's algorithm and \PPAD has been studied
by Adler and Verma~\cite{AV11}.
Later, Daskalakis and Papadimitrou showed that \PLCP is in
\CLS~\cite{daskalakis2011continuous}, using the potential reduction method
in~\cite{kojima1992interior}.  
Many algorithms for \PLCP have been studied, e.g.,~\cite{murty1978computational,morris2002randomized,kojima1991unified}.
However,
no polynomial-time algorithms are known for \PLCP, or for the promise
version where one can assume that the input matrix is a P-matrix.
This promise version is motivated by the reduction from Simple Stochastic Games 
to P-matrix LCPs. 

The best known algorithms for \PLCP are based on
a reduction to Unique Sink Orientations (USOs) of cubes~\cite{StickneyW78}.
For an P-matrix LCP of size $n$, the USO algorithms of~\cite{SzaboW01} apply, and 
give a deterministic 
algorithm that runs in time $O(1.61^n)$ and a randomized algorithm with
expected running time $O(1.43^n)$.
The application of Aldous' algorithm to the \EOPL instance that we produce from
a P-matrix LCP takes expected time $2^{n/2}\cdot \poly(n) = O(1.4143^n)$ in the worst case.

Simple Stochastic Games are related to Parity games, which are an extensively studied class of two-player zero-sum
infinite games that capture important problems in formal verification and logic~\cite{EmersonJ91}.
There is a sequence of polynomial-time reductions from parity games 
to mean-payoff games to discounted games to simple stochastic 
games~\cite{puri1996theory,gartner2005simple,jurdzinski2008simple,zwick1996complexity,hansen2013complexity}.
The complexity of solving these problems is unresolved and has received 
much attention over many years~(see, for example, 
\cite{zwick1996complexity,condon1992complexity,fearnley2010linear,jurdzinski1998deciding,bjorklund2004combinatorial,fearnley2016complexity}).
In a recent breakthrough~\cite{parity}, a quasi-polynomial time algorithm for parity games
have been devised, and there are now several algorithms with this running time~\cite{parity,JL17,FJS0W17}.
For mean-payoff, discounted, and simple stochastic games, the best-known 
algorithms run in randomized subexponential
time~\cite{ludwig1995subexponential}. The existence of a polynomial time
algorithm for solving any of these games would be a major breakthrough. Simple
stochastic games can also be reduced in polynomial time to \LCM with the
$\ell_\infty$ norm~\cite{EtessamiY10}.

The problem of computing a fixpoint of a continuous map $f:\mathcal{D}\mapsto \mathcal{D}$
with Lipschitz constant
$c$ has been extensively studied, in both continuous and discrete variants~\cite{ChenD09,ChenD08,DengQSZ11}.
For arbitrary maps with $c>1$, exponential bounds on the query
complexity of computing fixpoints are known~\cite{HirschPV89,ChenD05}.
In \cite{BoonyasiriwatSX07,HuangKS99,Sik09}, algorithms for computing fixpoints 
for specialized maps such as weakly ($c=1$) or strictly ($c < 1$) contracting maps are studied.
For both cases, algorithms are known for the case of $\ell_2$ and $\ell_\infty$ norms, 
both for absolute approximation ($||x-x^*||\le \epsilon$ where $x^*$ is
an exact fixpoint) and relative approximation ($||x-f(x)||\le \epsilon$). 
A number of algorithms are known for
the $\ell_2$ norm handling both types of approximation \cite{NemYud83,HuangKhachSik99,Sik01}. 
There is an exponential lower bound for absolute approximation with $c=1$ \cite{Sik01}.
For relative approximation and a domain of dimension $d$, an
$O(d\cdot\log{1/\epsilon})$ time algorithm is known \cite{HuangKhachSik99}. 
For absolute approximation with $c<1$, an ellipsoid-based algorithm with time complexity
$O(d \cdot [\log(1/\epsilon) + \log(1/(1-c))])$ is known \cite{HuangKhachSik99}. 
For the $\ell_\infty$ norm, \cite{ShellSik03} gave an algorithm to find an
$\epsilon$-relative approximation in time $O(\log(1/\epsilon)^d)$ which is
polynomial for constant $d$. In summary,
for the 
$\ell_2$ norm  polynomial-time algorithms are known for strictly contracting maps; for the 
$\ell_\infty$ norm algorithms that are polynomial time for constant dimension are known.
For arbitrary $\ell_p$ norms, to the best of our knowledge, no polynomial-time
algorithms for constant dimension were known before this paper.


\section{\EOPL}
\label{sec:EOPL}

In this section, we define a new problem \EOPL, and 
 we show that this problem is polynomial-time equivalent to \EOML.
%
First we recall the definition of \EOML, which was first defined
in~\cite{hubavcek2017hardness}. 
It is close in spirit to the problem \EOL that
is used to define \PPAD~\cite{papadimitriou1994complexity}. 

\begin{definition}[\EOML~\cite{hubavcek2017hardness}]
Given circuits $S,P: \{0,1\}^n \rightarrow \{0,1\}^n$, and $V:\{0,1\}^n\rightarrow \{0,\dots, 2^n\}$ such that $P(0^n) =0^n\neq S(0^n)$ and $V(0^n)=1$, find a string $\xx \in \{0,1\}^n$ satisfying one of the following 
\begin{enumerate}[label=(T\arabic*)]
\itemsep0mm
\item either $S(P(\xx))\neq \xx \neq 0^n$ or $P(S(\xx))\neq \xx$,
\item $\xx\neq 0^n, V(\xx)=1$,
\item either $V(\xx)>0$ and $V(S(\xx))-V(\xx)\neq 1$, or $V(\xx)>1$ and $V(\xx)-V(P(\xx))\neq 1$. 
\end{enumerate}
\end{definition}
Intuitively, an \EOML is an \EOL instance that is also equipped with an
``odometer'' function. The circuits $P$ and $S$ implicitly define an
exponentially large graph in which each vertex has degree at most 2, just as in \EOL, and condition T1 says that the end of
every line (other than $0^n$) is a solution.
In particular, the string 
$0^n$ is guaranteed to be the end of a line, and so a solution can be found by
following the line that starts at $0^n$.
 The function $V$ is intended to help with this, by giving the number of steps
that a given string is from the start of the line. We have that $V(0^n) = 1$,
and that $V$ increases by exactly 1 for each step we make along the line.
Conditions T2 and T3 enforce this by saying that any violation of the property
is also a solution to the problem.

In \EOML, the requirement of incrementing $V$ by exactly one as we walk along the
line is quite restrictive. We define a new problem, \EOPL,  which is similar in
spirit to \EOML.
The key difference is that while $V$ is still required to be strictly
monotonically increasing along the line, the amount that it
increases in each step can be any amount.

\begin{definition}[\EOPL]
\label{def:EOPL}
Given Boolean circuits $S,P : \Set{0,1}^n \to \Set{0,1}^n$ such that $P(0^n) =0^n\neq S(0^n)$ and a Boolean circuit $V: \Set{0,1}^n \to \Set{0,1,\dotsc,2^m - 1}$ such that $V(0^n) = 0$ find one of the following:
\begin{enumerate}[label=(R\arabic*)]
\itemsep0mm
\item A point $x \in \Set{0,1}^n$ such that $S(P(x)) \neq x \neq 0^n$ or $P(S(x)) \neq x$.
\item A point $x \in \Set{0,1}^n$ such that $x \neq S(x)$, $P(S(x)) = x$, and $V(S(x)) - V(x) \leq 0$.
\end{enumerate}
\end{definition}


We define the complexity class \EOPLc as all problems that can be reduced
to \EOPL in polynomial time.
We also define a natural variant of \EOPLc, called \UniqueEOPLc which has a
\emph{single} potential line. Since there is no efficient way to verify this,
\UniqueEOPLc is a promise class.
We will show that the promise versions of \LCM and \PLCP are in \UniqueEOPLc.

\medskip

\noindent \textbf{Aldous' algorithm for problems in \EOPLc.}
In~\cite{GHHKMS18}, the problem ARRIVAL of deciding reachability for a directed
switching system was reduced to \EOML, and it was noted that a randomized 
algorithm by Aldous~\cite{Aldous83} provides the best-known algorithm for ARRIVAL.
Aldous' algorithm is actually very simple: it randomly samples a large number of
candidate solutions and then performs a local search from the best sampled
solution. Aldous' algorithm can actually solve any problem in \EOPLc.
%

\smallskip

\textbf{\EOML to \EOPL and back.}
As one might expect, the reduction to \EOPL is relatively easy, but we need to
be careful about handling the solutions that correspond to violations.
%
%
Full details of the reduction with
proofs are in Appendix~\ref{sec:EOMLtoEOPL}.

The reduction to \EOML is involved.
The basic idea is that, when the potential value jumps by more than~$1$, we
introduce a sequence of new vertices. 
We embed the potential into the vertex space, so that
if there is an edge from $\uu$ to $\uu'$ in the \EOPL
instance whose respective potentials are $p$ and $p'$ such that $p<p'$ then
we create edges $(u,p)\ra (u,p+1)\ra \dots \ra (u,p'-1)\ra (u,p')$. 
By increasing the vertex space in this way, we create many vertices that are
not on any line. To ensure that these vertices are not solutions we turn them
into self-loops.
The remaining details, including how we deal with solutions that correspond to violations,
are given in Appendix~\ref{sec:eopl2eoml}.
%
We obtain the following theorem.
\begin{theorem}
\label{thm:eoml2eopl}
\EOML and \EOPL are polynomial-time equivalent.
\end{theorem}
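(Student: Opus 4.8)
The plan is to prove the equivalence by exhibiting two polynomial-time reductions, one in each direction, as foreshadowed in the paragraphs preceding the theorem. I would begin with the easy direction, \EOML $\to$ \EOPL. Given an \EOML instance $(S,P,V)$ with $V:\{0,1\}^n\to\{0,\dots,2^n\}$ and $V(0^n)=1$, I first renormalize the potential by setting $\tilde V(x) = V(x)$ when $V(x)>0$ and $\tilde V(x)=0$ otherwise (or a similar shift so that $\tilde V(0^n)=0$, matching the \EOPL convention); since $V$ takes values in $\{0,\dots,2^n\}$ this fits in $m=n+1$ bits. The successor and predecessor circuits are reused essentially verbatim. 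The key point to check is that every \EOPL solution of the new instance corresponds to an \EOML solution of the original. An R1 solution of \EOPL is literally a T1 solution of \EOML. An R2 solution is a point $x$ with $x\neq S(x)$, $P(S(x))=x$, and $\tilde V(S(x))-\tilde V(x)\le 0$; I need to argue that such an $x$ yields a T2 or T3 violation. The subtlety the authors flag is exactly this handling of violations: if $x$ lies on the genuine line from $0^n$ then a non-positive potential step is a direct T3 violation; if $x$ is off the line, I must use the fact that the end of $x$'s line (which is not $0^n$) is a T1 solution, and a short case analysis shows one can always extract a solution efficiently. This direction is routine once the bookkeeping on $V$'s value $0$ is pinned down.

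The harder direction is \EOPL $\to$ \EOML, which is where I expect the main obstacle to lie. The idea, as sketched in the excerpt, is to subdivide each edge according to the potential jump it makes. Concretely, from an \EOPL instance with $V:\{0,1\}^n\to\{0,\dots,2^m-1\}$ I build an \EOML instance on vertex set (a suitable encoding of) $\{0,1\}^n\times\{0,\dots,2^m-1\}$, i.e. $n+m$ bits. A vertex $(u,p)$ is "active" precisely when $V(u)=p$, or more precisely when $p$ lies in the half-open potential interval that the edge into $u$ sweeps out: if $\Prev(u)=u'$ in the \EOPL instance with $V(u')=p'<p=V(u)$, the active copies of $u$ are $(u,p'+1),\dots,(u,p)$, chained as $(u,p'+1)\to\cdots\to(u,p)$, and then $(u,p)\to(u'',p+1)$ where $u''=S(u)$, $V(u'')$ being the next potential. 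The new odometer $\tilde V(u,p)$ is simply $p$ (plus a constant offset so that the start vertex gets value $1$), which by construction increases by exactly $1$ along every active edge — this is what makes it a legal \EOML odometer. Every inactive $(u,p)$ is made a self-loop: $S(u,p)=P(u,p)=(u,p)$, which ensures it is neither an end-of-line nor an odometer violation.

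The main work — and the principal obstacle — is verifying the solution correspondence in this direction, because the self-loops and the potential-embedding interact delicately with \EOPL's two solution types. I would structure the verification around the following case analysis on an \EOML solution $(u,p)$: a T1 solution is an end of a line in the subdivided graph; I must trace it back to show it forces either an R1 solution (a genuine endpoint or a point where $S,P$ disagree) or an R2 solution (a point where the potential fails to strictly increase, which is exactly what would make the chain of subdivided vertices fail to connect up, or what would make a putative next-vertex have potential $\le p$ so that the edge $(u,p)\to(u'',p+1)$ points to an inactive vertex). A T3 odometer violation cannot arise along active edges by construction, so it too must trace back to an \EOPL solution. The delicate points are: ensuring the start vertex $0^n$ of the \EOPL instance maps to a vertex of potential value exactly matching \EOML's requirement $\tilde V = 1$ (handle the offset, and the degenerate case $V(0^n)=0$); ensuring that a potential value $p$ that is never attained does not spuriously create dangling active vertices; and checking that the circuits $S,P,\tilde V$ for the \EOML instance are polynomial-size — evaluating them requires computing $\Prev$ and $S$ of the original instance and comparing potentials, all of which is polynomial. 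Once these checks go through, composing the two reductions establishes the polynomial-time equivalence and hence the theorem. The explicit constructions and the full case analyses are deferred to Appendices~\ref{sec:EOMLtoEOPL} and~\ref{sec:eopl2eoml}.
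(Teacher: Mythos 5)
Your \EOPL-to-\EOML direction is essentially the paper's construction: subdivide each edge according to its potential jump, encode the potential into the low-order bits of the vertex, make all inactive pairs $(u,p)$ self-loops, and read off the odometer from those low bits. The minor deviations --- placing the intermediate copies on the target side of each edge rather than the source side, and not explicitly treating edges along which the potential \emph{decreases} --- are cosmetic and fixable, and you correctly flag the need to sanity-check $0^n$ and to keep the circuits polynomial-size.

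The gap is in the direction you call routine, \EOML to \EOPL. Your plan is to reuse $S,P$ verbatim and only renormalize $V$ so that $\tilde V(0^n)=0$. The trouble is that \EOML's violation types T2 and T3 both require $V(\xx)>0$, while \EOPL's R2 condition carries no such guard. So if the \EOML instance contains a vertex $x\ne 0^n$ with $V(x)=0$, $S(x)\ne x$, $P(S(x))=x$, and $V(S(x))\le V(x)$, then $x$ is a genuine R2 solution of your constructed \EOPL instance, yet it is neither a T2 nor a T3 solution of the original \EOML instance (and need not be T1 either). Your proposed repair --- ``use the fact that the end of $x$'s line is a T1 solution'' --- is a totality argument, not an algorithm: locating that endpoint can take exponentially many steps, so you cannot ``extract a solution efficiently'' this way. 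The paper's construction avoids the issue entirely by adding one extra indicator bit (so the fresh vertex $0^{n+1}$ can honestly carry potential $0$ and step into the embedded copy of the original start), and, crucially, by turning every embedded vertex $(1,\uu)$ with $V(\uu)=0$ into a self-loop. Self-loops can be neither R1 nor R2 solutions, so the spurious case above is eliminated by construction; an R2 solution of the produced instance is then forced to have $V(\uu)>0$ and maps cleanly to a T3 violation (this is exactly the content of Lemmas~\ref{lem:m2p-sl} and~\ref{lem:m2p-r2}). Without the self-loop device (or some equivalent), your \EOML$\to$\EOPL reduction is unsound.
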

We note that our reduction implies that $\UniqueEOMLc = \UniqueEOPLc$.

\section{The P-matrix Linear Complementarity Problem}
\label{sec:PLCPtoEOPL}

In this section, we reduce \PLCP to \EOPL. Our reduction relies heavily 
on the application of Lemke's algorithm to P-matrix LCPs.
Let $[n]$ denote the set $\{1,\dots,n\}$.

\begin{definition}[LCP $(M, \qq)$]
\label{def:lcp}
Given a matrix $M \in \Real^{d \times d}$ and vector $\qq\in \Real^{d\times 1}$,
find a~{$\yy\in \Real^{d \times 1}$} s.t.:
\begin{equation}\label{eq:lcp}
M\yy\le \qq;\ \ \ \ \yy\ge 0;\ \ \ \ y_i(\qq - M\yy)_i =0,\ \forall i \in [n].
\end{equation}
\end{definition}
In general, deciding whether an LCP has a solution is \NP-complete~\cite{chung1989np},
but if $M$ is a P-matrix, as defined next, then the LCP $(M,\qq)$ has a \emph{unique} solution 
for all $\qq\in \Real^{d\times 1}$.
\begin{definition}[P-matrix]
\label{def:Pmatrix}
$M \in \Real^{d \times d}$ is called a P-matrix if every principle
minor of $M$ is positive.
\end{definition}
Checking if a matrix is a P-matrix is \coNP-complete~\cite{coxson1994p}.
Thus, Megiddo~\cite{megiddo1988note, megiddo1991total} defined the problem \PLCP, 
which avoids the need for a promise about $M$.
\begin{definition}[\PLCP and \PromisePLCP] \label{def:plcp} Given an LCP $(M, \qq)$, find either:
(\PLo) $\yy \in \Real^{n\times 1}$ that satisfies (\ref{eq:lcp}); 
or (\PLt) a non-positive principal minor of $M$.
For \PromisePLCP, we are promised that $M$ is a P-matrix and thus only seek a solution of type \PLo.
\end{definition}


\begin{figure}[tbp]
   \centering
\begin{center}
\scalebox{0.99}{ 
\begin{tikzpicture}

	\path[use as bounding box] (-3,-3) rectangle (3.7,3.9);
	\clip (-3,-3) rectangle (3.7,3.9);

	\tikzstyle{colvec} = [-{>[width=2mm,length=2mm]}, thick]

	\draw[draw=none, fill=shade1] (3,1.5)  -- (3.5,2) -- (1.3,3.9) --  (0,0) -- cycle {};
	\draw[draw=none, fill=shade2] (3,1.5) -- (3.7,1) -- (3.2, 0) -- (1.5,-2) -- (0,-1.5) -- (0,0) -- cycle {};
	\draw[draw=none, fill=shade3] (-1.75,0) -- (-1.3,3.6) -- (1.3,3.9) -- (0,0) -- cycle {};
	\draw[draw=none, fill=shade4] (-2.2,0) -- (-3,-2) -- (-2.5,-2.5) -- (-2,-3) -- (0,-1.5) -- (0,0) -- cycle {};

	\draw[colvec, darkpastelblue] (0,0) -- (2,1) {};
	\node[darkpastelblue] at (2.3,0.8) {$M_1$};
	\node[darkpastelblue] at (2.9,0.8) {$\begin{bmatrix} 2\\ 1\\ \end{bmatrix}$};
	\draw[colvec, darkpastelblue] (0,0) -- (1,3) {};
	\node[darkpastelblue] at (0.5,2.9) {$M_2$};
	\node[darkpastelblue] at (1.4,2.9) {$\begin{bmatrix} 1\\ 3\\ \end{bmatrix}$};
	\draw[colvec, darkpastelred] (0,0) -- (-1,0) {};
	\node[darkpastelred] at (-1,0.3) {$-e_1$};
	\draw[colvec, darkpastelred] (0,0) -- (0,-1) {};
	\node[darkpastelred] at (0.45,-0.85) {$-e_2$};

	\draw[-, darkpastelbrown, thick] (-2,-1) -- (1,1);
	\node[fill=darkpastelbrown, circle, inner sep=1.5pt] at (0.142,0.429) {};
	\node[fill=darkpastelbrown, circle, inner sep=1.5pt] at (-0.5,0) {};


	\node[fill=darkpastelmagenta, circle, inner sep=1.5pt] at (1,1) {};
	\node[darkpastelmagenta] at (1.35,1.55) {$-q = \begin{bmatrix} 1\\ 1\\ \end{bmatrix}$};

	\node[fill=darkpastelbrown, circle, inner sep=1.5pt] at (-2,-1) {};
	\node[darkpastelbrown] at (-2,-1.8) {$d = \begin{bmatrix} -2\\ -1\\ \end{bmatrix}$};


\end{tikzpicture}
}
\hskip -1.75cm
\scalebox{0.96}{ 
\tikzset{arrowstyle/.style={draw=black, fill=black, single arrow, minimum height=#1, minimum width=6ex, single arrow head extend=.4cm,}}
\newcommand{\tikzfancyarrow}[3]{\node [arrowstyle=1cm, rotate=-90] at (#1,#2) {};}

\newcommand{\zuparrow}[2]{
	\draw[->,line width=0.1mm] ([yshift=2mm]$(#1)!0.5!(#2)$) -- node[left] {\small $z$} ([yshift=5mm]$(#1)!0.5!(#2)$);
}
\newcommand{\zdnarrow}[2]{
	\draw[->,line width=0.1mm] ([yshift=5mm]$(#1)!0.5!(#2)$) -- node[left] {\small $z$} ([yshift=2mm]$(#1)!0.5!(#2)$);
}

\pgfdeclarelayer{background}
\pgfsetlayers{background,main}

\tikzstyle{vertex}=[draw,ultra thick,circle,fill=white,minimum size=10pt,inner sep=0pt]
\tikzstyle{red vertex} = [vertex, fill=pastelred]
\tikzstyle{blue vertex} = [vertex, fill=pastelblue]
\tikzstyle{edge} = [draw,thick,->]
\tikzstyle{weight} = []

\begin{tikzpicture}[scale=1,swap]

\path[use as bounding box] (-1.7,-3) rectangle (7.7,3);
\clip (-1.7,-3) rectangle (7.7,3);

\node (ray) at (-1.5,1) {};

\foreach \pos/\name in {
{(0,1)/start},
{(1,1)/1},
{(2,1)/2},
{(3,1)/3}, 
{(4,1)/4}, 
{(5,1)/5}, 
{(6,1)/6}, 
{(7,1)/end}}
	\node[vertex] (\name) at \pos {};

\path[edge] (ray) -- node[below,xshift=-1mm, align=center, fontscale=-1] {primary\\ ray} node () {} (start);

\zdnarrow{ray}{start}
\zdnarrow{start}{1}
\zdnarrow{1}{2}
\zuparrow{2}{3}
\zuparrow{3}{4}
\zuparrow{4}{5}
\zdnarrow{5}{6}
\zdnarrow{6}{end}



\foreach \source/ \dest /\weight in
	{
	start/1/,
	1/2/,
	2/3/,
	3/4/,
	4/5/,
	5/6/,
	6/end/
	}
	\path[edge] (\source) -- node[weight] {$\weight$} (\dest);


\node[red vertex] (10) at (-1,-1) [label=above:{$\mathbf 0$}] {};

\foreach \pos/\name in {
{(0,-1)/11},
{(1,-1)/12},
{(2,-1)/13}, 
{(3,-1)/14}, 
{(4,-1)/15},
{(5,-1)/16},
{(6,-1)/17},
{(7,-1)/18}}
	\node[vertex] (\name) at \pos {};

\draw[edge,->] (10) edge [loop left] node [below]  {\tiny $P$} ();
\draw[edge,->] (10) edge [bend left] node [above] {\tiny $S$} (11);
\draw[edge,->] (11) edge [bend left] node [below] {\tiny $P$} (10);

\draw[edge,->] (11) edge [bend left] node [above] {\tiny $S$} (12);
\draw[edge,->] (12)	edge [bend left] node [below] {\tiny $P$} (11);

\draw[edge,->] (12) edge [bend left] node [above] {\tiny $S$} (13);
\draw[edge,->] (13)	edge [bend left] node [below] {\tiny $P$} (12);
\draw[edge,->] (13) edge [loop right] node [above]  {\tiny $S$} ();

\draw[edge,->] (14) edge [loop below] node [right]  {\tiny $S,P$} ();
\draw[edge,->] (15) edge [loop below] node [right]  {\tiny $S,P$} ();

\draw[edge,->] (16) edge [loop left] node [below]  {\tiny $P$} ();
\draw[edge,->] (16) edge [bend left] node [above] {\tiny $S$} (17);
\draw[edge,->] (17)	edge [bend left] node [below] {\tiny $P$} (16);

\draw[edge,->] (17) edge [bend left] node [above] {\tiny $S$} (18);
\draw[edge,->] (18)	edge [bend left] node [below] {\tiny $P$} (17);
\draw[edge,->] (18) edge [loop right] node [above]  {\tiny $S$} ();

\node[blue vertex] at (18) {};
\node[blue vertex] at (16) {};
\node[blue vertex] at (13) {};

\draw[-{>[scale=2.5,
          length=2,
  		  width=8]}, line width=2mm] (3,0.4) -- (3,-0.4); 

\node[align=center] at (3,2.25) {Lemke path showing increase/decrease of $z$};
\node[align=center] at (3,-2.25) {\textsc{EndOfPotentialLine} instance}; 

\end{tikzpicture}
}
\end{center}
\caption{
Left: geometric view of Lemke's algorithm as inverting a piecewise linear map.
Right: construction of $S$ and $P$ for \EOPL instance from the Lemke
path, where the arrows on its edges indicate
whether $z$ increases or decreases along the edge.}
\label{fig:lemke}
\end{figure}   

\medskip

\noindent \textbf{Overview of \PLCP to \EOPL.}
Our reduction is based on Lemke's algorithm, explained in detail
in Section~\ref{sec:lemke}.
%
Lemke's algorithm introduces an extra variable $z$ and an extra column $d$, 
called a covering vector,
and follows a path along edges of the new LCP polyhedron 
based on a complementary pivot rule that maintains an almost-complementary solution.  
Geometrically, solving an LCP is equivalent to finding a \emph{complementary
cone}, corresponding to a subset of columns of $M$ and the complementary 
unit vectors, that contains $-q$. This is depicted on the left
in~Figure~\ref{fig:lemke}, which also shows Lemke's algorithm as inverting a
piecewise linear map along the line from $d$ to $-q$\footnote{Our figures should ideally be viewed in color.}. 
The algorithm pivots between the brown vertices at the
intersections of complementary cones and terminates at $-q$. The extra variable
$z$ can be normalized and then describes how far along the line from $d$ to $-q$
we are. P-matrix LCPs are exactly those where the complementary cones cover the
whole space with no overlap, and then $z$ decreases monotonically as Lemke's
algorithm proceeds.

The vertices along the Lemke path correspond to a subset of $[n]$ of 
basic variables. These subsets correspond to the bit strings describing
vertices in our \EOPL instance.
We use the following key properties of Lemke's algorithm as applied to a P-matrix:
\begin{enumerate}
\itemsep0mm
\item If Lemke's algorithm does not terminate with a LCP solution \PLo, it provides a \PLt solution.
\item For a P-matrix, the extra variable $z$ strictly decreases in each step of the algorithm.
\item Given a subset of $[n]$, we can efficiently decide if it corresponds to a vertex on a Lemke path.
\item By a result of Todd~\cite[Section 5]{todd1976orientation}, the Lemke path can be locally oriented.
\end{enumerate}

The starting point of the \EOPL instance corresponds to $d$. We will
then follow the line given to us by Lemke's algorithm, and we will use $z$ as
the potential function of this line. 
If we start with a \PLCP instance where $M$ is actually a P-matrix 
then this reduction will produce a single line from $d$ to the solution of the
\PLCP, and $z$ will monotonically decrease along this line. 
The main difficulty of the reduction is dealing with the case where $M$ is not a
$P$-matrix. This may cause Lemke's algorithm to terminate without an LCP solution.
Another issue is that, even when Lemke's algorithm does find a
solution, $z$ may not decrease monotonically along the line. 

In the former case, the first property above gives us a \PLt solution for the
\PLCP problem. In the latter case, we define any point on the line where $z$
increases to be a self-loop, breaking the line at these points.
Figure~\ref{fig:lemke} shows an example, where the two vertices at which $z$
increases are turned into self loops, thereby introducing two new solutions before
and after the break.
Both of these solutions give us a \PLt solution for the \PLCP instance. The 
full details of the reduction are
involved and appear in Appendix~\ref{sec:full_plcp_reduction}. It is worth
noting that, in the case where the input is actually a P-matrix, the resulting
\EOPL instance has a unique line. So, we have the following.

\begin{theorem}
\PLCP is in \EOPLc. \PromisePLCP is in \UniqueEOPLc.
\end{theorem}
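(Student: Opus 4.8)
The goal is to reduce \PLCP to \EOPL, producing a \UniqueEOPLc instance in the promise case. The plan is to build the \EOPL circuits $S, P, V$ directly from the combinatorial structure of Lemke's algorithm applied to the LCP $(M', \qq')$, where $M', \qq'$ are obtained from $(M, \qq)$ by adding the covering vector $d = \mathbf{1}$ and the auxiliary variable $z$. First I would set up the encoding: a vertex of the \EOPL graph is a bit string specifying which subset $B \subseteq [n]$ of the $2n$ LCP variables is basic; using property~3, a circuit can check in polynomial time whether a given string encodes a genuine almost-complementary basic feasible solution on the Lemke path, and if not, map it to a self-loop (so it cannot be an \EOPL solution of type R1). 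The distinguished start vertex $0^n$ is set to encode the initial basis corresponding to $d$ (the fully-covered starting solution of Lemke's algorithm).

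Next I would define $S$ and $P$ using the complementary pivot rule together with Todd's local orientation (property~4): given a valid path vertex, Todd's orientation tells us which of its (at most two) neighbouring almost-complementary vertices is the ``forward'' successor and which is the ``predecessor,'' and these are computed by a single pivot step, hence by a polynomial-size circuit. The potential circuit $V$ is defined from the value of the auxiliary variable $z$ at the vertex: after clearing denominators (the entries of the basic solution are ratios of subdeterminants of $M'$, so have polynomially bounded bit-length), I would set $V$ to be, up to an affine rescaling and sign flip, the integer encoding of $z$ so that $V$ strictly decreasing in $z$ corresponds to the \EOPL requirement that the potential strictly increase along the line. At the true LCP solution $z = 0$, and at the start $z$ is maximal, so $V(0^n) = 0$ can be arranged by the affine shift; I need to check the normalization makes $V(0^n)=0$ exactly and $V$ lands in $\{0, \dots, 2^m-1\}$, which is routine bit-counting.

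The main work is verifying that every \EOPL solution of the constructed instance yields a \PLCP solution, i.e.\ either a vector satisfying~(\ref{eq:lcp}) (answer \PLo) or a non-positive principal minor of $M$ (answer \PLt). An R1 solution is an endpoint of a line other than $0^n$, or a point where $S$ and $P$ disagree; the pivot structure guarantees $S, P$ are consistent on valid path vertices, so an R1 solution must be a genuine path endpoint. By property~1, a path endpoint where Lemke's algorithm halts is either the LCP solution ($z=0$, giving \PLo) or a ``ray termination'' that, as shown in the analysis of Lemke's algorithm for P-matrices (invoking property~1's proof, which exhibits a non-positive principal minor from the secondary ray), gives a \PLt certificate. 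An R2 solution is a non-self-loop edge $x \to S(x)$ along which $V$ fails to increase, i.e.\ $z$ fails to strictly decrease; by property~2, for a genuine P-matrix this never happens, so such an edge witnesses that $M$ is not a P-matrix, and I would extract the non-positive principal minor from the sign pattern of the pivot that caused the non-decrease (this is the same extraction used to establish property~2, run in reverse). Assembling these cases shows correctness; the hard part will be the last one: carefully tracking the algebra of the Lemke pivot that causes $z$ to increase and converting it into an explicit non-positive principal minor of $M$, and making sure the self-loop construction at ``bad'' vertices introduces no spurious solutions while still letting the two flanking path-endpoints serve as \PLt certificates.

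Finally, for \PromisePLCP: when $M$ is genuinely a P-matrix, properties~1 and~2 guarantee Lemke's algorithm traces a single path from $d$ to the unique LCP solution with $z$ strictly monotone, so no vertex is a self-loop and there is exactly one line; hence the produced instance lies in \UniqueEOPLc, and its unique solution is the R1 endpoint corresponding to the LCP solution \PLo. I would remark that all numerical quantities (subdeterminants of $M'$, the value $z$) have bit-length polynomial in the input, so every circuit is polynomial-size, completing the reduction. Full details appear in Appendix~\ref{sec:full_plcp_reduction}.
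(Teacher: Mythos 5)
Your high-level plan matches the paper's: encode almost-complementary bases of the Lemke polyhedron as \EOPL vertices, self-loop invalid configurations, use Todd's orientation to define $S$ and $P$, and use the auxiliary variable $z$ (rescaled and sign-flipped) as the potential. But the crucial device in the paper's construction is one you do not cleanly use: whenever the Todd-oriented successor (resp.\ predecessor) of a valid vertex $\uu$ has $z$ \emph{not} strictly smaller (resp.\ not strictly larger) than $z$ at $\uu$, the paper makes $S(\uu) = \uu$ (resp.\ $P(\uu) = \uu$). This guarantees that potential strictly increases along every surviving edge, so \emph{R2 solutions cannot exist at all} (Lemma~\ref{lem:t}). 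The Lemke path then breaks into segments, and the resulting R1 endpoints are exactly vertices at which \emph{both} pivot directions change $z$ with the same sign (Lemma~\ref{lem:t2}); that two-sided sign agreement is precisely the condition for which \cite{cottle2009linear} exhibits a non-positive principal minor.

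Your version instead keeps both Todd edges unconditionally and plans to harvest a non-positive minor from an R2 solution, i.e.\ from a \emph{single} edge along which $z$ fails to decrease. This is a genuine gap: a single such edge only tells you about one of the two pivot directions at $\uu$, and the certificate the paper cites needs both. (At a vertex where $z$ comes down from the predecessor and then goes up to the successor, you get an R2 solution at $\uu$ but no local both-sided sign agreement; you would have to chase the line backwards to find the ``valley'' vertex, a recursion your write-up does not describe and which is not obviously available to a polynomial-time circuit.) Your final sentence actually flirts with the correct idea --- ``the self-loop construction at bad vertices \dots letting the two flanking path-endpoints serve as \PLt certificates'' is exactly the paper's mechanism --- but it is inconsistent with the earlier claim that R2 solutions occur: once you self-loop at $z$-non-decreasing transitions, there are no R2 solutions to handle. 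You should commit to the self-loop design from the start, state that it forces all solutions to be of type R1, and then invoke the two-sided sign condition at those R1 vertices to get the non-positive minor. A smaller omission is the treatment of secondary rays in $S$ and $P$ (the relaxation at the ray endpoint has no finite neighbour and must be self-looped), but this is routine once the main device is in place.
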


Our reduction produces, for an LCP defined by an $n \times n$ matrix $M$, an
EOPL instance with $O(2^n)$ vertices. Thus, Aldous' algorithm can be applied to
give the following corollary.

\begin{corollary}
There is a randomized algorithm for \PLCP that runs in expected time $O(1.4143^n)$.
\end{corollary}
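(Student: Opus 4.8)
The plan is to combine the reduction of the theorem above with the randomized algorithm of Aldous described in Section~\ref{sec:EOPL}. That theorem turns an \PLCP instance on an $n\times n$ matrix $M$ into an \EOPL instance whose vertex set has size $N = O(2^n)$ (each vertex encodes a basis along the Lemke path) and whose circuits $S,P,V$, together with the efficiently checkable predicate testing whether a bit string is a vertex of the Lemke path, are all evaluable in time $\poly(n)$. It therefore suffices to establish the claim already announced in Section~\ref{sec:EOPL}: Aldous' algorithm solves an arbitrary \EOPL instance with $N$ vertices in expected time $\poly(n)\cdot\sqrt N$. The corollary then follows by substituting $N=O(2^n)$.

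\textbf{Aldous' algorithm on an \EOPL instance.} First I would spell out the algorithm and verify correctness. Write a line of the instance as $v_0,v_1,\dots,v_L$ with $v_{i+1}=S(v_i)$, $P(v_{i+1})=v_i$, and $V(v_{i+1})>V(v_i)$; the end $v_L$ of any line is an R1 solution, and any forward step on which $V$ fails to increase is an R2 solution, so walking forward from any vertex $x$ with $x\neq S(x)$ and $P(S(x))=x$ is guaranteed to reach a solution after at most $N$ steps, since the walk visits distinct vertices ($V$ is strictly monotone along it). The algorithm samples $m=\lceil\sqrt N\,\rceil$ vertices $x_1,\dots,x_m$ independently and uniformly; if some $x_j$ has $P(S(x_j))\neq x_j$ it is returned directly (an R1 solution), self-loops with $P(x_j)=x_j$ are discarded, and among the remaining on-line samples we take $x^\star$ maximising $V$ (if there are none, set $x^\star=0^n$); then we walk forward from $x^\star$ until a solution is found. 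The cost is $m\cdot\poly(n)$ for sampling plus $\poly(n)$ per forward step, so everything reduces to bounding the expected number of forward steps.

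\textbf{Bounding the walk.} The hard part is this expectation — this is where the constant $2^{1/2}$ comes from, and I expect it to be the only real content of the proof. Since $V$ is strictly increasing along a line, $x^\star$ is, on the line that contains it, the sampled vertex of largest index, so the remaining number of forward steps is the gap between that index and the end of its line. Conditioned on $k\ge 1$ samples landing on a line of length $L_0$, this gap has expectation $O(L_0/k)$ by the standard order-statistics estimate for uniform points on $\{0,\dots,L_0\}$; since each sample hits a fixed vertex with probability $1/N$, summing over lines and over the value of $k$ gives an unconditional expectation of $O(N/m)$ for the number of forward steps, the event that \emph{no} sample lands on $x^\star$'s line contributing only $L_0(1-L_0/N)^m = O(N/m)$ as well (a small $L_0$ makes the fallback walk from $0^n$ short, a large $L_0$ makes the miss probability tiny). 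Choosing $m=\Theta(\sqrt N)$ balances the two phases and yields expected running time $\poly(n)\cdot\sqrt N$, which is exactly the bound referred to in Section~\ref{sec:EOPL}.

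\textbf{Conclusion.} Substituting $N=O(2^n)$ from the theorem gives expected running time $\poly(n)\cdot\sqrt N=\poly(n)\cdot 2^{n/2}$. Because $2^{1/2}=1.41421\ldots<1.4143$, the geometric factor $(1.4143/\sqrt 2)^{\,n}$ eventually dominates any fixed polynomial, so $\poly(n)\cdot 2^{n/2}=O(1.4143^n)$, which is the claimed bound. (When $M$ is actually a P-matrix the \EOPL instance has a single line, so the same algorithm also solves \PromisePLCP within the same time bound, but uniqueness is not needed here.)
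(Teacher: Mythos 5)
The paper does not actually supply a self-contained proof here — it observes that the reduction of Section~\ref{sec:PLCPtoEOPL} produces an \EOPL instance with $O(2^n)$ (non-dummy) vertices and then simply invokes Aldous' sample-then-walk algorithm, citing~\cite{Aldous83} and~\cite{GHHKMS18}. Your proposal fleshes out exactly what that invocation entails, and the overall shape — sample $\Theta(\sqrt N)$ vertices, take the maximum-potential on-line sample, walk forward, balance the two phases — is the right one and matches the paper's intent. Two points deserve care. First, the \EOPL instance as literally constructed in Appendix~\ref{sec:full_plcp_reduction} lives on $\{0,1\}^{2n}$ (basis bits plus duplicate-label bits), so uniform sampling over raw bit strings would only give $\sqrt{4^n}=2^n$; to land on $2^{n/2}\poly(n)$ one must sample from the $O(2^n\poly(n))$ strings with at most one bit set in the second half (equivalently, sample a basis and reconstruct the duplicate label), exactly the parenthetical ``each vertex encodes a basis'' you wrote — it is worth making this explicit, since the paper also elides it. Second, your order-statistics bound conditions on ``$k$ samples on $x^\star$'s line'' while $x^\star$'s line is itself determined by the samples; the conclusion is correct, but a cleaner route avoids the circularity: the walk length from $x^\star$ is at most the number of on-line vertices whose potential exceeds $V(x^\star)$, and since $x^\star$ is the first sampled vertex in the global $V$-ordering, that count is stochastically dominated by a geometric-style quantity with $\Pr[\ge k]\le(1-k/N)^m$, giving $\E[\cdot]=O(N/m)$ directly and also absorbing your separate ``no sample on the line'' case. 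With those two clarifications your argument is correct and is considerably more detailed than what the paper records.
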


\section{Piecewise Linear Contraction Maps}
\label{sec:lcm2eopl}


\todo[inline]{State somewhere: The well-known Banach fixpoint theorem states
that all contraction maps have a \emph{unique} fixpoint~\cite{Banach1922}.}

We study contraction maps where $f$ is given as a \emph{\LinearFIXP circuit},
which is an arithmetic circuit comprised of $\max, \min,
+, -$, and $\mz$ (multiplication by a constant) gates~\cite{EtessamiY10}. Hence, a 
\LinearFIXP circuit defines a \emph{piecewise linear} function.

\begin{definition}[\LCM]
The input is a \LinearFIXP circuit computing $f : [0, 1]^d
\rightarrow [0,1]^d$, a constant $c \in (0, 1)$, and a $p \in \Natural \cup \{\infty\}$. 
It is promised that $f$ is a contraction map in the $\ell_p$
norm with contraction factor $c$. We are required to find the $x \in [0, 1]^d$
such that $f(x) = x$.
\end{definition}

\LCM is the \emph{promise} version of contraction. We reduce it to \EOPL. In the case
where the promise is not satisfied (i.e., $f$ is not actually
contracting), our reduction will detect this fact.
However, this does not allow us to reduce from the non-promise version, since in
the case where the \EOPL solution is not a fixpoint, we do not have an easy way
to produce two points that violate contraction, even though two such points must exist.

\smallskip

\noindent \textbf{UFEOPL.}
The first step in our reduction will be to reduce to a problem that we call
\UFEOPL. This is a version of \EOPL with two changes. 
Firstly, it is guaranteed (via a promise) that there is a unique line.
Secondly, the vertices on this line are only equipped with circuits that give
the next point on the line, and unlike EOPL, there is no circuit giving the
predecessor. Otherwise, the problem is unchanged. So the goal is still to find a
vertex on the line where the potential increases, or the end of the line.
Formally, we define the problem as follows.
\begin{definition}[\UFEOPL]
The input is a tuple $(C, S, V)$ of
Boolean circuits $C : \{0,1\}^n \to \{0,1\}$, $S : \{0,1\}^n \to \{0,1\}^n$,
and $V: \{0,1\}^n \to \{0,1,\dotsc,2^m - 1\}$, with the property that $C(0^n) =
1$ and $V(0^n) = 0$.
It is promised that, for every vertex $x \in \{0, 1\}^n$ such that $C(x) = 1$, there exists a $k$ such that $x = S^k(0^n)$.
We must output one of the following.
\begin{enumerate}[label=(U\arabic*)]
\itemsep1mm
\item A vertex $x \in \{0, 1\}^n$ such that $C(x) = 1$ and 
$V(S(x)) - V(x) \leq 0$.
\item 
A vertex $x \in \{0, 1\}^n$ such that $C(x) = 1$ and $C(S(x)) = 0$.
\end{enumerate}
\end{definition}
The circuits $S$ and $V$ give the successor and potential for each vertex on the
line, as usual. The extra circuit $C$ is used to determine whether a vertex is
on the line or not. This was not necessary for EOPL, because the successor and
predecessor circuits could be used to check this.
Observe that the promise guarantees
that every vertex on the line can be reached from the start point of the line,
so there is indeed only one line whenever the promise holds.

\smallskip

\noindent \textbf{Discretizing the problem. }
\begin{figure}
\begin{center}
\scalebox{0.9}{
\begin{tikzpicture}[scale=0.5]
\draw[thick,step=1,xshift=0.5cm,yshift=0.5cm] (0,0) grid (10,8); 
\foreach \x/\y in {1/2,2/4,3/6,4/5,5/4,6/3,7/3,8/3,9/5,10/6}
{
	\updncol{\x}{\y}
}
\end{tikzpicture}
}
\hskip 2cm
\scalebox{0.9}{
\begin{tikzpicture}[scale=0.5]
\draw[thick,step=1,xshift=0.5cm,yshift=0.5cm] (0,0) grid (10,8); 
\foreach \x/\y in {5/1,6/2,7/3,6/4,5/5,5/6,4/7,3/8}
{
	\leftrightcol{\x}{\y}
}
\end{tikzpicture}
}
\end{center}
\caption{Left: A direction function for the up/down dimension. Right: A
direction function for the left/right dimension.}
\label{fig:direction}
\end{figure}
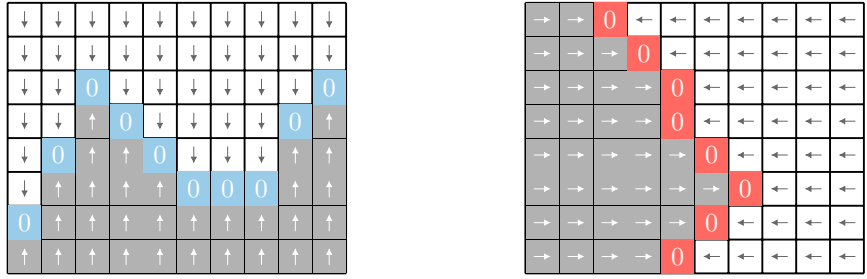
We discretize the space $[0,1]^d$ by defining a grid of points. This grid will
be represented by the set $P$ and each element of $P$ will be a tuple $p = (p_1,
p_2, \dots, p_d)$, where each $p_i$ is a rational whose 
denominator is polynomially sized. 

We also discretize the contraction map $f$ itself in the following way. For each
dimension $d$ we define a \emph{direction} function $D_i : P \rightarrow \{\up,
\down, \zero\}$, where for each point $p \in P$, we have:
if $f(p)_i > p$ then $D_i(p) = \up$,
if $f(p)_i < p$ then $D_i(p) = \down$, and
if $f(p)_i = p$ then $D_i(p) = \zero$.
In other words, the function $D_i$ simply outputs whether $f(p)$ moves up, down,
or not at all in dimension~$i$. So a fixpoint of $f$ is a point $p \in P$
such that $D_i(p) = \zero$ for all $i$.

We define the \DCM problem to be the problem of finding such a point, and we
show that \LCM can be reduced in polynomial time to \DCM. The key part of this
proof is that, since the input to \LCM is defined by a \LinearFIXP circuit, we
are able to find a suitably small grid such that there will actually exist a
point $p \in P$ with $D_i(p) = \zero$ for all $i$. The details of this are
deferred to Appendix~\ref{sec:discretizing}.

\smallskip

\noindent \textbf{A two-dimensional example.}
\begin{figure}
\begin{center}
\scalebox{0.9}{
\begin{tikzpicture}[scale=0.5]
\draw[thick,step=1,xshift=0.5cm,yshift=0.5cm] (0,0) grid (10,8); 
\foreach \x/\y in {1/2,2/4,3/6,4/5,5/4,6/3,7/3,8/3,9/5,10/6}
{
	\mysquare{\x}{\y}{pastelblue}
}
\foreach \x/\y in {5/1,6/2,6/4,5/5,5/6,4/7,3/8}
{
	\mysquare{\x}{\y}{pastelred}
}

\mytwocoloursquare{7}{3}

\end{tikzpicture}
}
\hskip 2cm
\scalebox{0.9}{
\begin{tikzpicture}[scale=0.5]
\draw[thick,step=1,xshift=0.5cm,yshift=0.5cm] (0,0) grid (10,8); 
\foreach \x/\y in {1/2,2/4,3/6,4/5,5/4,6/3,7/3,8/3,9/5,10/6}
{
	\mysquare{\x}{\y}{pastelblue}
	
}
\mytwocoloursquare{7}{3}

\patharrow{1}{1}{1}{2}
\patharrow{1}{2}{2}{2}
\patharrow{2}{2}{2}{3}
\patharrow{2}{3}{2}{4}
\patharrow{2}{4}{3}{4}
\patharrow{3}{4}{3}{5}
\patharrow{3}{5}{3}{6}
\patharrow{3}{6}{4}{6}
\patharrow{4}{6}{4}{5}
\patharrow{4}{5}{5}{5}
\patharrow{5}{5}{5}{4}
\patharrow{5}{4}{6}{4}
\patharrow{6}{4}{6}{3}
\patharrow{6}{3}{7}{3}
\end{tikzpicture}
}
\end{center}
\caption{Left: The red and blue surfaces. Right: the path that we follow.}
\label{fig:rbsurface}
\end{figure}
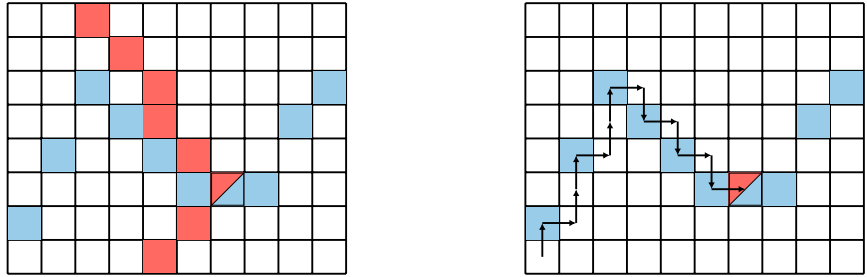
To illustrate our construction, we first give an example in two dimensions.
Figure~\ref{fig:direction} gives two direction functions for a two-dimensional
problem. The figure on the left shows a direction function for the up-down
dimension, which we will call dimension $1$ and illustrate using the color blue.
The figure on the right shows a direction function for the left-right dimension,
which we will call dimension $2$ and illustrate using the color red. Each square
in the figures represents a point in the discretized space, and the value of the
direction function is shown inside the box. 

On the left side in Figure~\ref{fig:rbsurface}, we have overlaid the \emph{surfaces} of these two
functions. The surface of a direction function $D_i$ is exactly the set of points $p
\in P$ such that $D_i(p) = \zero$. The fixpoint $p$ that we seek has $D_i(p)
= \zero$ for all dimensions $i$, and so it lies at the intersection of these
surfaces.

To reach this intersection, we will walk along the blue surface until we reach
the point that is on the red and blue surface. The path starts at the bottom
left-hand corner of the diagram, which corresponds to the point $(0, 0)$. It
initially finds the blue surface by following the blue direction function $D_1$,
which creates a path that walks up until the blue surface is found. 


Once we have found the point on the surface we then move one step in dimension
2. To do this, we follow the direction given by the red direction function
$D_2$, and so in our example, we move one step to the right. Once we have done
so, we then have to find the blue surface again, which again involves following
the direction of $D_1$. 
%
%
By repeatedly finding the blue surface, and the following the direction given by
$D_2$, we eventually arrive at the point that is on the red and blue surfaces,
which is the fixpoint that we seek. The line that we follow is shown in the
right-hand side of Figure~\ref{fig:rbsurface}.

We need two properties to make this approach work. Firstly, we must ensure that
every \emph{slice} of dimension 2 has a unique point in the blue surface, that
is, if we fix a coordinate $y$ in dimension $2$, then there is exactly one point
$(x, y)$ such that $D_1(x, y) = \zero$. Secondly, it must be the case that at
any point $p$ on the blue surface, the function $D_2(p)$ tells us the correct
direction to walk to find the fixpoint. We ensure that our \DCM instance
satisfies these properties, and the details are given in
Appendix~\ref{sec:discretizing}.

\smallskip

\noindent \textbf{The potential.}
How do we define a potential for this line? Observe that the dimension-two 
coordinates of the points on the line are weakly monotone, meaning that the line
never moves to the left. Furthermore, for any
dimension-two slice (meaning any slice in which the left/right coordinate is
fixed), the line is either monotonically increasing, or monotonically
decreasing, depending on the direction specified by $D_1$. So, if $k$ denotes
the maximum coordinate in either dimension, then the function
\begin{equation*}
V(p_1, p_2) = \begin{cases}
k \cdot p_2 + p_1 & \text{if $D_1(p_1, p_2) = \up$,} \\
k \cdot p_2 + (k-p_1) & \text{if $D_1(p_1, p_2) = \down$.} \\
\end{cases}
\end{equation*}
is a function that monotonically increases along the line.

\smallskip

\noindent \textbf{Uniqueness.}
Finally, we must argue that the line is unique. Here we must carefully define
what exactly a vertex on the line is, so that the circuit $C$ can correctly
identify the points that are on the line. In particular, when we are walking
towards the blue surface, we must make sure that we are either coming from the
point $(0, 0)$ at the start of the line, or that we have just visited a point on
the blue surface. For example, consider the third column of points from the left
in Figure~\ref{fig:rbsurface}. The first three points from the bottom of the
diagram should be identified as not being on the line, while the second three
points should be identified as being on the line.
To do this, we make the points on our line tuples $(q, p)$, where $p$ is the
current point on the line, and $q$ is either 
\begin{itemize}
\itemsep0.5mm
\item the symbol $\vblank$, indicating that we are the start of the line and
$p_2$ should be equal to $ 0$, or
\item a point $q$ that is on the blue surface, and satisfies $q_2 = p_2 - 1$. 
\end{itemize}
In the latter case, the point $q$ is the last point on the blue surface that was
visited by the line, and it can be used to determine whether $p$ is a valid
point. In our example in the third column, $q$ would be the point on the blue
surface in column $2$, and so the three points below $q$ in the third column can
be determined to not lie on the line, since their direction function points
upwards, and thus no line that visited $q$ can have arrived at these points.
Meanwhile, the points above $q$ do lie on the line, for exactly the same reason.

Once the line arrives at the next point on the blue surface, the point $q$ is
then overwritten with this new point. This step is the reason why only a
successor circuit can be given for the line, since the value that is overwritten
cannot easily be computed by a predecessor circuit.

\smallskip

\noindent \textbf{The full reduction.} 
Our reduction from \DCM to \UFEOPL generalizes the approach given above to $d$
dimensions.
We say that a point $p \in P$ is on the
\emph{$i$-surface} if $D_j(p) = \zero$ for all $j \le i$. In our two-dimensional
example we followed a line of points on the one-surface, in order to find a
point on the two-surface. In between any two points on the one-surface, we
followed a line of points on the zero-surface (every point is trivially on the
zero-surface).

Our line will visit a sequence of points on the $(d-1)$-surface in
order to find the point on the $d$-surface, which is the fixpoint. Between
any two points on the $(d-1)$-surface the line visits a sequence of points on
the $(d-2)$-surface, between any two points on the 
$(d-2)$-surface the line visits a sequence of points on the $(d-3)$-surface, and
so on. 

These points will satisfy a \emph{recursive monotonicity} property, which is a
generalization of the property that we saw in two-dimensions. The sequence of
points on the $(d-1)$-surface are monotonically increasing in the $d$th
coordinate. Between every pair of points on the $(d-1)$-surface, the sequence of
points on the $(d-2)$-surface will be either monotonically increasing or
monotonically decreasing in the $(d-1)$th coordinate, and so on. This will allow
us to define a potential function that generalizes the one that we saw in the
two-dimensional case. 

Full details of this construction are given in
Appendix~\ref{app:dcm2ufeopl}, where the following lemma is proved.

\begin{lemma}
\label{lem:dcm2ufeopl}
\DCM can be reduced in polynomial time to \UFEOPL.
\end{lemma}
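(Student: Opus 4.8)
The plan is to construct the three circuits $(C, S, V)$ for the \UFEOPL instance directly from the \DCM instance, generalizing the two-dimensional picture in the body to $d$ dimensions via the recursive monotonicity structure. A vertex of the \UFEOPL instance will be a nested tuple $(q^{(1)}, q^{(2)}, \dots, q^{(d-1)}, p)$ where $p \in P$ is the current grid point, and each $q^{(i)}$ is either the symbol $\vblank$ or a point on the $i$-surface that records ``the last point on the $i$-surface visited by the line,'' subject to the constraint that $q^{(i)}$ agrees with $p$ in coordinates $i+1, \dots, d$ and that $q^{(i)}_{i+1} = p_{i+1} - 1$ when it is not $\vblank$ (mirroring the role of $q$ in the two-dimensional example, where $q$ lags $p$ by one step in the next-higher coordinate). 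First I would make precise the \emph{$i$-surface} ($D_j(p) = \zero$ for all $j \le i$) and state the two structural facts that the \DCM construction in Appendix~\ref{sec:discretizing} guarantees: (a) for every fixing of coordinates $i+1, \dots, d$, the slice contains a unique point of the $i$-surface, and (b) at any point $p$ of the $i$-surface, the direction $D_{i+1}(p)$ correctly points along the $i$-surface toward the $(i+1)$-surface. These are exactly the ``uniqueness of slice'' and ``correct direction'' properties flagged in the body.

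\textbf{Defining $S$ and $C$.} The successor circuit $S$ works recursively on the tuple: given $(q^{(1)}, \dots, q^{(d-1)}, p)$, look at the smallest dimension $i$ with $D_i(p) \neq \zero$. If such an $i$ exists, take one grid step in dimension $i$ in the direction $D_i(p)$ to get $p'$, and update the records $q^{(j)}$ for $j < i$ by resetting them to $\vblank$ (we have left those lower surfaces) while leaving $q^{(j)}$ for $j \ge i$ as they were --- except that when $p$ itself was already on the $(i-1)$-surface, the record $q^{(i-1)}$ gets overwritten by $p$ (this is the overwrite step that makes the map non-invertible, hence only a successor circuit). If instead $p$ lies on the $(d-1)$-surface but not the $d$-surface, move one step in dimension $d$ in direction $D_d(p)$ and overwrite $q^{(d-1)}$ with $p$. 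If $p$ is on the $d$-surface, it is a fixpoint: make it a self-loop, $S(x) = x$. The validity circuit $C$ accepts a tuple iff: each non-$\vblank$ record $q^{(i)}$ genuinely lies on the $i$-surface and has the lagging-coordinate relationship to $p$; $\vblank$ records appear exactly as a prefix and correspond to $p$ having the relevant higher coordinate equal to $0$; and, crucially, the current point $p$ is \emph{consistent with} the innermost non-$\vblank$ record, meaning $p$ lies on the monotone ray emanating from that record in the direction dictated by the appropriate $D_j$ --- this is what rules out the spurious points below $q$ in the third-column example. The starting vertex $0^n$ encodes $(\vblank, \dots, \vblank, \mathbf{0})$, and one checks $C(0^n) = 1$.

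\textbf{The potential and correctness.} The potential $V$ is the $d$-dimensional generalization of the body's formula: reading coordinates from $d$ down to $1$ in a mixed-radix (base-$k$) expansion, the $d$th digit is $p_d$; each lower digit is $p_j$ or $k - p_j$ according to the recorded direction of travel through dimension $j$ (which is determined by the $D_{j+1}$-value at the relevant surface point, i.e. recoverable from the tuple). Strict monotonicity of $V$ along each $S$-step follows from recursive monotonicity: a step in dimension $i$ changes the $i$th digit monotonically in the correct direction and resets all lower digits, and one verifies the ``reset'' never decreases the value because the reset digits were at their extreme (minimal-for-the-direction) value at a surface point. Then \UFEOPL solutions pull back: a U1 solution (potential fails to increase) can only occur if the promise on \DCM were violated, and a U2 solution (end of line, $C(x)=1, C(S(x))=0$) forces $S(x) = x$, i.e. $p$ is on the $d$-surface, giving the fixpoint. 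Finally I would verify the \UFEOPL promise itself: every valid vertex is reachable from $0^n$ by $S$-iteration --- this is precisely the content of the recursive monotonicity plus the uniqueness-of-slice property, which together pin down a single line.

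\textbf{The main obstacle} I expect is bookkeeping in the circuit $C$: getting the consistency condition between $p$ and the innermost record exactly right so that it (i) is checkable by a polynomial-size circuit using only $f$ (equivalently the $D_i$), (ii) accepts every point the intended line actually visits, and (iii) rejects every ``shadow'' point like the lower half of column three in Figure~\ref{fig:rbsurface}. The subtlety is that whether a given $p$ is a legitimate in-transit point depends on the direction the line was traveling when it left the last surface, which must be reconstructable purely from the tuple without a predecessor circuit; handling the interaction of this across all $d-1$ nesting levels simultaneously, and checking that $V$ is strictly monotone across the "reset" steps where many coordinates change at once, is where the real care is needed.
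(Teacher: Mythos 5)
Your approach is essentially the same as the paper's: both identify the line through nested anchors on the $i$-surfaces, both use the overwrite step (which forces a forward-only successor), and both build a mixed-radix potential keyed to the recorded direction of travel. The one genuine difference is presentational: the paper builds the instance by induction over dimensions, defining a hierarchy of ``partial'' instances $L_{d-1}, L_{d-2}, \dots, L_1$ with recursive pairs $(v, r)$ where $v$ is an $L_{i+1}$ vertex, and proves six inductive invariants level by level; you unfold that recursion into a flat tuple $(q^{(1)}, \dots, q^{(d-1)}, p)$ and define $C, S, V$ directly. This is an equivalent encoding, but the inductive presentation makes the monotonicity and reachability proofs modular, which is the main thing your proposal leaves implicit.

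Two details in your sketch do not match the construction that actually works and would need to be fixed. First, the hard-coded constraint $q^{(i)}_{i+1} = p_{i+1} - 1$ is only correct at the top level; for $i < d-1$ the anchor can sit one step on \emph{either} side of $p$ in coordinate $i+1$, with the sign determined by $D_{i+1}$ at the anchor --- the paper's $C_i$ therefore carries separate $\up$ and $\down$ cases with $r_i \ge u_i$ or $r_i \le u_i$. Second, your rule of \emph{resetting} the lower anchors $q^{(j)}$ ($j < i-1$) to $\vblank$ after a step in dimension $i$ is wrong: in the paper, $\vblank$ is reserved for the initial prefix of the line (where the corresponding coordinates of $p$ are still $0$) and never reappears; once an anchor is populated it is only ever \emph{overwritten}, because the nested $L_{i+1}$-vertex that plays the role of the anchor at level $i$ carries its own inner anchors forward unchanged. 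If you reset to $\vblank$ mid-line, your validity test for the $\vblank$ case (``the relevant higher coordinates are $0$'') would reject the very vertices the line visits. Relatedly, the paper's potential is built recursively from the anchor's potential $V_{i+1}(v)$ rather than directly from $p$'s coordinates, which is what makes strict monotonicity go through cleanly across the transition steps you flag as the tricky ones.
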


\smallskip

\noindent \textbf{From \UFEOPL to \EOPL.}
The final step of the proof is to reduce \UFEOPL to \EOPL. Fortunately, we are
able to utilize prior work to perform this step of the reduction. We will
utilize the following problem that was introduced by Bitansky et
al~\cite{BPR15}. 

\begin{definition}[\SOVL~\cite{BPR15}]
The input to the problem consists of a starting vertex $x_s \in \{0, 1\}^n$, a
target integer $T \le 2^n$, and two boolean circuits $S : \{0, 1\}^n
\rightarrow \{0, 1\}^n$, $W : \{0, 1\}^n \times \{0, 1\}^n \rightarrow \{0, 1\}$.
It is promised that, for every vertex $x \in \{0, 1\}^n$, and every integer $i
\le T$, we have $W(x, t) = 1$ if and only if $x = S^{i-1}(x_s)$. The goal is
to find the vertex $x_f \in \{0, 1\}^n$ such that $W(x_f, T) = 1$.
\end{definition}

It was shown by Hub\'a\v{c}ek and Yogev~\cite{hubavcek2017hardness}  that \SOVL can be reduced in polynomial
time to \EOML, and hence also to \EOPL. 
So, to
complete our proof, we need to reduce \UFEOPL to \SOVL. We are able to do this
because we are guaranteed that the line in our instance is unique, and so we can
use the potential of that line to determine how far we are from the start of the
line. However, this does not directly give us the exact number of steps between
the start of the line, which we need to construct the circuit $W$. So, we first
reduce to a unique forward version of \EOML, using the same technique as we used
in Theorem~\ref{thm:eoml2eopl}, and then reduce that to \SOVL. The full
details of the proof can be found in Appendix~\ref{app:ufeopl2eopl}.

\begin{lemma}
\label{lem:ufeopl2eopl}
\UFEOPL can be reduced in polynomial time to \UniqueEOPLc. 
\end{lemma}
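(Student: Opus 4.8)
The plan is to reduce \UFEOPL to \SOVL and then apply the reduction of \SOVL to \EOML due to Hub\'a\v{c}ek and Yogev~\cite{hubavcek2017hardness}, which by Theorem~\ref{thm:eoml2eopl} puts the resulting instance in \EOPL; since every instance built along the way has a single line, the final instance lies in \UniqueEOPLc (recall $\UniqueEOMLc = \UniqueEOPLc$). The difficulty is that a \SOVL verifier $W(x,i)$ must certify that $x$ is \emph{exactly} $i-1$ steps along the line, whereas the potential $V$ attached to a \UFEOPL instance increases by arbitrary amounts per step and so does not count steps.

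To overcome this I would first apply the potential-embedding idea from the proof of Theorem~\ref{thm:eoml2eopl} to replace $(C,S,V)$ by an equivalent instance whose meter equals the step index. Take vertices to be pairs $(x,p)$ with $x \in \{0,1\}^n$ and $p \in \{0,\dots,2^m-1\}$, declare $(x,p)$ to be on the line when $C(x)=1$ and $V(x) \le p < V(S(x))$, and make $(0^n,0)$ the start (legitimate since $V(0^n)=0$). Set the successor of $(x,p)$ to be $(x,p+1)$ while $p+1 < V(S(x))$, and $(S(x),V(S(x)))$ at the last step of the block; when $x$ is terminal ($C(S(x))=0$) or the potential fails to increase ($V(S(x)) \le V(x)$), instead let the line stall, padding forward as $(x,p)\mapsto(x,p+1)$ without leaving $x$. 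By design the meter of $(x,p)$ is simply $p$, it rises by exactly one per step, and the vertex reached after $j$ steps from $(0^n,0)$ has second coordinate $j$. This is a ``unique forward'' version of \EOML: exact metering, a single line by the \UFEOPL promise, and only a successor circuit.

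Next I would translate this into \SOVL, taking start vertex $(0^n,0)$ and target $T = 2^m$, which exceeds every possible value of $p$, so the padded line certainly reaches step $T$. Define $W((x,p),i)$ to accept exactly when $p = i-1$ and $(x,p)$ is a valid line vertex in the above sense (i.e.\ $C(x)=1$, $V(x) \le p$, and either $p < V(S(x))$ or $x$ is a terminal/padding vertex). Uniqueness of the \UFEOPL line means the intervals $[V(x),V(S(x)))$, as $x$ ranges over the line, partition an initial segment of $\{0,\dots,2^m-1\}$, so for each $i \le T$ exactly one vertex passes this test, namely $S^{i-1}$ of the start; hence the \SOVL promise holds whenever the \UFEOPL promise does. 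From the \SOVL solution at step $T$ we recover the $x$ at which the padded line stalled, and inspecting it returns a \UFEOPL solution: type (U2) if $C(S(x))=0$, and type (U1) (a non-increase $V(S(x)) - V(x) \le 0$) otherwise.

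Composing this with the Hub\'a\v{c}ek--Yogev reduction $\SOVL \to \EOML$ and with Theorem~\ref{thm:eoml2eopl} yields an \EOPL instance, and since each stage preserves having a single line the composition is a reduction to \UniqueEOPLc. The step I expect to be the main obstacle is exactly this metering/padding construction and its case analysis: one must arrange that a non-increase of the potential in the input is cleanly absorbed as a point where the padded line ``ends and stalls'', so that no spurious solutions are introduced, the \SOVL promise is respected on every promise-respecting input, and every \SOVL solution maps back to a genuine (U1) or (U2) solution of the original \UFEOPL instance. The remaining bookkeeping --- bounding circuit sizes and checking polynomial running time --- is routine.
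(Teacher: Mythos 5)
Your proposal is correct and follows essentially the same two-step strategy as the paper's proof: (1) embed the potential into the vertex description to get exact (``unit-step'') metering, reusing the idea from Theorem~\ref{thm:eoml2eopl} and replacing the unavailable predecessor check with $C$, and (2) pad the line at its terminal/stall vertex out to a fixed length $T$ so that a \SOVL verifier $W$ can be written, after which one appeals to the Hub\'a\v{c}ek--Yogev reduction from \SOVL to \EOML and observes that uniqueness of the line is preserved throughout. The only cosmetic difference is that you fold the padding directly into the potential-embedding step and represent \SOVL vertices as pairs $(x,p)$ with $p$ the step index, whereas the paper presents the intermediate ``unique forward \EOML'' instance explicitly and then introduces a separate $\vblank$/integer tag to do the terminal padding; these are equivalent.
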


To summarise, we obtain the following theorem. 

\begin{theorem}
\label{thm:lcm}
\LCM is in \UniqueEOPLc.
\end{theorem}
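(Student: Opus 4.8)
The plan is to assemble Theorem~\ref{thm:lcm} directly from the chain of reductions sketched in this section, so the proof is essentially a bookkeeping argument: compose the pieces and check that each promise is preserved. First I would invoke the reduction from \LCM to \DCM (from Appendix~\ref{sec:discretizing}): given a \LinearFIXP circuit for $f:[0,1]^d\to[0,1]^d$ that is promised to be a $c$-contraction in $\ell_p$ with $c\in(0,1)$, one builds a grid $P$ with polynomially-bounded denominators and direction functions $D_1,\dots,D_d$, and the key point to cite is that the bit-size of the grid can be chosen small enough (using $c<1$ and the piecewise-linear structure of the circuit) that an exact common zero $p\in P$ with $D_i(p)=\zero$ for all $i$ genuinely exists — so the \DCM instance is a legitimate, solvable instance, not just a promise. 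Next I would apply Lemma~\ref{lem:dcm2ufeopl} to reduce \DCM to \UFEOPL, then Lemma~\ref{lem:ufeopl2eopl} to reduce \UFEOPL to \UniqueEOPLc. Composing three polynomial-time reductions gives a polynomial-time reduction from \LCM to \UniqueEOPLc, which is exactly the statement.

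The substantive thing to verify along the way is that the ``unique line'' promise is maintained at each stage, since \UniqueEOPLc is a promise class and a sloppy composition could destroy the guarantee. For the \DCM-to-\UFEOPL step this is handled by the recursive monotonicity property: the line visits points on the $(d{-}1)$-surface monotonically increasing in coordinate $d$, between consecutive such points it visits $(d{-}2)$-surface points monotone in coordinate $d{-}1$, and so on down to the zero-surface; together with the ``slice uniqueness'' property (each fixed choice of the last $i$ coordinates has a unique point on the $(d{-}1{-}i)$-surface) this shows the vertex set $\{x : C(x)=1\}$ is exactly $\{S^k(0^n)\}$, so the \UFEOPL promise holds. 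The tuple encoding $(q,p)$ — where $q$ is the last surface point visited with $q_2=p_2-1$, or the symbol $\vblank$ at the start — is what lets the circuit $C$ recognise on-line vertices locally, and it is also the reason only a successor circuit is available (the overwrite of $q$ is not invertible), which is precisely why the target problem is \UFEOPL rather than \EOPL. For the \UFEOPL-to-\EOPL step, Lemma~\ref{lem:ufeopl2eopl} routes through \SOVL (using that the potential $V$ of the unique line can be converted into an exact step-count by first passing to a unique-forward \EOML as in Theorem~\ref{thm:eoml2eopl}), and Hub\'a\v{c}ek--Yogev~\cite{hubavcek2017hardness} give \SOVL$\;\le\;$\EOML$\;\le\;$\EOPL with the single-line structure preserved, landing in \UniqueEOPLc.

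Finally I would note the behaviour under a violated promise, to square with the surrounding discussion: if the input circuit is not actually a $c$-contraction, the \LCM-to-\DCM step detects this (the grid-size argument fails in a checkable way, yielding a witness that $f$ is not $c$-contracting), which is consistent with the remark in the excerpt that the non-promise version is \emph{not} thereby reduced, because a bad \EOPL-side solution that is not a fixpoint does not obviously yield two points violating contraction. The main obstacle — really the only non-routine content, and it is quarantined in Appendix~\ref{sec:discretizing} and Appendix~\ref{app:dcm2ufeopl} — is establishing the recursive monotonicity and slice-uniqueness of the \DCM instance, i.e.\ proving that walking the $(d{-}1)$-surface in the manner described actually produces a single monotone line whose end is the fixpoint and along which the generalised potential $V$ (extending $V(p_1,p_2)=k\cdot p_2+p_1$ or $k\cdot p_2+(k-p_1)$ according to $D_1$) is strictly increasing. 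Given those structural facts, the theorem follows by transitivity of polynomial-time reductions.
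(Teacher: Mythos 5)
Your proposal is correct and follows essentially the same route as the paper: Theorem~\ref{thm:lcm} is obtained by composing the reduction \LCM{} to \DCM{} from Appendix~\ref{sec:discretizing} (Lemma~\ref{lem:lcm2dcm}), Lemma~\ref{lem:dcm2ufeopl} (\DCM{} to \UFEOPL), and Lemma~\ref{lem:ufeopl2eopl} (\UFEOPL{} to \UniqueEOPLc), checking that the single-line promise is carried through each stage. One small inaccuracy in your closing aside (immaterial to the promise-version theorem itself): the paper says that when $f$ is not actually contracting the reduction \emph{detects} this, but explicitly does \emph{not} produce a pair of points witnessing non-contraction, so ``yielding a witness that $f$ is not $c$-contracting'' overstates the guarantee, as your very next clause in fact concedes.
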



\section{Algorithms for Contraction Maps}
\label{sec:algorithms}

\paragraph{\bf An algorithm for \LCM.}

The properties that we observed in our reduction from \LCM to \EOPL can also be
used to give polynomial time algorithms for the case where the number of
dimensions is constant. In our two-dimensional example, we relied on the fact
that each dimension-two slice has a unique point on the blue surface, and 
that the direction function at this point tells us the direction of the overall
fixpoint. 

This suggests that a nested binary search approach can be used to find the
fixpoint. The outer binary search will work on dimension-two coordinates, and
the inner binary search will work on dimension-one coordinates. For each fixed
dimension-two coordinate $y$, we can apply the inner binary search to find the
unique point $(x, y)$ that is on the blue surface. Once we have done so, $D_2(x,
y)$ tells us how to update the outer binary search to find a new candidate
coordinate $y'$. 

This can be generalized to $d$-dimensional instances, by
running $d$ nested instances of binary search. Doing so yields the following
theorem, whose proof appears in Appendix~\ref{subsec:exact_algo_details}.

\begin{theorem}
Given a $\LinearFIXP$ circuit $C$ encoding a contraction map $f : [0,1]^d\to
[0,1]^d$ with respect to any $\ell_p$ norm, there is an algorithm to find a
fixpoint of $f$ in time that is polynomial in $\size(C)$ and exponential in $d$.
\end{theorem}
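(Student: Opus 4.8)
The plan is to turn the structural picture behind the \LCM-to-\UFEOPL reduction into a direct $d$-nested binary search, using the recursive monotonicity property established in Appendix~\ref{app:dcm2ufeopl}. First I would invoke the reduction from \LCM to \DCM (from Appendix~\ref{sec:discretizing}) to pass from the circuit $C$ to a grid $P$ of points whose coordinates have polynomially-bounded denominators, together with the direction functions $D_1,\dots,D_d : P \to \{\up,\down,\zero\}$, and with the guarantee that a point $p\in P$ with $D_i(p)=\zero$ for all $i$ exists. Crucially, the grid size is $2^{\poly(\size(C))}$ in each coordinate, so a binary search on a single coordinate takes $\poly(\size(C))$ iterations, each iteration doing one circuit evaluation, which is $\poly(\size(C))$ time. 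Since $D_i$ is evaluated by running $C$ once and comparing coordinates, each call is cheap.

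Next I would define, for $i = d, d-1, \dots, 1$, a routine $\textsc{FindSurface}_i$ that takes as input fixed values for coordinates $p_{i+1}, \dots, p_d$ and returns the unique point on the $i$-surface (i.e.\ with $D_j = \zero$ for all $j \le i$) having those trailing coordinates. The base case $\textsc{FindSurface}_0$ just returns the input point unchanged. For the inductive step, $\textsc{FindSurface}_i$ performs a binary search over the $i$-th coordinate: for a candidate value $p_i$, it calls $\textsc{FindSurface}_{i-1}(p_i, p_{i+1},\dots,p_d)$ to get the unique $(i{-}1)$-surface point $q$ lying in that slice, reads off $D_i(q)$, and moves the search up or down accordingly (halting when $D_i(q)=\zero$). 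The final answer is $\textsc{FindSurface}_d()$ with no fixed coordinates, which is the fixpoint. The running time satisfies $T_i \le (\text{number of binary-search steps}) \cdot T_{i-1} = \poly(\size(C)) \cdot T_{i-1}$, so $T_d \le \poly(\size(C))^{\,d}$, which is polynomial in $\size(C)$ for constant $d$ and exponential in $d$ — exactly the claimed bound.

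\textbf{The correctness argument} is where the real work lies, and it reduces entirely to two facts about the \DCM instance, both established in Appendix~\ref{app:dcm2ufeopl} as part of the recursive monotonicity property: (i) for every fixing of coordinates $p_{i+1},\dots,p_d$, the restriction of the $i$-surface to that slice is a single point, so $\textsc{FindSurface}_i$ is well-defined; and (ii) along the $(i{-}1)$-surface points indexed by the $i$-th coordinate, the value $D_i$ is ``monotone'' in the sense that it points $\up$ below the $i$-surface point, $\down$ above it, and $\zero$ exactly at it — this is precisely what makes the inner binary search converge. Given (i) and (ii), an induction on $i$ shows $\textsc{FindSurface}_i$ returns the correct point: the inductive hypothesis gives that the inner call returns the correct $(i{-}1)$-surface point for each candidate slice, property (ii) guarantees the sign of $D_i$ at that point correctly guides the binary search, and property (i) guarantees the target is unique. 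The main obstacle, then, is not the algorithm itself but making sure the monotonicity properties proved for the \emph{line} in the \UFEOPL reduction are stated in the slice-wise form the binary search needs; I would extract these as a standalone lemma in Appendix~\ref{subsec:exact_algo_details}, citing Appendix~\ref{app:dcm2ufeopl}, rather than re-deriving them. A secondary point to check carefully is that the \LCM-to-\DCM reduction genuinely produces a grid fine enough to contain an exact fixpoint (not merely an approximate one) — but this is exactly what Lemma~\ref{lem:dcm2ufeopl} and its supporting appendix already guarantee, so I would simply cite it.
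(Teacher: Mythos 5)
Your proposal is correct and follows essentially the same nested-binary-search approach as the paper's Algorithm~\ref{alg} ($\FindFP$): each recursion level peels off one more coordinate and binary-searches it, with correctness resting on unique slice fixpoints (Lemma~\ref{lem:cm1}) and the sign of the direction at a sub-slice fixpoint (Lemma~\ref{lem:cm2}), and with the denominator bounds from Section~\ref{sec:discretizing} controlling the number of iterations. One small correction: the two facts you invoke are stated as the \DCM definition together with Lemma~\ref{lem:lcm2dcm} in Appendix~\ref{sec:discretizing}, not in Appendix~\ref{app:dcm2ufeopl}, and the paper's Algorithm~\ref{alg} actually bypasses the explicit \DCM grid, binary-searching the real interval $[0,1]$ and snapping to the unique bounded-denominator rational once the interval is small enough — an equivalent packaging of the same idea.
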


\paragraph{\bf An algorithm for \CM.}

We are also able to generalize this to the more general \CM problem, where the
input is given as an arbitrary (non-linear) arithmetic circuit. Here
the key issue is that the fixpoint may not be rational, and so we must find a
suitably accurate approximate fixpoint. Our nested binary search
approach can be adapted to do this.

Since we now deal with approximate fixpoints, we must cut off each of our nested
binary search instances at an appropriate accuracy. Specifically, we must ensure
that the solution is accurate enough so that we can correctly update the outer
binary search. 
Choosing these cutoff points turns out to be quite involved, as we must choose
different cutoff points depending on both the norm and the level of recursion,
and moreover the $\ell_1$ case requires a separate proof.
The details of this are deferred to
Appendix~\ref{subsec:approx_algo_details}, where the following theorem is shown.

\begin{theorem}
\label{thm:alggeneral}
For a contraction map $f:[0,1]^d\to [0,1]^d$ under $\Norm{\cdot}_p$ for $2 \leq
p < \infty$, there is an algorithm to compute a point $v\in [0,1]^d$ such that
$\Norm{f(v) - v}_p < \eps$ in time $O(p^{d^2}\log^d(1/\eps)\log^d(p))$.
\end{theorem}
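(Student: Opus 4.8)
The plan is to implement the nested binary search sketched in the body of the paper, but with carefully chosen precision cutoffs at each level of recursion so that the overall running time is $O(p^{d^2}\log^d(1/\eps)\log^d(p))$. I would define, for each $i \in \{1,\dots,d\}$, a recursive procedure $\mathrm{Find}_i$ that takes as input fixed values $(v_{i+1},\dots,v_d)$ of the last $d-i$ coordinates together with a target accuracy $\delta_i$, and returns a value $v_i$ such that, along the $i$-th coordinate axis (with the higher coordinates frozen), the point $(v_1,\dots,v_i,v_{i+1},\dots,v_d)$ is within $\delta_i$ of the unique ``$i$-surface'' point for that slice; internally $\mathrm{Find}_i$ performs ordinary bisection on $v_i$, and to evaluate the sign of the direction function $D_i$ at a trial point it recursively calls $\mathrm{Find}_{i-1}$ to first locate (approximately) the $(i-1)$-surface. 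The base case $\mathrm{Find}_1$ is a single bisection on $v_1$ using the sign of $f(v)_1 - v_1$, which we can evaluate exactly since the circuit is fixed. A fixpoint approximation is then $\mathrm{Find}_d$ called with the empty tuple and accuracy $\eps$.

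The first substantive step is the correctness / robustness analysis: I must show that an \emph{approximate} evaluation of the inner surface still yields the correct \emph{sign} of the outer direction function, so that the outer bisection makes progress. This is where the contraction hypothesis enters quantitatively. Because $f$ is a $c$-contraction in $\|\cdot\|_p$ with $c<1$, the map $g_i(t) = f(\dots,t,\dots)_i - t$ along a slice has a Lipschitz lower bound on its rate of change near its zero (the ``recursive monotonicity'' the paper establishes), and one obtains a bound of the form: if the frozen higher coordinates are within $\delta$ of their true surface values, the location of the $i$-surface point moves by at most a controlled multiple of $\delta$. Tracking these constants through the $d$ levels gives a recursion $\delta_{i-1} = \delta_i / (\text{factor})$, and the factor is where the $p$-dependence appears — converting between the $\ell_p$ norm in which contraction is assumed and the coordinatewise displacements we actually control costs a factor that is polynomial in $p$ (this is the analogue of the $\ell_1$-versus-$\ell_p$ subtlety the paper flags). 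Consequently $\delta_1$ must be roughly $\eps / p^{\Theta(d)}$, and each bisection needs $O(\log(1/\delta_i)) = O(\log(1/\eps) + d\log p)$ iterations; with $d$ nested loops this yields $O((\log(1/\eps) + d \log p)^d)$, which I would then bound by $O(p^{d^2}\log^d(1/\eps)\log^d(p))$ after absorbing lower-order terms — here the $p^{d^2}$ comes from the product over $d$ levels of the per-level $p^{\Theta(d)}$ blow-up in required precision (equivalently, $d$ factors of $\log(p^{\Theta(d)}) = \Theta(d\log p)$ multiplied out).

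The main obstacle I anticipate is the precision bookkeeping for $2 \le p < \infty$: I must show that, at every level, knowing the inner coordinates to accuracy $\delta_{i-1}$ certifies the sign of $g_i$ everywhere outside a $\delta_i$-window of its root, and that the constants are uniform over all slices actually reachable. Two technical points need care. First, the direction function can take the value $\zero$ on a whole face rather than at an isolated point, so I would break ties consistently (e.g. treat $\zero$ as $\down$) and argue the monotone-slice structure is preserved; this is exactly the structural fact underlying Lemma~\ref{lem:dcm2ufeopl}, so I would reuse it. Second, the conversion from an $\ell_p$ contraction bound to a per-coordinate displacement bound: from $\|f(x)-f(y)\|_p \le c\|x-y\|_p$ one gets $|f(x)_i - f(y)_i| \le c\|x-y\|_p \le c\, d^{1/p}\|x-y\|_\infty$, and in the reverse direction a single coordinate perturbation of size $\delta$ changes the $\ell_p$ distance by $\delta$, so the net amplification per level is $O(1/(1-c))$ times a $\mathrm{poly}(p)$ factor from these norm conversions; making the $\mathrm{poly}(p)$ explicit (and checking it is genuinely polynomial rather than exponential in $p$) is the crux. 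Once that bound is pinned down, everything else is the routine geometric-series accumulation of bisection depths, and the stated running time follows.
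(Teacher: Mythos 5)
Your overall architecture — nested bisection with per-level accuracy cutoffs, using contraction to argue that an approximate inner slice fixpoint still gives the correct sign for the outer bisection — matches the paper's. But your precision bookkeeping is wrong at exactly the point you yourself flag as the crux, and once corrected your derivation of the running time no longer goes through. You posit the recursion $\delta_{i-1} = \delta_i / \mathrm{poly}(p)$, so that $\delta_1 \approx \eps/p^{\Theta(d)}$ and each bisection needs only $O(\log(1/\eps)+d\log p)$ steps. The certificate you need at each level, however, comes from the contraction inequality written in the $p$-th power of the norm, and there gains and losses are compared at the level of $p$-th powers, not linearly. Concretely, to conclude from an inner approximate solution $v$ (with errors at coordinates $j>k$ bounded by $\eps_j$) that the sign of $f(v)_k - v_k$ correctly points toward the true slice fixpoint $x^*$, one compares $\Norm{f(v)-x^*}_p^p$ and $\Norm{v-x^*}_p^p$: the displacement in coordinate $k$ contributes a gain of about $\eps_k^p$, while each coordinate $j>k$ contributes a loss of at most $p\eps_j$ (linearizing $a^p-(a-\eps_j)^p$). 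One therefore needs $\eps_k^p > \sum_{j>k}p\eps_j$, which forces $\eps_{k+1} \lesssim \eps_k^p/p$ — a $p$-th power at each level, not a polynomial factor. The paper's actual choice is $\eps_i = \eps^{p^i}p^{-2\sum_{j=0}^i p^j}$, so the level-$i$ bisection runs to depth $O(p^i\log(1/\eps)\log p)$, and the product over $i=1,\dots,d$ yields $O(p^{d(d+1)/2}\log^d(1/\eps)\log^d p) = O(p^{d^2}\log^d(1/\eps)\log^d p)$.

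The observation that your proposed recursion cannot be right is self-diagnosing: it would give $(\log(1/\eps)+d\log p)^d$, strictly \emph{better} than the theorem claims, and indeed for small $\eps$ the inequality $\eps_k^p > \sum_{j>k}p\eps_j$ visibly fails when $\eps_k$ decays only polynomially, since $\eps_k^p$ shrinks far faster than $\eps_{k+1}$. Your norm-conversion estimate $\Abs{f(x)_i-f(y)_i} \le c\Norm{x-y}_p$ is fine but does not telescope through the recursion; the contraction hypothesis constrains the $\ell_p$ norm, and the sign test requires that a single coordinate gain, measured as a $p$-th power, dominate the accumulated coordinate errors measured after that linearization. Finally, there is a second estimate you omit entirely: once the outer bisection has narrowed to a window of width $\eps_k$ containing the true $t^*_k$ in its interior, you still need a separate argument that the final recursive call returns an acceptable point — this is Lemma~\ref{lem:approx_cm-lp_final} in the paper, which compares $f(v^h)$ with $f(v^l)$ rather than with $x^*$ and needs its own verification that the same $\eps$-sequence works.
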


Actually, our algorithm treats the function as a black-box, and so it can be
applied to any contraction map, with Theorem~\ref{thm:alggeneral} giving the
number of queries that need to be made.


\section{Open Problems}

Many interesting open questions arise from our work.
In the case of finding a Nash equilibrium of a two-player game, which we now
know is \PPAD-complete~\cite{chen2009settling,daskalakis2009complexity}, the
definition of \PPAD was inspired by the path structure of the Lemke-Howson
algorithm.
Our definition of \EOPL is directly inspired by the path structure
of Lemke paths for P-matrix LCPs, as well as combining the canonical definitions
of \PPAD and \PLS.
Thus, the first conjecture we make is:

\begin{conjecture}
\PLCP is hard for \EOPLc and promise \PLCP is hard for \UniqueEOPLc.
\end{conjecture}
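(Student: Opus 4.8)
The plan is to prove the stronger statement that \EOPL reduces to \PLCP in polynomial time; since \EOPL is complete for \EOPLc by definition, this immediately gives \EOPLc-hardness of \PLCP, and for the promise version it suffices to check in addition that the reduction maps unique-line instances to LCPs $(M,\qq)$ in which $M$ is a genuine P-matrix (so only a \PLo solution is possible). In short, we want a converse to the reduction of Section~\ref{sec:PLCPtoEOPL}.

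That reduction runs Lemke's algorithm on a P-matrix LCP and reads off an \EOPL line whose edges are Lemke pivots and whose potential is the auxiliary variable $z$. For the converse, I would try, from the circuits $S,P,V$, to manufacture an LCP $(M,\qq)$ together with a covering vector for Lemke's algorithm so that: the Lemke path of $(M,\qq)$ visits precisely $0^n, S(0^n), S^2(0^n), \dots$; the variable $z$ decreases monotonically along the path, tracking the increase of $V$; and the unique vertex where $z$ reaches $0$, which is the LCP solution, is exactly the end of the \EOPL line. The defining data of \EOPL is well suited to this, because \EOPL permits $V$ to jump by arbitrary amounts and the drop in $z$ at a Lemke pivot is in any case dictated by geometry, so we only have to reproduce the \emph{monotonicity} of $V$, not its values.

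The natural way to build such an $M$ is through an intermediate geometric problem: embed the bit-string vertex space into a high-dimensional cube, route the \EOPL line through it as a thin piecewise-linear ``tube'' whose linear pieces are specified by $S$ and $P$, and arrange that the piecewise-linear map Lemke's algorithm inverts (cf.\ Figure~\ref{fig:lemke}) is the displacement map whose zeros are the LCP solutions. Dead ends of the tube --- vertices with $S(P(x)) \neq x$ or $P(S(x)) \neq x$ --- should force that map to fail to be a bijection and hence yield either an LCP solution or, using the characterisation that P-matrix LCPs are exactly those whose complementary cones tile space without overlap, a witnessed overlap of cones certifying a non-positive principal minor (a \PLt solution); a step where $V$ fails to increase should be arranged to break local monotonicity and likewise expose a bad minor. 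This mirrors the \EOL $\to$ Brouwer pipeline behind \PPAD-hardness proofs, with $V$ playing the role of the distinguished ``$z$-coordinate.''

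The main obstacle --- and the reason this is still a conjecture --- is the rigidity of the P-matrix property. Being a P-matrix is a global condition on all $2^d$ principal minors, so a reduction that assembles $M$ from local circuit gadgets has no cheap way to certify positivity everywhere; dually, when the \EOPL input is malformed there is no obvious recipe for producing a specific non-positive minor. This is precisely the gap noted after Definition~\ref{def:plcp}, and the analogue of the reason the non-promise version of \LCM resisted treatment in Section~\ref{sec:lcm2eopl}. One must show both that (i) in the well-formed unique-line case the routed-tube matrix is provably a P-matrix, and (ii) every failure mode --- a broken line, a non-increasing potential, an inconsistent circuit --- is witnessed either by a genuine LCP solution or by an efficiently locatable bad principal minor. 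Controlling the full minor structure of the constructed matrix, and in particular ruling out ``spurious'' complementary cones that would create Lemke-path endpoints with no \EOPL counterpart, is where I expect the real work to lie, and it is also where the conjecture could conceivably fail. Routing instead through unique-sink orientations of cubes does not obviously help, since only the LCP-to-USO direction is known and the reverse embedding faces the same obstruction.
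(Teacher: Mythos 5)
This statement is a \emph{conjecture}: the paper offers no proof of it, and indeed presents it explicitly as an open problem. There is therefore no ``paper's own proof'' to compare your attempt against. What you have written is not a proof either --- and you say so yourself --- but rather a thoughtful sketch of a plausible attack (reverse the Lemke-path reduction of Section~\ref{sec:PLCPtoEOPL} by synthesising an LCP whose Lemke path realises the given \EOPL line, with $z$ tracking the negation of the potential) together with an accurate diagnosis of why it does not go through: the P-matrix property is a global condition on exponentially many principal minors, so a locally assembled gadget matrix cannot be cheaply certified to be a P-matrix in the well-formed case, nor does a malformed \EOPL instance obviously yield a specific non-positive minor as a \PLt witness. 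That diagnosis is consistent with the discussion surrounding Definition~\ref{def:plcp} and with the way the paper frames the analogous difficulty for the non-promise version of \LCM. So your writeup is correct \emph{as a description of the state of the problem}, but it should not be read as a proof, and you rightly flag the two load-bearing steps (certifying the P-matrix property, and mapping every failure mode to a \PLt witness or LCP solution) as open. One smaller point worth making explicit: even if you could route an arbitrary \EOPL line through an LCP, you would also have to ensure the constructed LCP has no ``spurious'' Lemke-path endpoints (secondary rays) that correspond to nothing in the \EOPL instance; you mention this in passing, but it is on a par with the minor-certification issue and deserves equal billing as a blocking obstacle.
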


This is actually two conjectures, but many natural approaches for proving the
first result would almost automatically prove the second.
Similar to those conjectures:

\begin{conjecture}
\LCM is hard for \UniqueEOPLc.
\end{conjecture}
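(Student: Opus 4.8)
The plan is to prove the conjecture by reducing a problem complete for \UniqueEOPLc to \LCM. A convenient starting point is \UFEOPL (or, since $\UniqueEOMLc = \UniqueEOPLc$, a metered line, whose exact step counter supplies a ready-made progress coordinate): granting that such a problem is \UniqueEOPLc-complete — the ``$\leq$'' direction is Lemma~\ref{lem:ufeopl2eopl}, and the reverse embedding of a promise-unique \EOPL instance can be obtained by the same forward-reconstruction trick used in Theorem~\ref{thm:eoml2eopl} — it suffices to give a polynomial-time, promise-preserving reduction from \UFEOPL to \LCM. Concretely, from an instance $(C,S,V)$ with unique line $0^n = v_0 \to v_1 \to \cdots \to v_L$ and strictly increasing potentials $0 = V(v_0) < V(v_1) < \cdots < V(v_L)$, the goal is to output a \LinearFIXP circuit computing a map $f : [0,1]^d \to [0,1]^d$ with $d = \poly(n)$, a constant $c \in (0,1)$, and a norm $\ell_p$, such that $f$ is genuinely a $c$-contraction in $\ell_p$ and its unique Banach fixpoint decodes to the end of the line; once global contraction is established, a non-fixpoint output is impossible, so the ``violation'' bookkeeping of \EOPL is handled automatically.

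For the construction I would target the $\ell_\infty$ norm first, because a \LinearFIXP circuit in which every \mz{} gate multiplies by a constant of absolute value at most $c$, and whose other gates are $\min,\max,+,-$ with partial sums kept bounded, automatically computes a map with $\Norm{f(x)-f(y)}_\infty \le c\,\Norm{x-y}_\infty$. I would embed the line as a piecewise-linear ``snake'' in the cube: use a rescaled copy of the potential, $\widehat V := V/2^m \in [0,1]$, as an explicit progress coordinate so that moving forward along the line is monotone and well defined; allocate a block of coordinates to encode the bits of the current and next vertex so $C$ and $S$ can be evaluated by the arithmetic gates; and define $f$ to (a) drag any off-snake point onto the snake by a contracting projection, (b) advance an on-snake point forward along the snake, strictly increasing the progress coordinate, and (c) act as the identity in a neighbourhood of $v_L$. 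Decoding the unique fixpoint then recovers $v_L$.

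The step I expect to be the genuine obstacle is proving that $f$ is contracting \emph{globally}, not merely that it behaves correctly locally. Continuous-embedding hardness proofs for \PPAD (Sperner, Brouwer) need only continuity and boundedly many ``directions'', whereas contraction is far more rigid: it forbids any pair of points whose images separate at rate exceeding $c$. The delicate configurations are the turns of the snake where its embedding direction changes; pairs of points that are close in $[0,1]^d$ but lie on strands of the snake that are far apart along the line (these must be pushed forward in a mutually consistent way, which seems to force an embedding into wide, well-separated ``corridors'' so that the projection in (a) is single-valued and $c$-Lipschitz); and, crucially, the fact that $V$ may jump by an exponential amount between consecutive vertices, so the progress coordinate must be rescaled and smoothed in a way that absorbs these jumps into the $1-c$ slack. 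Designing the corridor geometry and the slack budget so all three hold simultaneously is the technical heart of the reduction. Once global $c$-contraction is established, Banach's theorem gives uniqueness of the fixpoint for free, so the reduction is promise-preserving; extending from $\ell_\infty$ to a general $\ell_p$ would then follow the same route used for the algorithmic results, paying an extra factor depending on $p$ and $d$ in the separation parameters.

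A modular alternative worth pursuing in parallel is to factor the conjecture through the \PLCP conjecture: first show \PromisePLCP is \UniqueEOPLc-hard, and then give a promise-preserving reduction from \PromisePLCP to \LCM exploiting the piecewise-linear-homeomorphism structure of complementary cones and the known links between P-matrix LCPs, unique-sink orientations of cubes, and contraction-type value operators — a route that is at least consistent with the fact that simple stochastic games reduce to both \PromisePLCP and to \LCM in the $\ell_\infty$ norm. However, turning the complementary-cone homeomorphism into a genuine $c$-contraction rather than merely a homeomorphism runs into exactly the same ``budget the $1-c$ slack'' difficulty, so this route localizes the main obstacle rather than avoiding it.
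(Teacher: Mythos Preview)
The statement you are attempting to prove is a \emph{conjecture} in the paper, listed explicitly among the open problems in Section~6. The paper does not prove it and offers no proof sketch; it simply states the conjecture after noting that \LCM was defined as a promise problem precisely because the authors could not turn a violation of the \EOPL structure back into an explicit violation of contraction. There is therefore no ``paper's own proof'' against which to compare your proposal.

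Your write-up is not a proof but a research plan, and you are candid about this: you correctly identify that the crux is establishing \emph{global} $c$-contraction of the constructed map, not merely local correctness along the snake. This is indeed the obstacle, and nothing in your outline resolves it. In particular, the ``corridor geometry and slack budget'' you describe would need to absorb not just potential jumps but also the combinatorial branching implicit in evaluating the circuits $C$ and $S$ via $\min/\max$ gates on bit-encoding coordinates; such gates generically have Lipschitz constant $1$ in $\ell_\infty$, and composing polynomially many of them while keeping the overall constant strictly below $1$ is exactly the unresolved difficulty. Your alternative route through \PromisePLCP does not help either, since (as you note) it relocates rather than removes the same obstruction, and moreover it presupposes Conjecture~1 of the paper, which is also open.

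In short: there is no gap relative to the paper because the paper proves nothing here; but your proposal is a plan with its hardest step explicitly left open, not a proof.
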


We defined \LCM as a promise problem in this paper.
The reason was that if we start with a non-contracting instance of \LCM 
then the \EOPL instance we produce will either give us a fixpoint or a short
proof that the instance was not contracting, but we do not know how to turn 
this proof into an actual explicit violation of contraction. 
Solving this issue is one interesting direction.

Another interesting question is whether the more general \CM is in \EOPL. 
\CM is defined by a general arithmetic circuit, and may have an
irrational fixpoint, so the problem is to find an approximate fixpoint. 
A natural approach is to approximate general circuits with \LinearFIXP circuits,
e.g., with interpolation, where the additional requirement would be to maintain
the contraction property.

Finally, we have shown that \EOPLc is contained in \CLS. We conjecture that:

\begin{conjecture}
\CLS = \EOPLc.
\end{conjecture}
It is not at all clear that this is true, and both possible outcomes would be
interesting. 
Given the recent result of Daskalakis, Tzamos, and
Zampetakis~\cite{DTZ17}, one approach to proving
this conjecture would be to show that \CM with a metric as input is in \EOPLc.
This seems challenging but plausible. If this conjecture is actually false,
then it is natural to ask which further problems known to be in \CLS also
lie in \EOPLc.

\todo[inline]{Do we want to add PL-Contraction to P-LCP, and vice-versa? Otherwise reads great. - Ruta.}
\newpage

\bibliography{paper}

\begin{thebibliography}{10}

\bibitem{AV11}
Ilan Adler and Sushil Verma.
\newblock The linear complementarity problem, {L}emke algorithm, perturbation,
  and the complexity class {PPAD}.
\newblock Technical report, Manuscript, Depatrment of IEOR, University of
  California, Berkeley, CA 94720, 2011.

\bibitem{Aldous83}
David Aldous.
\newblock Minimization algorithms and random walk on the d-cube.
\newblock {\em The Annals of Probability}, pages 403--413, 1983.

\bibitem{BPR15}
Nir Bitansky, Omer Paneth, and Alon Rosen.
\newblock On the cryptographic hardness of finding a {N}ash equilibrium.
\newblock In {\em Proc.\ of FOCS}, pages 1480--1498, 2015.

\bibitem{bjorklund2004combinatorial}
Henrik Bj{\"o}rklund, Sven Sandberg, and Sergei Vorobyov.
\newblock A combinatorial strongly subexponential strategy improvement
  algorithm for mean payoff games.
\newblock In {\em Proc.\ {MFCS}}, pages 673--685, 2004.

\bibitem{BoonyasiriwatSX07}
Ch. Boonyasiriwat, Kris Sikorski, and Ch. Xiong.
\newblock A note on two fixed point problems.
\newblock {\em J. Complexity}, 23(4-6):952--961, 2007.

\bibitem{parity}
Cristian~S. Calude, Sanjay Jain, Bakhadyr Khoussainov, Wei Li, and Frank
  Stephan.
\newblock Deciding parity games in quasipolynomial time.
\newblock In {\em Proc.\ {STOC}}, pages 252--263, 2017.

\bibitem{ChenD05}
Xi~Chen and Xiaotie Deng.
\newblock On algorithms for discrete and approximate brouwer fixed points.
\newblock In {\em Proc.\ of {STOC}}, pages 323--330, 2005.

\bibitem{ChenD08}
Xi~Chen and Xiaotie Deng.
\newblock Matching algorithmic bounds for finding a brouwer fixed point.
\newblock {\em J. {ACM}}, 55(3):13:1--13:26, 2008.

\bibitem{ChenD09}
Xi~Chen and Xiaotie Deng.
\newblock On the complexity of 2d discrete fixed point problem.
\newblock {\em Theor. Comput. Sci.}, 410(44):4448--4456, 2009.

\bibitem{chen2009settling}
Xi~Chen, Xiaotie Deng, and Shang-Hua Teng.
\newblock Settling the complexity of computing two-player {N}ash equilibria.
\newblock {\em J.\ ACM}, 56(3):14, 2009.

\bibitem{chung1989np}
Sung-Jin Chung.
\newblock {NP}-completeness of the linear complementarity problem.
\newblock {\em Journal of Optimization Theory and Applications},
  60(3):393--399, 1989.

\bibitem{condon1992complexity}
Anne Condon.
\newblock The complexity of stochastic games.
\newblock {\em Information and Computation}, 96(2):203--224, 1992.

\bibitem{cottle2009linear}
Richard~W Cottle, Jong-Shi Pang, and Richard~E Stone.
\newblock {\em The linear complementarity problem}.
\newblock SIAM, 2009.

\bibitem{coxson1994p}
Gregory~E Coxson.
\newblock The {P}-matrix problem is co-{NP}-complete.
\newblock {\em Mathematical Programming}, 64(1):173--178, 1994.

\bibitem{daskalakis2009complexity}
Constantinos Daskalakis, Paul~W Goldberg, and Christos~H Papadimitriou.
\newblock The complexity of computing a {N}ash equilibrium.
\newblock {\em SIAM Journal on Computing}, 39(1):195--259, 2009.

\bibitem{daskalakis2011continuous}
Constantinos Daskalakis and Christos Papadimitriou.
\newblock Continuous local search.
\newblock In {\em Proc.\ of {SODA}}, pages 790--804. Society for Industrial and
  Applied Mathematics, 2011.

\bibitem{DTZ17}
Constantinos {Daskalakis}, Christos {Tzamos}, and Manolis {Zampetakis}.
\newblock {A Converse to Banach's Fixed Point Theorem and its CLS
  Completeness}.
\newblock In {\em Proc.\ of {STOC}}, 2018.

\bibitem{DengQSZ11}
Xiaotie Deng, Qi~Qi, Amin Saberi, and Jie Zhang.
\newblock Discrete fixed points: Models, complexities, and applications.
\newblock {\em Math. Oper. Res.}, 36(4):636--652, 2011.

\bibitem{DGKMW17}
J{\'e}r{\^o}me Dohrau, Bernd G{\"a}rtner, Manuel Kohler, Ji{\v r}{\'\i}
  Matou{\v s}ek, and Emo Welzl.
\newblock A{RRIVAL}: a zero-player graph game in {$\NP\cap \coNP$}.
\newblock In {\em A journey through discrete mathematics}, pages 367--374.
  Springer, Cham, 2017.

\bibitem{EmersonJ91}
E.~Allen Emerson and Charanjit~S. Jutla.
\newblock Tree automata, mu-calculus and determinacy.
\newblock In {\em Proc.\ of {FOCS}}, pages 368--377, 1991.

\bibitem{EtessamiY10}
Kousha Etessami and Mihalis Yannakakis.
\newblock On the complexity of nash equilibria and other fixed points.
\newblock {\em {SIAM} J. Comput.}, 39(6):2531--2597, 2010.

\bibitem{fabrikant2004complexity}
Alex Fabrikant, Christos Papadimitriou, and Kunal Talwar.
\newblock The complexity of pure {N}ash equilibria.
\newblock In {\em Proceedings of the thirty-sixth annual ACM symposium on
  Theory of computing ({STOC})}, pages 604--612. ACM, 2004.

\bibitem{FGMS17}
John Fearnley, Spencer Gordon, Ruta Mehta, and Rahul Savani.
\newblock {CLS:} new problems and completeness.
\newblock {\em CoRR}, abs/1702.06017, 2017.
\newblock URL: \url{http://arxiv.org/abs/1702.06017}, \href
  {http://arxiv.org/abs/1702.06017} {\path{arXiv:1702.06017}}.

\bibitem{FJS0W17}
John Fearnley, Sanjay Jain, Sven Schewe, Frank Stephan, and Dominik Wojtczak.
\newblock An ordered approach to solving parity games in quasi polynomial time
  and quasi linear space.
\newblock In {\em Proc.\ of SPIN}, pages 112--121, 2017.

\bibitem{fearnley2010linear}
John Fearnley, Marcin Jurdzi{\'n}ski, and Rahul Savani.
\newblock Linear complementarity algorithms for infinite games.
\newblock In {\em International Conference on Current Trends in Theory and
  Practice of Computer Science}, pages 382--393. Springer, 2010.

\bibitem{fearnley2016complexity}
John Fearnley and Rahul Savani.
\newblock The complexity of all-switches strategy improvement.
\newblock In {\em Proc.\ of {SODA}}, pages 130--139, 2016.

\bibitem{GairingS10}
Martin Gairing and Rahul Savani.
\newblock Computing stable outcomes in hedonic games.
\newblock In {\em Proc.\ of SAGT}, pages 174--185, 2010.

\bibitem{GMSV}
Jugal Garg, Ruta Mehta, Milind Sohoni, and Vijay~V. Vazirani.
\newblock A complementary pivot algorithm for market equilibrium under
  separable piecewise-linear concave utilities.
\newblock {\em SIAM J. Comput.}, 44(6):1820–1847, 2015.

\bibitem{garg2016revisiting}
Sanjam Garg, Omkant Pandey, and Akshayaram Srinivasan.
\newblock Revisiting the cryptographic hardness of finding a {N}ash
  equilibrium.
\newblock In {\em Annual Cryptology Conference}, pages 579--604. Springer,
  2016.

\bibitem{GHHKMS18}
Bernd G{\"{a}}rtner, Thomas~Dueholm Hansen, Pavel Hub{\'{a}}cek, Karel
  Kr{\'{a}}l, Hagar Mosaad, and Veronika Sl{\'{\i}}vov{\'{a}}.
\newblock {ARRIVAL:} next stop in {CLS}.
\newblock {\em CoRR}, abs/1802.07702, 2018.
\newblock \href {http://arxiv.org/abs/1802.07702} {\path{arXiv:1802.07702}}.

\bibitem{gartner2005simple}
Bernd G{\"a}rtner and Leo R{\"u}st.
\newblock Simple stochastic games and {P}-matrix generalized linear
  complementarity problems.
\newblock In {\em International Symposium on Fundamentals of Computation
  Theory}, pages 209--220. Springer, 2005.

\bibitem{hansen2013complexity}
Thomas~Dueholm Hansen and Rasmus Ibsen-Jensen.
\newblock The complexity of interior point methods for solving discounted
  turn-based stochastic games.
\newblock In {\em Conference on Computability in Europe}, pages 252--262, 2013.

\bibitem{HirschPV89}
Michael~D. Hirsch, Christos~H. Papadimitriou, and Stephen~A. Vavasis.
\newblock Exponential lower bounds for finding brouwer fix points.
\newblock {\em J. Complexity}, 5(4):379--416, 1989.

\bibitem{HuangKhachSik99}
Z.~Huang, Leonid Khachiyan, and Krzysztof Sikorski.
\newblock Approximating fixed points of weakly contracting mappings.
\newblock {\em J. Complexity}, 15:200--213, 1999.

\bibitem{HuangKS99}
Z.~Huang, Leonid~G. Khachiyan, and Christopher~(Krzysztof) Sikorski.
\newblock Approximating fixed points of weakly contracting mappings.
\newblock {\em J. Complexity}, 15(2):200--213, 1999.

\bibitem{hubavcek2017hardness}
Pavel Hub{\'a}{\v{c}}ek and Eylon Yogev.
\newblock Hardness of continuous local search: {Q}uery complexity and
  cryptographic lower bounds.
\newblock In {\em Proc.\ of {SODA}}, pages 1352--1371. SIAM, 2017.

\bibitem{johnson1988easy}
David~S Johnson, Christos~H Papadimitriou, and Mihalis Yannakakis.
\newblock How easy is local search?
\newblock {\em Journal of Computer and System Sciences}, 37(1):79--100, 1988.

\bibitem{jurdzinski1998deciding}
Marcin Jurdzi{\'n}ski.
\newblock Deciding the winner in parity games is in {UP} \mbox{$\cap$} {co-UP}.
\newblock {\em Information Processing Letters}, 68(3):119--124, 1998.

\bibitem{JL17}
Marcin Jurdzinski and Ranko Lazic.
\newblock Succinct progress measures for solving parity games.
\newblock In {\em Proc.\ of LICS}, pages 1--9, 2017.

\bibitem{jurdzinski2008simple}
Marcin Jurdzi{\'n}ski and Rahul Savani.
\newblock A simple {P}-matrix linear complementarity problem for discounted
  games.
\newblock In {\em Conference on Computability in Europe}, pages 283--293.
  Springer, 2008.

\bibitem{Kar17}
{Karthik {C. S.}}
\newblock Did the train reach its destination: The complexity of finding a
  witness.
\newblock {\em Inf. Process. Lett.}, 121:17--21, 2017.

\bibitem{kojima1991unified}
Masakazu Kojima, Nimrod Megiddo, Toshihito Noma, and Akiko Yoshise.
\newblock {\em A unified approach to interior point algorithms for linear
  complementarity problems}, volume 538.
\newblock Springer Science \& Business Media, 1991.

\bibitem{kojima1992interior}
Masakazu Kojima, Nimrod Megiddo, and Yinyu Ye.
\newblock An interior point potential reduction algorithm for the linear
  complementarity problem.
\newblock {\em Mathematical Programming}, 54(1-3):267--279, 1992.

\bibitem{lemke1965bimatrix}
Carlton~E Lemke.
\newblock Bimatrix equilibrium points and mathematical programming.
\newblock {\em Management science}, 11(7):681--689, 1965.

\bibitem{ludwig1995subexponential}
Walter Ludwig.
\newblock A subexponential randomized algorithm for the simple stochastic game
  problem.
\newblock {\em Information and computation}, 117(1):151--155, 1995.

\bibitem{megiddo1988note}
Nimrod Megiddo.
\newblock {\em A note on the complexity of {P}-matrix {LCP} and computing an
  equilibrium}.
\newblock IBM Thomas J. Watson Research Division, 1988.

\bibitem{megiddo1991total}
Nimrod Megiddo and Christos~H Papadimitriou.
\newblock On total functions, existence theorems and computational complexity.
\newblock {\em Theoretical Computer Science}, 81(2):317--324, 1991.

\bibitem{mehta}
Ruta Mehta.
\newblock Constant rank bimatrix games are ppad-hard.
\newblock In {\em Proc.\ of {STOC}}, pages 545--554, 2014.

\bibitem{colorfulcara2017}
Fr{\'e}d{\'e}ric Meunier, Wolfgang Mulzer, Pauline Sarrabezolles, and Yannik
  Stein.
\newblock The rainbow at the end of the line: A {PPAD} formulation of the
  colorful {C}arath\'{e}odory theorem with applications.
\newblock In {\em Proceedings of the Twenty-Eighth Annual ACM-SIAM Symposium on
  Discrete Algorithms ({SODA})}, pages 1342--1351, 2017.

\bibitem{morris2002randomized}
Walter~D Morris~Jr.
\newblock Randomized pivot algorithms for {P}-matrix linear complementarity
  problems.
\newblock {\em Mathematical programming}, 92(2):285--296, 2002.

\bibitem{murty1978computational}
Katta~G Murty.
\newblock Computational complexity of complementary pivot methods.
\newblock In {\em Complementarity and fixed point problems}, pages 61--73.
  Springer, 1978.

\bibitem{NemYud83}
A.~Nemirovsky and D.~B. Yudin.
\newblock {\em Problem Complexity and Method Efficiency in Optimization}.
\newblock Wiley, New York, 1983.

\bibitem{papadimitriou1994complexity}
Christos~H Papadimitriou.
\newblock On the complexity of the parity argument and other inefficient proofs
  of existence.
\newblock {\em Journal of Computer and System Sciences}, 48(3):498--532, 1994.

\bibitem{puri1996theory}
Anuj Puri.
\newblock Theory of hybrid systems and discrete event systems.
\newblock 1996.

\bibitem{Rubinstein16}
Aviad Rubinstein.
\newblock Settling the complexity of computing approximate two-player nash
  equilibria.
\newblock In {\em Proc.\ of {FOCS}}, pages 258--265, 2016.

\bibitem{schaffer1991simple}
Alejandro~A Sch{\"a}ffer and Mihalis Yannakakis.
\newblock Simple local search problems that are hard to solve.
\newblock {\em SIAM journal on Computing}, 20(1):56--87, 1991.

\bibitem{ShellSik03}
Spencer Shellman and Krzysztof Sikorski.
\newblock A recursive algorithm for the infinity-norm fixed point problem.
\newblock {\em Journal of Complexity}, 19(6):799 -- 834, 2003.

\bibitem{Sik01}
Krzysztof Sikorski.
\newblock {\em Optimal solution of Nonlinear Equations}.
\newblock Oxford Press, New York, 200.

\bibitem{Sik09}
Krzysztof Sikorski.
\newblock Computational complexity of fixed points.
\newblock {\em Journal of Fixed Point Theory and Applications}, 6(2):249--283,
  2009.

\bibitem{StickneyW78}
Alan Stickney and Layne Watson.
\newblock Digraph models of bard-type algorithms for the linear complementarity
  problem.
\newblock {\em Mathematics of Operations Research}, 3(4):322--333, 1978.

\bibitem{SzaboW01}
Tibor Szab{\'{o}} and Emo Welzl.
\newblock Unique sink orientations of cubes.
\newblock In {\em Proc.\ of {FOCS}}, pages 547--555, 2001.

\bibitem{todd1976orientation}
Michael~J Todd.
\newblock Orientation in complementary pivot algorithms.
\newblock {\em Mathematics of Operations Research}, 1(1):54--66, 1976.

\bibitem{zwick1996complexity}
Uri Zwick and Mike Paterson.
\newblock The complexity of mean payoff games on graphs.
\newblock {\em Theoretical Computer Science}, 158(1-2):343--359, 1996.

\end{thebibliography}

\newpage
\appendix

\section{The Full Reductions and Proofs for Section~\ref{sec:EOPL}}
\label{app:eoml2eopl}

\subsection{\EOML to \EOPL}
\label{sec:EOMLtoEOPL}

Given an instance \CI of \EOML defined by circuits $S,P$ and $V$ on vertex
set $\{0,1\}^n$ we are going to create an instance $\CI'$ of \EOPL with circuits
$S',P'$, and $V'$ on vertex set $\{0,1\}^{(n+1)}$, i.e., we introduce one extra bit.  
This extra bit is essentially to take care of the difference in the value of potential 
at the starting point in \EOML and \EOPL, namely $1$ and $0$ respectively. 

Let $k=n+1$, then we create a potential function $V':\{0,1\}^k \rightarrow
\{0,\dots,2^k-1\}$. 
The idea is to make $0^k$ the starting point with potential zero as required,
and to make all other vertices with first bit $0$ be dummy vertices with self
loops. The real graph
will be embedded in vertices with first bit $1$, i.e., of type $(1,\uu)$. Here
by $(b,\uu)\in \{0,1\}^k$, where $b\in \{0,1\}$ and $\uu\in \{0,1\}^n$, we mean
a $k$ length bit string with first bit set to $b$ and for each $i\in[2:k]$ bit $i$ 
set to bit $u_i$. 

\medskip
\medskip

\noindent{\bf Procedure $V'(b,\uu)$:} If $b=0$ then Return $0$, otherwise Return $V(\uu)$. 
\medskip
\medskip

\noindent{\bf Procedure $S'(b,\uu)$:}
\vspace{-0.3cm}

\begin{enumerate}
\itemsep1mm
\item If $(b,\uu)=0^k$ then Return $(1,0^n)$
\item If $b=0$ and $\uu\neq 0^n$ then Return $(b,\uu)$ (creating self loop for dummy vertices)
\item If $b=1$ and $V(\uu)=0$ then Return $(b,\uu)$ (vertices with zero potentials have self loops)
\item If $b=1$ and $V(\uu)>0$ then Return $(b,S(\uu))$ (the rest follows $S$)
\end{enumerate}

\noindent{\bf Procedure $P'(b,\uu)$:}
\vspace{-0.3cm}

\begin{enumerate}
\itemsep1mm
\item If $(b,\uu)=0^k$ then Return $(b,\uu)$ (initial vertex points to itself in $P'$).
\item If $b=0$ and $\uu\neq 0^n$ then Return $(b,\uu)$ (creating self loop for dummy vertices)
\item If $b=1$ and $\uu=0^n$ then Return $0^k$ (to make $(0,0^n)\rightarrow (1,0^n)$ edge consistent)
\item If $b=1$ and $V(\uu)=0$ then Return $(b,\uu)$ (vertices with zero potentials have self loops)
\item If $b=1$ and $V(\uu)>0$ and $\uu \neq 0^n$ then Return $(b,P(\uu))$ (the rest follows $P$)
\end{enumerate}

Valid solutions of \EOML of type T2 and T3 requires the potential to be strictly greater than zero, while solutions of \EOPL may have zero potential. However, a solution of \EOPL can not be a self loop, so we've added self-loops around vertices with zero potential in the \EOPL instance.
By construction, the next lemma follows:
\begin{lemma}\label{lem:m2p-valid}
$S'$, $P'$, $V'$ are well defined and polynomial in the sizes of $S$, $P$, $V$ respectively. 
\end{lemma}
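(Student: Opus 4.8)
The plan is to check directly, from the branch definitions of $V'$, $S'$, and $P'$ given above, the two assertions of the lemma: that each procedure computes a well-defined function with the claimed domain and codomain, and that each can be compiled into a Boolean circuit of size polynomial in the sizes of $S$, $P$, and $V$.

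First I would verify well-definedness and codomains. For $V'$ the split on the first bit $b$ is manifestly exhaustive and disjoint, so the only thing to confirm is that the returned value lies in $\{0,\dots,2^k-1\}$: this holds for the constant $0$ trivially, and for $V(\uu)$ because $k = n+1$ and the \EOML potential $V$ takes values in $\{0,\dots,2^n\}$ with $2^n \le 2^{n+1}-1$. For $S'$ I would observe that the four branches partition $\{0,1\}^k$ --- the first two exhaust the case $b = 0$ according to whether $\uu = 0^n$, and the last two exhaust $b = 1$ according to whether $V(\uu) = 0$ --- and that every returned value, one of $(1,0^n)$, $(b,\uu)$, or $(b,S(\uu))$, lies in $\{0,1\}^k$ since $S$ maps $\{0,1\}^n$ into itself. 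For $P'$ I would argue similarly; here the five branches partition $\{0,1\}^k$, the only subtlety being that the branches ``$b = 1$, $\uu = 0^n$'' and ``$b = 1$, $V(\uu) = 0$'' are in fact disjoint because of the \EOML starting condition $V(0^n) = 1 \neq 0$ (equivalently, one may simply evaluate the branches in the listed order), and all returned values $(b,\uu)$, $0^k$, $(b,P(\uu))$ again lie in $\{0,1\}^k$.

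Next I would bound the circuit sizes. Each of $V'$, $S'$, $P'$ is assembled from at most one copy of each relevant original circuit (just $V$ for $V'$; $S$ and $V$ for $S'$; $P$ and $V$ for $P'$), a constant number of elementary predicate gadgets --- an equality test ``$\uu = 0^n$'', which is a fan-in-$n$ conjunction over the negated bits of $\uu$, and a zero test on the output of $V$, which is a disjunction over its output bits --- together with a constant number of multiplexers of width $k$ (or of the bit-length of $V$'s output, for $V'$) that select among the branch outputs. Each such gadget has size polynomial in $n$, in the bit-length of $V$'s output, and in the sizes of $S$, $P$, $V$, and the whole assembly is computable in polynomial time, which gives the stated bound.

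I do not anticipate any genuine obstacle; this is a bookkeeping lemma whose content is entirely contained in the fact that the case analyses are exhaustive and single-valued. The one place that requires a moment's attention is the apparent overlap among the self-loop branches of $P'$, where one must invoke the \EOML promise $V(0^n) = 1$ (or fix a branch priority) to conclude that $P'$ is indeed a function.
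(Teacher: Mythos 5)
Your proof is correct and follows the same approach the paper takes; the paper merely asserts the lemma ``by construction'' with no further argument, and your case analysis (including the observation that the $b=1,\uu=0^n$ and $b=1,V(\uu)=0$ branches of $P'$ are disjoint because $V(0^n)=1$) simply fills in the details the paper leaves implicit.
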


Our main theorem in this section is a consequence of the following three lemmas.

\begin{lemma}\label{lem:m2p-sl}
For any $\xx=(b,\uu)\in \{0,1\}^k$, $P'(\xx)=S'(\xx)=\xx$ (self loop) iff $\xx\neq 0^k$, and $b=0$ or $V(\uu)=0$.
\end{lemma}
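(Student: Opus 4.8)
The plan is to prove both directions of the equivalence by a direct case analysis over the four clauses defining $S'$ (and the five defining $P'$), which is the natural approach since all three circuits $S'$, $P'$, $V'$ are defined by explicit case splits on the first bit $b$ and on the value $V(\uu)$. First I would handle the forward direction. Suppose $\xx = (b, \uu)$ is a self loop, i.e. $S'(\xx) = \xx$. I claim $\xx \neq 0^k$: inspecting clause~1 of $S'$, we have $S'(0^k) = (1, 0^n) \neq 0^k$, so $0^k$ is not a self loop under $S'$. Now assume $\xx \neq 0^k$ but, for contradiction, that $b = 1$ and $V(\uu) > 0$. Then only clause~4 of $S'$ applies, giving $S'(\xx) = (1, S(\uu))$. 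For this to equal $\xx = (1,\uu)$ we would need $S(\uu) = \uu$; but in the \EOML instance $V$ increases by exactly~$1$ along each edge (otherwise we have a T3 solution), and since we are assuming a valid \EOML instance with no fixpoint issues at the circuit level, $S(\uu) = \uu$ together with $V(\uu) > 0$ would force a T3-type inconsistency — more simply, for the purpose of this purely syntactic lemma about $S'$ and $P'$, I note that the conclusion "$b = 0$ or $V(\uu) = 0$" is exactly the negation of "$b = 1$ and $V(\uu) > 0$", so I should phrase the argument as: if $\xx \neq 0^k$ is a self loop then it cannot be that $b = 1$ and $V(\uu) > 0$ simultaneously, because clause~4 would then give $S'(\xx) = (1, S(\uu))$ and the only way this is a self loop is if $S(\uu) = \uu$, i.e. $\uu$ is a fixpoint of $S$ — but $P'$ is then \emph{not} a self loop at $\xx$ (clause~5 gives $P'(\xx) = (1, P(\uu))$, and consistency of $S,P$ would be needed), so $\xx$ fails to satisfy $P'(\xx) = \xx = S'(\xx)$. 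I would need to be slightly careful here and actually the cleanest route is to observe that for a genuine self loop we need \emph{both} $S'(\xx) = \xx$ \emph{and} $P'(\xx) = \xx$, so I should phrase the lemma's hypothesis that way.

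For the reverse direction, assume $\xx = (b,\uu) \neq 0^k$ and either $b = 0$ or $V(\uu) = 0$. I split into the three covered cases: (i) $b = 0$ and $\uu \neq 0^n$ — then clause~2 of both $S'$ and $P'$ applies, giving $S'(\xx) = P'(\xx) = \xx$ directly; note $b=0$ with $\uu = 0^n$ is excluded since that is $0^k$. (ii) $b = 1$ and $V(\uu) = 0$ with $\uu \neq 0^n$ — then clause~3 of $S'$ and clause~4 of $P'$ apply, both returning $\xx$. (iii) $b = 1$ and $\uu = 0^n$ — here $V(0^n) = 1 > 0$ in the \EOML instance, so actually this case does \emph{not} satisfy "$V(\uu) = 0$"; I would need to double-check the hypothesis covers it. Indeed $(1, 0^n)$ has $S'(1,0^n) = (1, S(0^n)) \neq (1, 0^n)$ since $S(0^n) \neq 0^n$ by the \EOML precondition, so $(1,0^n)$ is genuinely not a self loop, consistent with the claimed characterization (it has $b = 1$ and $V(0^n) = 1 > 0$).

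The main obstacle I anticipate is the subtle interplay in the forward direction between $S'$ being a self loop and $P'$ being a self loop when $b = 1$, $V(\uu) > 0$, and $\uu$ happens to be a fixpoint of $S$ or $P$ in the underlying \EOML instance — such $\uu$ are themselves T1-type solutions of \EOML but are not self loops in the formal sense required (where we need $P'(\xx) = S'(\xx) = \xx$). The cleanest way to dispose of this is to use that the lemma statement requires the conjunction $P'(\xx) = S'(\xx) = \xx$: if $b=1$ and $V(\uu)>0$ then clause~4 gives $S'(\xx)=(1,S(\uu))$ and clause~5 gives $P'(\xx)=(1,P(\uu))$, and having both equal $\xx$ forces $S(\uu)=P(\uu)=\uu$, which I can rule out in a valid \EOML instance, or alternatively simply accept as not producing a spurious \EOPL solution because the construction's correctness is handled by the subsequent lemmas. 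I expect to resolve this by a one-line remark that such a point, if it existed, would give $S(\uu) = \uu$, handled separately, or by strengthening to note it is harmless. Everything else is immediate unwinding of the definitions of $S'$ and $P'$.
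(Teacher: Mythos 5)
Your approach matches the paper's: direct inspection of the defining clauses of $S'$ and $P'$. Your treatment of the reverse direction (that $\xx \neq 0^k$ together with $b = 0$ or $V(\uu) = 0$ forces both $S'(\xx)=\xx$ and $P'(\xx)=\xx$) is correct and complete, and this is the only direction actually used downstream: Lemmas~\ref{lem:m2p-r1} and~\ref{lem:m2p-r2} invoke Lemma~\ref{lem:m2p-sl} in the contrapositive form ``if $\xx\neq 0^k$ is not a self loop then $b=1$ and $V(\uu)>0$.''

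You have correctly put your finger on a genuine subtlety in the forward direction that the paper's one-line proof glosses over. If the input $\CI$ happens to have a vertex $\uu$ with $S(\uu)=P(\uu)=\uu$ and $V(\uu)>0$ (nothing in the definition of \EOML forbids this), then clauses~4 of $S'$ and~5 of $P'$ both return $(1,\uu)$, so $(1,\uu)$ is a self loop yet satisfies neither $b=0$ nor $V(\uu)=0$; the lemma as a biconditional is, strictly speaking, false. Where you should commit: your suggestion to ``rule this out in a valid \EOML instance'' is not available (the \EOML promise only gives $V(0^n)=1$ and $P(0^n)=0^n\neq S(0^n)$ and says nothing elsewhere), but your alternative suggestion — accept it as harmless — is the right resolution, since such a self loop can never be an R1 or R2 solution of $\CI'$, and since only the reverse implication is ever applied. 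One small correction in your narrative: a $\uu$ with $S(\uu)=P(\uu)=\uu$ and $V(\uu)>0$ is a T3 solution of \EOML (because $V(S(\uu))-V(\uu)=0\neq 1$), not a T1 solution; T1 requires one of $S,P$ to fail to invert the other, which does not happen at a common fixpoint.
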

\begin{proof}
This follows by the construction of $V'$, the second condition in $S'$ and $P'$, and third and fourth conditions in $S'$ and $P'$ respectively. 
\end{proof}

\begin{lemma}\label{lem:m2p-r1}
Let $\xx=(b,\uu)\in \{0,1\}^k$ be such that $S'(P'(\xx))\neq \xx \neq 0^k$ or $P'(S'(\xx))\neq \xx$ (an R1 type solution of \EOPL instance $\CI'$), then $\uu$ is a solution of \EOML instance $\CI$.
\end{lemma}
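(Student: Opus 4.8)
The plan is to show the contrapositive: if $\uu$ is \emph{not} an \EOML solution of type T1, T2, or T3 for the instance $\CI$, then $\xx = (b,\uu)$ is not an R1 solution of $\CI'$, i.e., $S'(P'(\xx)) = \xx$ and $P'(S'(\xx)) = \xx$. A cleaner way to organize this is a direct case analysis on the structure of $\xx = (b,\uu)$, using the four-way description of $S'$ and the five-way description of $P'$ given above, together with the hypothesis that $\uu$ violates none of T1, T2, T3.

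First I would dispose of the easy cases. If $\xx = 0^k$, then $P'(\xx) = 0^k$ and $S'(0^k) = (1,0^n)$; by assumption $0^n \neq S(0^n)$ (this is part of the \EOML promise) and $V(0^n) = 1 > 0$, so $P'(1,0^n) = 0^k$ by the third clause of $P'$, giving $P'(S'(\xx)) = \xx$; and the R1 condition $S'(P'(\xx)) \neq \xx \neq 0^k$ is vacuous since $\xx = 0^k$. So $0^k$ is not an R1 solution. Next, if $b = 0$ and $\uu \neq 0^n$, or if $b = 1$ and $V(\uu) = 0$ (with $\uu \neq 0^n$, since otherwise we'd need to check $V(0^n)=0$, which is false), then by Lemma \ref{lem:m2p-sl} $\xx$ is a self-loop, so $S'(P'(\xx)) = S'(\xx) = \xx$ and $P'(S'(\xx)) = P'(\xx) = \xx$, and again $\xx$ is not an R1 solution. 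The remaining case is $b = 1$, $V(\uu) > 0$ (so in particular $\uu \neq 0^n$ is possible or not; but if $\uu = 0^n$ then $V(0^n) = 1 > 0$, which is consistent).

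In the main case $b = 1$, $V(\uu) > 0$: here $S'(\xx) = (1, S(\uu))$ and $P'(\xx) = (1, P(\uu))$ — except when $\uu = 0^n$, where $P'(1,0^n) = 0^k$; I'd handle $\uu = 0^n$ separately, noting $S'(P'(1,0^n)) = S'(0^k) = (1,0^n) = \xx$, and $P'(S'(1,0^n)) = P'(1, S(0^n))$, which I need to equal $(1,0^n)$: since $\uu=0^n$ is not a T1 solution, $P(S(0^n)) = 0^n$, and since $S(0^n) \neq 0^n$ with $V(0^n)=1$ one checks (using that $\uu=0^n$ is not a T3 violation, so $V(S(0^n)) = 2 > 0$) that the fifth clause of $P'$ applies and returns $(1, P(S(0^n))) = (1,0^n)$. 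For $\uu \neq 0^n$ with $V(\uu) > 0$: to show $P'(S'(\xx)) = \xx$, write $\uu' = S(\uu)$; since $\uu$ is not a T3 violation and $V(\uu)>0$ we have $V(\uu') = V(\uu)+1 \geq 2 > 0$, so $S'(\xx) = (1,\uu')$ falls into the fifth clause of $P'$ (need $\uu' \neq 0^n$, which holds because $V(\uu') \geq 2 \neq 0 = V(0^n)$), giving $P'(S'(\xx)) = (1, P(\uu'))$; and since $\uu$ is not a T1 solution, $P(S(\uu)) = \uu$, so this equals $(1,\uu) = \xx$. Symmetrically, to show $S'(P'(\xx)) = \xx$: let $\uu'' = P(\uu)$; since $\uu$ is not a T3 violation with $V(\uu) > 1$ we get $V(\uu'') = V(\uu) - 1 \geq 0$. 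If $V(\uu'') > 0$ then the fourth clause of $S'$ gives $S'(1,\uu'') = (1, S(\uu'')) = (1, S(P(\uu)))$, which equals $(1,\uu) = \xx$ because $\uu$ is not a T1 solution (so $S(P(\uu)) = \uu$, as $\uu \neq 0^n$). The boundary sub-case is $V(\uu) = 1$, where $V(\uu'')$ could be $0$: since $\uu$ is not a T2 solution and $\uu \neq 0^n$, we cannot have $V(\uu) = 1$ unless... — actually T2 says $\uu \neq 0^n$ with $V(\uu) = 1$ is itself a solution, so the assumption that $\uu$ is not a T2 solution forces $V(\uu) \neq 1$ whenever $\uu \neq 0^n$; hence in this main case $V(\uu) \geq 2$, and the boundary sub-case does not arise.

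The main obstacle I anticipate is exactly the bookkeeping around the ``seam'' vertices — the edge $(0,0^n) \to (1,0^n)$ and the vertex $(1,0^n)$ where $P'$ behaves specially — and the interplay of the T2 condition (which excludes $\uu \neq 0^n$ with $V(\uu) = 1$) with the T3 condition (which governs whether $V$ steps by exactly one). Getting the case split exhaustive and making sure each invocation of ``$\uu$ is not a T$i$ solution'' is legitimate (in particular that the relevant potentials are positive, so that we land in the $S$/$P$-following clauses of $S'$/$P'$ rather than a self-loop clause) is where the care is needed. The rest is routine substitution into the definitions of $S'$ and $P'$.
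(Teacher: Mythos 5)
Your proposal is correct in substance and proves the same facts, but it is organized as a contrapositive whereas the paper argues directly. The paper assumes the R1 condition holds, uses Lemma \ref{lem:m2p-sl} to conclude $b=1$ and $V(\uu)>0$, and then splits into two cases ($S'(P'(\xx))\neq\xx$ versus $P'(S'(\xx))\neq\xx$), in each case reading off a T1, T2, or T3 witness from the way $S'$ or $P'$ failed to invert. You instead assume $\uu$ satisfies none of T1, T2, T3 and do an exhaustive case split on the \emph{structure} of $\xx$ (the initial vertex $0^k$, the self-loop vertices covered by Lemma \ref{lem:m2p-sl}, and the ``live'' vertices $b=1,\ V(\uu)>0$, further split on $\uu=0^n$ or not), showing in every case that both $S'(P'(\xx))=\xx$ and $P'(S'(\xx))=\xx$. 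The contrapositive form makes the role of T2 cleaner: you use it up front to force $V(\uu)\geq 2$ whenever $\uu\neq 0^n$, which kills the boundary sub-case where $V(P(\uu))$ might hit $0$; the paper instead discovers the T2 witness mid-argument. Both are sound; yours has the advantage of making explicit that the edge around $(0,0^n)\to(1,0^n)$ and the self-loop vertices can never trigger R1.

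One small slip: where you argue $\uu'=S(\uu)\neq 0^n$, you write ``$V(\uu')\geq 2\neq 0=V(0^n)$,'' but in the \EOML instance $\CI$ we have $V(0^n)=1$, not $0$ (you may have momentarily switched to the \EOPL convention). The conclusion $\uu'\neq 0^n$ still holds, since $V(\uu')\geq 2\neq 1=V(0^n)$, so nothing downstream breaks, but the constant should be corrected.
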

\begin{proof}
The proof requires a careful case analysis. 
By the first conditions in the descriptions of $S',P'$ and $V'$, we have $\xx \neq 0^k$. 
Further, since $\xx$ is not a self loop, Lemma \ref{lem:m2p-sl} implies $b=1$  and $V'(1,\uu)=V(\uu)>0$.
\medskip

\noindent{\em Case I.}
If $S'(P'(\xx))\neq \xx\neq 0^k$ then we will show that either $\uu$ is a genuine start of a line other than $0^n$ giving a T1 type solution of \EOML instance $\CI$, or there is some issue with the potential at $\uu$ giving either a T2 or T3 type solution of $\CI$. Since $S'(P'(1,0^n))=(1,0^n)$, $\uu \neq 0^n$. Thus if $S(P(\uu))\neq \uu$ then we get a T1 type solution of $\CI$ and proof follows. If $V(\uu)=1$ then we get a T2 solution of $\CI$ and proof follows. 

Otherwise, we have $S(P(\uu))=\uu$ and $V(\uu)>1$. Now since also $b=1$ $(1,\uu)$ is not a self loop (Lemma \ref{lem:m2p-sl}). 
Then it must be the case that $P'(1,\uu)=(1,P(\uu))$. However, $S'(1,P(\uu))\neq (1,\uu)$ even though $S(P(\uu))=\uu$. This happens only when $P(\uu)$ is a self loop because of $V(P(\uu))=0$ (third condition of $P'$).
Therefore, we have $V(\uu)-V(P(\uu))>1$ implying that $\uu$ is a T3 type solution of $\CI$. 
\medskip

\noindent{\em Case II.}
Similarly, if $P'(S'(\xx))\neq \xx$, then either $\uu$ is a genuine end of a line of $\CI$, or there is some issue with the potential at $\uu$. If $P(S(\uu))\neq \uu$ then we get T1 solution of $\CI$. Otherwise, $P(S(\uu))=\uu$ and $V(\uu)>0$. Now as $(b,\uu)$ is not a self loop and $V(\uu)>0$, it must be the case that $S'(b,\uu)=(1,S(\uu))$. However, $P'(1, S(\uu))\neq (b,\uu)$ even though $P(S(\uu))=\uu$. This happens only when $S(\uu)$ is a self loop because of $V(S(\uu))=0$. Therefore, we get $V(S(\uu))-V(\uu)<0$, i.e., $\uu$ is a type T3 solution of $\CI$. 
\end{proof}

\begin{lemma}\label{lem:m2p-r2}
Let $\xx=(b,\uu)\in \{0,1\}^k$ be an R2 type solution of the constructed \EOPL instance $\CI'$, then $\uu$ is a type T3 solution of \EOML instance~$\CI$.
\end{lemma}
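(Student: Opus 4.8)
The plan is to simply unpack the definition of an R2 solution for the constructed instance $\CI'$ and push it back through the case analyses defining $S'$, $P'$, $V'$. Write $\xx = (b,\uu)$. An R2 solution of $\CI'$ satisfies $\xx \neq S'(\xx)$, so $\xx$ is not a self-loop of $\CI'$; by Lemma~\ref{lem:m2p-sl} this forces either $\xx = 0^k$, or else $b = 1$ and $V(\uu) > 0$.

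The first step I would carry out is to rule out $\xx = 0^k$. If $\xx = 0^k$, then the first clause in the description of $S'$ gives $S'(0^k) = (1,0^n)$, so $V'(S'(\xx)) = V'(1,0^n) = V(0^n) = 1$ (using the \EOML promise $V(0^n)=1$), while $V'(0^k) = 0$ by the definition of $V'$. Hence $V'(S'(\xx)) - V'(\xx) = 1 > 0$, contradicting the R2 requirement $V'(S'(\xx)) - V'(\xx) \le 0$. Therefore $\xx \neq 0^k$, and so by Lemma~\ref{lem:m2p-sl} we have $b = 1$ and $V(\uu) > 0$.

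The second step is to evaluate $S'$ and $V'$ at this $\xx$. Since $b = 1$ and $V(\uu) > 0$, the fourth clause of $S'$ gives $S'(1,\uu) = (1, S(\uu))$, and then $V'(S'(\xx)) = V'(1, S(\uu)) = V(S(\uu))$ while $V'(\xx) = V(\uu)$. The R2 inequality $V'(S'(\xx)) - V'(\xx) \le 0$ therefore reads $V(S(\uu)) - V(\uu) \le 0$. Together with $V(\uu) > 0$, this says exactly that $V(\uu) > 0$ and $V(S(\uu)) - V(\uu) \neq 1$, which is precisely the first alternative of condition T3 for $\CI$. Hence $\uu$ is a T3 solution of the \EOML instance $\CI$, as required. (Note the condition $P'(S'(\xx)) = \xx$ from the R2 definition is not even used.)

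I do not expect a real obstacle here: this is a short case analysis. The only point that needs care is the $\xx = 0^k$ boundary case — which is exactly the reason the extra bit was introduced, since the starting potentials differ ($0$ in \EOPL versus $1$ in \EOML) — and making sure that in the non-boundary case one invokes the correct branch of each circuit's definition.
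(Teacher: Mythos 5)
Your proof is correct and follows essentially the same route as the paper's: rule out $\xx = 0^k$, invoke Lemma~\ref{lem:m2p-sl} to get $b=1$ and $V(\uu)>0$, evaluate $S'$ and $V'$ to translate the potential inequality into $V(S(\uu)) - V(\uu)\le 0$, and conclude T3. Your explicit justification that $0^k$ cannot be an R2 solution (via $V'(S'(0^k)) - V'(0^k) = 1 > 0$) fills in a step the paper dismisses as "clearly," and your observation that the $P'(S'(\xx))=\xx$ clause is never used is accurate — the paper detours through $\yy = S'(\xx)$ and $P'(\yy)=\xx$, but that detour is not actually needed for this lemma.
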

\begin{proof}
Clearly, $\xx\neq 0^k$. Let $\yy = (b',\uu') = S'(\xx) \neq \xx$, and observe that $P(\yy) = \xx$. This also implies that $\yy$ is not a self loop, and hence $b=b'=1$ and $V(\uu)>0$ (Lemma \ref{lem:m2p-sl}). Further, $\yy = S'(1,\uu)=(1,S(\uu))$, hence $\uu'=S(\uu)$. Also, $V'(\xx)=V'(1,\uu)=V(\uu)$ and $V'(\yy)=V'(1,\uu')=V(\uu')$. 

Since $V'(\yy)-V'(\xx)\le 0$ we get $V(\uu')-V(\uu)\le 0 \Rightarrow V(S(\uu)) - V(\uu) \le 0\Rightarrow V(S(\uu)) - V(\uu)\neq 1$. Given that $V(\uu)>0$, $\uu$ gives a type T3 solution of \EOML.
\end{proof}

\begin{theorem}\label{thm:m2p}
An instance of \EOML can be reduced to an instance of \EOPL in linear time such that a solution of the former can be constructed in a linear time from the solution of the latter. 
\end{theorem}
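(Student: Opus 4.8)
The plan is to assemble the theorem directly from the three lemmas that have already been established, together with the routine well-definedness statement of Lemma \ref{lem:m2p-valid}. First I would observe that the map sending an \EOML instance $\CI$ with circuits $S,P,V$ on $\{0,1\}^n$ to the \EOPL instance $\CI'$ with circuits $S',P',V'$ on $\{0,1\}^{n+1}$, as described by the three explicit procedures above, is a valid reduction candidate: by Lemma \ref{lem:m2p-valid} each of $S',P',V'$ is well defined and has size polynomial (indeed linear) in the size of the corresponding input circuit, and the construction itself (prepending one bit, a constant number of extra conditional branches) is computable in linear time. I would also check the syntactic preconditions of \EOPL: $P'(0^k) = 0^k$ by the first clause of $P'$, $S'(0^k) = (1,0^n) \neq 0^k$ by the first clause of $S'$, and $V'(0^k) = 0$ by the first clause of $V'$; so $\CI'$ is a legal \EOPL instance whenever $\CI$ is a legal \EOML instance.

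Next I would handle the solution-mapping direction, which is the substantive content. Any solution of $\CI'$ is, by Definition \ref{def:EOPL}, either of type R1 or of type R2. Given an R1 solution $\xx = (b,\uu)$, Lemma \ref{lem:m2p-r1} says that $\uu$ is a solution of $\CI$ (of type T1, T2, or T3, depending on the case). Given an R2 solution $\xx = (b,\uu)$, Lemma \ref{lem:m2p-r2} says that $\uu$ is a type-T3 solution of $\CI$. In either case the recovered solution is obtained simply by dropping the first bit, which is a linear-time operation. Hence from any solution of $\CI'$ we construct a solution of $\CI$ in linear time, completing the reduction.

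The one point that deserves a sentence of care — and which I expect to be the only mild obstacle — is confirming that the reduction is genuine, i.e.\ that $\CI'$ actually has a solution (so that the total search problem really does reduce). This follows because \EOML instances are total: the line starting at $0^n$ in $\CI$ has an end, and the construction faithfully embeds this line into the first-bit-$1$ vertices of $\CI'$ (with $0^k \to (1,0^n)$ prepended and all first-bit-$0$ and zero-potential vertices made into self-loops), so $\CI'$ is likewise total; alternatively, totality of $\CI'$ is immediate from it being a syntactically valid \EOPL instance. I would then conclude by combining these observations: the map $\CI \mapsto \CI'$ together with the bit-dropping solution map constitutes a linear-time reduction from \EOML to \EOPL, which is the assertion of the theorem. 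Everything beyond this is already packaged in Lemmas \ref{lem:m2p-valid}, \ref{lem:m2p-sl}, \ref{lem:m2p-r1}, and \ref{lem:m2p-r2}, so no further case analysis is needed.
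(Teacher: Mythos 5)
Your proposal is correct and matches the paper's approach: the paper presents Theorem~\ref{thm:m2p} as an immediate consequence of Lemmas~\ref{lem:m2p-valid}, \ref{lem:m2p-sl}, \ref{lem:m2p-r1}, and \ref{lem:m2p-r2}, and you assemble exactly these ingredients (well-definedness, the EOPL preconditions at $0^k$, and the solution maps for R1 and R2) into the reduction. The only addition you make beyond the paper — verifying totality and noting that dropping the first bit is the solution map — is harmless and just makes the implicit argument explicit.
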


\subsection{\EOPL to \EOML}
\label{sec:eopl2eoml}

In this section we give a linear time reduction from an instance $\CI$ of \EOPL to an instance $\CI'$ of \EOML. Let the given \EOPL instance $\CI$ be defined on vertex set $\{0,1\}^n$ and with procedures $S,P$ and $V$, where $V:\{0,1\}^n\rightarrow \{0,\dots,2^m-1\}$. 
\medskip

\noindent{\bf Valid Edge.} We call an edge $\uu \rightarrow \vv$ valid if $\vv=S(\uu)$ and $\uu=P(\vv)$. 
\medskip

We construct an \EOML instance $\CI'$ on $\{0,1\}^k$ vertices where $k=n+m$. 
Let $S',P'$ and $V'$ denotes the procedures for $\CI'$ instance. 
The idea is to capture value $V(\xx)$ of the potential in the $m$ least significant bits of vertex description itself, so that it can be gradually increased or decreased on valid edges. For vertices with irrelevant values of these least $m$ significant bits we will create self loops. Invalid edges will also become self loops, e.g., if $\yy=S(\xx)$ but $P(\yy)\neq \xx$ then set $S'(\xx,.)=(\xx,.)$. We will see how these can not introduce new solutions. 

In order to ensure $V'(0^k)=1$, the $V(S(0^n))=1$ case needs to be discarded. For
this, we first do some initial checks to see if the given instance $\CI$ is not
trivial.  If the input \EOPL instance is trivial, in the sense that either
$0^n$ or $S(0^n)$ is a solution, then we can just return it.

\begin{lemma}
\label{lem:valid-edges}
If $0^n$ or $S(0^n)$ are not solutions of \EOPL instance $\CI$ then $0^n
\rightarrow S(0^n) \rightarrow S(S(0^n))$ are valid edges, and $V(S(S(0^n))\ge 2$. 
\end{lemma}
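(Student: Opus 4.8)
The plan is a direct case analysis on which clauses of the \EOPL solution definition are ruled out by the hypothesis. For brevity write $a = S(0^n)$ and $b = S(S(0^n)) = S(a)$. Recall that the input promise gives $P(0^n)=0^n$, $a=S(0^n)\neq 0^n$, and $V(0^n)=0$, and that by definition $\uu\to\vv$ is a valid edge iff $\vv=S(\uu)$ and $\uu=P(\vv)$.

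First I would extract information from the fact that $0^n$ is not a solution. Since the clause ``$S(P(0^n)) \neq 0^n \neq 0^n$'' of (R1) can never hold, $0^n$ failing (R1) forces $P(S(0^n)) = 0^n$, i.e.\ $P(a) = 0^n$; together with $a\neq 0^n$ this already shows that $0^n \to a$ is a valid edge. Next, since $0^n$ is also not an (R2) solution, and we have just established $a \neq 0^n$ and $P(a) = 0^n$, the only remaining clause of (R2) must fail, giving $V(a) - V(0^n) > 0$; as $V(0^n) = 0$ and $V$ is integer-valued, this yields $V(a) \ge 1$.

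Then the same argument is repeated one step further along the line, using that $a = S(0^n)$ is not a solution. For (R1) applied to $a$: we have $a \neq 0^n$, and since $P(a) = 0^n$ we get $S(P(a)) = S(0^n) = a$, so the first clause of (R1) fails automatically; hence the second must fail, i.e.\ $P(S(a)) = a$, that is $P(b) = a$. Together with $S(a)=b$ this makes $a \to b$ a valid edge. Moreover, chaining $P(b) = a$ with $P(a) = 0^n$ shows that $b = a$ would force $a = P(a) = 0^n$, contradicting $a \neq 0^n$; hence $b \neq a$. Finally, applying the non-(R2) hypothesis to $a$, and using $a \neq b$ and $P(b) = a$, the last clause of (R2) must fail, giving $V(b) - V(a) > 0$, so $V(b) > V(a) \ge 1$, and integrality gives $V(S(S(0^n))) = V(b) \ge 2$, as claimed.

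The argument is entirely elementary; the only points that need a little care are (i) chaining the two predecessor identities $P(a) = 0^n$ and $P(b) = a$ to eliminate the degenerate case $b = a$ \emph{before} invoking the (R2) clause (whose hypothesis includes $x \neq S(x)$), and (ii) using that $V$ is integer-valued, so that the strict inequalities $V(a) > 0$ and $V(b) > V(a)$ upgrade to $V(a) \ge 1$ and $V(b) \ge 2$. I do not anticipate any genuine obstacle.
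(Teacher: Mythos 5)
Your proof is correct and takes essentially the same route as the paper's (which simply states the conclusions extracted from ruling out (R1) and (R2) at $0^n$ and at $S(0^n)$). You write out the case analysis more explicitly, in particular verifying that $S(S(0^n))\neq S(0^n)$ before invoking the failure of (R2) at $S(0^n)$—a detail the paper leaves implicit—but the argument is the same.
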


\begin{proof}
Since both $0^n$ and $S(0^n)$ are not solutions, we have
	$V(0^n)<V(S(0^n))<V(S(S(0^n)))$, $P(S(0^n))=0^n$, and for $\uu = S(0^n)$,
	$S(P(\uu))=\uu$ and $P(S(\uu))=\uu$. In other words, $0^n \rightarrow S(0^n)
	\rightarrow S(S(0^n))$ are valid edges, and since $V(0^n)=0$, we have
	$V(S(S(0^n))\ge 2$. 
\end{proof}

Let us assume now on that $0^n$ and $S(0^n)$ are not solutions of $\CI$, and
then by Lemma \ref{lem:valid-edges}, we have $0^n \rightarrow S(0^n) \rightarrow
S(S(0^n))$ are valid edges, and $V(S(S(0^n))\ge 2$. We can avoid the need to check
whether $V(S(0))$ is one all together, by making $0^n$ point directly to
$S(S(0^n))$ and make $S(0^n)$ a dummy vertex. 

We first construct $S'$ and $P'$, and then construct $V'$ which will give
value zero to all self loops, and use the least significant $m$ bits to give a
value to all other vertices.
Before describing $S'$ and $P'$ formally, we first describe the underlying
principles. Recall that in $\CI$ vertex set is $\{0,1\}^n$ and possible potential values are $\{0,\dots,2^m-1\}$, while in $\CI'$ vertex set is $\{0,1\}^k$ where $k=m+n$. 
We will denote a vertex of $\CI'$ by a tuple $(\uu,\pi)$, where $\uu \in
\{0,1\}^n$ and $\pi\in \{0,\dots,2^m-1\}$. 
Here when we say that we introduce an {\em edge $\xx\rightarrow \yy$} we mean
that we introduce a valid edge from $\xx$ to $\yy$, i.e., $\yy=S'(\xx)$ and $\xx=P(\yy)$. 
\begin{itemize}
\item Vertices of the form $(S(0^n),\pi)$ for any $\pi \in \{0,1\}^m$ and the vertex $(0^n,1)$ are
dummies and hence have self loops.
\item If $V(S(S(0^n))=2$ then we introduce an edge $(0^n,0)\rightarrow(S(S(0^n)),2)$, otherwise 
\begin{itemize}
\item for $p=V(S(S(0^n))$, we introduce the edges $(0^n,0)\ra (0^n,2)\ra (0^n, 3)\dots (0^n,p-1)\ra (S(S(0^n)),p)$.
\end{itemize}
\item If $\uu \ra \uu'$ valid edge in $\CI$ then let $p=V(\uu)$ and $p'=V(\uu')$
\begin{itemize}
\item If $p=p'$ then we introduce the edge $(\uu,p)\ra (\uu',p')$. 
\item If $p<p'$ then we introduce the edges $(\uu,p)\ra (\uu,p+1)\ra \dots\ra (\uu,p'-1)\ra (\uu',p')$.
\item If $p>p'$ then we introduce the edges $(\uu,p)\ra (\uu,p-1)\ra \dots\ra (\uu,p'+1)\ra (\uu',p')$.
\end{itemize}
\item If $\uu\neq 0^n$ is the start of a path, i.e., $S(P(\uu))\neq \uu$, then
make $(\uu,V(\uu))$ start of a path by ensuring $P'(\uu,V(\uu))=(\uu,V(\uu))$.
\item If $\uu$ is the end of a path, i.e., $P(S(\uu))\neq \uu$, then make
$(\uu,V(\uu))$ end of a path by ensuring $S'(\uu,V(\uu))=(\uu,V(\uu))$.
\end{itemize}

Last two bullets above remove singleton solutions from the system by making them
self loops. However, this can not kill all the solutions since there is a path
starting at $0^n$, which has to end somewhere. Further, note that this entire process ensures that no new start or end of a paths are introduced. 
\medskip
\medskip

\noindent{\bf Procedure $S'(\uu,\pi)$.} 
\vspace{-0.2cm}

\begin{enumerate}
\itemsep1mm
\item If ($\uu=0^n$ and $\pi=1$) or $\uu=S(0^n)$ then Return $(\uu,\pi)$. 
\item If $(\uu,\pi)=0^k$, then let $\uu'=S(S(0^n))$ and $p'=V(\uu')$. 
\begin{enumerate}
\itemsep1mm
\item If $p'=2$ then Return $(\uu',2)$ else Return $(0^n,2)$.
\end{enumerate}
\item If $\uu=0^n$ then
\begin{enumerate}
\itemsep1mm
\item If $2\le \pi<p'-1$ then Return $(0^n,\pi+1)$.
\item If $\pi=p'-1$ then Return $(S(S(0^n)),p')$.
\item If $\pi\ge p'$ then Return $(\uu,\pi)$.
\end{enumerate}
\item Let $\uu'=S(\uu)$, $p'=V(\uu')$, and $p=V(\uu)$. 
\item \label{itm:Sfive} If $P(\uu')\neq \uu$ or $\uu'=\uu$ then Return $(\uu,\pi)$
\item If $\pi=p=p'$ or ($\pi=p$ and $p'=p+1$) or $(\pi=p$ and $p'=p-1$) then Return $(\uu',p')$.
\item If $\pi<p\le p'$ or $p\le p'\le \pi$ or $\pi>p\ge p'$ or $p\ge p'\ge \pi$ then Return $(\uu,\pi)$
\item If $p<p'$, then if $p\le \pi<p'-1$ then Return $(\uu,\pi+1)$. If $\pi=p'-1$ then Return $(\uu',p')$.
\item If $p>p'$, then if $p \ge \pi>p'+1$ then Return $(\uu,\pi-1)$. If $\pi=p'+1$ then Return $(\uu',p')$.
\end{enumerate}
\medskip

\noindent{\bf Procedure $P'(\uu,\pi)$.} 
\vspace{-0.2cm}

\begin{enumerate}
\itemsep1mm
\item If ($\uu=0^n$ and $\pi=1$) or $\uu=S(0^n)$ then Return $(\uu,\pi)$. 
\item If $\uu=0^n$, then 
\begin{enumerate}
\itemsep1mm
\item If $\pi=0$ then Return $0^k$.
\item If $\pi<V(S(S(0^n)))$ and $\pi\notin \{1,2\}$ then Return $(0^n,\pi-1)$.
\item If $\pi<V(S(S(0^n)))$ and $\pi=2$ then Return $0^k$.
\end{enumerate}
\item If $\uu=S(S(0^n))$ and $\pi=V(S(S(0^n))$ then 
\begin{enumerate}
\itemsep1mm
\item If $\pi=2$ then Return $(0^n,0)$, else Return $(0^n,\pi-1)$. 
\end{enumerate}
\item If $\pi=V(\uu)$ then 
\begin{enumerate}
\itemsep1mm
\item Let $\uu'=P(\uu)$, $p'=V(\uu')$, and $p=V(\uu)$. 
\item If $S(\uu')\neq \uu$ or $\uu'=\uu$ then Return $(\uu,\pi)$
\item If $p=p'$ then Return $(\uu',p')$ 
\item If $p'<p$ then Return $(\uu',p-1)$ else Return $(\uu',p+1)$
\end{enumerate}
\item Else \% when $\pi \neq V(\uu)$
\begin{enumerate}
\itemsep1mm
\item Let $\uu'=S(\uu)$, $p'=V(\uu')$, and $p=V(\uu)$
\item If $P(\uu')\neq \uu$ or $\uu'=\uu$ then Return $(\uu,\pi)$
\item If $p'=p$ or $\pi<p< p'$ or $p<p'\le \pi$ or $\pi>p> p'$ or $p>p'\ge \pi$ then Return $(\uu,\pi)$
\item If $p<p'$, then If $p<\pi\le p'-1$ then Return $(\uu,\pi-1)$. 
\item If $p>p'$, then if $p> \pi\ge p'+1$ then Return $(\uu,\pi+1)$. 
\end{enumerate}
\end{enumerate}

As mentioned before, the intuition for the potential function procedure $V'$ is to return zero for self loops, return $1$ for $0^k$, and return the number specified by the lowest $m$ bits for the rest. 
\medskip
\medskip

\noindent{\bf Procedure $V'(\uu,\pi)$.} Let $\xx=(\uu,\pi)$ for notational convenience.
\vspace{-0.2cm}

\begin{enumerate}
\itemsep1mm
\item If $\xx=0^k$, then Return $1$. 
\item If $S'(\xx) = \xx$ and $P'(\xx)=\xx$ then Return $0$.
\item If $S'(\xx) \neq \xx$ or $P'(\xx)\neq \xx$ then Return $\pi$.
\end{enumerate}

The fact that procedures $S'$, $P'$ and $V'$ give a valid \EOML instance follows from construction.
\begin{lemma}\label{lem:p2m-valid}
Procedures $S'$, $P'$ and $V'$ gives a valid \EOML instance on vertex set $\{0,1\}^k$, where $k=m+n$ and $V':\{0,1\}^k\ra \{0,\dots, 2^k-1\}$.
\end{lemma}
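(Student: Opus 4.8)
The plan is to verify that the three procedures $S'$, $P'$, $V'$ satisfy the three syntactic requirements in the definition of \EOML: namely that $S',P' : \{0,1\}^k \to \{0,1\}^k$ and $V' : \{0,1\}^k \to \{0,\dots,2^k-1\}$ are all computable by polynomial-size circuits, that $P'(0^k) = 0^k \neq S'(0^k)$, and that $V'(0^k) = 1$. First I would observe that each of the three procedures is a fixed-depth case analysis whose branches only invoke the circuits $S$, $P$, $V$ of the original \EOPL instance $\CI$ a constant number of times, together with integer comparisons and increments/decrements on $m$-bit and $n$-bit strings; hence all three are polynomial in the sizes of $S$, $P$, $V$, which gives the ``valid instance'' (circuit-size) part. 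This is the routine part and I would dispatch it quickly.

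Next I would check the boundary conditions at $0^k = (0^n, 0)$. For $V'(0^k)$: the first clause of procedure $V'$ returns $1$ immediately, so $V'(0^k) = 1$ as required. For $P'(0^k)$: since $0^k = (0^n,\pi)$ with $\uu = 0^n$ and $\pi = 0$, clause 2(a) of $P'$ returns $0^k$, so $0^k$ is a self-loop under $P'$. For $S'(0^k)$: here we use the standing assumption (justified by Lemma~\ref{lem:valid-edges}, after the initial triviality checks) that $0^n$ and $S(0^n)$ are not solutions of $\CI$, so $0^n \to S(0^n) \to S(S(0^n))$ are valid edges and $p' := V(S(S(0^n))) \ge 2$. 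Clause 2 of $S'$ then returns either $(S(S(0^n)), 2)$ if $p' = 2$, or $(0^n, 2)$ if $p' > 2$; in either case this is not equal to $(0^n,0)$. Thus $S'(0^k) \neq 0^k = P'(0^k)$, completing the required $P'(0^k) = 0^k \neq S'(0^k)$.

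I would also note, for cleanliness, that the construction presupposes we have already handled the trivial case: if $0^n$ or $S(0^n)$ is a solution of $\CI$ we output it directly and do not build $\CI'$ at all, so the standing assumption of Lemma~\ref{lem:valid-edges} is without loss of generality. With the circuit-size bound and the three boundary identities in hand, nothing further is needed for this particular lemma, since \EOML imposes no semantic promise beyond these syntactic conditions (all deviations from intended behaviour are themselves declared to be solutions of types T1--T3). The main obstacle, such as it is, is purely bookkeeping: making sure that across all the nested clauses of $S'$ and $P'$ the $0^k$ vertex is routed exactly as claimed and that no clause is accidentally triggered before the intended one — in particular that clause~1 of each procedure (the dummy self-loops for $\uu = S(0^n)$ or $(\uu,\pi) = (0^n,1)$) does not capture $0^k$, which it does not since $0^k$ has $\pi = 0$ and $\uu = 0^n \neq S(0^n)$ (the latter because $0^n$ is not a solution, so $S(0^n) \neq 0^n$). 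The genuinely hard content of this section — that solutions of $\CI'$ map back to solutions of $\CI$ — is deferred to subsequent lemmas and is not needed here.
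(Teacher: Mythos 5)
Your proposal is correct, and it fills in the details that the paper leaves implicit — the paper simply asserts that Lemma~\ref{lem:p2m-valid} ``follows from construction'' without giving an explicit proof. Your case-by-case tracing through the clauses of $S'$, $P'$, $V'$ at the vertex $0^k=(0^n,0)$ (clause~1 of each procedure is skipped because $\pi=0\neq 1$ and $\uu=0^n\neq S(0^n)$; clause~2(a) of $P'$ returns $0^k$; clause~2 of $S'$ returns a vertex with second coordinate $2\neq 0$; clause~1 of $V'$ returns $1$) is exactly the verification the authors are implicitly appealing to, and your observation that the standing assumption from Lemma~\ref{lem:valid-edges} guarantees the construction is reached only in the non-trivial case is the right framing. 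One small inaccuracy worth noting: $S(0^n)\neq 0^n$ is already part of the definition of the \EOPL input (the condition $P(0^n)=0^n\neq S(0^n)$), so it does not need to be derived from ``$0^n$ is not a solution''; this does not affect the argument. It would also be tidy to explicitly note that the output of $V'$ lies in $\{0,\dots,2^m-1\}\cup\{1\}\subseteq\{0,\dots,2^k-1\}$, matching the claimed range, though this is immediate from the procedure.
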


The next three lemmas shows how to construct a solution of \EOPL instance $\CI$ from a type T1, T2, or T3 solution of constructed \EOML instance $\CI'$.
The basic idea for next lemma, which handles type T1 solutions, is that we never create spurious end or start of a path. 
\begin{lemma}\label{lem:p2m-t1}
Let $\xx=(\uu,\pi)$ be a type T1 solution of constructed \EOML instance $\CI'$. Then $\uu$ is a type R1 solution of the given \EOPL instance $\CI$.
\end{lemma}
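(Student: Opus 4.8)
The plan is to exploit the design of the reduction described just before the lemma: the construction turns every invalid edge of $\CI$, and every irrelevant value of the $m$ low-order bits, into a self-loop of $\CI'$, and the only places where it deliberately creates a line-start or line-end other than $0^k$ are the last two bullets, which fire precisely when $\uu$ is itself a start ($S(P(\uu))\ne\uu$) or an end ($P(S(\uu))\ne\uu$) of a line of $\CI$. So I want to show that a T1 solution $\xx=(\uu,\pi)$ of $\CI'$ must be one of these, which forces $\uu$ to satisfy $S(P(\uu))\ne\uu\ne 0^n$ or $P(S(\uu))\ne\uu$ --- exactly an R1 solution of $\CI$.

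First I would clear away two preliminary observations. A T1 solution $\xx$ of $\CI'$ cannot be $0^k$: by construction $P'(0^k)=0^k$, and the successor of $0^k$ --- whether it is $(S(S(0^n)),p')$ with $p'=V(S(S(0^n)))$, or $(0^n,2)$ --- has $P'$-image equal to $0^k$, so $P'(S'(0^k))=0^k$ and both disjuncts of T1 fail at $0^k$. Also, $\xx$ cannot be a $\CI'$-self-loop, since $S'(\xx)=P'(\xx)=\xx$ gives $S'(P'(\xx))=P'(S'(\xx))=\xx$. Using Lemma~\ref{lem:valid-edges} I may assume $0^n\ra S(0^n)\ra S(S(0^n))$ are valid edges of $\CI$ with $V(S(S(0^n)))\ge 2$; then the vertices $(S(0^n),\pi)$ and $(0^n,1)$ are pure dummies with self-loops (hence excluded), and the remaining $0^n$-labelled vertices form the stretched initial segment $(0^n,0)\ra(0^n,2)\ra\dots\ra(0^n,p'-1)\ra(S(S(0^n)),p')$; its internal vertices have mutually consistent $S'$ and $P'$, and its tail $(S(S(0^n)),p')$ is a T1 solution only if $S(S(0^n))$ is a line-end of $\CI$ (last bullet), in which case $\uu=S(S(0^n))$ is already an R1 solution. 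Hence I may assume $\uu\notin\{0^n,S(0^n)\}$.

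For this main case I would read off the self-loop guards built into $S'$ and $P'$. The procedure $S'$ returns the self-loop $(\uu,\pi)$ unless $S(\uu)\ne\uu$ and $P(S(\uu))=\uu$ (step~\ref{itm:Sfive}), i.e.\ unless $\uu\ra S(\uu)$ is a valid edge of $\CI$, and moreover unless $\pi$ lies in the active window determined by $p:=V(\uu)$ and $p':=V(S(\uu))$; $P'$ has the symmetric guards. A routine check of the $p<p'$, $p=p'$, and $p>p'$ sub-cases of these windows shows that for $\pi$ in the window with $\pi\ne p$ the vertex $(\uu,\pi)$ lies strictly inside a stretched chain, where $S'$ and $P'$ are mutual inverses, and that the reduction creates no ``merges'' (two edges into one vertex) or ``splits''. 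Therefore a T1 solution $\xx=(\uu,\pi)$ must have $\pi=V(\uu)$, and exactly one of the following occurs: either $S'(\xx)=\xx\ne P'(\xx)$, and then step~\ref{itm:Sfive} together with the fact that $\xx$ is not a $\CI'$-self-loop forces $P(S(\uu))\ne\uu$ (unwinding the sub-case $S(\uu)=\uu$ gives $P(\uu)\ne\uu$, hence $P(S(\uu))=P(\uu)\ne\uu$ again), i.e.\ $\uu$ is a line-end; or $P'(\xx)=\xx\ne S'(\xx)$, and then the symmetric guard in $P'$, together with the excluded case $\uu=0^n$, forces $S(P(\uu))\ne\uu\ne 0^n$, i.e.\ $\uu$ is a line-start; or $S'(\xx)=\yy\ne\xx$ with $P'(\yy)\ne\xx$ (or the mirror image), which the construction rules out because each edge inserted into $S'$ is matched by the corresponding assignment in $P'$, because no valid-edge chain targets a vertex re-designated as a line-start (such a vertex has no valid incoming edge in $\CI$), and because there are no merges. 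In the two surviving cases $\uu$ satisfies $S(P(\uu))\ne\uu\ne 0^n$ or $P(S(\uu))\ne\uu$, i.e.\ $\uu$ is an R1 solution of $\CI$.

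The main obstacle is the bookkeeping compressed into ``a routine check of the sub-cases of these windows'' and ``the construction rules out'': one must walk through the long case split in the definitions of $S'$ and $P'$, tracking the three parameters $p$, $p'$, $\pi$ through the $p<p'$, $p=p'$, and $p>p'$ branches, and verify both that $S'$ and $P'$ are genuine mutual inverses on every stretched chain and on the initial segment, and that the only residual mismatches are the two deliberate ones inherited from a line-start or line-end of $\CI$. This is entirely mechanical, but it is where all the effort goes.
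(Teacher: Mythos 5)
Your proposal is correct and takes essentially the same route as the paper's proof: both hinge on the same two structural facts about the reduction — first, that $\uu\notin\{0^n,S(0^n)\}$ because $(0^n,\cdot)$ never yields a spurious line endpoint and $(S(0^n),\cdot)$ are all self-loops; second, that when $\uu$ has a valid incoming and a valid outgoing edge in $\CI$, every vertex of the form $(\uu,\cdot)$ is either a self-loop or lies on a stretched chain where $S'$ and $P'$ are mutual inverses, so none can be T1. The only cosmetic difference is direction: the paper assumes $\uu$ is not R1 and derives a contradiction, while you argue forward that any T1 solution must have been created by one of the two bullets that deliberately open or close a line, and both proofs defer the same ``routine'' window case analysis of $S'$ and $P'$ to verify the mutual-inverse property along chains.
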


\begin{proof}
Let $\Delta=2^m-1$.
In $\CI'$, clearly $(0^n,\pi)$ for any $\pi \in {1,\dots, \Delta}$ is not a start or end of a path, and $(0^n,0)$ is not an end of a path. Therefore, $\uu\neq 0^n$. Since $(S(0^n),\pi), \forall \pi\in \{0,\dots,\Delta\}$ are self loops, $\uu \neq S(0^n)$.

If to the contrary, $S(P(\uu))=\uu$ and $P(S(\uu))=\uu$. If $S(\uu)=\uu=P(\uu)$ then $(\uu,\pi),\ \forall \pi\in\{0,\dots,\Delta\}$ are self loops, a contradiction. 
\todo{I don't understand this line. Basically, the point found can't have
corresponded to a self-loop because it would then have been a self-loop too.}

For the remaining cases, let $P'(S'(\xx))\neq \xx$, and let $\uu'=S(\uu)$. 
\todo{There must have been a edge in the original if there is a successor edge
in the new instance}. There is a valid edge from $\uu$ to $\uu'$ in $\CI$. Then
we will create valid edges from $(\uu,V(\uu))$ to $(S(\uu),V(S(\uu))$ with
appropriately changing second coordinates. The rest of $(\uu,.)$ are self loops,
a contradiction. 

Similar argument follows for the case when $S'(P'(\xx))\neq \xx$. 
\end{proof}

The basic idea behind the next lemma is that a T2 type solution in $\CI'$ has
potential $1$. Therefore, it is surely not a self loop. Then it is either an end of a path or near an end of a path, or else near a potential violation. 

\begin{lemma}\label{lem:p2m-t2}
Let $\xx=(\uu,\pi)$ be a type T2 solution of $\CI'$. Either $\uu \neq 0^n$ is start of a path in $\CI$ (type R1 solution), or $P(\uu)$ is an R1 or R2 type solution in $\CI$, or $P(P(\uu))$ is an R2 type solution in $\CI$.
\end{lemma}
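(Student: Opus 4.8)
The plan is to first decode exactly what a type~T2 solution of $\CI'$ looks like, then locate such a vertex in the graph built by the reduction, and finally chase a short path backwards to exhibit a solution of $\CI$.

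First, unpack the hypothesis. By definition a T2 solution $\xx=(\uu,\pi)$ satisfies $V'(\xx)=1$ and $\xx\neq 0^k$. Inspecting the procedure for $V'$, the value $1$ is returned only at $0^k$ (which is excluded) or, via its last clause, at a vertex that is \emph{not} a self loop and whose low-order $m$ bits spell out $1$; hence $\pi=1$ and $\xx$ is not a self loop of $\CI'$. Since every vertex of the form $(S(0^n),\cdot)$, as well as $(0^n,1)$, is a dummy self loop, we also get $\uu\notin\{0^n,S(0^n)\}$. We may assume that neither $0^n$ nor $S(0^n)$ is a solution of $\CI$ (otherwise we return it and are done), so Lemma~\ref{lem:valid-edges} is available, giving valid edges $0^n\to S(0^n)\to S(S(0^n))$ and $V(S(S(0^n)))\ge 2$.

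Second, locate $\xx$ inside $\CI'$. As $\xx$ is not a self loop it lies on a genuine edge, and by construction every edge of $\CI'$ belongs either to the (possibly stretched) chain leaving $(0^n,0)$ or to the (possibly stretched) chain attached to some valid edge $\uu\to S(\uu)$ of $\CI$. Because the first coordinate of $\xx$ is $\uu\neq 0^n$, the vertex $(\uu,1)$ can occur only as the anchor $(\uu,V(\uu))$ of such a chain — forcing $V(\uu)=1$ — or as an interior vertex of the chain of the valid edge out of $\uu$, which, reading the two stretching rules, forces $V(\uu)=0$, or else $V(\uu)\ge 2$ together with $V(S(\uu))=0$. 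Third, run the case analysis, using repeatedly the observation: if $\zz$ has a valid incoming edge $\yy\to\zz$ in $\CI$ and $\zz$ is not a solution of $\CI$, then $\yy\neq\zz$ (otherwise $\zz$ is an isolated self loop in $\CI$, so the corresponding vertex of $\CI'$ is a self loop, contradicting the previous step) and $V(\zz)>V(\yy)$ (the remaining part of the failure of R2 at $\yy$). Take the main case $V(\uu)=1$. If $S(P(\uu))\neq\uu$ then $\uu$ is the start of a path and, being $\neq 0^n$, an R1 solution — the first conclusion. Otherwise $\ww:=P(\uu)$ gives a valid edge $\ww\to\uu$. If $\ww$ is an R1 or R2 solution we are done with the second conclusion. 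If not, the observation gives $V(\ww)<V(\uu)=1$, so $V(\ww)=0$; moreover $\ww\neq 0^n$, since $\ww=0^n$ would force $\uu=S(0^n)$. As $\ww\neq 0^n$ is not a solution, $S(P(\ww))=\ww$, so $\vv:=P(\ww)=P(P(\uu))$ gives a valid edge $\vv\to\ww$; since $V(\ww)=0\le V(\vv)$ and $\vv\neq S(\vv)=\ww$, the vertex $\vv$ violates R2 — the third conclusion. The remaining configurations (interior vertices, where $V(\uu)=0$, or $V(\uu)\ge 2$ with $V(S(\uu))=0$) are settled by the same backward chase along $\uu,\,P(\uu),\,P(P(\uu))$: when the chain out of $\uu$ is already potential-decreasing, $\uu$ itself witnesses R2 and must be caught either as a start-of-path R1 solution or, following its incoming valid edge, by the argument just given; when $V(\uu)=0$ the inequality $V(\ww)<V(\uu)$ is impossible, so $\ww=P(\uu)$ is already an R1 or R2 solution.

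I expect the main obstacle to be exactly this case analysis. The reduction turns invalid edges, dummy vertices, and the three distinguished vertices $0^n,S(0^n),S(S(0^n))$ into self loops or hard-wired edges, and one must check that none of these degenerate configurations can masquerade as a T2 solution without forcing one of $\uu$, $P(\uu)$, $P(P(\uu))$ to be a genuine R1 or R2 violation — in particular ruling out isolated self loops in $\CI$ and the special-case handling near $0^n$. Tracking the direction in which the embedded potential moves along each stretched chain, and which endpoint of a valid edge fails which clause of the R1/R2 conditions, is where the bookkeeping is heaviest and where a careless step would leave a gap.
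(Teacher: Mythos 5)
Your overall route matches the paper's: unpack the T2 condition to get $\pi=1$ and $(\uu,1)$ a non-self-loop vertex, classify where $(\uu,1)$ can sit in the stretched chains, and walk backwards through $P(\uu)$ and $P(P(\uu))$. Your main case $V(\uu)=1$ is handled correctly and with more care than the paper: you show $\ww=P(\uu)$ not R1/R2 forces $V(\ww)=0$, hence $\ww\neq 0^n$, hence $\vv=P(\ww)$ satisfies $V(\ww)-V(\vv)\le 0$ and is R2. The $V(\uu)=0$ subcase is also sound. But the third case, $V(\uu)\ge 2$ with $V(S(\uu))=0$, does not close. You correctly observe that $\uu$ itself is an R2 solution there, but then assert this "must be caught either as a start-of-path R1 solution or, following its incoming valid edge, by the argument just given." That last step fails: the argument for the main case used $V(\uu)=1$ to deduce $V(P(\uu))=0$ once $P(\uu)$ is not R1/R2; when $V(\uu)\ge 2$ you only get $V(P(\uu))<V(\uu)$, which does not force $V(P(\uu))=0$, and then $P(P(\uu))$ need not be R2. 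A concrete witness: take $\CI$ with a line $0^n\to a\to b\to c\to d$ and $V$-values $0,1,3,0,5$. Then $(b,1)$ is not a self loop in $\CI'$ (it lies on the stretched chain from $(b,3)$ down to $(c,0)$), $V'(b,1)=1$, so it is a T2 solution with $\uu=b$; but $b$ is not a start of a path, $P(b)=a$ is neither R1 nor R2 (the edge $a\to b$ has $V$ increasing from $1$ to $3$), and $P(P(b))=0^n$ is not R2. The only $\CI$-solution nearby is $\uu=b$ itself, which is not in the lemma's list.

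To be fair, this is also a blind spot of the paper's own proof: it case-splits on $V(P(\uu))$ against $\pi=1$, treats strict inequalities in both directions, and silently omits $V(P(\uu))=\pi$, which is exactly the example above. The clean fix is to add ``$\uu$ is an R2 solution'' to the lemma's allowed conclusions, which is all the downstream Theorem~\ref{thm:p2m} actually needs. As written, though, your final case (like the paper's) leaves a genuine gap; you would need to either prove the case cannot occur (it can) or widen the conclusion you are aiming for. One small cosmetic slip: in your boxed ``observation'' the hypothesis should be that the \emph{tail} $\yy$ of the valid edge $\yy\to\zz$ is not an R2 solution, not that $\zz$ is; you apply it correctly later, but as stated it does not literally say what you need.
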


\begin{proof}
Clearly $\uu \neq 0^n$, and $\xx$ is not a self loop, i.e., it is not a dummy vertex with irrelevant value of $\pi$. Further, $\pi=1$. If $\uu$ is a start or end of a path in $\CI$ then done. 

Otherwise, if $V(P(\uu))>\pi$ then we have $V(\uu)\le \pi$ and hence $V(\uu)-V(P(\uu))\le 0$ giving $P(\uu)$ as an R2 type solution of $\CI$. 
If $V(P(\uu))<\pi=1$ then $V(P(\uu))=0$. Since potential can not go below zero, either $P(\uu)$ is an end of a path, or for $\uu''=P(P(\uu))$ and $\uu'=P(\uu)$ we have $\uu'=S(\uu'')$ and $V(\uu')-V(\uu'')\le 0$, giving $\uu''$ as a type R2 solution of $\CI$.
\end{proof}

At a type T3 solution of $\CI'$ potential is strictly positive, hence these solutions are not self loops. If they correspond to potential violation in $\CI$ then we get a type R2 solution. But this may not be the case, if we made $S'$ or $P'$ self pointing due to end or start of a path respectively. In that case, we get a type R1 solution. The next lemma formalizes this intuition. 

\begin{lemma}\label{lem:p2m-t3}
Let $\xx=(\uu,\pi)$ be a type T3 solution of $\CI'$. If $\xx$ is a start or end of a path in $\CI'$ then $\uu$ gives a type R1 solution in $\CI$. Otherwise $\uu$ gives a type R2 solution of $\CI$.
\end{lemma}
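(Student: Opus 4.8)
The plan is to split, exactly as the statement suggests, on whether the type T3 solution $\xx=(\uu,\pi)$ is a start or end of a path in $\CI'$: the first case will reduce immediately to Lemma~\ref{lem:p2m-t1}, and the second will be handled by inspecting the construction of $S',P',V'$. First I would record that a T3 solution is never a trivial vertex: since $\xx$ is of type T3 we have $V'(\xx)>0$, and by the definition of $V'$ this forces $\xx$ not to be a self-loop, while a one-line check (the vertex $0^k$ has $V'(0^k)=1$ and $V'(S'(0^k))=2$, so it satisfies neither disjunct of T3) gives $\xx\neq 0^k$; hence $V'(\xx)=\pi$ and $\xx$ is none of the dummy vertices $(S(0^n),\cdot)$, $(0^n,1)$, $(S(0^n),\cdot)$. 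Now if $\xx$ is a start or end of a path in $\CI'$, then because $\xx\neq 0^k$ this is precisely the assertion that $\xx$ is a type T1 solution of $\CI'$ (either $S'(P'(\xx))\neq\xx\neq 0^k$ or $P'(S'(\xx))\neq\xx$), and Lemma~\ref{lem:p2m-t1} gives that $\uu$ is a type R1 solution of $\CI$.

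It remains to treat the case where $\xx$ is interior, i.e.\ it has a proper predecessor and a proper successor. The key is a structural invariant of the construction, which I would verify by walking through the branches of $S'$ and $P'$: along every edge of $\CI'$ the potential $V'$ goes up by exactly $1$, with the only exceptions being the edges of a chain $(\uu',V(\uu'))\ra\dots\ra(\uu'',V(\uu''))$ that encodes a valid edge $\uu'\ra\uu''$ of $\CI$ with $V(\uu')\ge V(\uu'')$, along which $V'$ is constant (if $V(\uu')=V(\uu'')$) or drops by exactly $1$ at each step. In particular the preamble edges, and the junction edges between consecutive chains, all raise $V'$ by exactly $1$, so a T3 solution must sit at one end of such a ``non-increasing'' chain. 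If the T3 violation is on the successor side ($V'(S'(\xx))-V'(\xx)\neq 1$ with $V'(\xx)>0$), then $\xx$ is the source end of the chain: the edge $\uu\ra S(\uu)$ of $\CI$ is valid (it is, since $\xx$ is not an end of a path, so $S(\uu)\neq\uu$ and $P(S(\uu))=\uu$) and $V(S(\uu))\le V(\uu)$, which is exactly the statement that $\uu$ is a type R2 solution of $\CI$. If instead the violation is only on the predecessor side ($V'(\xx)-V'(P'(\xx))\neq 1$ with $V'(\xx)>1$), then $\xx$ is the target end of the chain: the edge $P(\uu)\ra\uu$ of $\CI$ is valid (since $\xx$ is not a start of a path) with $V(\uu)\le V(P(\uu))$, so $x:=P(\uu)$ satisfies $x\neq S(x)=\uu$, $P(S(x))=x$ and $V(S(x))-V(x)\le 0$, i.e.\ $P(\uu)$ is a type R2 solution of $\CI$ (and $\uu$ is one too whenever its outgoing edge is non-increasing as well). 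Either way a type R2 solution of $\CI$ is produced from $\uu$, using the circuits $S$, $P$, $V$.

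The argument is conceptually short, so the real work is entirely in the two verifications: (i) confirming the ``$+1$ except on non-increasing chains'' behaviour of $V'$ across every branch of $S'$ and $P'$ — most delicately at the preamble out of $0^k$, at the junctions between chains, and at the special vertices $0^k$, $(0^n,1)$, $(S(0^n),\cdot)$, $(S(S(0^n)),\cdot)$; and (ii) checking that a T3 solution genuinely localises $\xx$ to one end of a single non-increasing chain, so that the correct $\CI$-edge can be read off — the one out of $\uu$ when the successor side of T3 fails, the one into $\uu$ when only the predecessor side fails — including the boundary cases $\pi=V(\uu)$ versus $\pi\neq V(\uu)$ and the smallest admissible potential values. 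I would also want to make sure no other \EOPL solution type can leak out in this case, so that the only possible output really is of type R2. The one point worth stating openly in the write-up is that in the predecessor-side subcase the R2 witness is naturally $P(\uu)$ rather than $\uu$, so ``$\uu$ gives a type R2 solution'' should be read as ``a type R2 solution of $\CI$ is obtainable in constant time from $\uu$''.
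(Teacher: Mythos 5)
Your proof takes essentially the same route as the paper's: first rule out $0^k$, self-loops, and dummies via $V'(\xx)>0$; reduce the start/end-of-path case to Lemma~\ref{lem:p2m-t1} (which is exactly what the paper does, just stated as "$\uu$ is an R1 solution"); and in the interior case use the fact that $V'$ changes by $-1,0,+1$ along edges of $\CI'$ to conclude that a T3 violation forces $V(S(\uu))\le V(\uu)$ (witness $\uu$) or $V(\uu)\le V(P(\uu))$ (witness $P(\uu)$). You correctly identify the same subtlety as the paper — that in the predecessor-side case the R2 witness is $P(\uu)$, matching the paper's $(\uu'',\pi'')=P(\uu)$ — so the two proofs match in both structure and content.
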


\begin{proof}
Since $V'(\xx)>0$, it is not a self loop and hence is not dummy, and $\uu\neq 0^n$. If $\uu$ is start or end of a path then $\uu$ is a type R1 solution of $\CI$. Otherwise, there are valid incoming and outgoing edges at $\uu$, therefore so at $\xx$. 

If $V((S(\xx))-V(\xx)\neq 1$, then since potential either remains the same or increases or decreases exactly by one on edges of $\CI'$, it must be the case that $V(S(\xx))-V(\xx)\le 0$. This is possible only when $V(S(\uu))\le V(\uu)$. Since $\uu$ is not an end of a path we do have $S(\uu)\neq \uu$ and $P(S(\uu))=\uu$. Thus, $\uu$ is a type T2 solution of $\CI$.

If $V((\xx)-V(P(\xx))\neq 1$, then by the same argument we get that for $(\uu'',\pi'')=P(\uu)$, $\uu''$ is a type R2 solution of $\CI$. 
\end{proof}

Our main theorem follows using Lemmas \ref{lem:p2m-valid}, \ref{lem:p2m-t1}, \ref{lem:p2m-t2}, and \ref{lem:p2m-t3}.

\begin{theorem}\label{thm:p2m}
An instance of \EOPL can be reduced to an instance of \EOML in polynomial time such that a solution of the former can be constructed in a linear time from the solution of the latter. 
\end{theorem}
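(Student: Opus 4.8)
The plan is to assemble the theorem from the explicit construction of the procedures $S',P',V'$ given above together with the case analysis already carried out in Lemmas~\ref{lem:p2m-valid}--\ref{lem:p2m-t3}. Given an \EOPL instance $\CI$ on $\{0,1\}^n$ with circuits $S,P,V$, where $V$ has $m$ output bits, I would first check in linear time whether $0^n$ or $S(0^n)$ is already a solution of $\CI$; if so, output it directly. Otherwise Lemma~\ref{lem:valid-edges} tells us that $0^n\to S(0^n)\to S(S(0^n))$ are valid edges and $V(S(S(0^n)))\ge 2$, which is precisely the situation in which $S',P',V'$ are defined; this is why $S(0^n)$ is made a dummy vertex and $0^n$ is routed directly to $S(S(0^n))$, so that the \EOML normalization $V'(0^k)=1$ holds with $k=n+m$.

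Next I would instantiate the \EOML instance $\CI'$ on $\{0,1\}^k$ using exactly the procedures $S',P',V'$ written above. Lemma~\ref{lem:p2m-valid} certifies that this is a syntactically legal \EOML instance, in particular that $P'(0^k)=0^k\ne S'(0^k)$ and $V'(0^k)=1$. Since each of $S',P',V'$ only wraps a constant number of evaluations of $S,P,V$ together with $O(1)$ arithmetic operations on the $m$ low-order bits of the vertex label, the sizes of the new circuits --- and hence the time needed to produce them --- are polynomial (in fact linear) in $|S|+|P|+|V|$; note that $k$ and the whole instance are polynomially bounded because $m$ is at most the size of $V$.

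Finally, given any solution $\xx=(\uu,\pi)$ returned for $\CI'$, I would determine in linear time which of the \EOML solution types T1, T2, T3 it satisfies --- by evaluating $S',P',V'$ at $\xx$ and its at most two neighbours --- and then apply the corresponding lemma. If $\xx$ is of type T1, Lemma~\ref{lem:p2m-t1} gives that $\uu$ is a type-R1 solution of $\CI$; if of type T2, Lemma~\ref{lem:p2m-t2} gives that one of the constantly many candidates $\uu$, $P(\uu)$, $P(P(\uu))$ is an R1 or R2 solution of $\CI$; and if of type T3, Lemma~\ref{lem:p2m-t3} gives that $\uu$ is an R1 solution, or else one of $\uu$, $P(\uu)$ is an R2 solution. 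In each case one tests the $O(1)$ candidates against the defining conditions of R1 and R2 and outputs the one that works, all in linear time, completing the reduction. The substantive work is inside the lemmas rather than in this assembly: the delicate point, and what I expect to be the main obstacle, is that subdividing each valid edge of $\CI$ into a chain of vertices indexed by the embedded potential value --- while turning invalid edges and irrelevant low-bit patterns into self-loops --- must neither destroy the line through $0^k$ nor create any spurious start or end of a line that could be reported as an \EOML solution of type T1 without corresponding to a genuine R1 solution of $\CI$.
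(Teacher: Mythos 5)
Your proposal is correct and takes essentially the same approach as the paper: the paper's proof of this theorem consists only of the one-line remark that it follows from Lemmas~\ref{lem:p2m-valid}, \ref{lem:p2m-t1}, \ref{lem:p2m-t2}, and \ref{lem:p2m-t3}, and your write-up just makes explicit the preliminary triviality check, the polynomial-size bound on the new circuits, and the constant-time case analysis used to extract an R1/R2 solution of $\CI$ from a T1/T2/T3 solution of $\CI'$.
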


\section{The Full Reduction and Proofs for Section~\ref{sec:PLCPtoEOPL}}

\subsection{Lemke's algorithm}
\label{sec:lemke}

The explanation of Lemke's algorithm in this section is taken from \cite{GMSV}.
The problem is interesting only when $\qq \not \geq 0$, since otherwise $\yy = 0$ is a trivial solution. Let us introduce
slack variables $\ps$ to obtain the following equivalent formulation:
\begin{equation} \label{eq:b} \MM \yy  + \ps = \pq, \ \ \ \  \yy \geq 0, \ \ \ \ \ps \geq 0 \ \ \ \ \mbox{and} \ \ \ \ y_is_i = 0,\ \forall i\in[d].  \end{equation}

Let $\CQ$ be the polyhedron in $2d$ dimensional space defined by the first three conditions; we will assume that $\CQ$ is
non-degenerate (just for simplicity of exposition; this will not matter for our reduction).  
Under this condition, any solution to (\ref{eq:b}) will be a vertex of $\CQ$, since it must satisfy $2d$
equalities. Note that the set of solutions may be disconnected.
The ingenious idea of Lemke was to introduce a new variable and consider the system:
\begin{equation} \label{eq:c} \MM \yy  + \ps -z \one  = \pq, \ \ \ \  \yy \geq 0, \ \ \ \ \ps \geq 0, \ \ \ \  z \geq 0  \ \
\ \ \mbox{and} \ \ \ \ y_is_i = 0,\ \forall i\in[d].  \end{equation}
The next lemma follows by construction of (\ref{eq:c}).
\begin{lemma}\label{lem:lemke1}
Given $(\MM,\qq)$, $(\yy,\ps,z)$ satisfies \eqref{eq:c} with $z=0$ iff $\yy$ satisfies~\eqref{eq:lcp}.
\end{lemma}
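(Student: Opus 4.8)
The plan is to prove both implications by direct substitution; the only content of the lemma is that, once $z$ is set to $0$ in \eqref{eq:c}, the slack vector $\ps$ plays exactly the role of the residual $\qq - M\yy$ appearing in \eqref{eq:lcp}.

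For the forward direction, I would start from a triple $(\yy,\ps,z)$ satisfying \eqref{eq:c} with $z = 0$. The equality constraint then reads $M\yy + \ps = \qq$, so $\ps = \qq - M\yy$; combined with $\ps \geq 0$ this yields $M\yy \leq \qq$. Together with $\yy \geq 0$ (carried over verbatim) and the complementarity conditions $y_i s_i = 0$ rewritten via $s_i = (\qq - M\yy)_i$ as $y_i(\qq - M\yy)_i = 0$ for all $i \in [d]$, this is precisely the system \eqref{eq:lcp}.

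For the converse, given $\yy$ satisfying \eqref{eq:lcp}, I would define $\ps := \qq - M\yy$ and $z := 0$. Feasibility $M\yy \leq \qq$ gives $\ps \geq 0$, while $z = 0 \geq 0$ and $\yy \geq 0$ are immediate; the equality $M\yy + \ps - z\one = \qq$ holds by the choice of $\ps$ and $z$; and $y_i s_i = y_i(\qq - M\yy)_i = 0$ supplies complementarity. Hence $(\yy,\ps,0)$ satisfies \eqref{eq:c} with $z = 0$.

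There is no genuine obstacle here: the statement follows immediately from the way \eqref{eq:c} is constructed. The only points requiring minor care are the bookkeeping identification $\ps = \qq - M\yy$ and checking that the complementarity indices match up across the two formulations.
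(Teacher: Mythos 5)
Your proof is correct and simply spells out the direct substitution that the paper itself treats as immediate (the paper gives no proof, stating only that the lemma ``follows by construction'' of \eqref{eq:c}). Both directions are handled by the identification $\ps = \qq - M\yy$ at $z=0$, exactly as intended.
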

Let $\CPol$ be the polyhedron in $2d + 1$ dimensional space defined by the first four conditions of \eqref{eq:c}, i.e.,
\begin{equation}\label{eq:cp}
\CPol = \{ (\yy,\ps, z) \ |\ \MM \yy  + \ps -z \one  = \pq, \ \ \ \yy \geq 0, \ \ \ \ps \geq 0, \ \ \  z \geq 0\};
\end{equation}
for now, we will assume that $\CPol$ is {\em non-degenerate}.  

Since any solution to (\ref{eq:c}) must still satisfy $2d$ equalities in $\CPol$, the set of solutions, say
$S$, will be a subset of the one-skeleton of $\CPol$, i.e., it will consist of edges and vertices of $\CPol$.  Any solution to
the original system (\ref{eq:b}) must satisfy the additional condition $z = 0$ and hence will be a vertex of $\CPol$.

Now $S$ turns out to have some nice properties. Any point of $S$ is {\em fully labeled} in the sense that for each $i$, $y_i
= 0$ or $s_i = 0$.  We will say that a point of $S$ {\em has duplicate label i} if $y_i = 0$ and $s_i = 0$ are both satisfied
at this point. Clearly, such a point will be a vertex of $\CPol$ and it will have only one duplicate label.  Since there are
exactly two ways of relaxing this duplicate label, this vertex must have exactly two edges of $S$ incident at it.  Clearly, a
solution to the original system (i.e., satisfying $z = 0$) will be a vertex of $\CPol$ that does not have a duplicate label.  On
relaxing $z=0$, we get the unique edge of $S$ incident at this vertex.

As a result of these observations, we can conclude that $S$ consists of paths and cycles.  Of these paths, Lemke's algorithm
explores a special one.  An unbounded edge of $S$ such that the vertex of $\CPol$ it is incident on has $z > 0$ is called a
{\em ray}.  Among the rays, one is special -- the one on which $\yy = 0$. This is called the {\em primary ray} and the rest
are called {\em secondary rays}. Now Lemke's algorithm explores, via pivoting, the path starting with the primary ray. This
path must end either in a vertex satisfying $z = 0$, i.e., a solution to the original system, or a secondary ray. In the
latter case, the algorithm is unsuccessful in finding a solution to the original system; in particular, the original system
may not have a solution.  
We give the full pseudo-code for Lemke's algorithm in Table~\ref{tab:lemke}.



\begin{table}[!htb]
\caption{Lemke's Complementary Pivot Algorithm}\label{tab:lemke}
\begin{tabular}{|l|}
\hline
\hspace{5pt} {\bf If} $\qq\ge 0$ {\bf then} {\bf Return} $\yy\leftarrow \zeros$ \\
\hspace{5pt} $\yy\leftarrow 0, z\leftarrow |\min_{i \in [d]} q_i|, \ps=\qq+z\ones$\\
\hspace{5pt} $i\leftarrow $ duplicate label at vertex $(\yy,\ps,z)$ in $\CPol$. $flag\leftarrow 1$ \\
\hspace{5pt} {\bf While} $z>0$ {\bf do}\\
\hspace{10pt} {\bf If} $flag=1$ {\bf then} set $(\yy',\ps',z')\leftarrow $ vertex obtained by relaxing $y_i=0$ at $(\yy,\ps,z)$ in $\CPol$\\
\hspace{10pt} {\bf Else} set $(\yy',\ps',z')\leftarrow $ vertex obtained by relaxing $s_i=0$ at $(\yy,\ps,z)$ in $\CPol$\\
\hspace{10pt} {\bf If} $z>0$ {\bf then}\\
\hspace{15pt} $i \leftarrow $ duplicate label at $(\yy',\ps',z')$\\
\hspace{15pt} {\bf If} $v_i>0$ and $v'_i=0$ {\bf then} $flag\leftarrow 1$. {\bf Else} $flag\leftarrow 0$\\
\hspace{15pt} $(\yy,\ps,z)\leftarrow(\yy',\ps',z')$\\
\hspace{5pt} End {\bf While} \\
\hspace{5pt} {\bf Return} $\yy$\\
\hline
\end{tabular}
\end{table}

\newpage

\subsection{The reduction from \PLCP to \EOPL}
\label{sec:full_plcp_reduction}

In this section, we give a polynomial-time reduction from \PLCP to \EOPL.

It is well known that if matrix $M$ is a P-matrix (\PLCP), then $z$ strictly
decreases on the path traced by Lemke's algorithm \cite{cottle2009linear}.
Furthermore, by a result of Todd~\cite[Section 5]{todd1976orientation}, paths traced by
complementary pivot rule can be locally oriented.  Based on these two facts, 
we now derive a polynomial-time reduction from \PLCP to \EOPL.

Let $\CI=(M,\qq)$ be a given \PLCP instance, and let $\CL$ be the length of the 
bit representation of $M$ and $\qq$. 
We will reduce $\CI$ to an \EOPL instance $\CE$ in time $\poly(\CL)$. 
According to Definition~\ref{def:EOPL}, the instance $\CE$ is defined 
by its vertex set $\vert$, and procedures $S$ (successor), $P$ (predecessor) and $\pot$ (potential). 
Next we define each of these. 

As discussed in Section \ref{sec:lemke} the linear constraints of (\ref{eq:c})
on which Lemke's algorithm operates forms a polyhedron $\CPol$ given in
(\ref{eq:cp}). We assume that $\CPol$ is non-degenerate. This is without
loss of generality since, a typical way to ensure this is by perturbing $\qq$ so
that configurations of solution vertices remain unchanged
\cite{cottle2009linear}, and since $M$ is unchanged the LCP is still \PLCP. 

Lemke's algorithm traces a path on feasible points of (\ref{eq:c}) which is on
$1$-skeleton of $\CPol$ starting at $(\yy^0,\ps^0,z^0)$, where:
\begin{equation}\label{eq:v0}
\yy^0=0,\ \ \ \ \ z^0= |\min_{i \in [d]} q_i|,\ \ \ \ \  \ps^0=\qq+z\ones
\end{equation}
We want to capture
vertex solutions of (\ref{eq:c}) as vertices in \EOPL instance $\CE$. To
differentiate we will sometimes call the latter {\em configurations}. Vertex
solutions of (\ref{eq:c}) are exactly the vertices of polyhedron $\CPol$ with
either $y_i=0$ or $s_i=0$ for each $i\in [d]$. Vertices of (\ref{eq:c}) with
$z=0$ are our final solutions (Lemma \ref{lem:lemke1}). While each of its {\em
non-solution} vertex has a duplicate label. Thus, a vertex of this path can be
uniquely identified by which of $y_i=0$ and $s_i=0$ hold for each $i$ and its
duplicate label. This gives us a representation for vertices in the \EOPL
instance $\CE$. 

\medskip

\noindent{\bf \EOPL Instance $\CE$.}
\begin{itemize}
\item Vertex set $\vert=\{0,1\}^n$ where $n = 2d$. 
\item Procedures $S$ and $P$ as defined in Tables \ref{tab:S} and \ref{tab:P} respectively
\item Potential function $\pot:\vert \rightarrow \{0,1,\dots, 2^m-1\}$ defined in Table \ref{tab:F} for $m=\lceil ln(2\Delta^3)\rceil$, 
	  where $$\Delta=(n! \cdot I_{max}^{2d+1})+1$$ 
	  and $I_{max} = \max\{\max_{i,j\in [d]} M(i,j),\ \max_{i\in [d]} |q_i|\}$. 
\end{itemize}

For any vertex $\uu\in \vert$, the first $d$ bits of $\uu$ represent
which of the two inequalities, namely $y_i\ge 0$ and $s_i\ge 0$, are tight for
each $i \in [d]$. A valid setting of the second set of $d$ bits will have 
at most one non-zero bit -- if none is one then $z=0$, otherwise the location of one bit indicates the duplicate label. 
Thus, there are many invalid configurations, namely
those with more than one non-zero bit in the second set of $d$ bits. 
These are dummies that we will handle separately, and we define a procedure 
$\isvalid$ to identify non-dummy vertices in Table \ref{tab:iv}.
To go between ``valid'' vertices of $\CE$ and corresponding vertices of the Lemke polytope
$\CPol$ of LCP $\CI$, we define procedures $\eti$ and $\ite$ in Table
\ref{tab:ei}.

\begin{table}[!hbt]
\caption{Procedure \isvalid(\uu)}\label{tab:iv}
\begin{tabular}{|l|}
\hline
\hspace{5pt} {\bf If} $\uu=0^{n}$ {\bf then} {\bf Return} 1\\
\hspace{5pt} {\bf Else} let $\tau = (u_{(d+1)}+\dots+u_{2d})$\\
\hspace{15pt} {\bf If} $\tau> 1$ {\bf then} {\bf Return} 0\\
\hspace{15pt} Let $S\leftarrow \emptyset$. \% set of tight inequalities. \\
\hspace{15pt} {\bf If} $\tau = 0$ {\bf then} $S=S\cup \{ z=0\}$. \\
\hspace{15pt} {\bf Else}\\
\hspace{30pt} Set $l\leftarrow $ index of the non-zero coordinate in vector $(u_{(d+1)},\dots,u_{2d})$. \\
\hspace{30pt} Set $S=\{y_l=0, s_l=0\}$.\\
\hspace{15pt} {\bf For} each $i$ from $1$ to $d$ {\bf do} \\
\hspace{30pt} {\bf If} $u_i=0$ {\bf then} $S=S\cup \{y_i=0\}$, {\bf Else} $S=S\cup \{s_i=0\}$\\
\hspace{15pt} Let $A$ be a matrix formed by lhs of equalities $M\yy+\ps -\ones z=\qq$ and that of set $S$\\
\hspace{15pt} Let $\bb$ be the corresponding rhs, namely $\bb=[\qq; \zeros_{d\times 1}]$.\\
\hspace{15pt} Let $(\yy',\ps',z') \leftarrow \bb * A^{-1}$\\
\hspace{15pt} {\bf If} $(\yy',\ps',z') \in \CPol$ {\bf then} {\bf Return} 1, {\bf Else} {\bf Return} 0 \\
\hline
\end{tabular}
\end{table}

\begin{table}[!hbt]
\caption{Procedures $\ite(\uu)$ and $\eti(\yy,\ps,z)$}\label{tab:ei}
\begin{tabular}{|l|}
\hline
\begin{tabular}{l}
$\ite(\yy,\ps,z)$ \\ \hline
\hspace{5pt} {\bf If} $\exists i \in [d]$ s.t. $y_i * s_i \neq 0$ {\bf then} {\bf Return} $(\zeros_{(2d-2)\times 1};1;1)$ \% Invalid \\
\hspace{5pt} Set $\uu\leftarrow \zeros_{2d\times 1}$. Let $DL=\{i\in [d]\ |\ y_i=0\mbox{ and } s_i=0\}$.\\
\hspace{5pt} {\bf If} $|DL|>1$ {\bf then} {\bf Return} $(\zeros_{(2d-2)\times 1};1;1)$ \%In valid \\
\hspace{5pt} {\bf If} $|DL|=1$ {\bf then} for $i\in DL$, set $u_i\leftarrow 1$\\
\hspace{5pt} {\bf For} each $i\in [d]$ {\bf If} $s_i=0$ {\bf then} set $u_{d+i}\leftarrow 1$\\
\hspace{5pt} {\bf Return} $\uu$
\end{tabular}
\\ \hline
\begin{tabular}{l}
$\eti(\uu)$  \\ \hline
\hspace{5pt} {\bf If} $\uu=0^n$ {\bf then} {\bf Return} $(\zeros_{d \times 1}, \qq+z^0+1, z^0+1)$ \% This case will never happen\\
\hspace{5pt} {\bf If} \isvalid(\uu)=0 {\bf then} {\bf Return} $\zeros_{(2d+1) \times 1}$\\
\hspace{5pt} Let $\tau = (u_{(d+1)}+\dots+u_{2d})$\\
\hspace{5pt} Let $S\leftarrow \emptyset$. \% set of tight inequalities. \\
\hspace{5pt} {\bf If} $\tau = 0$ {\bf then} $S=S\cup \{ z=0\}$. \\
\hspace{5pt} {\bf Else}\\
\hspace{15pt} Set $l\leftarrow $ index of non-zero coordinate in vector $(u_{(d+1)},\dots,u_{2d})$. \\
\hspace{15pt} Set $S=\{y_l=0, s_l=0\}$.\\
\hspace{5pt} {\bf For} each $i$ from $1$ to $d$ {\bf do} \\
\hspace{15pt} {\bf If} $u_i=0$ {\bf then} $S=S\cup \{y_i=0\}$, {\bf Else} $S=S\cup \{s_i=0\}$\\
\hspace{5pt} Let $A$ be a matrix formed by lhs of equalities $M\yy+\ps -\ones z=\qq$ and that of set $S$\\
\hspace{5pt} Let $\bb$ be the corresponding rhs, namely $\bb=[\qq; \zeros_{d\times 1}]$.\\
\hspace{5pt} {\bf Return} $\bb * A^{-1}$\\
\end{tabular}\\
\hline
\end{tabular}
\end{table}

By construction of $\isvalid$, $\eti$ and $\ite$, the next lemma follows.

\begin{lemma}\label{lem:vert}
If $\isvalid(\uu)=1$ then $\uu=\ite(\eti(\uu))$, and the corresponding vertex $(\yy,\ps,z)\in \eti(\uu)$ of $\CPol$ is feasible in (\ref{eq:c}). If $(\yy,\ps,z)$ is a feasible vertex of (\ref{eq:c}) then $\uu=\ite(\yy,\ps,z)$ is a valid configuration, {\em i.e.,} $\isvalid(\uu)=1$.
\end{lemma}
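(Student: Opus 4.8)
The plan is to unfold the three procedures $\isvalid$, $\eti$, $\ite$ and show that, restricted to valid configurations on one side and to feasible vertices of~\eqref{eq:c} on the other, the maps $\eti$ and $\ite$ are mutually inverse. The one structural fact driving everything is the non-degeneracy of $\CPol\subseteq\Real^{2d+1}$: since $\CPol$ is cut out by the $d$ equalities $M\yy+\ps-z\one=\qq$ and the $2d+1$ inequalities $\yy\ge 0,\ \ps\ge 0,\ z\ge 0$, every vertex of $\CPol$ is the unique point satisfying the equalities together with \emph{exactly} $d+1$ of the inequalities made into equalities, and no point of $\CPol$ has more than $d+1$ tight inequalities.

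\textbf{From a valid configuration to a feasible vertex and back.} Take $\uu$ with $\isvalid(\uu)=1$. The case $\uu=0^n$ is special: $\eti(0^n)$ is the designated start configuration, which is feasible in~\eqref{eq:c} directly from~\eqref{eq:v0} (it lies on the primary ray, not at a vertex, which matches the fact that $\eti$ is never actually invoked on $0^n$ in the construction), and $\ite(\eti(0^n))=0^n$ by inspection. So assume $\uu\neq 0^n$. Then $\isvalid$ and $\eti$ build \emph{literally the same} square system $A(\yy,\ps,z)=\bb$: the $d$ rows of $M\yy+\ps-z\one=\qq$, plus the $d+1$ rows picked out by the bits of $\uu$ — namely $z=0$ or $\{y_l=0,\ s_l=0\}$ from the second block, together with one of $y_i=0$ / $s_i=0$ per index $i$ from the first block. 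Since $\isvalid$ reaches its $\mathbf{Return}\ 1$ line, $A$ is invertible, so $(\yy,\ps,z):=\bb A^{-1}=\eti(\uu)$ is the unique solution; it lies in $\CPol$ (again since $\isvalid$ returned $1$), and, satisfying $2d+1$ linearly independent defining constraints, it is a vertex of $\CPol$. The imposed constraints force $y_i=0$ or $s_i=0$ for every $i$ (both at the duplicate label), so $y_is_i=0$ for all $i$ and the vertex is feasible in~\eqref{eq:c}. Finally, by non-degeneracy the tight inequalities of $\CPol$ at this vertex are \emph{precisely} the $d+1$ we imposed, so $\ite$ recovers $DL=\{l\}$ (or $\emptyset$ with $z=0$) and, index by index, re-reads exactly the $y$-vs-$s$ choice encoded in $\uu$; hence $\ite(\eti(\uu))=\uu$.

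\textbf{From a feasible vertex to a valid configuration.} Let $(\yy,\ps,z)$ be a feasible vertex of~\eqref{eq:c}: it lies in $\CPol$, is a vertex, and $y_is_i=0$ for all $i$. Then $\ite$ avoids both of its ``Invalid'' exits and forms $DL=\{i:y_i=s_i=0\}$. Count the tight inequalities of $\CPol$ at this vertex: each $i\notin DL$ contributes exactly one (complementarity forces one of $y_i,s_i$ to vanish, and not both), each $i\in DL$ contributes two, and $z\ge 0$ contributes one exactly when $z=0$. Non-degeneracy forces this total to equal $d+1$, so $|DL|\le 1$ and $|DL|$ plus the indicator of $z=0$ equals $1$. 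In particular the second block of $\uu:=\ite(\yy,\ps,z)$ has at most one nonzero bit, so $\uu$ is well-formed; and running $\isvalid$ on $\uu$ re-assembles exactly the system whose unique solution is $(\yy,\ps,z)$, which is in $\CPol$, so $\isvalid(\uu)=1$.

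\textbf{Where the care goes.} The routine part is the bookkeeping between the two $d$-bit blocks (second block $=$ duplicate-label flag, first block $=$ complementary choice per index) and the degenerate inputs ($\uu=0^n$, and the branch where $A$ fails to be invertible). The argument genuinely rests on one thing only: ``a vertex of $\CPol$ has exactly $d+1$ tight inequalities'', i.e.\ non-degeneracy of $\CPol$. This is what lets us (a) conclude the point computed by $\eti$ is a true vertex of $\CPol$ when $\isvalid=1$, and (b) conclude a feasible vertex has at most one duplicate label, so $\ite$ outputs a well-formed string. I expect this to be the only subtle point; recall non-degeneracy was assumed without loss of generality by a perturbation of $\qq$ that leaves $M$, hence the \PLCP instance, unchanged.
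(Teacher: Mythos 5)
Your proof is correct and takes essentially the same route as the paper's, which in two sentences flags the same two possible failure modes (a singular $A$ in $\isvalid$/$\eti$, or $|DL|>1$ in $\ite$) and observes that each would force more than $2d+1$ tight constraints at a point of $\CPol$, contradicting non-degeneracy. You make explicit the tight-inequality counting and the mutual-inverse bookkeeping that the paper leaves implicit.
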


\begin{proof}
The only thing that can go wrong is that the matrix $A$ generated in $\isvalid$ and $\eti$ procedures are singular, or the set of double labels $DL$ generated in $\ite$ has more than one elements. 
Each of these are possible only when more than $2d+1$ equalities of $\CPol$ hold at the corresponding point $(\yy,\ps,z)$, violating non-degeneracy assumption. 
\end{proof}

The main idea behind procedures $S$ and $P$, given in Tables \ref{tab:S} and
\ref{tab:P} respectively, is the following (also see Figure \ref{fig:lemke}):
Make dummy configurations in $\vert$ to point to themselves with cycles of
length one, so that they can never be solutions. 
The starting vertex $0^n \in \vert$ points to the configuration that corresponds
to the first vertex of the Lemke path, namely $\uu^0=\ite(\yy^0,\ps^0,z^0)$. 
Precisely, $S(0^n)=\uu^0$, $P(\uu^0)=0^n$ and $P(0^n)=0^n$ (start of
a path). 

For the remaining cases, let $\uu\in \vert$ have corresponding representation
$\xx=(\yy,\ps,z)\in \CPol$, and suppose $\xx$ has a duplicate label. As one
traverses a Lemke path for a P-LCPs, the value of $z$ monotonically decreases.
So, for $S(\uu)$ we compute the adjacent vertex $\xx'=(\yy',\ps',z')$ of $\xx$
on Lemke path such that the edge goes from $\xx$ to $\xx'$, and if the $z'<z$,
as expected, then we point $S(\uu)$ to configuration of $\xx'$ namely
$\ite(\xx')$. Otherwise, we let $S(\uu)=\uu$. Similarly, for $P(\uu)$, we find
$\xx'$ such that edge is from $\xx'$ to $\xx$, and then we let $P(\uu)$ be
$\ite(\xx')$ if $z'>z$ as expected, otherwise $P(\uu)=\uu$. 

For the case when $\xx$ does not have a duplicate label, then we have $z=0$. This is
handled separately since such a vertex has exactly one incident edge on the Lemke
path, namely the one obtained by relaxing $z=0$. According to the direction of 
this edge, we do similar process as before. For example, if the edge goes from 
$\xx$ to $\xx'$, then, if $z'<z$, we set $S(\uu)=\ite(\xx')$ else $S(\uu)=\uu$,
and we always set $P(\uu)=\uu$.  In case the edge goes from $\xx'$ to $\xx$, we
always set $S(\uu)=\uu$, and we set $P(\uu)$ depending on whether or not $z'>z$.

%
\medskip

The potential function $\pot$, formally defined in Table \ref{tab:F},
gives a value of zero to dummy vertices and the starting vertex $0^n$. To all
other vertices, essentially it is $((z^0-z) * \Delta^2)+1$. Since value of $z$
starts at $z^0$ and keeps decreasing on the Lemke path this value will keep
increasing starting from zero at the starting vertex $0^n$. Multiplication by
$\Delta^2$ will ensure that if $z_1>z_2$ then the corresponding potential values 
will differ by at least one. This is because, since $z_1$ and $z_2$ are 
coordinates of two vertices of polytope $\CPol$, their maximum value is $\Delta$
and their denominator is also bounded above by $\Delta$. Hence $z_1-z_2\le
1/\Delta^2$ (Lemma \ref{lem:pot}).  

To show correctness of the reduction we need to show two things: $(i)$ All the
procedures are well-defined and polynomial time. $(ii)$ We can construct a
solution of $\CI$ from a solution of $\CE$ in polynomial time. 

\begin{table}
\begin{minipage}{0.73\textwidth}
\caption{Successor Procedure $S(\uu)$}\label{tab:S}
\begin{tabular}{|l|}
\hline
\hspace{0pt}{\bf If} $\isvalid(\uu)=0$ {\bf then} {\bf Return} $\uu$\\
\hspace{0pt}{\bf If} $\uu=0^n$ {\bf then} {\bf Return} $\ite(\yy^0,\ps^0,z^0)$\\
\hspace{0pt}$\xx=(\yy,\ps,z) \leftarrow \eti(\uu)$\\
\hspace{0pt}{\bf If} $z=0$ {\bf then} \\
\hspace{5pt} $\xx^1\leftarrow$ vertex obtained by relaxing $z=0$ at $\xx$ in $\CPol$. \\
\hspace{5pt} {\bf If} Todd \cite{todd1976orientation} prescribes edge from $\xx$ to $\xx^1$ \\
\hspace{10pt} {\bf then} set $\xx'\leftarrow \xx^1$. {\bf Else Return} $\uu$ \\
\hspace{0pt}{\bf Else} set $l\leftarrow $ duplicate label at $\xx$\\
\hspace{5pt} $\xx^1\leftarrow $ vertex obtained by relaxing $y_l=0$ at $\xx$ in $\CPol$ \\
\hspace{5pt} $\xx^2\leftarrow $ vertex obtained by relaxing $s_l=0$ at $\xx$ in $\CPol$ \\
\hspace{5pt} {\bf If} Todd \cite{todd1976orientation} prescribes edge from $\xx$ to $\xx^1$ \\
\hspace{10pt} {\bf then} $\xx'=\xx^1$ \\
\hspace{5pt} {\bf Else} $\xx'=\xx^2$\\
\hspace{0pt}Let $\xx'$ be $(\yy',\ps',z')$. \\
\hspace{0pt}{\bf If} $z>z'$ {\bf then} {\bf Return} $\ite(\xx')$. {\bf Else} {\bf Return} $\uu$.\\
\hline
\end{tabular}
\end{minipage}%
\hspace{-1cm}
\begin{minipage}{0.23\textwidth}
\caption{Potential Value $\pot(\uu)$}\label{tab:F}
\begin{tabular}{|l|}
\hline
\hspace{0pt} {\bf If} $\isvalid(\uu)=0$ \\
\hspace{5pt} {\bf then} {\bf Return} $0$\\
\hspace{0pt} {\bf If} $\uu=0^n$\\
\hspace{5pt}  {\bf then} {\bf Return} $0$\\
\hspace{0pt} $(\yy,\ps,z) \leftarrow \eti(\uu)$\\
\hspace{0pt} {\bf Return} $\lfloor \Delta^2*(\Delta -z)\rfloor$\\
\hline
\end{tabular}
\end{minipage}
\end{table}

\begin{table}[!htb]
\caption{Predecessor Procedure $P(\uu)$}\label{tab:P}
\begin{tabular}{|l|}
\hline
\hspace{0pt} {\bf If} $\isvalid(\uu)=0$ {\bf then} {\bf Return} $\uu$\\
\hspace{0pt} {\bf If} $\uu=0^n$ {\bf then} {\bf Return} $\uu$\\
\hspace{0pt} $(\yy,\ps,z) \leftarrow \eti(\uu)$\\
\hspace{0pt} {\bf If} $(\yy,\ps,z)=(\yy^0,\ps^0,z^0)$ {\bf then} {\bf Return} $0^n$\\
\hspace{0pt} {\bf If} $z=0$ {\bf then} \\
\hspace{5pt} $\xx^1\leftarrow$ vertex obtained by relaxing $z=0$ at $\xx$ in $\CPol$. \\
\hspace{5pt} {\bf If} Todd \cite{todd1976orientation} prescribes edge from $\xx^1$ to $\xx$ {\bf then} set $\xx'\leftarrow \xx^1$. {\bf Else Return} $\uu$\\
\hspace{0pt} {\bf Else}\\
\hspace{5pt} $l\leftarrow $ duplicate label at $\xx$\\
\hspace{5pt} $\xx^1\leftarrow $ vertex obtained by relaxing $y_l=0$ at $\xx$ in $\CPol$ \\
\hspace{5pt} $\xx^2\leftarrow $ vertex obtained by relaxing $s_l=0$ at $\xx$ in $\CPol$ \\
\hspace{5pt} {\bf If} Todd \cite{todd1976orientation} prescribes edge from $\xx^1$ to $\xx$ {\bf then} $\xx'=\xx^1$ {\bf Else} $\xx'=\xx^2$\\
\hspace{0pt} Let $\xx'$ be $(\yy',\ps',z')$. {\bf If} $z<z'$ {\bf then} {\bf Return} $\ite(\xx')$. {\bf Else} {\bf Return} $\uu$.\\
\hline
\end{tabular}
\end{table}

\begin{lemma}\label{lem:PSF}
Functions $P$, $S$ and $\pot$ of instance $\CE$ are well defined, making $\CE$ a valid \EOPL instance. 
\end{lemma}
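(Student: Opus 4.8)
The plan is to go through the clauses of Definition~\ref{def:EOPL} one by one for $\CE=(\vert,S,P,\pot)$, while simultaneously tracking the running time of the three procedures. Throughout I may assume $\qq\not\ge 0$ (otherwise $\yy=\zeros$ solves the LCP and the reduction is trivial), so that $z^0=\Abs{\min_i q_i}>0$, and I use the non-degeneracy of $\CPol$ freely.

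I would first dispose of the auxiliary procedures $\isvalid$, $\ite$, $\eti$. The only steps that might fail to be well-defined are the linear solves $\bb\cdot A^{-1}$ inside $\isvalid$ and $\eti$ and the duplicate-label set $DL$ inside $\ite$; exactly as in the proof of Lemma~\ref{lem:vert}, non-degeneracy forbids any point of $\CPol$ at which more than $2d+1$ of its defining (in)equalities are tight, so $A$ is always nonsingular and $\Card{DL}\le 1$. By Cramer's rule the numerator and denominator of every coordinate of a vertex of $\CPol$ are bounded by $\Delta$, so all arithmetic is on $\poly(\CL)$-bit numbers, and $n=2d$ and $m=\Theta(\log\Delta)$ are both $\poly(\CL)$. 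I would also note that computing the vertex of $\CPol$ adjacent to a given vertex along the edge obtained by relaxing one tight constraint is a standard polynomial-time simplex pivot (well-defined by non-degeneracy; unbounded edges of $\CPol$ cause no trouble since $\CPol$ may be taken bounded without loss of generality), and that Todd's local orientation \cite{todd1976orientation} is polynomial-time computable.

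Next I would verify that $S,P:\vert\to\vert$ are total by inspecting the branches of Tables~\ref{tab:S} and~\ref{tab:P}. Branches returning $\uu$ or $0^n$ are trivially fine. Every remaining branch returns $\ite(\xx')$ for an $\xx'$ that is either $(\yy^0,\ps^0,z^0)$ (the $\uu=0^n$ branch of $S$) or a vertex of $\CPol$ adjacent to $\xx=\eti(\uu)$; in both cases $\xx'$ is a feasible vertex of~\eqref{eq:c}, so Lemma~\ref{lem:vert} gives $\ite(\xx')\in\vert$. Note that $\eti$ is invoked only after $\isvalid(\uu)=1$ and $\uu\ne 0^n$ have been checked, so the ``this case will never happen'' branch of $\eti$ is indeed never reached. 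The two boundary conditions then follow: $P(0^n)=0^n$ by the second line of Table~\ref{tab:P}; and since $z^0>0$ and the minimising index $\ell$ is unique by non-degeneracy, the vertex $(\yy^0,\ps^0,z^0)$ satisfies $y_\ell=s_\ell=0$, so $\ite(\yy^0,\ps^0,z^0)$ has its $\ell$-th bit equal to $1$ and hence differs from $0^n$, i.e.\ $S(0^n)\ne 0^n$.

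Finally, for $\pot$ (Table~\ref{tab:F}): it returns $0$ on dummies and on $0^n$, so $\pot(0^n)=0$; on any other (valid) vertex it returns $\Floor{\Delta^2(\Delta-z)}$, and by Lemma~\ref{lem:pot} the coordinate $z$ of a vertex of $\CPol$ satisfies $0\le z\le\Delta$, so this value lies in $\Set{0,1,\dots,\Delta^3}\subseteq\Set{0,1,\dots,2^m-1}$ by the choice $m=\Ceil{\log(2\Delta^3)}$; it is computed with one call to $\eti$ and hence in polynomial time. Together these establish every requirement of Definition~\ref{def:EOPL}. I expect the only real delicacy to be bookkeeping: pinning down exactly where non-degeneracy is used (nonsingularity of each $A$, unambiguity of each pivot, uniqueness of each duplicate label), confirming that the pivot operations remain well-defined in the presence of unbounded edges of $\CPol$, and checking that the single bound $\Delta$ on the vertices of $\CPol$ simultaneously controls the bit-complexity of the arithmetic and the range of $\pot$. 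No idea beyond Lemmas~\ref{lem:vert} and~\ref{lem:pot} and the polynomial-time computability of simplex pivots and of Todd's orientation is needed.
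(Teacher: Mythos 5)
Your proof is correct and follows the same verification strategy as the paper's (very terse) proof: check that the procedures are polynomial-time and hence realisable as $\poly(\CL)$-size Boolean circuits, that $S,P$ map $\vert$ into $\vert$, and that $\pot$'s output lies in $\{0,\dots,2^m-1\}$ via the $\Delta$-bound on $z$. You go noticeably further than the paper by explicitly verifying the boundary conditions $P(0^n)=0^n$, $S(0^n)\neq 0^n$, and $\pot(0^n)=0$ from Definition~\ref{def:EOPL}, and by tracing non-degeneracy through the auxiliary routines — this is extra care rather than a different route; the only minor blemishes are the citation of Lemma~\ref{lem:pot} for the bound on $z$ (that bound is established inside its proof rather than being its statement) and the unsubstantiated aside that $\CPol$ may be taken bounded (one should note that all relevant vertices have coordinates at most $\Delta$, so intersecting with a box of that size loses nothing).
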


\begin{proof}
Since all three procedures are polynomial-time in $\CL$, they can be defined
by $\poly(\CL)$-sized Boolean circuits. Furthermore, for any $\uu \in \vert$,
we have that $S(\uu),P(\uu) \in \vert$. For~$\pot$, 
since the value of $z \in [0,\ \Delta-1]$, we
have $0\le \Delta^2(\Delta-z)\le \Delta^3$. Therefore, $\pot(\uu)$ is an
integer that is at most $2 \cdot \Delta^3$ and hence is in set $\{0,\dots, 2^m-1\}$. 
\end{proof}

There are two possible types of solutions of an \EOPL instance. One indicates
the beginning or end of a line, and the other is a vertex with locally optimal
potential (that does not point to itself). 
First we show that the latter case never arise. For this, we need the
next lemma, which shows that potential differences in two adjacent
configurations adheres to differences in the value of $z$ at corresponding
vertices.

\begin{lemma}\label{lem:pot}
Let $\uu \neq \uu'$ be two valid configurations, i.e.,
	$\isvalid(\uu)=\isvalid(\uu')=1$, and let $(\yy,\ps,z)$ and $(\yy',\ps',z')$
	be the corresponding vertices in $\CPol$. Then the following holds: $(i)$
	$\pot(\uu)=\pot(\uu')$ iff $z=z'$. $(ii)$ $\pot(\uu)>\pot(\uu')$ iff $z<z'$.
\end{lemma}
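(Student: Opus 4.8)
The plan is to unwind the definitions of $\pot$, $\eti$, $\isvalid$, and the key numerical bound that two distinct vertices of $\CPol$ have $z$-coordinates that differ by at least $1/\Delta^2$.

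First I would recall that, since $\uu$ and $\uu'$ are valid and are not the special starting vertex $0^n$ (the $0^n$ case is excluded because then $\pot(0^n)=0$ and $0^n$ is not one of the two valid configurations being compared — or, if one of them is $0^n$, it corresponds to no vertex of $\CPol$, so we may assume both are genuine non-dummy configurations), $\pot(\uu) = \lfloor \Delta^2(\Delta - z)\rfloor$ and $\pot(\uu') = \lfloor \Delta^2(\Delta - z')\rfloor$, where $(\yy,\ps,z) = \eti(\uu)$ and $(\yy',\ps',z') = \eti(\uu')$. So the whole lemma reduces to a statement about the real numbers $a := \Delta^2(\Delta-z)$ and $a' := \Delta^2(\Delta-z')$: I must show $\lfloor a\rfloor = \lfloor a'\rfloor \iff z = z'$ and $\lfloor a\rfloor > \lfloor a'\rfloor \iff z < z'$.

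The main obstacle, and the crux of the argument, is the quantization step: I need that whenever $z \neq z'$, the floors are actually different, and in the correct order. The ``correct order'' direction is immediate and monotone: $z < z' \implies a > a' \implies \lfloor a\rfloor \ge \lfloor a'\rfloor$, and similarly $z = z' \implies a = a' \implies \lfloor a \rfloor = \lfloor a'\rfloor$. What needs work is the converse — i.e. that $z < z'$ forces the floors to be \emph{strictly} different, and that equal floors force $z = z'$. For this I would invoke the key numerical fact: both $z$ and $z'$ are coordinates of vertices of the polyhedron $\CPol$, so by Cramer's rule each is a ratio of determinants of submatrices whose entries are bounded by $I_{max}$ (after accounting for the $\ones$ column), and hence each has numerator and denominator bounded in absolute value by $\Delta = (n!\cdot I_{max}^{2d+1})+1$. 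Consequently, if $z \neq z'$ then $|z - z'| \ge 1/(\text{den}(z)\cdot\text{den}(z')) \ge 1/\Delta^2$. Therefore $|a - a'| = \Delta^2|z - z'| \ge 1$, and since $\lfloor \cdot \rfloor$ moves by at least one whenever the arguments differ by at least one and are ordered, we get $z < z' \implies a \ge a' + 1 \implies \lfloor a\rfloor \ge \lfloor a' \rfloor + 1 > \lfloor a'\rfloor$. Contrapositively, $\lfloor a\rfloor = \lfloor a'\rfloor$ forces $|z-z'| < 1/\Delta^2$, hence $z = z'$, which proves $(i)$; and $\lfloor a\rfloor > \lfloor a'\rfloor$ rules out both $z = z'$ (equal floors) and $z > z'$ (which would give $\lfloor a\rfloor < \lfloor a'\rfloor$ by the same gap argument), leaving $z < z'$, which proves $(ii)$.

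The remaining details are routine: verifying that $\eti(\uu)$ indeed returns the unique vertex of $\CPol$ associated to $\uu$ (this is Lemma~\ref{lem:vert}), that $0 \le z \le \Delta - 1$ so that $a, a' \ge 0$ and the floors are well-defined non-negative integers, and a careful bookkeeping of the Cramer's-rule bound on the entries of $z$ — the matrix $A$ has $2d+1$ rows drawn from $M\yy + \ps - \ones z = \qq$ together with the tight inequalities $y_i = 0$ or $s_i = 0$, so every relevant determinant is a sum of at most $(2d+1)!$ products of at most $2d+1$ entries each bounded by $\max(I_{max},1)$, giving the stated bound $\Delta$. I expect this determinant bound to be the only place requiring genuine care; the floor manipulation itself is elementary once the $1/\Delta^2$ gap is in hand.
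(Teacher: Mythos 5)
Your proof is correct and follows essentially the same route as the paper's: restrict to the non-$0^n$ valid configurations, reduce to the floor computation $\lfloor \Delta^2(\Delta - z)\rfloor$, and use the fact that $z$ and $z'$ are coordinates of vertices of $\CPol$ (rationals with numerator and denominator bounded by $\Delta$) to obtain the crucial gap $|z-z'|\ge 1/\Delta^2$, which forces the floors to differ by at least one whenever $z\neq z'$. The paper states this gap in one line and derives the biconditionals by the same contrapositive reasoning you use; your version merely spells out the floor arithmetic and the determinant bookkeeping a bit more explicitly.
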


\begin{proof}
Among the valid configurations all except $\zeros$ has positive $\pot$ value. Therefore, wlog let $\uu,\uu'\neq \zeros$. For these we have $\pot(\uu)=\lfloor \Delta^2*(\Delta -z)\rfloor$, and $\pot(\uu')=\lfloor \Delta^2*(\Delta -z')\rfloor$. 

Note that since both $z$ and $z'$ are coordinates of vertices of $\CPol$, whose description has highest coefficient of $\max\{\max_{i,j\in [d]} M(i,j),\max_{i\in [d]} |q_i|\}$, and therefore their numerator and denominator both are bounded above by $\Delta$. Therefore, if $z< z'$ then we have 
\[
z'-z\ge \frac{1}{\Delta^2} \Rightarrow ((\Delta-z) - (\Delta - z'))*\Delta^2 \ge 1 \Rightarrow \pot(\uu)-\pot(\uu') \ge 1.
\]

For $(i)$, if $z=z'$ then clearly $\pot(\uu)=\pot(\uu')$, and from the above argument it also follows that if $\pot(\uu)= \pot(\uu')$ then it can not be the case that $z\neq z'$. Similarly for $(ii)$, if $\pot(\uu)>\pot(\uu')$ then clearly, $z'>z$, and from the above argument it follows that if $z'>z$ then it can not be the case that $\pot(\uu')\ge \pot(\uu)$. 
\end{proof}

Using the above lemma, we will next show that instance $\CE$ has no local maximizer. 

\begin{lemma}\label{lem:t}
Let $\uu,\vv \in \vert$ s.t. $\uu\neq \vv$, $\vv=S(\uu)$, and $\uu=P(\vv)$. Then $\pot(\uu)< \pot(\vv)$.
\end{lemma}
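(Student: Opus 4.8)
The plan is to unwind the definitions of the circuits $S$, $P$, and $\pot$ from Tables~\ref{tab:S}, \ref{tab:P}, and~\ref{tab:F}, and then appeal to Lemma~\ref{lem:pot}; I would organise the argument as a case split on $\uu$. First, $\isvalid(\uu) = 1$ necessarily holds, since otherwise the opening line of Table~\ref{tab:S} returns $S(\uu) = \uu$, contradicting $\uu \ne \vv$. Next, if $\uu = 0^n$ then $\vv = S(0^n) = \ite(\yy^0,\ps^0,z^0)$, the configuration of the starting vertex of Lemke's path, which is a feasible vertex of $\CPol$; by Lemma~\ref{lem:vert} this configuration is valid, and it differs from $0^n$ because $\vv \ne \uu$. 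Since (as noted in the proof of Lemma~\ref{lem:pot}) every valid configuration other than $0^n$ has strictly positive potential, we get $\pot(\vv) > 0 = \pot(0^n) = \pot(\uu)$.

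The main case is $\uu \ne 0^n$ with $\isvalid(\uu) = 1$. Let $\xx = (\yy,\ps,z) = \eti(\uu)$, a feasible vertex of $\CPol$ by Lemma~\ref{lem:vert}. I would first rule out $z = 0$: in that branch of Table~\ref{tab:S} the concluding test ``$z > z'$'' cannot hold (as $z' \ge 0 = z$), so $S(\uu) = \uu$, contradicting $\uu \ne \vv$. Hence $z > 0$, $\xx$ carries a duplicate label $l$, and the procedure sets $\xx' = (\yy',\ps',z')$ to the neighbour of $\xx$ on the Lemke path obtained by relaxing $y_l = 0$ or $s_l = 0$ according to Todd's orientation --- a well-defined vertex precisely because $S(\uu) \ne \uu$ --- and returns $\ite(\xx')$ exactly when $z > z'$. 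From $\vv = S(\uu) \ne \uu$ we therefore read off $\vv = \ite(\xx')$ and $z' < z$. By Lemma~\ref{lem:vert} (and non-degeneracy), $\vv$ is a valid configuration whose corresponding vertex of $\CPol$ is $\xx'$; and $\vv \ne 0^n$, since otherwise Table~\ref{tab:P} would force $P(\vv) = P(0^n) = 0^n = \vv$, contradicting the hypothesis $\uu = P(\vv) \ne \vv$. Now $\uu$ and $\vv$ are valid configurations, both distinct from $0^n$, with associated $z$-coordinates $z$ and $z'$ satisfying $z > z'$, so Lemma~\ref{lem:pot}(ii) --- applied with its ``$\uu'$'' taken to be our $\uu$ and its ``$\uu$'' taken to be our $\vv$ --- yields $\pot(\vv) > \pot(\uu)$, which is the claim.

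I expect the one genuinely delicate point to be the bookkeeping inside Table~\ref{tab:S}: one has to check that every execution returning a value $\ne \uu$ passes through the final line ``return $\ite(\xx')$ if $z > z'$'', so that $S(\uu) \ne \uu$ really does force a strict decrease in the $z$-coordinate along the traversed Lemke edge. Once that is established, everything else is a direct invocation of Lemmas~\ref{lem:vert} and~\ref{lem:pot}.
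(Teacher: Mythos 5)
Your argument takes the same route as the paper's proof: extract the $z$-coordinates of the $\CPol$ vertices corresponding to $\uu$ and $\vv$, observe that $\vv = S(\uu) \ne \uu$ forces $z' < z$ via the final test in Table~\ref{tab:S}, and invoke Lemma~\ref{lem:pot}(ii) to conclude $\pot(\vv)>\pot(\uu)$. The paper's own proof is only three lines long (and even states the final inequality backwards, writing $\pot(\vv)<\pot(\uu)$, an evident typo), so your more careful version --- which explicitly discharges the $\uu=0^n$ and $z=0$ branches and checks that both configurations are valid and distinct from $0^n$ before applying Lemma~\ref{lem:pot} --- closes small gaps rather than departing from the intended argument.
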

\begin{proof}
Let $\xx=(\yy,\ps,z)$ and $\xx'=(\yy',\ps',z')$ be the vertices in polyhedron $\CPol$ corresponding to $\uu$ and $\vv$ respectively. From the construction of $\vv=S(\uu)$ implies that $z'<z$. Therefore, using Lemma \ref{lem:pot} it follows that $\pot(\vv)<\pot(\uu)$.
\end{proof}

Due to Lemma \ref{lem:t} the only type of solutions available in $\CE$ is where $S(P(\uu))\neq \uu$ and $P(S(\uu))\neq \uu$. Next two lemmas shows how to construct solutions of $\CI$ from these. 

\begin{lemma}\label{lem:t1}
Let $\uu \in \vert$, $\uu \neq 0^n$. 
If $P(S(\uu))\neq \uu$ or $S(P(\uu))\neq \uu$, then $\isvalid(\uu)=1$, and for $(\yy,\ps,z)=\eti(\uu)$ if $z=0$ then $\yy$ is a $\PLo$ type solution of \PLCP instance $\CI=(M,\qq)$. 
\end{lemma}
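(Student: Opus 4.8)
Let $\uu \in \vert$, $\uu \neq 0^n$. If $P(S(\uu))\neq \uu$ or $S(P(\uu))\neq \uu$, then $\isvalid(\uu)=1$, and for $(\yy,\ps,z)=\eti(\uu)$, if $z=0$ then $\yy$ is a $\PLo$ type solution of the \PLCP instance $\CI=(M,\qq)$.

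The plan is to treat this lemma as a bookkeeping step linking the procedure definitions in Tables~\ref{tab:S} and~\ref{tab:P} with the already-established Lemmas~\ref{lem:vert} and~\ref{lem:lemke1}. There are two things to check: that the hypothesis forces $\isvalid(\uu)=1$, and that a valid configuration whose decoded vertex has $z=0$ yields an LCP solution. Neither requires any new computation, so the argument should be short.

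First I would establish $\isvalid(\uu)=1$ by contraposition. Suppose $\isvalid(\uu)=0$. By the opening line of the successor procedure (Table~\ref{tab:S}) we then have $S(\uu)=\uu$, and by the opening line of the predecessor procedure (Table~\ref{tab:P}) we have $P(\uu)=\uu$. Hence $P(S(\uu))=P(\uu)=\uu$ and $S(P(\uu))=S(\uu)=\uu$, so neither disequality in the hypothesis can hold --- a contradiction. Therefore $\isvalid(\uu)=1$.

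Next, with $\isvalid(\uu)=1$ in hand, I would invoke Lemma~\ref{lem:vert}. Since $\uu\neq 0^n$, the procedure $\eti$ does not enter its (unreachable) ``$\uu=0^n$'' branch, and Lemma~\ref{lem:vert} tells us that $(\yy,\ps,z)=\eti(\uu)$ is a feasible vertex of the augmented system~\eqref{eq:c} (indeed $\uu=\ite(\eti(\uu))$). If additionally $z=0$, then Lemma~\ref{lem:lemke1} applies and gives that $\yy$ satisfies~\eqref{eq:lcp}; by Definition~\ref{def:plcp} this is exactly a solution of type~\PLo for the \PLCP instance $\CI=(M,\qq)$, which is what we wanted.

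I do not expect a genuine obstacle here. The only care needed is syntactic: confirming from the first lines of Tables~\ref{tab:S} and~\ref{tab:P} that invalid configurations are mapped to themselves by both $S$ and $P$, and noting that the assumption $\uu\neq 0^n$ is what lets us apply Lemma~\ref{lem:vert} verbatim. All substantive content --- feasibility of the decoded vertex and the equivalence between $z=0$ and being an LCP solution --- is already available from the preceding lemmas.
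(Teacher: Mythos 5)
Your proof is correct and follows essentially the same route as the paper's: contraposition via the first lines of the $S$ and $P$ procedures (which make invalid configurations self-loops) to get $\isvalid(\uu)=1$, then Lemma~\ref{lem:vert} for feasibility and Lemma~\ref{lem:lemke1} for the $z=0$ case.
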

\begin{proof}
By construction, if $\isvalid(\uu) = 0$, then $S(P(\uu))=\uu$ and $P(S(\uu))=\uu$, therefore $\isvalid(\uu)=0$ when $\uu$ has a predecessor or successor different from $\uu$.
Given this, from Lemma \ref{lem:vert} we know that $(\yy,\ps,z)$ is a feasible vertex in (\ref{eq:c}). Therefore, if $z=0$ then using Lemma \ref{lem:lemke1} we have a solution of the LCP (\ref{eq:lcp}), {\em i.e.,} a type $\PLo$ solution of our \PLCP instance $\CI=(\MM,\qq)$.
%
%
\end{proof}

\begin{lemma}\label{lem:t2}
Let $\uu \in \vert$, $\uu \neq 0^n$ such that $P(S(\uu))\neq \uu$ or $S(P(\uu))\neq \uu$, and let $\xx=(\yy,\ps,z)=\eti(\uu)$. 
If $z\neq 0$ then $\xx$ has a duplicate label, say $l$. And for directions $\sigma_1$ and $\sigma_2$ obtained by relaxing $y_l=0$ and $s_l=0$ respectively at $\xx$, we have $\sigma_1(z)*\sigma_2(z)\ge 0$, where $\sigma_i(z)$ is the coordinate corresponding to $z$. 
\end{lemma}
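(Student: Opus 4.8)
The plan is to combine the standard structure theory of Lemke's algorithm (Section~\ref{sec:lemke}) with a direct inspection of the successor and predecessor procedures. The statement has two parts: that $\xx$ carries a unique duplicate label $l$, and that the two edges of the almost-complementary set incident at $\xx$ move $z$ in the same weak direction away from $\xx$, i.e.\ $\sigma_1(z)\cdot\sigma_2(z)\ge 0$.

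First I would record that, since $P(S(\uu))\ne\uu$ or $S(P(\uu))\ne\uu$, Lemma~\ref{lem:t1} gives $\isvalid(\uu)=1$, and hence by Lemma~\ref{lem:vert} the point $\xx=(\yy,\ps,z)=\eti(\uu)$ is a feasible vertex of~(\ref{eq:c}) on $\CPol$; every feasible point of~(\ref{eq:c}) is fully labeled ($y_is_i=0$ for all $i$). Then I would count tight constraints: $\xx$ is a vertex of $\CPol$, the $d$ equalities $\MM\yy+\ps-z\one=\qq$ are always tight, and $z>0$ makes $z\ge 0$ slack, so at least $d+1$ of the $2d$ bound constraints $\{y_i\ge 0,\,s_i\ge 0\}$ are tight; since full labelledness already contributes $d$ of them, some index $l$ has both $y_l=0$ and $s_l=0$, and non-degeneracy of $\CPol$ forces exactly $d+1$ of these bounds to be tight, hence exactly one such index. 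This is precisely the assertion that $\xx$ has a unique duplicate label, and relaxing $y_l=0$ (resp.\ $s_l=0$) traces the two incident edges with directions $\sigma_1$ (resp.\ $\sigma_2$) pointing away from $\xx$.

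Second, I would read off the local behaviour of the line from Tables~\ref{tab:S} and~\ref{tab:P}. By Todd's orientation one of the two edges at $\xx$ is oriented out of $\xx$ and the other into it; write $\xx_{\mathrm{out}},\xx_{\mathrm{in}}$ for the corresponding neighbours (or rays) and $z_{\mathrm{out}},z_{\mathrm{in}}$ for their $z$-coordinates. The tables say $S(\uu)=\ite(\xx_{\mathrm{out}})$ precisely when $z_{\mathrm{out}}<z$, and $S(\uu)=\uu$ otherwise; and $P(\uu)=\ite(\xx_{\mathrm{in}})$ precisely when $z_{\mathrm{in}}>z$, and $P(\uu)=\uu$ otherwise. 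Moreover, because Todd's orientation is globally consistent and the $z$-comparison between two adjacent configurations is symmetric, whenever $S(\uu)\ne\uu$ one also has $P(S(\uu))=\uu$, and whenever $P(\uu)\ne\uu$ one also has $S(P(\uu))=\uu$. A short case split over which of $S(\uu),P(\uu)$ equals $\uu$ then shows that the hypothesis ($P(S(\uu))\ne\uu$ or $S(P(\uu))\ne\uu$, with $\uu\ne 0^n$) holds exactly when either $z_{\mathrm{out}}<z$ and $z_{\mathrm{in}}\le z$, or $z_{\mathrm{out}}\ge z$ and $z_{\mathrm{in}}>z$. In the first case $z$ is weakly decreasing along both incident edges as one moves away from $\xx$, so $\sigma_1(z),\sigma_2(z)\le 0$; in the second it is weakly increasing along both, so $\sigma_1(z),\sigma_2(z)\ge 0$. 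Either way $\sigma_1(z)\cdot\sigma_2(z)\ge 0$, which is the claim (geometrically: $z$ fails to be monotone through $\xx$, which is exactly the seed for extracting a non-positive principal minor in the step that follows).

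I expect the main obstacle to be making the two implications ``$S(\uu)\ne\uu\Rightarrow P(S(\uu))=\uu$'' and ``$P(\uu)\ne\uu\Rightarrow S(P(\uu))=\uu$'' fully rigorous, since they rely on the same edge receiving a consistent orientation from both of its endpoints together with the symmetry of the $z$-comparison, and one must separately handle the boundary case $z_{\mathrm{out}}=0$, where the far vertex has no duplicate label and is reached by relaxing $z=0$. A secondary nuisance is the ray case: when one of the two edges at $\xx$ is a secondary ray there is no ``$\ite$ of a ray'', so one reads the tables as producing a self-loop in that direction and checks that $\sigma_i(z)\ge 0$ holds along any ray leaving $\xx$ (by feasibility $z$ cannot decrease along it), so the product inequality still goes through.
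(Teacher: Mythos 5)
Your proof is correct and follows essentially the same route as the paper's: both derive from Lemma~\ref{lem:t1} and Lemma~\ref{lem:vert} that $\xx$ is a feasible vertex, observe from the last line of Tables~\ref{tab:S} and~\ref{tab:P} how $S$ and $P$ compare $z$ along the outgoing and incoming Todd-oriented edges, use the orientation-consistency implication $S(\uu)\neq\uu \Rightarrow P(S(\uu))=\uu$ (and its mirror) to force $S(\uu)=\uu$ when $P(S(\uu))\neq\uu$, and conclude that both incident edges move $z$ weakly in the same direction. Your version is slightly more careful than the paper's proof about the duplicate-label count, the $z'=0$ boundary case, and secondary rays (where you correctly invoke $z\ge 0$ in the recession cone); the paper leaves these implicit.
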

\begin{proof}
From Lemma \ref{lem:t1} we know that $\isvalid(\uu)=1$, and therefore from Lemma \ref{lem:vert}, $\xx$ is a feasible vertex in (\ref{eq:c}).
From the last line of Tables \ref{tab:S} and \ref{tab:P} observe that $S(\uu)$ points to the configuration of vertex next to $\xx$ on Lemke's path only if it has lower $z$ value otherwise it gives back $\uu$, and similarly $P(\uu)$ points to the previous only if value of $z$ increases.


First consider the case when $P(S(\uu))\neq \uu$. Let $\vv=S(\uu)$ and corresponding vertex in $\CPol$ be $(\yy',\ps',z')=\eti(\vv)$. 
If $\vv\neq \uu$, then from the above observation we know that $z'>z$, and in that
case again by construction of $P$ we will have $P(\vv)=\uu$, contradicting
$P(S(\uu))\neq \uu$. Therefore, it must be the case that $\vv=\uu$.
Since $z\neq 0$ this happens only when the next vertex on Lemke path after $\xx$ has
higher value of $z$ (by above observation). As a consequence of $\vv=\uu$, we also have $P(\uu)\neq \uu$. By construction of $P$ this implies for 
$(\yy'',\ps'',z'')=\eti(P(\uu))$, $z''>z$. Putting both together we get 
increase in $z$ when we relax $y_l=0$ as well as when we relax $s_l=0$ at
$\xx$.

For the second case $S(P(\uu))\neq \uu$ similar argument gives that value of $z$ decreases when we relax $y_l=0$ as well as when we relax $s_l=0$ at
$\xx$. The proof follows.
\end{proof}

Finally, we are ready to prove our main result of this section using Lemmas
\ref{lem:t}, \ref{lem:t1} and \ref{lem:t2}. Together with Lemma \ref{lem:t2},
we will use the fact that on Lemke path $z$ monotonically decreases if $M$ is a
P-matrix or else we get a witness that $M$ is not a
P-matrix~\cite{cottle2009linear}. 

\begin{theorem}
\PLCP reduces to \EOPL in polynomial-time. 
\end{theorem}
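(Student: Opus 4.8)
The plan is to verify that the \EOPL instance $\CE$ constructed above is the desired reduction: it is a legal \EOPL instance of size polynomial in $\CL$, and from any solution of $\CE$ one can recover in polynomial time either a $\PLo$- or a $\PLt$-solution of the \PLCP instance $\CI=(M,\qq)$. First I would dispose of the trivial case $\qq\ge 0$, where $\yy=\zeros$ is already a $\PLo$-solution, so that henceforth $z^0=|\min_i q_i|>0$ and the starting vertex of the Lemke path carries a genuine duplicate label.

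For legality and size I would invoke Lemma~\ref{lem:PSF}: the procedures $S$, $P$, and $\pot$ are well defined and run in time $\poly(\CL)$, hence are implementable by $\poly(\CL)$-size circuits; moreover $P(0^n)=0^n$ and $\pot(0^n)=0$ by construction, while $S(0^n)=\ite(\yy^0,\ps^0,z^0)\neq 0^n$ because the starting vertex has $z^0>0$ and therefore a duplicate label, so some second-block bit is set. The only point needing a short calculation is that the potential range $2^m$ with $m=\lceil \ln(2\Delta^3)\rceil$ has $m$ polynomial in $\CL$, which follows from $\log\Delta = O(\log(n!)+d\log I_{max})=\poly(\CL)$ since $n=2d$ and $\log I_{max}\le \CL$.

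Next I would argue that every solution of $\CE$ has type R1. By Lemma~\ref{lem:t}, $\pot$ strictly increases along every valid edge $\uu\to\vv$ of $\CE$, so no vertex witnesses an R2-solution, and every solution is an endpoint of a line: some $\uu$ with $S(P(\uu))\neq \uu\neq 0^n$ or $P(S(\uu))\neq \uu$. I would also note that $0^n$ is not such a solution, since $P(0^n)=0^n$ and $P(S(0^n))=P(\ite(\yy^0,\ps^0,z^0))=0^n$ by the construction of $P$; hence any solution $\uu$ satisfies $\uu\neq 0^n$. Then Lemma~\ref{lem:t1} gives $\isvalid(\uu)=1$, and by Lemma~\ref{lem:vert} the point $\xx=(\yy,\ps,z)=\eti(\uu)$ is a feasible vertex of~\eqref{eq:c}; if $z=0$ then Lemma~\ref{lem:lemke1} makes $\yy$ a $\PLo$-solution of $\CI$, and we are done.

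The remaining case $z\neq 0$ is where I expect the main obstacle. By Lemma~\ref{lem:t2}, $\xx$ then has a duplicate label $l$, and the two edges of the Lemke path incident at $\xx$ — obtained by relaxing $y_l=0$ and $s_l=0$ — move $z$ in the \emph{same} direction. For a P-matrix, $z$ decreases strictly and monotonically along the entire Lemke path~\cite{cottle2009linear}, so such a ``turning point'' certifies that $M$ is not a P-matrix. The work is to turn this qualitative certificate into an explicit non-positive principal minor: one identifies the complementary index set $B$ of the ``$s_i=0$'' coordinates at $\xx$, observes that the two pivots on $l$ correspond to the principal submatrices $M_{B\setminus\{l\},B\setminus\{l\}}$ and $M_{BB}$, and shows, via the determinantal formula for the value of $z$ at a Lemke vertex, that if both pivots change $z$ in the same direction then at least one of $\det M_{B\setminus\{l\},B\setminus\{l\}}$ and $\det M_{BB}$ is non-positive; either one is then a $\PLt$-solution of $\CI$, and both can be checked in polynomial time. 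Carrying out this sign argument carefully, and matching it to the exact conventions of the $\eti/\ite$ encoding and Todd's orientation, is the part I would budget the most care for; the rest is bookkeeping. Assembling the pieces, every solution of $\CE$ yields in polynomial time a $\PLo$- or $\PLt$-solution of $\CI$, which establishes the theorem.
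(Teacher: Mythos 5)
Your proof follows essentially the same route as the paper's: both assemble Lemmas~\ref{lem:PSF}, \ref{lem:t}, \ref{lem:t1}, and~\ref{lem:t2} in the same order, rule out R2-type solutions via the strict monotonicity of $\pot$ along valid edges, and then split on $z=0$ versus $z>0$ for the remaining R1-type solution. The only real difference is that you sketch the determinantal argument for extracting an explicit non-positive principal minor from a ``turning point,'' whereas the paper delegates that step to~\cite{cottle2009linear}; since the underlying fact invoked is the same, this is an expansion of a cited reference rather than a genuinely different approach.
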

\begin{proof}
	Given an instance of $\CI=(\MM,\qq)$ of \PLCP, where $M\in \Real^{d\times d}$ and $\qq\in \Real^{d\times 1}$ reduce it to an instance $\CE$ of \EOPL as described above with vertex set $\vert=\{0,1\}^{2d}$ and procedures $S$, $P$ and $\pot$ as given in Table \ref{tab:S}, \ref{tab:P}, and \ref{tab:F} respectively.

Among solutions of \EOPL instance $\CE$, there is no local potential maximizer,
	i.e., $\uu\neq \vv$ such that $\vv=S(\uu)$, $\uu=P(\vv)$ and $\pot(\uu)>\pot(\vv)$
	due to Lemma \ref{lem:t}. We get a solution $\uu \neq 0$ such that either
	$S(P(\uu))\neq \uu$ or $P(S(\uu))\neq \uu$, then by Lemma \ref{lem:t1} it is
	valid configuration and has a corresponding vertex $\xx=(\yy,\ps,z)$ in
	$\CPol$. Again by Lemma~\ref{lem:t1} if $z=0$ then $\yy$ is a $\PLo$ type solution
	of our \PLCP instance $\CI$. On the other hand, if $z>0$ then from Lemma
	\ref{lem:t2} we get that on both the two adjacent edges to $\xx$ on Lemke
	path the value of $z$ either increases or deceases. This gives us a minor of $M$
	which is non-positive~\cite{cottle2009linear}, 
	i.e., a \PLt type solution of the \PLCP instance~\CI.
\end{proof}

\section{The Full Reduction and Proofs for Section~\ref{sec:lcm2eopl}}

\subsection{Slice Restrictions of Contraction Maps}
\label{sec:slice}

Before we begin the reduction, we first fix some notation.
Our algorithm and reduction for contraction maps will make heavy use of the
concept of a \emph{slice restriction} of a contraction map, which we describe
here. First, for any $d\in \N$, we define the set of \emph{slices} $\Slice_d =
\Paren{[0,1]\cup \Set{\blank}}^d$ to be vectors of length $d$ each component of
which is either a number in $[0,1]$ or the special symbol $\blank$ which
indicates that corresponding component is free to vary. With each slice $\ss\in
\Slice_d$ we associate a hyperplane $H(\ss) = \Setbar{x\in \R^d}{x_i = s_i\
\text{for}\ s_i \neq \blank}$. We define the set of fixed coordinates of a slice
$\ss\in \Slice_d$, $\fixed(\ss) = \Setbar{i\in [d]}{s_i \neq \blank}$ and the set
of free coordinates, $\free(\ss) = [d] \setminus \fixed(s)$. A slice for which $\Card{\free(\ss)} = i$ will be called an \emph{$i$-slice}.

We can define the \emph{slice restriction} of a function $f : [0,1]^d \to [0,1]^d$ with respect to a slice $\ss\in \Slice_d$, denoted $\restr{f}{\ss}$, to be the function obtained by fixing the coordinates $\fixed(\ss)$ according to $\ss$, and keeping the coordinates of $\free(\ss)$ as arguments. To simplify usage of $\restr{f}{\ss}$ we'll formally treat $\restr{f}{\ss}$ as a function with $d$ arguments, where the coordinates in $\fixed(\ss)$ are ignored. Thus, we define $\restr{f}{\ss}:[0,1]^d\to [0,1]^d$ by
\[ \restr{f}{\ss}(x) = f(y) \quad\text{where}\ y_i = \begin{cases} s_i &\ \text{if $i \in \fixed(\ss)$}\\ x_i&\ \text{if $i \in \free(\ss)$.}\end{cases} \]

Let $\free(\ss) = \Set{i_1,\dotsc, i_k}$. We'll also introduce a variant of $\restr{f}{\ss}$ when we want to consider the slice restriction as a lower-dimensional function, $\Restr{f}{\ss} : [0,1]^d \to [0,1]^{\Card{\free(\ss)}}$ defined by
\[ \Restr{f}{\ss}(x) = \Paren{\Paren{\restr{f}{\ss}(x)}_{i_1}, \dotsc, \Paren{\restr{f}{\ss}(x)}_{i_k}}\text{.} \]

We can also define slice restrictions for vectors in the natural way:
\[ \Paren{\restr{x}{\ss}}_i = \begin{cases} s_i&\ \text{if $s_i \neq \blank$}\\ x_i&\ \text{otherwise.} \end{cases}\]
Finally, we'll use $\Restr{x}{\ss}$ to denote projection of $x$ onto the coordinates in $\free(\ss)$:
\[ \Restr{x}{\ss} = \Paren{x_{i_1},\dotsc, x_{i_k}}\text{.} \]

We now extend the definition of a contraction map to a slice restriction of a function in the obvious way. We say that $\restr{\tf}{\ss}$ is a contraction map with respect to a norm $\Norm{\cdot}$ with Lipschitz constant $c$ if for any $x,y\in [0,1]^d$ we have
\[ \Norm{\Restr{f}{\ss}(x) - \Restr{f}{\ss}(y)} \leq c \Norm{\Restr{x}{\ss} - \Restr{y}{\ss}}\text{.} \]

Slice restrictions will prove immensely useful through the following observations:

\begin{lemma}\label{lem:cm1}
Let $f:[0,1]^d\to [0,1]^d$ be a contraction map with respect to $\Norm{\cdot}_p$ with Lipschitz constant $c \in (0,1)$. Then for any slice $\ss \in \Slice_d$, $\restr{f}{\ss}$ is also a contraction map with respect to $\Norm{\cdot}_p$ with Lipschitz constant $c$.
\end{lemma}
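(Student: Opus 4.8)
The plan is to unpack the definition of the slice restriction and reduce the claim to the contraction property of the original map $f$ applied to two specially-chosen points of $[0,1]^d$ that agree on the fixed coordinates $\fixed(\ss)$. Concretely, fix a slice $\ss \in \Slice_d$ and two points $x, y \in [0,1]^d$. Let $x' = \restr{x}{\ss}$ and $y' = \restr{y}{\ss}$ be the points obtained by overwriting the fixed coordinates of $x$ and $y$ with the corresponding values from $\ss$. By definition of $\restr{f}{\ss}$ we have $\restr{f}{\ss}(x) = f(x')$ and $\restr{f}{\ss}(y) = f(y')$. Moreover, $\Restr{f}{\ss}(x)$ is just the projection of $f(x')$ onto the free coordinates $\free(\ss)$, and similarly for $y$.

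First I would observe that since $p$-norms only decrease under coordinate projection, $\Norm{\Restr{f}{\ss}(x) - \Restr{f}{\ss}(y)}_p \le \Norm{f(x') - f(y')}_p$. Next, applying the hypothesis that $f$ is a contraction map with Lipschitz constant $c$, we get $\Norm{f(x') - f(y')}_p \le c \, \Norm{x' - y'}_p$. Then the key point is that $x'$ and $y'$ agree exactly on the fixed coordinates (both equal $\ss$ there), so $x' - y'$ is zero on $\fixed(\ss)$ and equals $x - y$ on $\free(\ss)$; hence $\Norm{x' - y'}_p = \Norm{\Restr{x}{\ss} - \Restr{y}{\ss}}_p$. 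Chaining these three (in)equalities yields exactly $\Norm{\Restr{f}{\ss}(x) - \Restr{f}{\ss}(y)}_p \le c \, \Norm{\Restr{x}{\ss} - \Restr{y}{\ss}}_p$, which is the definition of $\restr{f}{\ss}$ being a contraction map with Lipschitz constant $c$ with respect to $\Norm{\cdot}_p$.

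Since $x, y$ were arbitrary, this completes the proof. There is essentially no obstacle here — the statement is a routine bookkeeping exercise in the notation set up just before it; the only thing to be slightly careful about is matching the paper's convention that $\restr{f}{\ss}$ formally takes $d$ arguments (ignoring the fixed ones) while $\Restr{f}{\ss}$ is the genuinely lower-dimensional projection, and making sure the norm on the right-hand side is taken over the free coordinates only, which is why the projection step and the "$x'$ and $y'$ agree on fixed coordinates" step line up exactly. One could equivalently phrase it without introducing $x', y'$ by noting $\restr{f}{\ss} = f \circ (\cdot \mapsto \restr{\cdot}{\ss})$ and that $\cdot \mapsto \restr{\cdot}{\ss}$ is a projection-type map, but spelling out the two points is cleaner.
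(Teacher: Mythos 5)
Your proof is correct and matches the paper's proof essentially verbatim: both chain the same three steps (projection onto free coordinates only decreases the $p$-norm, then apply the contraction property of $f$ to $\restr{x}{\ss}$ and $\restr{y}{\ss}$, then note these agree on $\fixed(\ss)$ so the right-hand norm equals that of the projected difference). Introducing the names $x', y'$ is purely cosmetic.
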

\begin{proof}
For any two vectors $x,y \in [0,1]^{d}$ we have
\begin{align*}
  \Norm{\restr{\tf}{\ss}(x) - \restr{\tf}{\ss}(y)}_p &\leq \Norm{f(\restr{x}{\ss}) - f(\restr{y}{\ss})}_p\\
                                       &< c \Norm{\restr{x}{\ss} - \restr{y}{\ss}}_p\\
                                       &= c \Norm{\Restr{x}{\ss} - \Restr{y}{\ss}}_p
\end{align*}
\end{proof}

Since slice restrictions of contraction maps are themselves contraction maps in the sense defined above, they have unique fixpoints, up to the coordinates of the argument which are fixed by the slice and thus ignored. We'll nevertheless refer to the \emph{unique fixpoint of a slice restriction of a contraction map}, which is the unique point $x\in [0,1]^d$ such that
\[ \Restr{f}{\ss}(x) = \Restr{x}{\ss}\quad\text{and}\quad x = \restr{x}{\ss}\text{.} \]

\begin{lemma}\label{lem:cm2}
Let $f : [0,1]^d \to [0,1]^d$ be a contraction map with respect to $\Norm{\cdot}_p$ with Lipschitz constant $c\in (0,1)$. Let $\ss, \ss' \in \Slice_d$ be such that $\fixed(\ss') = \fixed(\ss) \cup \Set{i}$ and $s_j = s'_j$ for all $j \in \fixed(\ss)$. Let $x, y \in [0,1]^d$ be the unique fixpoints of $\Restr{f}{\ss}$ and $\Restr{f}{\ss'}$, respectively. Then $(x_i - y_i)(f(y)_i - y_i) > 0$. 
\end{lemma}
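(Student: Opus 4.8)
The plan is to prove the sharper statement that $x_i - y_i$ and $f(y)_i - y_i$ have the same sign; the product form in the statement follows at once. The whole argument will reduce to a single application of the contraction hypothesis to the pair $(x,y)$ plus elementary interval reasoning, so no continuous‑dependence‑of‑fixpoints machinery is needed.

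First I would unpack the two fixpoint conditions. Since $x$ is the unique fixpoint of $\Restr{f}{\ss}$ we have $x = \restr{x}{\ss}$ and $f(x)_j = x_j$ for all $j \in \free(\ss)$; in particular $f(x)_i = x_i$ and $x_j = s_j$ for $j \in \fixed(\ss)$. Since $y$ is the unique fixpoint of $\Restr{f}{\ss'}$ and $\fixed(\ss') = \fixed(\ss) \cup \Set{i}$ with $s'_j = s_j$ on $\fixed(\ss)$, we get $y_i = s'_i$, $y_j = s_j = x_j$ for $j \in \fixed(\ss)$, and $f(y)_j = y_j$ for all $j \in \free(\ss') = \free(\ss)\setminus\Set{i}$. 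Thus $x$ and $y$ agree on $\fixed(\ss)$, so these coordinates drop out of both $\Norm{x-y}_p$ and $\Norm{f(x)-f(y)}_p$, while on $\free(\ss')$ both maps reproduce their arguments coordinatewise. Writing $v$ for the vector supported on $\free(\ss')$ with $v_j = x_j - y_j$, and $e_i$ for the $i$-th standard basis vector, this yields $x - y = v + (x_i - y_i)e_i$ and, by monotonicity of $\ell_p$ norms under discarding coordinates, $\Norm{f(x) - f(y)}_p \ge \Norm{v + (x_i - f(y)_i)e_i}_p$.

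The crucial step is then to feed this into the contraction hypothesis $\Norm{f(x)-f(y)}_p \le c\,\Norm{x-y}_p$. For $p < \infty$ this reads $|x_i - f(y)_i|^p + \Norm{v}_p^p \le c^p\big(|x_i - y_i|^p + \Norm{v}_p^p\big)$, and since $c^p < 1$ the $\Norm{v}_p^p$ terms may be dropped, giving $|x_i - f(y)_i| \le c\,|x_i - y_i|$; the $p=\infty$ case gives the same bound by the analogous argument with $\max$ in place of the $\ell_p$-sum. Because $c < 1$, this bound confines $f(y)_i$ to a closed interval centred at $x_i$ of radius $c\,|x_i - y_i| < |x_i - y_i|$, which excludes $y_i$ and lies entirely on the same side of $y_i$ as $x_i$; hence $f(y)_i - y_i$ and $x_i - y_i$ share the same sign and $(x_i - y_i)(f(y)_i - y_i) > 0$. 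The one point needing care is the degenerate case $x_i = y_i$: the same computation then forces $|x_i - f(y)_i| \le 0$, i.e. $f(y)_i = y_i$, and in fact $x = y$ (since $x$ would then satisfy the defining conditions of the fixpoint of $\Restr{f}{\ss'}$), so the strict inequality is to be read as holding precisely when the two fixpoints differ — the only situation in which the lemma is invoked in the reduction. Apart from that, the main obstacle is purely the coordinate bookkeeping over the partition $\fixed(\ss)\sqcup\Set{i}\sqcup\free(\ss')$ and treating $\ell_\infty$ uniformly with finite $p$.
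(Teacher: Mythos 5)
Your proof is correct and rests on the same core decomposition as the paper's: both use that $x$ and $y$ agree on $\fixed(\ss)$, that $f(x)_j - f(y)_j = x_j - y_j$ for $j\in\free(\ss')$, and that $f(x)_i = x_i$, and then plug these identities into the contraction inequality. Where you differ is in packaging. The paper argues by contradiction: assuming $x_i\le y_i$ and $f(y)_i>y_i$, it writes out $\Norm{f(y)-f(x)}_p^p$ and $\Norm{y-x}_p^p$ term-by-term and observes the former exceeds the latter, violating contraction. You instead extract the clean quantitative bound $\Abs{x_i-f(y)_i}\le c\Abs{x_i-y_i}$ (using $c^p<1$ to discard the $\Norm{v}_p^p$ terms) and then read the sign conclusion off that inequality. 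Your version is a genuine refinement in two respects the paper glosses over: you handle $p=\infty$ explicitly (the paper's proof raises to the $p$-th power and so tacitly restricts to finite $p$, even though \LCM allows $p=\infty$), and you flag the degenerate case $x_i=y_i$, where in fact $x=y$, $f(y)_i=y_i$, and the product in the lemma equals $0$ rather than being strictly positive. This is an actual (harmless, since the lemma is only invoked when the fixpoints differ) imprecision in the statement that the paper's contradiction argument does not confront. A small stylistic note: the paper's displayed chain also drops the contribution of $\fixed(\ss)$ to $\Norm{f(y)-f(x)}_p^p$ by writing $\Abs{s_j-s_j}^p$ there, which should really be $\Abs{f(y)_j-f(x)_j}^p\ge 0$ with the equality relaxed to $\ge$; your use of ``monotonicity of $\ell_p$ norms under discarding coordinates'' handles this more cleanly.
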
 
\begin{proof}
We'll prove this by contradiction. Without loss of generality, assume towards a contradiction that $x_i \leq y_i$ and that $f(y)_i > y_i$. Then we have
\begin{align*}
  \Norm{f(y) - f(x)}_p^p
  &= \Norm{\Paren{f(y)_1,\dotsc, f(y)_d} - \Paren{f(x)_1, \dotsc, f(x)_d}}_p^p \\
  &= \sum_{j\in \fixed(\ss)} \Abs{s_j - s_j}^p + \sum_{j\in \free(\ss')} \Abs{y_j - x_j}^p + \Abs{f(y)_i - x_i}\\
  &> \sum_{i\in \fixed(\ss)} \Abs{s_j - s_j}^p + \sum_{j\in \free(\ss')} \Abs{y_j - x_j}^p + \Abs{y_i - x_i}\\
  &= \Norm{y - x}_p^p
\end{align*} which contradicts the fact that $f$ is a contraction map. The lemma follows.
\end{proof}

\subsection{Discretizing the problem.}
\label{sec:discretizing}

We will turn \LCM into a discrete problem by overlaying a grid of points on
the $[0, 1]^d$ cube. For each integer $k$, let $I_k = \{0, 1/k, 2/k, \dots,
k/k\}$ be the discretization of the interval $[0, 1]$ into points of the form
$x/k$. Given a tuple $(k_1, k_2, \dots, k_d)$ of values, where $k_i$ specifies
the desired grid width for dimension $k$, we define the set of points $P =
I_{k_1} \times I_{k_2} \times \dots \times I_{k_d}$. Observe that every element
of $P$ is a point $p \in [0,1]^d$ where $p_i$ is a rational with denominator
$k_i$.

We will frequently refer to subsets of $P$ in which some of the dimensions have
been fixed, and so we specialize the set $\Slice_d$ that was defined in
Section~\ref{sec:slice} for this task. Throughout this section we take
$\DSlice_d \subset \Slice_d$ to
be the subset of possible slices that align with the grid defined by $P$. More
formally, an element $s \in \Slice_d$ is in $\DSlice_d$ if and only if whenever $s_i \ne
\blank$, we have that $s_i = x/k_i$.
For every slice $s \in \DSlice_d$, we define the set of points $P_s \subseteq P$, to be
the subset of points that lie on the slice~$s$.


%

Having discretized the space, we now also discretize the function $f$.
A \emph{direction function} for a dimension $i \in \{1, 2, \dots, n\}$ is a
function $D :  P \rightarrow \{\up, \down, \zero\}$ that satisfies the following
property.
Let $p^1$ and $p^2$ be two points in $P$ that differ only in dimension $i$, and
suppose that $p^1_i > p^2_i$. Both of the following conditions must hold.
\begin{itemize}
\item If $D(p^1) \in \{\up, \zero\}$  then $D(p^2) = \up$.
\item If $D(p^2) \in \{\down, \zero\}$  then $D(p^1) = \down$.
\end{itemize}

Figure~\ref{fig:direction} illustrates two direction functions for a
two-dimensional problem. The figure on the left shows a direction function for
the up-down dimension, while the figure on the right shows a direction function
for the left-right dimension. Each square in the figures represent a point in
the discretized space, and the value of the direction function is shown inside
the box. 


\smallskip

\noindent \textbf{The discrete contraction problem.}
Suppose that we have a set $\mathcal{D}$ of direction functions, such that for
each dimension $i$, there a function $D_i \in \mathcal{D}$ that is a direction
function for dimension $i$. We say that a point $p \in P$ is a \emph{fixpoint} of 
$\mathcal{D}$ if $D_i(p) = \zero$ for all $i$. Furthermore, for each
slice $s \in \DSlice_d$, we say that a point $p \in P_s$ is a \emph{fixpoint of $s$}
if $D_i(p) = \zero$ for all $i$ for which $s_i = \blank$.

We say that a slice $s \in \DSlice_d$ is an \emph{$i$-slice} if, for every $j \le i$ we
have that $s_j = \blank$, and for every $j > i$ we have $s_j \ne \blank$.

\begin{definition}[Discrete Contraction Map]
We say that $\mathcal{D}$ is a discrete contraction map if, for every
$i$-slice $s$, the following conditions hold.
\begin{enumerate}
\itemsep1mm
\item There is a unique fixpoint of $s$.

\item Let $s' \in \DSlice_d$ be a sub-slice of $s$ where some coordinate $i$ for which
$s_i = \blank$ has been fixed
to a value, and all other coordinates are unchanged. If $q$ is the unique fixpoint of $s$, and $p$ is the unique
fixpoint of $s'$, then
\begin{itemize}
\itemsep1mm
\item if $p_i < q_i$, then $D_i(p) = \up$, and
\item if $p_i > q_i$, then $D_i(p) = \down$.
\end{itemize}
\end{enumerate}
\end{definition}
The first condition ensures that every slice has a unique fixpoint, and is
the discrete analogue of the property proved in Lemma~\ref{lem:cm1}. Note that taking $s = (\blank, \blank, \dots, \blank)$ in the above implies that
$\mathcal{D}$ has a unique fixpoint.
The second condition is a technical
condition that we will use in our reduction, and is the discrete analogue of the
property proved in Lemma~\ref{lem:cm2}.

The discrete contraction problem is to find the fixpoint of $\mathcal{D}$.

\begin{definition}[\DCM]
The input to the problem is a set $\mathcal{D}$ of direction functions, each of which is defined by a
boolean circuit. It is promised that each function in $\mathcal{D}$ is a
direction function, and that $\mathcal{D}$ satisfies the properties of a
discrete contraction map. The task is to find the point $p \in P$ that is the
fixpoint of $\mathcal{D}$.
\end{definition}

\smallskip

\noindent \textbf{From \LCM to \DCM.}
We can show that \LCM can be reduced, in polynomial time, to \DCM. To do so, we
take an instance of contraction, defined by the function $f$, and produce a set
$\mathcal{D}_f$. 
For each
dimension $i \le n$ we define the function $D_i \in \mathcal{D}_f$ so that, for
each point $p \in P$, we have:
\begin{itemize}
\item if $f(p)_i > p$ then $D_i(p) = \up$,
\item if $f(p)_i < p$ then $D_i(p) = \down$, and
\item if $f(p)_i = p$ then $D_i(p) = \zero$.
\end{itemize}
In other words, the function $D_i$ simply checks whether $f(p)$ moves up, down,
or not at all in dimension $i$. 
In the rest of this section, we will provide a proof for the following lemma.

\begin{lemma}
\label{lem:lcm2dcm}
If $f$ is a contraction map, then $\mathcal{D}_f$ is a discrete contraction map.
\end{lemma}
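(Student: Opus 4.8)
The plan is to carry out the proof in four steps: (i) choosing the grid widths $k_1,\dots,k_d$; (ii) checking that each $D_i$ is a direction function; (iii) checking that every $i$-slice has a unique fixpoint; and (iv) checking that the direction functions point towards the fixpoint. Steps (ii)--(iv) are short consequences of Lemmas~\ref{lem:cm1} and~\ref{lem:cm2}; step (i) is where the work lies.

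\emph{Choosing the grid.} Since $f$ is computed by a \LinearFIXP circuit $C$ it is piecewise linear, and on every full-dimensional linear piece it agrees with an affine map $x\mapsto Ax+b$ whose entries are rationals of bit-length polynomial in $\size(C)$. Fix any slice $\ss$, write $F=\free(\ss)$, and let $A_{FF}$ be the principal submatrix of $A$ indexed by $F$. By Lemma~\ref{lem:cm1} the restriction $\restr{f}{\ss}$ is a contraction with factor $c<1$, so on each full-dimensional piece $A_{FF}$ has operator norm at most $c$ and $I-A_{FF}$ is invertible. Hence the unique fixpoint of $\restr{f}{\ss}$ lies in some piece and its free coordinates solve a linear system $(I-A_{FF})x_F=b'$, where $b'$ is an affine combination of the fixed coordinates of $\ss$ with bounded-bit-length coefficients; in particular this fixpoint is rational. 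Working from $k_d$ upward, with $k_d\mid k_{d-1}\mid\dots\mid k_1$, one then chooses each $k_l$ to be a suitable multiple of $k_{l+1}$ — only a polynomial-bit-length factor larger — so that for every slice $\ss$ the (unique) fixpoint of $\restr{f}{\ss}$ has all of its coordinates on the grid $P=I_{k_1}\times\dots\times I_{k_d}$, while all $k_l$ remain of polynomial bit-length. Making this precise — in particular bounding the denominators that can arise uniformly over all of the (possibly many) linear pieces — is the main obstacle.

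\emph{Each $D_i$ is a direction function.} Fix $i$ and two grid points $p^1,p^2$ agreeing off coordinate $i$ with $p^1_i>p^2_i$, and let $\phi(t)$ denote the $i$-th coordinate of $f$ evaluated at the point obtained from $p^1$ by setting its $i$-th coordinate to $t$, minus $t$. Then $t\mapsto\phi(t)+t$ is a one-dimensional slice restriction of $f$, hence a contraction with factor $c$ by Lemma~\ref{lem:cm1}, so for $t>t'$ we get $\phi(t)-\phi(t')\le(c-1)(t-t')<0$: $\phi$ is strictly decreasing. Consequently $D_i(p^1)\in\{\up,\zero\}$ means $\phi(p^1_i)\ge 0$, whence $\phi(p^2_i)>\phi(p^1_i)\ge 0$ and $D_i(p^2)=\up$; symmetrically, $D_i(p^2)\in\{\down,\zero\}$ forces $D_i(p^1)=\down$. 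So $D_i$ is a direction function for dimension $i$.

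\emph{The two discrete-contraction-map properties.} Let $s$ be an $i$-slice. By Lemma~\ref{lem:cm1}, $\restr{f}{s}$ is a contraction, hence has a unique fixpoint $x^\star$ with $x^\star=\restr{x^\star}{s}$, and by the choice of grid $x^\star\in P$, so $x^\star\in P_s$. For any $p\in P_s$ we have $p=\restr{p}{s}$, so ``$D_j(p)=\zero$ for every free $j$'' is equivalent to $\Restr{f}{s}(p)=\Restr{p}{s}$, i.e.\ to $p$ being the fixpoint of $\restr{f}{s}$; therefore $x^\star$ is the unique fixpoint of $s$, which is property~1 (and taking $s=(\blank,\dots,\blank)$ gives a unique global fixpoint). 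For property~2, let $s'$ be the $(i-1)$-slice obtained from $s$ by fixing coordinate $i$ to a grid value, and let $q$ and $p$ be the unique fixpoints of $s$ and $s'$ (both exist by property~1). Since $\fixed(s')=\fixed(s)\cup\{i\}$ and $s,s'$ agree on $\fixed(s)$, Lemma~\ref{lem:cm2} gives $(q_i-p_i)(f(p)_i-p_i)>0$; hence $p_i<q_i$ forces $f(p)_i>p_i$, i.e.\ $D_i(p)=\up$, and $p_i>q_i$ forces $D_i(p)=\down$ — exactly property~2. Together with the fact that the $D_i$ are computable by polynomial-size circuits, this shows $\mathcal{D}_f$ is a discrete contraction map.
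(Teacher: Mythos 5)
Your steps (ii)--(iv) are sound and essentially mirror the paper's argument: once the grid is fixed so that every slice's fixpoint lands on it, properties 1 and 2 of a discrete contraction map drop out of Lemmas~\ref{lem:cm1} and~\ref{lem:cm2}, and your step (ii) (showing each $D_i$ is actually a direction function, via strict monotonicity of $t \mapsto f(\cdot,t,\cdot)_i - t$ along one-dimensional slices) is a worthwhile explicit check that the paper leaves implicit.

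The genuine gap is step (i), and you flag it yourself: ``Making this precise --- in particular bounding the denominators that can arise uniformly over all of the (possibly many) linear pieces --- is the main obstacle.'' That obstacle \emph{is} the substance of the lemma. Merely asserting rationality of each slice's fixpoint, and that one ``can choose'' grid widths $k_d \mid k_{d-1} \mid \dots \mid k_1$ of polynomial bit-length, does not establish the claim; it is exactly what needs proof. In particular, you need a \emph{uniform} denominator bound for the fixpoint of $\restr{f}{\ss}$ that (a) does not depend on which of the exponentially many linear pieces the fixpoint lies in, and (b) grows only additively (not multiplicatively) in the bit-length of the already-fixed coordinates of $\ss$ --- otherwise the recursion $k_d \to k_{d-1} \to \dots \to k_1$ would blow up. The paper obtains both by routing through the LCP formulation: Lemma~\ref{lem:ruta} converts the \LinearFIXP circuit to an LCP $(M_C,\qq_C)$ with $b(M_C), b(\qq_C) = O(n\cdot\size(C))$, and crucially observes that fixing an input coordinate to a value $x$ gives an LCP with $b(M_{C'}) \le b(M_C)$ unchanged and only $b(\qq_{C'})$ growing by roughly $b(x)$. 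Combined with the Hadamard-type solution bound of Lemmas~\ref{lem:Ainverse} and~\ref{lem:lcpsize}, this yields explicit $k_i$ with $b(k_1) = O(d\cdot n\log n + d\cdot n\cdot b(M_C) + b(\qq_C))$, which is polynomial in $\size(C)$. Your direct piecewise-linear route (solve $(I-A_{FF})x_F = b'$ on each piece) could in principle deliver the same bound --- $\det(I-A_{FF})$ is piece-independent up to the choice of branch at each $\max/\min$ gate, and Hadamard again controls it --- but you would still have to carry out the same bit-length accounting, and as written you have not done so. Until that calculation is supplied, the proof is incomplete.
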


Intuitively, the two required properties are consequences of
Lemmas~\ref{lem:cm1} and~\ref{lem:cm2}. The main difficulty of the proof is
finding suitable values for the tuple $(k_1, k_2, \dots, k_d)$ so that every
slice actually has a unique fixpoint in the set $P$. We prove that such a
tuple exists, and we rely on the fact that $f$ is defined by a \LinearFIXP
circuit, and so has rational fixpoints of polynomial bit-length. 

\smallskip

\noindent \textbf{\LinearFIXP circuits and LCPs.}
The goal of this proof is to find a tuple $(k_1, k_2, \dots, k_d)$ such that the
lemma holds. 
We utilize a lemma, which was shown in~\cite{mehta}, which proves that that
every \LinearFIXP circuit can be transformed, in polynomial time, into an LCP,
such that the solution of the LCP captures the fixpoint of the circuit, and
gives a bound on the bit-length of the numbers used in the LCP. 

Throughout this proof, we will use the function $b(x)$ to denote, for each
rational $x$, the
bit-length of the representation of $x$, which is the bit-length needed to
represent the numerator and denominator of $x$. When we apply $b$ to a matrix or
a vector, we take the maximum of the bit-lengths of the elements of that matrix
or vector. 

\begin{lemma}[\cite{mehta}]
\label{lem:ruta}
Let $C$ be a \LinearFIXP circuit, and let $n$ be the number of $\max$ and $\min$ gates
used in $C$. We can produce, in polynomial time, an LCP
defined by an $n \times n$ matrix $M_C$ and $n$-dimensional vector $\qq_C$ every fixpoint of $C$ is a
solution of the LCP defined by $M$ and $\bf{q}$ and vice versa.
Furthermore, if $\size(C)$ denotes 
$$(\#\mbox{inputs } + \#\mbox{gates } + \mbox{ number of bits needed to represent the constants used}),$$ then 
$b(M_C)$ and $b(\qq_C)$ are both at most $O(n\times\size(C))$.
\end{lemma}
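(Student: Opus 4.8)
The plan is to turn the circuit into an LCP whose only nonlinearity is one complementarity pair per $\max/\min$ gate. As a normalization I would first compose $C$ with a coordinatewise clamp $x\mapsto\min(1,\max(0,x))$ if its outputs are not already seen to lie in $[0,1]^d$; this preserves the contraction factor and the set of fixpoints in $[0,1]^d$, lets us treat $C$ as a total piecewise-linear map on $\R^d$, and adds only $2d$ new $\max/\min$ gates, so that afterwards $n \ge 2d$ and in particular $d = O(n)$. I would also rewrite every $\max$ and $\min$ gate as a rectifier using $\max(a,b)=a+(b-a)^{+}$ and $\min(a,b)=b-(b-a)^{+}$, where $(t)^{+}=\max(0,t)$.

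Next, introduce for each of the $n$ rectifier gates $g$ an LCP variable $z_g\ge 0$ meant to equal $(t_g)^{+}$, where $t_g$ is the affine expression feeding that rectifier, together with its complementary slack $w_g = z_g - t_g\ge 0$: the constraints $z_g\ge 0$, $w_g\ge 0$, $z_g w_g = 0$ encode exactly $z_g=(t_g)^{+}$. Processing the gates in topological order, every gate value becomes an affine function of the inputs $x_1,\dots,x_d$ and the variables $z_g$. A key observation is that these affine forms have only polynomial bit-length: the common denominator of all coefficients divides the product of the denominators of the $\times\zeta$-constants (total bit-length $\le\size(C)$), and each numerator is a sum of at most $2^{\#\mathrm{gates}}$ path-products, each of bit-length $O(\size(C))$, so the sum still has bit-length $O(\size(C))$. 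The $d$ fixpoint equations ``output$_i = x_i$'' then read $(I-L)x = Rz + c$, where $L$ and $R$ collect the coefficients of $x$ and of $z$ in the affine forms of the output gates (and $L,R,c$ have bit-length $O(\size(C))$), and the slack definitions read $w = (\text{affine in }x,z)$. Eliminating $x$ via $x = (I-L)^{-1}(Rz+c)$ and substituting into the slack equations yields a standard LCP $w = M_C z + \qq_C$ in the $n$ pairs $(z_g, w_g)$. A solution $(z,w)$ reconstructs, by back-substitution, a point $x$ and gate values; the complementarity conditions force $z_g = (t_g)^{+}$ at every rectifier, so these are the genuine circuit values at $x$ and the fixpoint equations give $f(x)=x$; conversely any fixpoint of $f$ yields an LCP solution by reading off its rectifier values.

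Two points need checking, and the second is the main obstacle. First, the elimination must be legitimate, i.e.\ $I-L$ must be invertible; I would argue this from the strict contraction property (on any linear piece the Jacobian of $f$ is a norm-$\le c$ combination of $I-L$ with rectifier corrections, forcing $I-L$ nonsingular), falling back on a symbolic perturbation of $\qq_C$ if a fully clean argument proves awkward. Second, I must certify the bound $b(M_C), b(\qq_C) = O(n\cdot\size(C))$. Given the poly-bit-length affine forms above, this reduces to bounding the bit-length of $(I-L)^{-1}$: Gaussian elimination on the $d\times d$ matrix $I-L$, whose entries have bit-length $O(\size(C))$, produces entries of bit-length $O(d\cdot\size(C))$ by the usual determinant/Cramer estimates, and $d = O(n)$ by the normalization, so substituting back into the slack equations (themselves of bit-length $O(\size(C))$) keeps everything at $O(n\cdot\size(C))$. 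The delicate part is making all of these estimates tight simultaneously — choosing the elimination order and the clamp normalization so that the $d$-dimensional inversion is the only source of blow-up and is charged to $n$ rather than to the (possibly larger, in a non-normalized circuit) ambient dimension — and checking that the whole construction, including computing $(I-L)^{-1}$, runs in polynomial time.
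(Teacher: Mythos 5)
The paper does not prove this lemma but cites it from~\cite{mehta}, so there is no in-paper proof to compare against. Your overall construction --- rewrite $\max$/$\min$ as rectifiers via $\max(a,b)=a+(b-a)^+$, introduce one complementarity pair $(z_g,w_g)$ per rectifier, express every gate value as an affine form in the inputs and rectifier variables, and eliminate the inputs through the fixpoint equation --- is the standard and expected reduction, and your bit-length accounting (common denominator, path-products, Cramer/Hadamard for $(I-L)^{-1}$) is essentially right.

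The genuine gap is the invertibility of $I-L$, and you have in fact mislabelled the obstacles: the bit-length bound is routine, whereas the invertibility argument you offer is wrong. Contraction bounds the Jacobian of $f$ only on linear pieces that actually intersect $[0,1]^d$; the piece on which that Jacobian equals $L$ (every rectifier ``off'') can be empty or lie entirely outside the domain, so $c$-contraction does not constrain $\|L\|$ --- and in any case the lemma as stated places no contraction assumption on $C$. Nor does ``a symbolic perturbation of $\qq_C$'' help: what has to be nonsingular is the $d\times d$ coefficient matrix $I-L$ appearing inside $M_C$, and perturbing the constant term cannot repair a singular coefficient matrix. What you are missing is that the clamp you already introduce dissolves this issue entirely: once each output is routed through $\min(1,\max(0,\cdot))$, the affine form for output $i$ is $y_i = 1 - z_{b_i}$ where $z_{b_i}$ is the top rectifier of that clamp, so its $x$-coefficient vanishes and $L=0$. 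The fixpoint equations then read $x_i = 1-z_{b_i}$, giving $x$ directly as an affine function of $z$ with no matrix inversion, so contraction plays no role and the bit-length even tightens to $O(\size(C))$ (at the cost of an LCP of dimension $n+2d$ rather than exactly $n$, which is harmless since $d\le\size(C)$). You should lead with this observation and drop the contraction/perturbation discussion.
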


Crucially, if $C'$ denotes a circuit where one of the inputs of $C$ is fixed to
be some number $x$, then $b(M_{C'}) \le b(M_{C})$ and $b(\qq_{C'}) \le b(x) +
b(M_{C}) + b(\qq_{C})$. In other words, the bit-length of $M_{C'}$ does not
depend on $x$, and is in fact at most the bit-length of $M_{C}$.


\smallskip

\noindent \textbf{Bounding the bit-length of a solution of an LCP.}
We now prove two technical lemmas about the bit-length of any solution to the
LCP. We begin with the following lemma regarding the bit-length of a matrix
inverse.

\begin{lemma}
\label{lem:Ainverse}
Let $A \in \Rational^{n\times n}$ be an square matrix of full rank, and let the largest absolute value of an entry of $A$ be denoted by $B$,
i.e., $|A_{ij}| \le B$ for all $i,j$.
Let $p/q$ for $p,q \in \Integer$ denote an arbitrary entry of $A^{-1}$. Then we have $\max(p,q) \le B^n n^{n/2}$.
\end{lemma}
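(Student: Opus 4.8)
The plan is to combine Cramer's rule with Hadamard's inequality; this is a standard bit-complexity estimate. First I would recall that, since $A$ has full rank, its inverse is $A^{-1} = (\det A)^{-1}\operatorname{adj}(A)$, so an arbitrary entry of $A^{-1}$ has the form $(-1)^{i+j}M_{ji}/\det A$, where $M_{ji}$ is the $(n-1)\times(n-1)$ minor of $A$ obtained by deleting row $j$ and column $i$. As $A$ has integer entries (this is the relevant case; a rational matrix may first be rescaled to clear denominators, with $B$ then bounding the entries of the resulting integer matrix), both $M_{ji}$ and $\det A$ are integers, so I would take $q = \lvert\det A\rvert$ and $p = \pm M_{ji}$ as the integer representation $p/q$ of the entry referred to in the statement; note it is harmless that $p/q$ need not be in lowest terms, since any valid integer representation suffices for an upper bound on $\max(p,q)$.

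It then remains to bound $\lvert M_{ji}\rvert$ and $\lvert\det A\rvert$ separately. For $\lvert\det A\rvert$, Hadamard's inequality gives $\lvert\det A\rvert \le \prod_{k=1}^{n}\lVert a_k\rVert_2$, where $a_k$ is the $k$-th row of $A$; since every entry of $A$ is at most $B$ in absolute value, $\lVert a_k\rVert_2 \le \sqrt{n}\,B$, whence $\lvert\det A\rvert \le (\sqrt{n}\,B)^n = n^{n/2}B^n$. The minor $M_{ji}$ is the determinant of an $(n-1)\times(n-1)$ matrix whose entries are again bounded by $B$, so the same argument yields $\lvert M_{ji}\rvert \le (n-1)^{(n-1)/2}B^{n-1}$. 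Because $A$ is a nonzero integer matrix we have $B \ge 1$, and therefore $(n-1)^{(n-1)/2}B^{n-1} \le n^{n/2}B^n$. Combining the two estimates gives $\max(p,q) \le n^{n/2}B^n$, which is the desired bound.

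I do not expect a real obstacle here: the argument is routine once the inverse is written through the adjugate. The only steps needing a moment's care are the observation that the (possibly non-reduced) representation $p=\pm M_{ji}$, $q=\lvert\det A\rvert$ is legitimate, and the mild use of $B\ge 1$ to absorb the bound on the $(n-1)$-dimensional minor into the bound on the $n$-dimensional determinant.
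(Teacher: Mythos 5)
Your proof is correct and follows essentially the same route as the paper: express $A^{-1}$ via the adjugate, identify the numerator with a cofactor and the denominator with $\det A$, and bound both by Hadamard's inequality. You are somewhat more careful than the paper on two minor points — explicitly noting that the rational case must first be rescaled to an integer matrix (which the paper does silently when it later applies the lemma), and explicitly invoking $B\ge 1$ to absorb the $(n-1)$-dimensional minor bound into the $n$-dimensional one — but the substance is identical.
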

\begin{proof}
We have that $A^{-1} = \frac{1}{\det(A)}(\text{adjoint}(A))$. Entries of $\text{adjoint}(A)$ are $\pm$ times the determinant
of a square sub-matrix of $A$ of size $n-1$, which this also has a bound of $B$ on the absolute value of any entry.
It is a well-known corollary of Hadamard's inequality that $|\det(A)| \le B^n n^{n/2}$. Applying this bound to entries of 
$\text{adjoint}(A)$ for $p$ and $\det(A)$ for $q$ gives the result.
\end{proof}


We now use this to prove the following bound on the bit-length of a fixpoint
of an LCP.

\begin{lemma}
\label{lem:lcpsize}
Let $M$ and $\qq$ be an LCP, and $\yy$ be the solution to the LCP. We have that
$b(\yy) \le n \cdot \log n + 3n \cdot b(M) + b(\qq)$.
\end{lemma}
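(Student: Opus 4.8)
The plan is to express the solution $\yy$ of the LCP as the solution of a square linear system, and then apply Cramer's rule together with the inverse bound from Lemma~\ref{lem:Ainverse}. Recall that a solution $\yy$ to the LCP $(M,\qq)$, after introducing slacks $\ps = \qq - M\yy$, satisfies $M\yy + \ps = \qq$ with $\yy, \ps \ge 0$ and $y_i s_i = 0$ for all $i$. Hence there is a partition of $[n]$ into index sets $B$ (``basic'') and $N$, where for $i \in N$ we set $y_i = 0$, and for $i \in B$ we set $s_i = 0$. Restricting attention to these $2n$ equalities, the nonzero coordinates of $(\yy,\ps)$ — namely $(y_i)_{i\in B}$ and $(s_i)_{i \in N}$ — are exactly the solution of a square $n\times n$ linear system $A\zz = \qq$, where the columns of $A$ are columns of $-M$ (for indices in $B$) and standard basis vectors $e_i$ (for indices in $N$). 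By non-degeneracy / the fact that a solution exists and is unique, $A$ has full rank, so $\zz = A^{-1}\qq$.

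Next I would bound the bit-length of $\zz$. Writing $\qq = \bb/c$ where $c$ is a common denominator with $b(c) \le n\cdot b(\qq)$ (actually $\le b(\qq)$ if we just take the product of denominators, but even the crude bound suffices) and each $|b_j| \le$ (numerator bounds), and writing each entry of $A^{-1}$ as $p_{ij}/q$ with common denominator $q = \det(A)$, Lemma~\ref{lem:Ainverse} gives $\max(|p_{ij}|, |q|) \le B^n n^{n/2}$ where $B$ is the largest absolute value of an entry of $A$. Since the entries of $A$ are either entries of $-M$ or in $\{0,1\}$, we can take $B \le 2^{b(M)}$ (the numerator of any entry of $M$ is at most $2^{b(M)}$, and clearing the denominators of $M$ into the right-hand side only costs an extra additive $b(M)$ term, which I would absorb below). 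Then each coordinate of $\zz = A^{-1}\qq$ is a sum of $n$ terms each of which is a ratio of integers bounded by $B^n n^{n/2}$ times the numerator/denominator of $\qq$; putting everything over the common denominator $q \cdot c$, the numerator is bounded by $n \cdot B^n n^{n/2} \cdot 2^{b(\qq)}$ and the denominator by $B^n n^{n/2} \cdot 2^{b(\qq)}$.

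Taking logs, $b(\yy) \le b(\zz) \le \log n + n\log B + \tfrac{n}{2}\log n + b(\qq) + (\text{lower-order terms from the } c \text{ factor})$. With $\log B \le b(M) + O(1)$ (after the denominator-clearing adjustment, $\log B \le 2\,b(M)$ say, or with a bit more care exactly $b(M)$), and bounding $\log n \le n\log n$, this collapses to $b(\yy) \le n\log n + 3n\cdot b(M) + b(\qq)$, as claimed; the constant $3$ gives ample slack to absorb the $\tfrac12 n\log n$ term, the $\log n$ term, and the adjustment for clearing denominators of $M$ into $\qq$. The only mildly delicate step is making sure the denominator-clearing of $M$ does not blow up the bound — this is routine since multiplying the $i$-th equation through by the product of the denominators in row $i$ changes $A$ and $\qq$ by at most an additive $b(M)$ in bit-length, which is exactly why the coefficient on $b(M)$ needs to be larger than $1$. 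I would expect this bookkeeping (keeping track of rational versus integer entries, and the common-denominator accounting) to be the main obstacle, rather than any conceptual difficulty.
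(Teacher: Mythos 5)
Your proposal follows essentially the same route as the paper's proof: both express $\yy$ via a square linear system $A\xx = \qq$ where the columns of $A$ are chosen (according to the complementarity structure) from the columns of $M$ and the standard basis vectors, apply Lemma~\ref{lem:Ainverse} after clearing denominators to obtain an integer matrix, and finish with bit-length bookkeeping. The paper clears denominators of $A$ (via an LCM $l$ with $b(l) \le b(A) + 2\log n$) rather than of $\qq$, which avoids the $n \cdot b(\qq)$ overhead you worry about in your parenthetical, but this is a minor variation in bookkeeping rather than a different approach.
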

\begin{proof}



We first note that if an LCP has a solution, then it has a vertex solution.
Let $\yy$ be a solution of the LCP defined by $M=[M_1,\ldots,M_n]$ and $\qq$, and 
let $\ww = \qq - M \cdot \yy$ so that $I\cdot\ww + M\cdot\yy = \qq$ where 
$I$ is the identity matrix $[e_1,\ldots,e_n]$.
By definition, for each $i$, we either have
that $\yy_i = 0$ or that $\ww_i = (\qq - M \cdot \yy)_i = 0$. 
Let $\alpha = \{i \ |\ \yy_i = 0 \}$. 
We define the matrix $A = [A_1,\ldots,A_n]$ from unit vectors and the columns of $M$ as follows:
$$
A_i = \begin{cases}
	e_i,& \textbf{ if } i \in \alpha,\\
	M_i,& \textbf{ if } i \notin \alpha.
\end{cases}
$$ 
If the vertex is non-degenerate, then whenever $\yy_i=0$ we have $\ww_i>0$, and
therefore $A$ is guaranteed to be invertible. In case of a degenerate vertex,
note that we take $e_i$ for the $i$th column in $A$ even when $\yy_i$ and
$\ww_i$ both are zero.
This will ensure that $A$ is invertible.

Then, the LCP conditions, i.e., the complementary condition, the equation
$I\cdot\ww + M\cdot\yy = \qq$, and the non-negativity of $\yy$ and $\ww$, are 
equivalent to the existence of a non-negative vector $\xx$ that satisfies
$$A \cdot \xx = \qq\ .$$ 
Then $\yy_i = \xx_i$ for $i \notin \alpha$ and $\yy_i = 0$ for $i \in \alpha$,
so we have $b(\yy) \le b(\xx)$.
Also note that we have $b(A) \le b(M)$, since the entries in columns that take the
value of $e_i$ have constant bit-length.

We must transform $A$ into an integer matrix in order to apply
Lemma~\ref{lem:Ainverse}. Let $l$ denote the least common multiple of the
entries in $A$. Note that $l \le n^2 \cdot 2^{b(A)}$ and hence $b(l) \le b(A) +
2\log(n)$. Our matrix equation above can be rewritten as $l \cdot A \cdot \xx =
l \cdot q$, and note that $(l \cdot A)$ is an integer matrix. Hence we have $\xx
= (l \cdot A)^{-1} \cdot l \cdot \qq$.

If $B$ denotes the maximum entry of $(l \cdot A)$, then $B \le l \cdot 2^{b(A)}$.
So by Lemma~\ref{lem:Ainverse} the maximum entry of 
$(l \cdot A)^{-1}$ is 
\begin{equation*}
(l \cdot 2^{b(A)})^n \cdot n^{n/2} \le 
(n^2 \cdot 2^{2 b(A)})^n \cdot n^{n/2} 
\end{equation*}
Each entry of $\yy$ consists of the sum of $n$ entries of 
$(l \cdot A)^{-1}$ each of which is multiplied by an entry of
$\qq$, and finally the sum is multiplied by $l$. So we get the following bound
on the bit-length of $\qq$.
\begin{align*}
b(\yy) &\le \log \left( n \cdot 2^{b((l \cdot A)^{-1})} \cdot 
2^{b(\qq)} \cdot 2^{b(l)} \right) \\
& \le \log \left( n \cdot \left( (n^2 \cdot 2^{2 b(A)})^n \cdot n^{n/2} \right) \cdot 2^{b(\qq)} \cdot \left(n^2 \cdot 2^{b(A)} \right) \right) \\
& = \log \left( n^{n/2 + 5} \cdot 2^{(2n + 1) b(A)} \cdot 2^{b(\qq)} \cdot  \right) \\
& \le  n \cdot \log n + 3n \cdot b(A) + b(\qq).
\end{align*}





\end{proof}

\smallskip

\noindent \textbf{Fixing the grid size.} 
We shall fix the grid size iteratively, starting with $k_d$, and working
downwards. 
Note that there is exactly 
one $d$-slice, which is $s = (\blank, \blank, \dots, \blank)$.
We fix $k_d = 2^{n \cdot \log n + 3n \cdot b(M_C) + b(\qq_C)}$, and recall that the set
$I_{k_d}$ contains all rationals with denominators at most $k_d$.
Lemma~\ref{lem:lcpsize} implies that the fixpoint of $s$, which is a solution
of the LCP defined by $M_C$ and $\qq_C$, is a rational with denominator at most 
$k_d$.

Now let $s = (\blank, \blank, \dots, \blank, x)$ be a $(d-1)$-slice where $x$ is
a member of $I_{k_d}$, meaning that $x$ is a point with denominator $k_d$.
Recall that if we fix the $d$th input of $C$ to be $x$,
then we obtain a smaller circuit $C'$, and we have that $b(M_{C'}) \le b(M_C)$.
However, $x$ will be added to some elements of $q$, and so $b(\qq_{C'}) \le
b(k_d)$. If $\exp(x) = 2^x$, then we fix 
\begin{align*}
k_{d-1} &= \exp \left( n \cdot \log n + 3n \cdot b(M_{C'}) + b(\qq_{C'}) \right)\\
&\le \exp \left( 2 n \cdot \log n + 6n \cdot b(M_{C}) + b(\qq_{C}) \right).
\end{align*}
Again, Lemma~\ref{lem:lcpsize} implies that the fixpoint of $s$ is a rational
with denominator at most $k_{d-1}$.

Repeating the above argument for all $i \le d-1$ leads us to set
\begin{equation*}
k_i = \exp \left( (d-i+1) n \cdot \log n + 3 \cdot (d-i+1) \cdot n \cdot
b(M_{C}) + b(\qq_{C}) \right). 
\end{equation*}
By the same reasoning, we get that for each $i$-slice $s$, the unique fixpoint 
of $s$ is a rational with denominator at most $k_i$.

Observe that the maximum bit-length of the numbers $k_i$ is attained by the
number $k_1$, which has bit-length at most
\begin{equation*}
d \cdot n \cdot \log n + 3 d \cdot n \cdot b(M_{C}) + b(\qq_{C}). 
\end{equation*}
By Lemma~\ref{lem:ruta}, this is polynomial in the size of $C$.

\smallskip

\textbf{Completing the proof of Lemma~\ref{lem:lcm2dcm}.} 
Now that we have fixed the grid size, we must show that the two properties of a
discrete contraction map hold. 

The first property requires that each $i$-slice
has a unique fixpoint in the grid. We have specifically chosen the grid to
ensure that each $i$-slice has a fixpoint in the grid, and if the grid had
two fixpoints for a given $i$-slice, then we would immediately obtain a
contradiction using Lemma~\ref{lem:cm1}.
The second property follows immediately from Lemma~\ref{lem:cm2}.




\subsection{Proof of Lemma~\ref{lem:dcm2ufeopl}}
\label{app:dcm2ufeopl}

We will formally define our line by induction over the dimensions. We will start
by defining a line that specifies the sequence of points on the $(n-1)$-surface.
In each step of the induction, we will assume that we have defined the visited 
points on the $j$-surface for all $j > i$, and we will specify the points on
the $i$-surface that are visited by the line.

For each step of our reduction, we will define a \emph{partial} \UFEOPL instance
$L_i = (C_i, S_i, V_i)$ which captures the points of the full line that are
on the $i$-surface. The reason that we call them partial instances, is because the
successor function $S_i$ that we define will not actually point to the next
vertex in the line, since it is not computable in polynomial time. It will
instead give the next point on the full line. In all other respects, the
instance $L_i$ is a valid instance of \UFEOPL, so all circuits will
be polynomial-time computable, the potential will be monotonically increasing,
and the line will end at the fixpoint of the discrete contraction instance.
More formally, the instance $L_i$ will satisfy the following
conditions, which will serve as our inductive hypothesis.
\begin{enumerate}
\itemsep1mm
\item There will a polynomial-time circuit $\point$, which given a vertex $v$ of
$L_i$, will return the point in $P$ that corresponds to $v$.
\label{ind:start}

\item For every vertex $v$ on the line, the point $p = \point(v)$ is on the
$i$-surface.

\item For every vertex $v$ on the line, let $u$ denote the next vertex on the
line. If $p = \point(S_i(v))$ and $q = \point(u)$, then $p_j = q_j$ for all $j \ge i$.
 
\item For the first vertex $v$ on the line, the point $p = \point(v)$ has $p_j = 0$ for all $j \ge i$.

\item For the last vertex $v$ on the line, the point $p = \point(v)$ is a
solution to the discrete contraction problem.

\item The potential function is monotonically increasing along the line.
\label{ind:end}
\end{enumerate}
Condition 3 above specifies the behavior of the successor circuit. It specifies
that, if we ignore dimensions $1$ through $i-1$, then the line is connected.

\smallskip

\noindent \textbf{The base case.}
In the base-case, we define the instance $L_{d-1} = (C_{d-1}, S_{d-1}, V_{d-1})$
in the following way. For each slice $s_x = (\blank, \blank, \dots, \blank, x)
\in \DSlice_d$, Property 1 of a discrete contraction map ensures that there is a unique
point $p \in P_s$ that is a fixpoint of $s$, and so is on the
$(d-1)$-surface. Our line will consist of these points. It will start at the
unique fixpoint of $s_0$, it will then move to the unique fixpoint of
$s_1$, and then to the unique fixpoint of $s_2$, and so on until it reaches
the unique fixpoint of $\mathcal{D}$. Obviously, when we are at $s_1$, we
cannot easily compute $s_2$, but our conditions do not require us to do so. We
are simply required to produce a point that agrees with $s_2$ in dimension $d$,
which we can trivially do by moving one step positively in dimension $d$ from
$s_1$.

We can use Property 2 of a discrete contraction map to determine whether a point
on the $(d-1)$-surface is on this line. This property states that, if $q$ is the
fixpoint of $\mathcal{D}$, then $D_{d}(p) = \up$ if and only if $q_d > p_d$.
So the points on our line will be exactly the points $p$ on the $(d-1)$-surface
for which $D_{d}(p) \in \{\up, \zero\}$.

Formally, we define $L_{d-1}$ in the following way. Each vertex of 
$L_{d-1}$ will be a point $p \in P$, and so we can trivially define the circuit
$\point(p) = p$.
\begin{itemize}
\item For each point $p \in P$ we have that $C_{d-1}(p) = 1$ if and only if both of
the following conditions hold.
\begin{itemize}
\item $p$ is on the $(d-1)$-surface.
\item $D_{d}(p) \in \{\up, \zero\}$.
\end{itemize}

\item For each point $p \in P$ for which $C_{d-1}(p) = 1$, we define
$S_{d-1}(p)$ as follows.
\begin{itemize}
\item If $p$ is $D_{d}(p) = \zero$, then $p$ is the end of the line.
\item Otherwise, $S_{d-1}(p) = p'$, where $p'_d = p_d+1$, and $p'_i = p_i$ for all $i < d$.
\end{itemize}

\item For each point $p \in P$ for which $C_{d-1}(p) = 1$, we define $V_{d-1}(p) = p_d$.
\end{itemize}
The successor function gives, for each point $p$ that is on the line, the point
directly above $p$ in dimension $d$. This point agrees with the next point on
the line (which is the unique fixpoint of the slice defined by $p_d+1$) in
dimension $d$, and so satisfies our definition. The potential function simply
uses coordinate $d$ of each point, which is monotonically increasing along the
line.

\begin{lemma}
The instance $(C_{d-1}, S_{d-1}, V_{d-1})$ satisfies Conditions~\ref{ind:start}
through~\ref{ind:end} of the inductive hypothesis.
\end{lemma}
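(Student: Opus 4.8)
The plan is to verify the six numbered conditions of the inductive hypothesis one at a time, but first to pin down exactly which vertices make up the line of $L_{d-1}$. Since $\mathcal{D}$ is a discrete contraction map, Property~1 says that for each grid coordinate $t \in I_{k_d}$ the $(d-1)$-slice $s_t := (\blank,\dots,\blank,t)$ has a unique fixpoint, which I will write $\mathrm{fp}(t)$; by definition this point lies in $P_{s_t}$, so it is on the $(d-1)$-surface and has $d$th coordinate $t$. Let $w$ be the unique fixpoint of the all-$\blank$ slice, which by Property~1 (taking $s = (\blank,\dots,\blank)$) is precisely the fixpoint of $\mathcal{D}$. Applying Property~2 to the parent/child pair $\bigl((\blank,\dots,\blank),\, s_t\bigr)$ with distinguished coordinate $d$ gives that $D_d(\mathrm{fp}(t))$ is $\up$ when $t < w_d$, is $\down$ when $t > w_d$, and is $\zero$ when $t = w_d$ — in the last case $w$ is also a fixpoint of $s_{w_d}$, so $\mathrm{fp}(w_d) = w$ by uniqueness.

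This gives the characterisation I need. By the definition of $C_{d-1}$, a point $p$ has $C_{d-1}(p) = 1$ iff $p$ is on the $(d-1)$-surface and $D_d(p) \in \{\up,\zero\}$; but any $p$ on the $(d-1)$-surface is a fixpoint of $s_{p_d}$ and hence equal to $\mathrm{fp}(p_d)$, so $\{p : C_{d-1}(p) = 1\} = \{\mathrm{fp}(t) : t \in I_{k_d},\ t \le w_d\}$. On this set $S_{d-1}$ increases the $d$th coordinate by one grid step whenever $D_d(p) = \up$ (i.e.\ for $t < w_d$) and returns the end-of-line flag at $\mathrm{fp}(w_d) = w$ (where $D_d = \zero$), so these vertices form a single line running from $\mathrm{fp}(0)$ to $w$, listed in increasing order of $d$th coordinate. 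Now the conditions are quick. \emph{Condition~1:} $\point$ is the identity, hence polynomial-time. \emph{Condition~2:} immediate from the first clause of $C_{d-1}$. \emph{Condition~3:} for a line vertex $\mathrm{fp}(t)$ with $t < w_d$, the point $\point(S_{d-1}(\mathrm{fp}(t)))$ has $d$th coordinate one grid step above $t$, which is exactly the $d$th coordinate of the next line vertex $\mathrm{fp}(t+1/k_d)$. \emph{Condition~4:} the first vertex is $\mathrm{fp}(0)$, whose $d$th coordinate is $0$. \emph{Condition~5:} the last vertex is $\mathrm{fp}(w_d) = w$, which is a solution to the discrete contraction problem, and since $D_d(w) = \zero$ the successor circuit correctly flags it as the end of the line. \emph{Condition~6:} $V_{d-1}$ returns the $d$th coordinate (as an integer over the common denominator $k_d$), and along $\mathrm{fp}(0), \mathrm{fp}(1/k_d), \dots, \mathrm{fp}(w_d)$ this is strictly increasing.

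The one step with real content is the characterisation $\{p : C_{d-1}(p) = 1\} = \{\mathrm{fp}(t) : t \le w_d\}$ together with the fact that the $S_{d-1}$-edges connect these vertices in order with no spurious second line: this is exactly what Property~2 of a discrete contraction map buys us, since it guarantees that the direction $D_d$ read off at a point of the $(d-1)$-surface points consistently toward $w$, so that the test ``$D_d(p) \in \{\up,\zero\}$'' truncates the surface precisely at $w$. Everything else — that $C_{d-1}, S_{d-1}, V_{d-1}$ are polynomial-time (they only evaluate the given direction circuits and do $O(1)$ arithmetic on grid coordinates) and that the potential is monotone — is routine, and I would spell it out only briefly.
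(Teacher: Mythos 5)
Your proof is correct and follows the same route as the paper's: verify the six conditions one by one directly from the definitions of $C_{d-1}$, $S_{d-1}$, $V_{d-1}$. You are somewhat more explicit than the paper in first deriving the characterization $\{p : C_{d-1}(p) = 1\} = \{\mathrm{fp}(t) : t \le w_d\}$ from Properties 1 and 2 of a discrete contraction map, which makes the connectedness and monotonicity of the base-case line manifest rather than implicit.
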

\begin{proof}
We must check that all conditions of the inductive hypothesis hold. 

\begin{itemize}
\item Condition 1 requires that the function $\point$ is computable in
polynomial time. This is trivially true. 

\item Condition 2 requires that every point $p$ is on the $(d-1)$-surface, which
is enforced by the circuit $C_{d-1}$. 

\item Condition 3 requires that if $v$ is a vertex on the line, and $u$ is the
next vertex on the line, then the point $p = \point(S_{d-1}(v))$ agrees with $q =
\point(u)$ in dimension $d$. This is true, since the $d$th coordinate of
$S_{d-1}(p)$ is $p_d + 1$, and this agrees with the $d$th coordinate of $q$.

\item Condition 4 requires that the first point on the line has zero in the
$d$th coordinate. This is true by definition. 

\item Condition 5 requires that the last vertex on the line is a solution to the
discrete contraction problem. This is true, since by definition, the line ends
at a point $p$ that is on the $(d-1)$-surface that satisfies $D_d(p) = \zero$.
Hence, the point satisfies $D_i(p) = \zero$ for all $i$, and so it is a solution
to the discrete contraction problem.

\item Condition 6 requires that the potential function monotonically increases
along the line. This is true because the potential function of each point is the
final coordinate of that point. Since the line 
passes through each slice $s_x$ until it finds the solution, and since it always
moves from the slice $s_x$ to the slice $s_{x+1}$, we have that the final
coordinate of the points on the line monotonically increases.
\end{itemize}
\end{proof}

\smallskip

\noindent \textbf{The inductive step.}
Suppose that we have an instance $L_{i+1} = (C_{i+1}, S_{i+1}, V_{i+1})$ that satisfies
the inductive hypothesis. We will now describe how to build the instance
$L_i = (C_{i}, S_{i}, V_{i})$.

Let $v$ and $u$ be two adjacent vertices on the line defined by $L_{i+1}$. By
the inductive hypothesis, the point $p = \point(S_{i+1}(v))$ and the point $q =
\point(u)$ agree on dimensions $i+1$ through $n$, but may not agree on dimension
$i$. When we construct $L_i$, we will fill in this gap, by introducing a new
sequence of points that join $p$ with $q$.

Each of these new points will be in the $i$-surface. Let $s_j$ define the slice
$(\blank, \blank, \dots, j, p_{i+1}, p_{i+2}, \dots, p_d)$, ie., the slice where
coordinate $i$ is equal to $j$, and coordinates $i+1$ through $d$ agree with
$p$ (and, by the inductive hypothesis, $q$).
Let $r^j$ be the unique fixpoint of $s_j$, which by definition is a point on
the $i$-surface. If $D_{i}(p) = \up$, then our 
line will go through the sequence $r^{p_{i}}, r^{p_{i}+1}, \dots, r^{q_i}$,
and if $D_{i}(p) = \down$, then our line
will go through the sequence $r^{p_{i}}, r^{p_{i}-1}, \dots, r^{q_i}$. This is
analogous to how we linked two points on the one-surface by a sequence of points
on the zero-surface in our two-dimensional example.

We also introduce a new sequence of points at the start of the line. Let
$v_\text{init}$ be the first vertex of the line $L_{i+1}$. By the inductive
hypothesis, we have that $\point(v_\text{init})$ is the point on the
$i+1$-surface that is zero in dimensions $i+1$
through $d$. We create a new sequence of points starting at the point on the
$i$-surface that
is zero in dimensions $i$ through $d$, and ending 
at $\point(v_\text{init})$. This line is constructed in the same way as the
other lines, by taking the unique $i$-witness of each slice $(\blank, \blank,
\dots, j, 0, 0, \dots, 0)$, where $j$ appears in coordinate $i$ of the slice.

As we saw in the two-dimensional example, we need to remember the point $p$ in
order to determine whether any given point $r^j$ is on this line. So, a vertex
of $L_i$ will consist of a pair $(v, r)$, where $v$ is either
\begin{itemize}
\item $v$ is a vertex of $L_{i+1}$, or
\item the special symbol $\vblank$,
\end{itemize}
and $r \in P$ is a point. For each vertex $(v, r)$ we define $\point(v, r) = r$, which
can clearly be computed in polynomial time. The symbol $\vblank$ will be used in
the initial portion of the line, where we have not yet arrived at the first
vertex of $L_{i+1}$.

The circuit $C_i(v, r)$ is defined as follows.
\begin{itemize}
\item If $v \ne \vblank$, then the circuit returns $1$ if and only if all of the
following hold. Let $u = \point(S_{i+1}(v))$.
\begin{itemize}
\item $C_{i+1}(v) = 1$.
\item $r$ is on the $i$-surface.
\item $r$ and $u$ agree on coordinates $i+1$ through $d$.
\item If $D_{i}(u) = \up$ then $D_{i}(r) \in \{\up, \zero\}$ and $r_i \ge u_i$.
\item If $D_{i}(u) = \down$ then $D_{i}(r) \in \{\down, \zero\}$ and $r_i \le u_i$.
\end{itemize}

\item If $v = \vblank$, then the circuit returns $1$ if and only if all of the
following hold.
\begin{itemize}
\item $r_j = 0$ for all $j > i$.
\item $r$ is on the $i$-surface.
\item $D_{i}(r) \in \{\up, \zero\}$.
\end{itemize}
\end{itemize}
The first bullet specifies the line between $\point(v)$ and
$\point(S_{i+1}(v))$, while the second bullet specifies the line between the
initial vertex of $L_i$, and the initial vertex of $L_{i+1}$.

For each pair $(v, r)$ such that $C_i(v, r) = 1$, the circuit $S_i(v, r)$ is
defined in the following way.
\begin{itemize}
\item If $D_i(r) = \up$, then the circuit returns the vertex $(v, r')$ such that
$r'_i =
r_i + 1$, and $r'_j = r_j$ for all $j \ne i$.
\item If $D_i(r) = \down$, then the circuit returns the vertex $(v, r')$ such
that $r'_i
= r_i - 1$, and $r'_j = r_j$ for all $j \ne i$.
\item If $D_i(r) = \zero$, then the circuit performs the following operation.
\begin{itemize}
\item First it modifies $v$ so that $\point(v) = r$. This is valid because $r$
is an $i$-witness with $D_i(r) = \zero$, and therefore $r$ is an $i+1$-witness.
Let $u$ denote this new vertex of $L_i$.
\item Then it computes $r' = S_{i+1}(u)$.
\item Finally, it outputs the vertex $(u, r')$.
\end{itemize}
\end{itemize}
The first two items simply follow the direction function $D_i$. The third item
above is more complex. Once we arrive at the point with 
$D_i(r) = \zero$, we have found the $i+1$-witness that we are looking for, and
this is the next point on the line $L_{i+1}$. So we use this fact to ask
$S_{i+1}$ for the next vertex of~$L_{i+1}$, and begin following the next line.

For each pair $(v, r)$ such that $C_i(v, r) = 1$, the circuit $V_i(v, r)$ is
defined in the following way. 
\begin{itemize}
\item If $v \ne \vblank$ and $D_i(r) = \up$, then $V_i(v, r) = (k_i+1) \cdot
(V_{i+1}(v) + 1) + r_i$.
\item If $v \ne \vblank$ and $D_i(r) = \down$, then $V_i(v, r) = (k_i+1) \cdot
(V_{i+1}(v) + 1) + k_i - r_i$.
\item If $v = \vblank$, then $V_i(v, r) = r_i$.
\end{itemize}
This definition generalizes the definition that we gave in our two-dimensional
example.

\begin{lemma}
The instance $(C_{i}, S_{i}, V_{i})$ satisfies Conditions~\ref{ind:start}
through~\ref{ind:end} of the inductive hypothesis.
\end{lemma}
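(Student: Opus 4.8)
The plan is to verify each of the six inductive conditions for the instance $L_i = (C_i, S_i, V_i)$ in turn, mirroring the base-case proof but accounting for the two-part structure of the vertices $(v, r)$. First I would handle Condition~\ref{ind:start}: the circuit $\point(v, r) = r$ is trivially polynomial-time computable, and the auxiliary evaluations inside $C_i$, $S_i$, and $V_i$ (calls to $C_{i+1}$, $S_{i+1}$, $V_{i+1}$, the direction functions $D_j$, and the check that $r$ lies on the $i$-surface) are all polynomial-time by the inductive hypothesis and the fact that each $D_j$ is given by a boolean circuit, so all three circuits of $L_i$ are polynomial-sized. For Condition~2, I would observe that $C_i(v,r) = 1$ forces $r$ to be on the $i$-surface in both the $v \neq \vblank$ and $v = \vblank$ branches, so every vertex on the line satisfies it.

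Next I would check Conditions 3 and 4, which concern the structure of the successor and the start. For Condition 4, the first vertex of $L_i$ is the unique $i$-witness of the slice $(\blank,\dots,\blank, j, 0, \dots, 0)$ with the smallest eligible $j$, namely the $i$-witness that is zero in coordinates $i$ through $d$; this is exactly the point reached by the $\vblank$-branch of $C_i$ with $r_j = 0$ for $j > i$ and, walking down $D_i$, the point where $D_i(r) = \zero$ — or rather, the \emph{start} of that walk, which has $r_i = 0$ as well, giving $p_j = 0$ for all $j \geq i$. For Condition 3, I would split into cases on $D_i(r)$: when $D_i(r) \in \{\up, \down\}$, the successor $(v, r')$ changes only coordinate $i$, so it agrees with the next line-vertex on coordinates $i+1$ through $d$ (indeed on all of them except $i$); when $D_i(r) = \zero$, the successor is $(u, r')$ where $r' = S_{i+1}(u)$ and $\point(u) = r$, and here I invoke the inductive hypothesis (Condition 3 for $L_{i+1}$) to conclude that $r'$ agrees with the next $L_{i+1}$-vertex — hence with the next $L_i$-vertex after the intermediate walk — on coordinates $i+1$ through $d$, and since $r_i = u_i$, in fact on coordinate $i$ too up to the point where the walk resumes. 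The key sub-point is that overwriting $v$ so that $\point(v) = r$ is legitimate precisely because $D_i(r) = \zero$ combined with $r$ being on the $i$-surface makes $r$ an $(i+1)$-surface point, so $C_{i+1}(u) = 1$.

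For Condition 5, I would trace the line to its end: the line terminates only when $S_i$ can go no further, which happens when the underlying $L_{i+1}$-line has ended; by the inductive hypothesis that final $L_{i+1}$-point is a solution to the discrete contraction problem, i.e. $D_j(p) = \zero$ for all $j$, and the corresponding $L_i$-vertex has $\point(v,r) = r$ equal to that point (since $D_i(r) = \zero$ there is no further $i$-walk), so it is still a solution. For Condition 6 (monotonicity of the potential), I would argue that within a single segment between consecutive $L_{i+1}$-vertices, $V_{i+1}(v)$ is constant, so $V_i$ changes only through the $\pm r_i$ term, and by Property~2 of a discrete contraction map the walk moves $r_i$ strictly in the direction dictated by $D_i$ — increasing when $D_i(r) = \up$ (so the $+r_i$ term increases) and decreasing when $D_i(r) = \down$ (so the $k_i - r_i$ term increases); across the transition at a $\zero$-point, $V_{i+1}$ jumps up by at least $1$, and the multiplier $(k_i + 1)$ dominates the at-most-$k_i$ variation of the last term, so $V_i$ strictly increases there too. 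I expect the main obstacle to be this last verification — carefully checking that the $(k_i+1)$ scaling correctly ``lexicographically'' nests the inner potential inside the outer one across \emph{all} segment boundaries, including the boundary between the $\vblank$-prefix (where $V_i(v,r) = r_i$) and the first genuine $L_{i+1}$-vertex, where one must confirm the potential does not drop; this requires knowing that the $\vblank$-prefix walk is itself monotone in $r_i$ and ends at a value below $(k_i+1)(V_{i+1}(v_{\text{init}}) + 1)$, which holds since $V_{i+1}(v_{\text{init}}) \geq 0$ and $r_i \leq k_i$.
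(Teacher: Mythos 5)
Your proof is correct and mirrors the paper's own condition-by-condition verification: you check polynomial-time computability of $\point$, membership in the surface via $C_i$, the coordinate-agreement property of $S_i$ by case analysis on $D_i(r)$, the start and end of the line, and the monotonicity of $V_i$ using the $(k_i+1)$-scaling argument. You supply somewhat more detail than the paper in a couple of places (notably the explicit comparison of potentials across a segment boundary including the $\vblank$-prefix, and the explicit appeal to Property~2 of a discrete contraction map for within-segment monotonicity, where the paper relies only on the direction-function definition), but the structure and key ideas are the same.
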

\begin{proof}
We prove each condition in turn.
\begin{itemize}
\item 
Condition 1 requires that the circuit $\point$ is computable in polynomial time,
which is clearly true.

\item 
Condition 2 requires that every point on the line is on the $i$-surface. The
definition of $C_i$ ensures that this is true.

\item 
Let $(v, r)$ and $(u, t)$ be two adjacent vertices on the line. Condition 3
requires that we have that $r' = \point(S_i(v, r))$ agrees with $t$ on all
dimensions $j \ge i$. This follows because, in all three cases in the definition
of $S_i$, we have that $r'$ is a member of the slice $s_t = (\blank, \blank,
\dots, \blank, t_{i+1}, t_{i+1}, \dots, t_d)$, and $(u, t)$ is the unique
$i$-witness of the slice $s_t$. 

\item 
Condition 4 requires that the first vertex $v$ on the line have $\point(v)_j =
0$ for all $j \ge i$. This is true by definition.

\item 
Condition 5 requires that the last vertex on the line is the solution to the
discrete contraction problem. This is implied by the inductive hypothesis,
because $L_i$ ends at the last vertex of $L_{i+1}$.

\item 
Condition 6 requires that the potential function is monotonically increasing
along the line. To see this, recall that $k_i$ is the grid width of dimension
$i$. In the definition of $V_i$, We multiply $V_{i+1}(v)$ by  $k_i + 1$ to
ensure that there are at least $k_i$ gaps in the potential function between each
vertex of $L_{i+1}$. We fill these gaps using the $i$th coordinate of $r_i$,
which is either monotonically increasing (in the first case) or monotonically
decreasing (in the second case). We also shift the potentials $V_{i+1}(v)$ by
$1$ to give enough space for the initial path before the first vertex of
$L_{i+1}$.
\end{itemize}
\end{proof}

\smallskip

\noindent \textbf{Completing the proof.} 
To complete the proof of Lemma~\ref{lem:dcm2ufeopl}, it suffices to note that
the instance $(C_1, S_1, V_1)$ is a (non-partial) instance of \UFEOPL.
The inductive hypothesis guarantees that the line is connected, unique, and ends
at the unique fixpoint of $\mathcal{D}$. Furthermore, the potential is
monotonically increasing along the line.

\subsection{Proof of Lemma~\ref{lem:ufeopl2eopl}}
\label{app:ufeopl2eopl}

\noindent \textbf{From \UFEOPL to unique forward \EOML.}
We first reduce our \UFEOPL instance to a unique forward version of \EOML.
To avoid introducing another distinct line following problem, we can define
unique forward \EOML to be a \UFEOPL instance where $V(S(v)) = V(v) + 1$ for
every vertex $v$ that is on the line. For the reduction from \UFEOPL, we can use
exactly the same proof that we used to reduce \EOPL to \EOML. 

Recall that the reduction does the following operation. Let $v$ be a vertex in
the \UFEOPL instance, and let $u = S(v)$. If $V(u) \ne V(v) + 1$, that is if the
potential does not increase by exactly one between $u$ and $v$, then let $p =
V(u) - V(v)$ be the difference in potentials between the two vertices. If $p$ is
negative then we are the end of the line. Otherwise, we introduce $p - 1$ new
vertices between $u$ and $v$, each of which increases the potential by exactly
$1$.

Intuitively, there is no need to use the predecessor circuit in order to perform
this operation. This can be confirmed by inspecting
the proof of Theorem~\ref{thm:eoml2eopl}, and noting that the predecessor
circuit is not used in a way that is relevant for us. Specifically, in the
procedure defining the successor function $S'$, the predecessor circuit is only
used in Step~\ref{itm:Sfive}, where it is used to check that $P(S(v)) = v$, and
hence to
determine whether $v$ is the end of the line. In \UFEOPL, we can use $C$ to
perform this task, and so we do not need to use the predecessor circuit at all.

Note also that, as long as the promise for \UFEOPL is indeed true, then the
unique forward \EOML instance will contain exactly one line.

\smallskip

\noindent \textbf{From unique forward \EOML to \SOVL.}
Let $V$ be the set of vertices in the unique forward \EOML instance, and let
$(C, S, V)$ be the circuits used to define the instance. Let $p$ be a number
that is strictly larger than the largest possible potential that $V$ can
produce.

The set of vertices in the \SOVL instance will be pairs $(v, i)$, where $v \in
V$ is a vertex, and $i \le p$ is either an integer, or the special symbol
$\vblank$. The idea is that each vertex on the line will be represented by the
vertex $(v, \vblank)$. If $v$ is not the end of the line, then we move to
$(S(v), \vblank)$.

However, we must make sure that the end of the line is at a known distance from
the start of the line, and we use the second element to ensure this.  Once we
arrive at the vertex $u$ that is the end of the line, we then move to $(u, 1)$,
and then to $(u, 2)$, and so on, until we reach $(u, p - V(u))$. This ensures
that the end of the line is exactly $p$ steps from the start of the line.

We now formally define the \SOVL instance. The start vertex will be the start
vertex of the unique forward \EOML instance, and the target integer will be $p$.
The circuit $S'$ is defined for a vertex $(v, i)$ so that
\begin{itemize}
\item If $C(v) = 1$ and $C(S(v)) = 1$, meaning that $v$ is not the end of the line, and $i = \vblank$,
then the circuit returns $(S(v), \vblank)$. 
\item If $C(v) = 1$ and $C(S(v)) = 0$, meaning that $v$ is the end of the line, then 
\begin{itemize}
\item If $i = \vblank$ then the circuit returns $(v, 1)$.
\item If $i \ne \vblank$ and $i \le p$, then the circuit returns $(v, i+1)$.
\end{itemize}
\end{itemize}
In all other cases the vertex is not on the line, so the value returned by
$S'$ is irrelevant.
The circuit $W'$ is defined for a vertex $(v, i)$ and an integer $k$ so that
\begin{itemize}
\item If $C(v) = 1$ and $i = \vblank$, then the circuit returns $1$ if and only
and $V(v) = k$.
\item If $C(v) = 1$, $i \ne \vblank$, and $C(S(v)) = 0$, then the circuit
returns $1$ if and only if $k = V(v) + i$ and $V(v) + i \le p$.
\end{itemize}
In all other cases the circuit returns $0$.
Note that, since the potential increases by exactly one at each step of our
line, the circuit $W$ does indeed correctly identify the vertices on the line by
their distance from the start vertex.

Note that if the unique forward \EOML instance has a unique line, then the
promise of the \SOVL problem is indeed satisfied.

\smallskip

\noindent \textbf{Completing the proof of Lemma~\ref{lem:ufeopl2eopl}.} 
To complete the proof, it suffices to apply the result of Hub\'{a}\v{c}ek and
Yogev~\cite{hubavcek2017hardness}, which shows that \SOVL can be reduced to
\EOML using a technique of Bitansky, Paneth and Rosen~\cite{BPR15}. 
Moreover, it can be verified that, so long as the promise of the \SOVL
instance is satisfied, then the proof of Hub\'{a}\v{c}ek and Yogev produces an
\EOML instance that has exactly one line. Therefore, we have reduced \UFEOPL to
unique \EOPL.


\section{The Algorithms and Proofs for Section~\ref{sec:algorithms}}
\label{sec:algorithm_details}


In this section, we provide an exact algorithm for solving \LCM,
which is a promise problem guaranteed to have a rational fixpoint
of polynomial bit complexity.
Then we extend this algorithm to find an approximate fixpoint of general 
contraction maps for which there may not be an exact solution of
polynomial bit complexity.
Our algorithms work for any $\ell_p$ norm with $p \in \Natural$, and are polynomial for constant dimension $d$.
These are the first such algorithms for $p \ne 2$. 
Such algorithms were so far only known for the $\ell_2$ and $\ell_\infty$
norms \cite{HuangKhachSik99,Sik01,ShellSik03}\footnote{Our approach does not cover the $\ell_\infty$
norm, as that would require more work and not give a new result.}

\subsection{Overview: algorithm to find a fixed-point of \LCM}


The algorithm does a nested binary search using Lemmas \ref{lem:cm1} and
\ref{lem:cm2} to find fixpoints of slices with increasing numbers of free
coordinates. We illustrate the algorithm in two dimensions in
Figure~\ref{fig:exact_algo}. The algorithm is recursive. To find the eventual
fixpoint in $d$ dimensions we fix a single coordinate $s_1$, find the
unique $(d-1)$-dimensional fixpoint of $\Restr{f}{\ss}$, the
$(d-1)$-dimensional contraction map obtained by fixing the first coordinate of
the input to $f$ to be $s_1$. Let $x$ the unique fixpoint of $\Restr{f}{\ss}$
where $x_1 = s_1$. If $f(x_1) > s_1$, then the $d$-dimensional fixpoint $x^*$ of
$f$ has $x^*_1 > s_1$, and if $f(x_1) < s_1$, then $x^*_1 < s_1$ (Lemma \ref{lem:cm2}). We can thus do
a binary search for the value of $x^*_1$. Once we've found $x^*_1$, we can
recursively find the $(d-1)$-dimensional fixpoint of $\Restr{f}{\ss}$ where $s_1
= x_1$. The resulting solution will be the $d$-dimensional fixpoint. At each
step in the recursive procedure, we do a binary search for the value of one
coordinate of the fixpoint at the slice determined by all the coordinates
already fixed. For piecewise-linear functions, we know that all fixpoints are
rational with bounded bit complexity (as discussed in Section \ref{sec:discretizing}), so we can find each coordinate exactly.

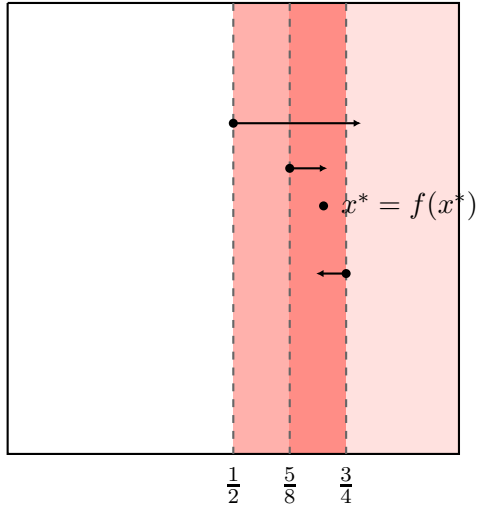
\begin{figure}[h!]
  \centering
  \begin{center}
    \begin{tikzpicture}
  \pgfmathsetmacro{\size}{3}
  \pgfmathsetmacro{\rad}{0.6}
  \coordinate (bl) at (-\size, -\size);
  \coordinate (tl) at (-\size, \size);
  \coordinate (tr) at (\size, \size);
  \coordinate (br) at (\size, -\size);
  \coordinate (first) at (0, 1.4);
  \coordinate (second) at ($(\size/2, -0.6)$);
  \coordinate (third) at ($(\size/4, 0.8)$); 
  \draw[thick] (tl) -- (tr) -- (br) -- (bl) -- (tl);
  \draw[thick, dashed, darkgray] (0, \size) -- (0, -\size);
  \draw[thick, dashed, darkgray] ($(\size/2, \size)$) -- ($(\size/2, -\size)$);
  \draw[thick, dashed, darkgray] ($(\size/4, \size)$) -- ($(\size/4, -\size)$);
    
  \fill (first) circle [radius=\rad mm];
  \fill (second) circle [radius=\rad mm];
  \fill (third) circle [radius=\rad mm];

  \coordinate (fourth) at ($(0, 0)$);
  \coordinate (fifth) at ($(0, \size/2)$);
  \coordinate (sixth) at ($(0, 3*\size/4)$);
  \coordinate (actual) at ($(1.2, 0.3)$);

  \coordinate (b1) at (0,-\size);
  \coordinate (b2) at (\size/2,-\size);
  \coordinate (b3) at (\size/4,-\size);


  \fill (actual) circle [radius=\rad mm];
  \node [right=1mm of actual] {$x^* = f(x^*)$};

  \node [below=0.5mm of b1] {$\frac{1}{2}$};
  \node [below=0.5mm of b2] {$\frac{3}{4}$};
  \node [below=0.5mm of b3] {$\frac{5}{8}$};

  \draw[->, thick] (first.center) -- ++(1.7, 0.0);
  \draw[->, thick] (second.center) -- ++(-0.4, 0.0);
  \draw[->, thick] (third.center) -- ++(0.5, 0.0);

  \begin{pgfonlayer}{background}
    \fill[pastelred, opacity=0.2] (0, -\size) rectangle (\size, \size);
    \fill[pastelred, opacity=0.4] ($(0, -\size)$) rectangle ($(\size/2, \size)$);
    \fill[pastelred, opacity=0.5] ($(\size/4, -\size)$) rectangle ($(\size/2, \size)$);
  \end{pgfonlayer}{background}
\end{tikzpicture}
  \end{center}
  \caption{An illustration of the algorithm to find a fixpoint of a piecewise-linear contraction map in two dimensions. The algorithm begins by finding a fixpoint along the slice with $x_1 = 1/2$. The fixpoint along that slice points to the right, so we next find a fixpoint along the slice with $x_1 = 3/4$. The fixpoint along that slice points to the left, so we find the fixpoint along $x_1 = 5/8$. We successively find fixpoints of one-dimensional slices, and then use those to do a binary search for the two-dimensional fixpoint. The red regions are the successive regions considered by the binary search, where each successive step in the binary search results in a darker region.} 
  \label{fig:exact_algo}
\end{figure}

Using this algorithm we obtain the following theorem.

\begin{theorem}
  Given a $\LinearFIXP$ circuit $C$ encoding a contraction map $f : [0,1]^d\to [0,1]^d$ with respect to any $\ell_p$ norm, there is an algorithm to find a fixpoint of $f$ in time $O(L^d)$ where $L = \poly(\Card{C})$ is an upper bound on the bit-length of the fixpoint of $f$.
\end{theorem}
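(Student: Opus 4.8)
The plan is to formalize the recursive nested binary search outlined in Section~\ref{sec:algorithms} and to prove correctness and the running-time bound by induction on the number of free coordinates of a slice. I would define a recursive procedure $\FindFP(\ss)$ that, given a slice $\ss = (a_1,\dots,a_m,\blank,\dots,\blank) \in \Slice_d$ whose fixed coordinates form the prefix $\{1,\dots,m\}$, returns the unique fixpoint $x^{\ss}$ of the slice restriction $\restr{f}{\ss}$ (which exists and is unique by Lemma~\ref{lem:cm1}); that is, the unique $x \in [0,1]^d$ with $x_j = a_j$ for $j \le m$ and $f(x)_j = x_j$ for $j > m$. If $m = d$, $\FindFP$ returns $(a_1,\dots,a_d)$. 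Otherwise it sets $i = m+1$ and binary searches over candidate values $v$ for coordinate $i$: for each $v$ it recursively computes $y = \FindFP(\ss_v)$, where $\ss_v$ is $\ss$ with coordinate $i$ additionally fixed to $v$; it evaluates $f$ at $y$ and inspects the sign of $f(y)_i - v$; if it is positive it searches larger values of $v$, if negative smaller values, and if zero it returns $y$.

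Correctness I would prove by downward induction on $m$, the claim being that $\FindFP(\ss)$ returns $x^{\ss}$. The base case $m = d$ is immediate. For the inductive step, let $\ss$ have free coordinates $\{m+1,\dots,d\}$ and $i = m+1$. For any candidate $v$, the inductive hypothesis gives that $y = \FindFP(\ss_v)$ is the fixpoint of $\restr{f}{\ss_v}$, so $y_j = a_j$ for $j \le m$, $y_i = v$, and $f(y)_j = y_j$ for all $j > i$. Applying Lemma~\ref{lem:cm2} to the pair $(\ss, \ss_v)$ — which satisfies $\fixed(\ss_v) = \fixed(\ss) \cup \{i\}$ and agrees with $\ss$ on $\fixed(\ss)$ — with fixpoints $x^{\ss}$ and $y$ shows that the sign of $x^{\ss}_i - v$ matches the sign of $f(y)_i - v$. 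In the remaining case $f(y)_i = v$, the point $y$ satisfies $f(y)_j = y_j$ for \emph{every} $j > m$ together with $y_j = a_j$ for $j \le m$, hence is the fixpoint of $\restr{f}{\ss}$ by uniqueness, i.e.\ $y = x^{\ss}$, and returning $y$ is correct. Thus the sign of $f(y)_i - v$ is exactly a comparison oracle for the fixed target $x^{\ss}_i$, so a binary search over a sufficiently fine candidate set converges to $x^{\ss}_i$. The quantitative input here is the bound from Section~\ref{sec:discretizing}: running its grid-size construction with coordinates fixed in the order used by $\FindFP$ (a symmetric variant of that computation, relying on Lemmas~\ref{lem:ruta} and~\ref{lem:lcpsize}) yields integers $k_1,\dots,k_d$ with $\max_i \log k_i \le L := \poly(\Card C)$ such that, for every slice arising in the recursion, the relevant coordinate of its fixpoint lies in $I_{k_i}$. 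Binary searching over $I_{k_i}$ therefore finds $x^{\ss}_i$ exactly in $O(\log k_i) = O(L)$ steps, and once $v = x^{\ss}_i$ is queried we obtain sign $0$ and return $x^{\ss}$.

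For the running time, let $N_j$ be the number of recursive calls and circuit evaluations needed to resolve a slice with $j$ free coordinates. Each of the $O(L)$ binary-search steps makes one recursive call on a slice with $j-1$ free coordinates and $O(1)$ evaluations of $f$ and arithmetic operations, so $N_j = O(L)\,(N_{j-1}+1)$ with $N_0 = O(1)$; unrolling gives $N_d = O(L^d)$. Every point $y$ at which $\FindFP$ evaluates the \LinearFIXP circuit $C$ has all coordinates rational of bit-length at most $L$ (they are grid points or coordinates of fixpoints of slice restrictions), and evaluating a circuit of $\max,\min,+,-,\mz$ gates at such a point — as well as the sign comparisons and the binary-search bookkeeping — keeps all numbers of bit-length $\poly(\Card C)$ and takes $\poly(\Card C)$ time (this is routine, e.g.\ from the fact that the linear pieces of $C$ have polynomial bit-length). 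Hence the total running time is $O(L^d)$, polynomial in $\Card C$ for every fixed $d$, as claimed.

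The main obstacle — and the spot where care is genuinely needed — is the interplay between exactness and the recursive structure: the search for coordinate $i+1$ is valid only after coordinate $i$ has been fixed to its \emph{exact} fixpoint value, since an approximate value would perturb the lower-dimensional fixpoint $y$ and could flip the sign of $f(y)_i - v$ that steers the outer search. This is precisely why the polynomial bit-length bound of Section~\ref{sec:discretizing} is indispensable here, and why the boundary event $f(y)_i = v$ must be argued — via uniqueness of slice fixpoints — to force $y = x^{\ss}$, rather than being dismissed as a degenerate coincidence.
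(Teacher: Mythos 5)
Your proposal is correct and matches the paper's proof in Appendix~\ref{subsec:exact_algo_details} essentially exactly: the same recursive $\FindFP$ procedure steered by Lemma~\ref{lem:cm2}, the same downward induction (you take the base case at $m=d$, the paper at $|\fixed(\ss)|=d-1$, which is only a cosmetic difference), the same handling of the boundary event $f(y)_i=v$ via uniqueness of slice fixpoints, and the same appeal to the bit-length bounds of Section~\ref{sec:discretizing} to terminate each binary search at the exact rational coordinate, yielding $O(L^d)$ oracle calls. You also correctly identify the one genuinely delicate point---the reduction only works with \emph{exact} fixpoints at each level---which is exactly the issue the paper confronts (and which forces the separate approximate algorithm in Appendix~\ref{subsec:approx_algo_details}).
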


\todo[inline]{Is it worth separating out the results under the heading ``Fixing the grid size'' into a more general form so that we can state a lemma like: Given a $\LinearFIXP$ circuit $C$ computing a contraction map $f$, all fixpoints of slice restrictions with rational coordinates having denominator at most $2^L$ are rational with denominator at most $2^L$ for some $L = \poly(\Card{C})$?}

The full details of the algorithm can be found in Appendix~\ref{subsec:exact_algo_details}.

\subsection{Overview: algorithm to find an approximate fixed-point of \CM}

Here we generalize our algorithm to find an approximate fixpoint of an arbitrary function
given by an arithmetic circuit, i.e., our algorithm solves \CM, which is 
specified by a circuit $f$ that represents the contraction map,\footnote{The algorithm works even if $f$ is given as an arbitrary black-box, as long as it is guaranteed to be a contraction map.} a
$p$-norm, and $\eps$. Again, let $d$ denote the dimension of the problem, i.e. the number
of inputs (and outputs) of $f$.
Let $x^*$ denote the unique exact fixpoint for the contraction map $f$.
We seek an approximate fixpoint, i.e., a point for which $\Norm{f(x)-x}_p \leq \eps$. 

We do the same recursive binary search as in the algorithm above, but
at each step of the algorithm instead of finding an
exact fixpoint, we will only find an approximate fixpoint of $\Restr{f}{\ss}$. The difficulty in this case will come from the fact that Lemma~\ref{lem:cm2} does not apply to approximate fixpoints. Consider the example illustrated in Figure~\ref{fig:approx_algo}. In this example, $y$ is the unique fixpoint of the slice restriction along the gray dashed line. By Lemma~\ref{lem:cm2}, $(f(y)_1 - y_1)(x^*_1 - y_1) \geq 0$ so if we find $y$, we can observe that $f(y)_1 > y_1$ and recurse on the right side of the figure, in the region labeled $\mathcal{R}$. If we try to use the same algorithm but where we only find approximate fixopints at each step, we'll run into trouble. In this case, if we found $z$ instead of $y$, we would observe that $f(z)_1 < z_1$ and conclude that $x^*_1 < z_1$, which is incorrect. As a result, we would limit our search to the region labeled $\mathcal{L}$, and wouldn't be able to find $x^*$. 
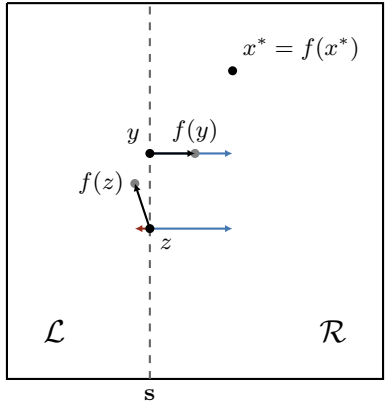
\begin{figure}[h!]
  \centering
  \begin{center}
    \begin{tikzpicture}
  \pgfmathsetmacro{\size}{2.5}
  \pgfmathsetmacro{\rad}{0.6}
  \coordinate (apxc) at (-0.6, -0.5);
  \coordinate (actualc) at (-0.6, 0.5);
  \coordinate (globalc) at (0.5, 1.6);
  \coordinate (apxc end) at ($(apxc) +(-0.2, 0.6)$);
  \coordinate (actualc end) at ($(actualc) +(0.6, 0)$);
  \coordinate (bl) at (-\size, -\size);
  \coordinate (tl) at (-\size, \size);
  \coordinate (tr) at (\size, \size);
  \coordinate (br) at (\size, -\size);
  \coordinate (slice top) at (-0.6, \size);
  \coordinate (slice bottom) at (-0.6, -\size);
    
  \draw[thick] (tl) -- (tr) -- (br) -- (bl) -- (tl);
  \draw[thick, dashed, darkgray] (slice bottom) -- (slice top);
  
  \fill (apxc) circle [radius=\rad mm] node (apx) {}; 
  \fill (actualc) circle [radius=\rad mm] node (actual) {}; 
  \fill[gray] (apxc end) circle [radius=\rad mm, fill=darkgray] node (apx end) {}; 
  \fill[gray] (actualc end) circle [radius=\rad mm, fill=darkgray] node (actual end) {}; 
  \fill (globalc) circle [radius=\rad mm] node (global) {}; 
    
  \draw[->, thick] (apx.center) -- (apxc end) node (apx end) [left] {\footnotesize $f(z)$};
  \draw[->, thick] (actual.center) -- (actualc end) node (actual end) [above] {\footnotesize $f(y)$}; 
  \node [above right] at (global) {\footnotesize $x^* = f(x^*)$};
  \node [above left] at (actual) {\footnotesize $y$};
  \node [below right] at (apx) {\footnotesize $z$};
  \node [below] at (slice bottom) {\footnotesize $\ss$};
  \begin{pgfonlayer}{background}
    \node[above right=0.5cm of bl] (L) {\large $\mathcal{L}$}; 
    \node[above left=0.5cm of br] (R) {\large $\mathcal{R}$}; 
    \draw[->, thick, darkpastelblue] (apx.center) -- ++(1.1, 0.0);
    \draw[->, thick, darkpastelblue] (actual.center) -- ++(1.1, 0);
    \draw[->, thick, darkpastelred] (apx.center) -- ++(-0.2, 0.0);
  \end{pgfonlayer}
\end{tikzpicture}
  \end{center}
  \caption{A step in the recursive binary search. Here, $x^*$ is the fixpoint for the original function, $y$ is the fixpoint for the slice restriction $\Restr{f}{\ss}$ along the dashed gray line, and $z$ is an approximate fixpoint to the slice restriction.}
  \label{fig:approx_algo}
\end{figure}

When looking for an approximate fixpoint, we'll have to choose a different precision $\eps_i$ for each level of the recursion so that either the point $x$ returned by the $i$th recursive call to our algorithm satisfies $\Abs{f(x)_i - x_i} > \eps_i$ and we can rely on it for pivoting in the binary search, or $\Abs{f(x)_i - x_i} \leq \eps_i$ and we can return $x$ as an approximate fixpoint to the recursive call one level up. Each different $\ell_p$ norm will require a different choice of $\Paren{\eps_i}_{i=1}^d$.

Using this idea we are able to obtain the following results:

\begin{theorem}
For a contraction map $f:[0,1]^d \to [0,1]^d$ with respect to the $\ell_1$ norm, there is an algorithm compute a point $v\in [0,1]^d$ such that $\Norm{f(v) - v}_1 < \eps$ in time $O(d^d\log(1/\eps))$. 
\end{theorem}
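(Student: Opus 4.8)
The plan is to run the same nested recursive binary search that drives the algorithm for \LCM and is sketched in the overview of the general \CM algorithm. To compute an $\eps$-approximate fixpoint of $f$ on the whole cube (an $i$-slice with $i=d$), fix a free coordinate $j$ to a value $t$, recursively obtain an approximate fixpoint $x$ of the $(d-1)$-dimensional slice restriction $\restr{f}{\ss'}$ where $\ss'_j=t$, read off the residual $f(x)_j - t$ in coordinate $j$, and binary-search on $t$; the recursion bottoms out at a one-dimensional binary search, which converges geometrically because a one-dimensional slice restriction of a contraction is a strictly decreasing, $(1+c)$-Lipschitz scalar function with a unique zero. Throughout, $c<1$ is the contraction constant, treated as fixed exactly as in \LCM.

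The obstacle is that Lemma~\ref{lem:cm2} pins down the pivot direction only for the \emph{exact} fixpoint of $\restr{f}{\ss'}$; the naive repair—insisting that the point returned by the recursive call lie within $\delta/(1-c)$ of that exact fixpoint—forces the target precision to shrink multiplicatively with each recursion level and, combined with the binary-search depth, produces the $\log^d(1/\eps)$ factor that appears for general $\ell_p$ in Theorem~\ref{thm:alggeneral}. A dedicated $\ell_1$ argument sidesteps this via additivity, $\Norm{v}_1 = \sum_k \Abs{v_k}$. Concretely I would prove the following pivot lemma. Let $\ss$ be an $i$-slice, $\ss'$ the $(i-1)$-slice obtained from $\ss$ by additionally fixing the free coordinate $j$ to $t$, $y$ the exact fixpoint of $\restr{f}{\ss}$, and $x \in H(\ss')$ with $\Norm{\Restr{f}{\ss'}(x) - \Restr{x}{\ss'}}_1 < \delta$. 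Then: (i) if $f(x)_j - t > \delta$ then $y_j \ge t$, and symmetrically if $f(x)_j - t < -\delta$ then $y_j \le t$; and (ii) if $\Abs{f(x)_j - t} \le \theta$ then $\Norm{\Restr{f}{\ss}(x) - \Restr{x}{\ss}}_1 < \theta + \delta$. Part (ii) is immediate from additivity since $x_j = t$. For part (i), argue by contradiction: if $y_j < t$, then using $f(y)_k = y_k$ for all $k \in \free(\ss)$, that $x$ and $y$ agree off $\free(\ss)$, the reverse triangle inequality on the coordinates in $\free(\ss') = \free(\ss)\setminus\Set{j}$, and $\Abs{f(x)_j - y_j} > \Abs{t - y_j} + \delta$, one obtains $\sum_{k\in\free(\ss)}\Abs{f(x)_k - f(y)_k} > \sum_{k\in\free(\ss)}\Abs{x_k - y_k}$, which contradicts Lemma~\ref{lem:cm1} since the right-hand side is positive (as $x_j = t \neq y_j$). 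Crucially, no factor $1/(1-c)$ enters, because the argument never passes through the exact fixpoint of the sub-slice $\ss'$.

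Given the pivot lemma, the target precision need only change by a bounded factor per level: solve the $i$-slice subproblem to precision $\eps_i$, with $\eps_d = \eps$ and $\eps_{i-1} = \alpha\eps_i$ for a constant $\alpha = \Theta(1-c) \in (0,1)$, so the innermost precision is $\eps_1 = \Theta((1-c)^{d-1})\,\eps$. At level $i$ we binary-search coordinate $j$, maintaining an interval known (by part (i)) to contain $y_j$: at each step call the level-$(i-1)$ routine to precision $\eps_{i-1}$ to get $x$; return $x$ as soon as $\Abs{f(x)_j - t} \le \eps_i - \eps_{i-1}$ (then by part (ii) the $\restr{f}{\ss}$-residual of $x$ is below $\eps_i$), and otherwise halve the interval on the side given by $\sign(f(x)_j - t)$. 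Using that a slice fixpoint moves $O(1/(1-c))$-Lipschitzly as the fixed coordinate moves, the stopping condition is met once the interval has length $O((1-c)\eps_i)$, which is why the choice $\alpha = \Theta(1-c)$ makes each level terminate; by Lemma~\ref{lem:cm1} every slice restriction is a contraction, so the procedure is well-defined, and unrolling gives $\Norm{f(v)-v}_1 < \eps_d = \eps$ for the returned $v$.

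The main obstacle is the running-time analysis: a black-box unrolling of $d$ nested binary searches of depth $O(\log(1/\eps))$ only yields a $\log^d(1/\eps)$ bound, whereas the claimed $O(d^d\log(1/\eps))$ has $\log(1/\eps)$ to the first power. Getting there requires organizing the search so that the $\Theta(\log(1/\eps))$ work needed for the final precision is charged to the innermost level only and not re-incurred at every level—again exploiting the additive structure of $\ell_1$ together with the Lipschitz dependence of slice fixpoints on the coordinates being fixed, so that the outer levels contribute only $\poly(d)$ bookkeeping per step and the recursion multiplies out to $O(d^d\log(1/\eps))$. Making this amortization precise, in particular tracking how the search windows at all $d$ levels interact as outer coordinates are refined, is the delicate part of the proof, and is exactly why the $\ell_1$ case needs separate treatment from the $2 \le p < \infty$ case of Theorem~\ref{thm:alggeneral}.
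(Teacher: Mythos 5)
Your approach is essentially the paper's: nest $d$ binary searches, one per coordinate, and use the additivity of $\ell_1$ to obtain a pivot lemma that works with \emph{approximate} fixpoints of sub-slices rather than exact ones. Your pivot lemma (i) is correct, and your proof sketch of it (reverse triangle inequality over $\free(\ss')$, together with $f(x)_j - y_j > |x_j - y_j| + \delta$, contradicting Lemma~\ref{lem:cm1}) goes through; it is a slightly cleaner packaging of the paper's Lemma~\ref{lem:approx_cm-l1}, which is stated coordinatewise with a schedule $\eps_i$ rather than in terms of an aggregate residual bound $\delta$. One small improvement available over your version: the $c$-dependence in your schedule $\eps_{i-1} = \Theta(1-c)\,\eps_i$ is unnecessary. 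The paper just takes $\eps_i = \eps/2^{2i}$, independent of $c$; the only thing the pivot lemma needs is $\eps_k > \sum_{j>k}\eps_j$, which any fixed geometric ratio $< 1/2$ provides. The termination step is also handled differently: rather than invoking the $O(1/(1-c))$-Lipschitz motion of slice fixpoints as you do, the paper proves a dedicated stopping lemma (Lemma~\ref{lem:approx_cm-l1_final}) showing that once $t^h_k - t^l_k < \eps_k/2$, one of the two endpoint recursive calls must already satisfy the required residual bound, again by an $\ell_1$ cancellation argument.

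The more important point concerns your ``main obstacle.'' You are right that a straightforward unrolling of $d$ nested binary searches, each of depth $\Theta(\log(1/\eps_i))$, gives $\prod_{i=1}^d \Theta(i + \log(1/\eps)) = O(d^d\log^d(1/\eps))$ and not $O(d^d\log(1/\eps))$. But the amortization you are hoping for does not exist in the paper: the paper's own proof of this theorem explicitly concludes ``The total runtime will thus be bounded by $O(d^d\log^d(1/\eps))$,'' which contradicts the $O(d^d\log(1/\eps))$ stated in the theorem. The exponent on $\log(1/\eps)$ in the theorem statement appears to be a typographical error, and you should not expect to be able to charge all $\Theta(\log(1/\eps))$ work to the innermost level --- each outer pivot moves the sub-slice fixpoint, so the inner binary search must be rerun from scratch, and the search windows at different levels do not persist in a way that would support such an amortization. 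So, apart from the avoidable $c$-dependence in your schedule, your proposal is correct and matches the paper, and the remaining ``delicate part of the proof'' you flag is not a gap in your argument but an error in the stated bound.
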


\begin{theorem}
  For a contraction map $f:[0,1]^d\to [0,1]^d$ under $\Norm{\cdot}_p$ for $2 \leq p < \infty$, there is an algorithm to compute a point $v\in [0,1]^d$ such that $\Norm{f(v) - v}_p < \eps$ in time $O(p^{d^2}\log^d(1/\eps)\log^d(p))$.
\end{theorem}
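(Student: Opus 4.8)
The plan is to run the same recursive nested binary search used in the exact \LCM algorithm, but to replace each exact slice‑fixpoint computation by an \emph{approximate} one, choosing the accuracy separately at each recursion depth so that a wrong pivot can occur only when the candidate coordinate is already close enough to be accepted. Concretely, the recursive routine \textsc{Solve} takes a slice $\ss\in\Slice_d$ that fixes coordinates $1,\dots,i-1$ together with an accuracy target $\gamma$, and returns a point $z$ with $z=\restr{z}{\ss}$ and $\Norm{\Restr{f}{\ss}(z)-\Restr{z}{\ss}}_p\le\gamma$; the top‑level call uses $\ss=(\blank,\dots,\blank)$ and $\gamma=\eps$, so (by Lemma~\ref{lem:cm1}, which makes $f$ itself the relevant contraction at the top) its output is the desired point with $\Norm{f(v)-v}_p<\eps$. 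Inside \textsc{Solve} one binary‑searches on coordinate $i$: for a candidate value $t$ form the slice $\ss'$ that additionally fixes coordinate $i$ to $t$, recursively call $\textsc{Solve}(\ss',\gamma')$ to obtain a point $z$, and pivot on the sign of $f(z)_i-t$. At the deepest level only coordinate $d$ is free, $\Restr{f}{\ss}$ is one–dimensional, and the binary search for its fixpoint uses exact evaluations of $f$, so all of its pivots are exact.

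Two quantitative lemmas drive correctness. First, the standard Banach estimate (using Lemma~\ref{lem:cm1}) shows that a $\gamma'$‑approximate fixpoint of $\Restr{f}{\ss'}$ lies within $\gamma'/(1-c)$ in $\Norm{\cdot}_p$ (on the free coordinates) of the true fixpoint $y$ of $\Restr{f}{\ss'}$, hence $\Abs{f(z)_i-f(y)_i}\le c\,\gamma'/(1-c)$. Second, I will prove a quantitative strengthening of Lemma~\ref{lem:cm2}: if $x$ is the true fixpoint of $\Restr{f}{\ss}$ then $\Abs{f(y)_i-y_i}\ge(1-c)\Abs{x_i-t}$. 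This follows by observing that $y$, viewed with coordinate $i$ made free again, is an $\Abs{f(y)_i-y_i}$‑approximate fixpoint of $\Restr{f}{\ss}$ (its only nonzero residual coordinate is $i$), and invoking Banach once more, so that $\Abs{t-x_i}\le\Abs{f(y)_i-y_i}/(1-c)$. Combining these with the sign statement of Lemma~\ref{lem:cm2}: whenever the observed signal $\Abs{f(z)_i-t}$ exceeds the ``noise floor'' $c\,\gamma'/(1-c)$, its sign agrees with that of $x_i-t$ and the pivot is correct; and if the signal is below the floor then $\Abs{x_i-t}\le c\,\gamma'/(1-c)^2$, so $t$ is already an acceptable value for coordinate $i$. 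Consequently $\textsc{Solve}(\ss',\cdot)$ needs only enough iterations to shrink its search interval to width $O(\gamma'/(1-c)^2)$, after which its returned point has residual $O(\gamma'/(1-c)^3)$ in coordinate $i$ and residual $\le\gamma'$ in the remaining free coordinates.

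The delicate part, and the main obstacle, is the bookkeeping of the accuracies $\gamma=\gamma_1>\gamma_2>\cdots>\gamma_d$. At depth $i$ we must split the budget so that the coordinate‑$i$ residual $\beta_i$ and the sub‑call residual $\gamma_{i+1}$ satisfy $\beta_i^p+\gamma_{i+1}^p\le\gamma_i^p$ — since the final error is measured in $\Norm{\cdot}_p$ and the coordinates pinned down at different depths must be recombined — while simultaneously keeping $\gamma_{i+1}$ small enough relative to $\beta_i$ and $1-c$ for the pivot guarantee above. Resolving this $\ell_p$‑geometric constraint is where powers of $p$ enter: bounding how much $\Norm{\cdot}_p$ (equivalently its $p$‑th power) can move when a single coordinate is perturbed on $[0,1]$ costs a factor of order $p$, because $\frac{d}{dt}t^p\le p$ there, and chaining this loss through the $d$ nested levels forces $\gamma_{i+1}$ to be smaller than $\gamma_i$ by a factor polynomial in $p$. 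Each level's binary search then costs $O(\log(1/\eps)+\poly(d,\log p))$ iterations, and multiplying the $d$ nested iteration counts yields the claimed $O(p^{d^2}\log^d(1/\eps)\log^d(p))$ bound. The case $p=1$ has to be treated separately: the $\ell_1$ sensitivity estimate and budget split take a different form and give the sharper $O(d^d\log(1/\eps))$ bound stated for that norm, while $p=\infty$ is deliberately excluded because these estimates degenerate (and a polynomial‑for‑constant‑$d$ algorithm is already known there).
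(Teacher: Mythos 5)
Your approach is sound in its core ideas but genuinely different from the paper, and your running-time accounting does not track your own argument.

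\textbf{Where you diverge.} The paper never invokes the Banach stability estimate $\Norm{z-x^*}\le\Norm{f(z)-z}/(1-c)$ and in fact makes no use of the value of the contraction constant $c$ at all. Its pivoting lemma (Lemma~\ref{lem:approx_cm-lp}) is a direct $\ell_p$ inequality: assuming the wrong sign, it shows $\Norm{\Restr{f}{\ss}(v)-\Restr{x^*}{\ss}}_p^p > \Norm{\Restr{v}{\ss}-\Restr{x^*}{\ss}}_p^p$, contradicting contraction, \emph{provided} $\eps_k^p > \sum_{j>k} p\eps_j$. That single inequality is what forces the doubly‑exponential schedule $\eps_i = \eps^{p^i}p^{-2\sum_{j\le i}p^j}$, and multiplying the per‑level binary‑search depths $O(p^i\log(1/\eps)\log p)$ over $d$ levels is exactly how $p^{d^2}$ appears. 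Your route instead uses the contraction constant through Banach, and your quantitative sharpening of Lemma~\ref{lem:cm2}, namely $\Abs{f(y)_i-y_i}\ge(1-c)\Abs{x_i-t}$, is correct and genuinely nice: $y$ really is an $\Abs{f(y)_i-y_i}$‑approximate fixpoint of $\Restr{f}{\ss}$ because its residual is concentrated in one coordinate, and Banach finishes it.

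\textbf{The gap.} Your analysis does not produce the $p^{d^2}$ factor, and the sentence attributing it to ``$\frac{d}{dt}t^p\le p$ on $[0,1]$'' is not connected to the argument you actually built. In your framework the pivot is reliable whenever $\Abs{x_i-t} > c\gamma_{i+1}/(1-c)^2$, so with $\alpha=O(c/(1-c)^2)$ the per‑level degradation is $\gamma_{i+1}\le\gamma_i/(\alpha^p+1)^{1/p}=\Theta(\gamma_i/\alpha)$; the binary search at level $i$ then costs $O(\log(1/\eps)+d\log\alpha)$ iterations, and the total is $O\bigl((\log(1/\eps)+d\log(1/(1-c)))^d\bigr)$. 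There is no $p$ in this bound — the contraction factor $c$ has absorbed the role that $p$ plays for the paper. For $c$ treated as a fixed constant this is actually \emph{stronger} than the claimed bound (it is dominated by $p^{d^2}\log^d(1/\eps)\log^d p$ for $p\ge 2$), so your approach does prove the theorem, but not for the reason you state, and not by recovering the paper's constants. It is also worth noting the trade‑off you are making: the paper's bound is uniform in $c$ (it only uses $c<1$), whereas yours degrades as $c\to 1$; a careful writeup should make that explicit rather than asserting agreement with the paper's $p$‑dependence.

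One further loose end to tighten: after the first potentially wrong pivot $x_i$ may leave the search interval, so ``shrink the interval to width $O(\gamma'/(1-c)^2)$'' is not automatically correct. The clean fix, which the paper also uses, is to stop and return $z$ as soon as the observed residual $\Abs{f(z)_i-t}$ falls below the acceptance threshold; before that happens every pivot is guaranteed correct, and termination follows from $\Abs{f(z)_i-t}\le(1+c)\text{noise}+2\Abs{x_i-t}$ once the interval is small.
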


The full details of the algorithms can be found in Appendix~\ref{subsec:approx_algo_details}.

\subsection{Details: finding a fixed-point of \LCM}
\label{subsec:exact_algo_details}
Suppose $f$ is piecewise-linear; it follows that the coordinates of the unique fixpoints of $\Restr{f}{\ss}$ will be rational numbers with bounded denominators. Consider the values of $k_i$'s computed in Section \ref{sec:discretizing}. 
The analysis in the section tells us that if we consider a slice $\ss$ fixing any $i$ coordinates to numbers with denominators $k_d,\dots,k_{d-i+1}$ then the
unique fixed-point of $\Restr{f}{\ss}$ will have coordinates with denominator bounded by $k_{d-i}$. Furthermore, $k_1$ is the largest among all $k_i$'s and its bit-length is bounded by polynomial in the size of the circuit $C$ representing the given \LCM instance.

Let $\Slice_d(\hh)$ be the set of slices with fixed coordinates having rational values with denominator $h_i$ in the $i$th coordinate:
\[ \ss \in \Slice_d(\hh) \iff \ss \in \Slice_d\ \text{and}\ s_i \in \Set{0/h_i,1/h_i,\dotsc,h_i/h_i},\ \forall i \in \fixed(\ss)\text{.} \] 
From the above discussion, define for each $$i\in\{1,\dots, d\}, \ \ \ L_i=\ln(k_{d-i+1}).$$ We will design an algorithm assuming that 
upper bound of $2^{L_i}$ on the $i$th coordinate denominators of fixpoints for any slice restriction $\Restr{f}{\ss}$ where $\ss \in \Slice_d(2^{L_1},\dots, 2^{L_d})$. 

\begin{algorithm}[t]
	\caption{Algorithm for \LCM}\label{alg}
	
	\begin{algorithmic}[1]
          \State Input: A slice $\ss \in \Slice_d(2^{L_1},\dots,2^{L_d})$ where $\fixed(\ss) = [k]$ for some $k \leq d$.
          \State Output: A point $y$ 
		such that $\Restr{f}{\ss}(y) = \Restr{y}{\ss}$ and $y = \restr{y}{\ss}$.
          \Function{FindFP}{$\ss$}
              \State Let $k = \Card{\fixed(\ss)}$.
              \If{$k= d$} \Return $\ss$ \EndIf
              \State Set $k \gets k + 1$. Set $\tt^l \gets \ss$, $\tt^h \gets \ss$.
              \State Set $t^l_k \gets 0$, $t^h_k \gets 1$. 
              \State Set $v^l \leftarrow \FindFP(\tt^l)$, and  $v^h \leftarrow \FindFP(\tt^h)$.
              \If{$f(v^l)_k = t^l_k$} \Return $v^l$. \EndIf
              \If{$f(v^h)_k = t^h_k$} \Return $v^h$. \EndIf
              \State Set $\tt \gets \ss$
                  \While{$t^h_k - t^l_k > \frac{1}{2^{(L_k+1)}}$} 
                      \State Set $t_k \leftarrow \frac{(t^h_k + t^l_k)}{2}$.
                      \State Set $v \leftarrow\FindFP(\tt)$. \label{alg:vv}
                      \If{$f(v)_k = t_k$}  \Return $v$. \EndIf
                      \If{$f(v)_k > t_k$} Set $t^l_k \leftarrow t_k$
                      \Else\ Set $t^h_k \leftarrow t_k$. \EndIf
                  \EndWhile
              \State $t_k \leftarrow$ unique number in $(t^l_k, t^h_k)$ with denominator at most $2^{L_k}$. 
              \State \Return $\FindFP(\tt)$.
              \EndFunction 
\end{algorithmic}
\end{algorithm}

\smallskip

\noindent \textbf{Analysis.}
Given a $k < d$ and $\ss \in \Slice_d(2^{L_1},\dots,2^{L_d})$ with $\fixed(\ss) = [k]$, from Lemma~\ref{lem:cm1} we know that the restricted function $\Restr{f}{\ss}$ is also contracting. We will show that Algorithm~\ref{alg} computes a fixpoint of $\Restr{f}{\ss}$. 
Since the algorithm is recursive, we will prove its correctness by induction. The next lemma establishes the base case of induction and follows by design of the algorithm and is equivalent to finding a fixpoint of a one dimensional function.

\begin{lemma}\label{lem:cm-ind1}
If $\Card{\fixed(\ss)}=d-1$ then $\FindFP(\ss)$ returns a $v$ such that $v_i = s_i$ for $i \in \fixed(\ss)$ and $f(v)_d = v_d$.  
\end{lemma}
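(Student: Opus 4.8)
The plan is to treat the case $\Card{\fixed(\ss)} = d-1$ as an ordinary one-dimensional binary search for the single free coordinate of the unique fixpoint of $\restr{f}{\ss}$, and to check that the stopping rule together with the rounding step recover that coordinate exactly. First I would fix the relevant objects. Since $\free(\ss) = \Set{d}$, Lemma~\ref{lem:cm1} says $\restr{f}{\ss}$ is a contraction map with respect to $\Norm{\cdot}_p$ with Lipschitz constant $c \in (0,1)$, and (having a single free coordinate, together with the uniqueness of fixpoints of contraction maps) it has a unique fixpoint: a unique $x^*$ with $x^*_i = s_i$ for all $i \in [d-1]$ and $f(x^*)_d = x^*_d$. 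Moreover the grid-size analysis of Section~\ref{sec:discretizing}, recorded as the standing assumption preceding Algorithm~\ref{alg}, gives that $x^*_d$ is rational with denominator at most $2^{L_d}$. The key auxiliary fact is the pivot rule: for $t \in [0,1]$, writing $v = (s_1,\dots,s_{d-1},t)$ and letting $\ss'$ be $\ss$ with coordinate $d$ additionally fixed to $t$, the point $v$ is the unique fixpoint of $\restr{f}{\ss'}$, so Lemma~\ref{lem:cm2} with distinguished coordinate $d$ gives $(x^*_d - v_d)(f(v)_d - v_d) > 0$ whenever $f(v)_d \neq v_d$, while $f(v)_d = v_d$ forces $v = x^*$ by uniqueness.

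With these in hand I would verify the invariant maintained by $\FindFP(\ss)$: at each point, either the call has already returned some $v$ with $v_i = s_i$ for $i \le d-1$ and $f(v)_d = v_d$, or $t^l_d < x^*_d < t^h_d$. This holds on entry by applying the pivot rule to $v^l = (s,0)$ and $v^h = (s,1)$ --- either one of these is returned, or $f(v^l)_d > 0$ and $f(v^h)_d < 1$, giving $0 < x^*_d < 1$ --- and each loop iteration bisects, forms $v = (s,t_d)$, and either returns $v = x^*$ or, according to the sign of $f(v)_d - t_d$, moves one endpoint strictly past $x^*_d$ on the correct side; so the invariant is preserved and $t^h_d - t^l_d$ is halved. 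Since the width starts at $1$ and $L_d$ is polynomial in $\size(C)$, the loop terminates after polynomially many steps with $t^h_d - t^l_d \le 2^{-(L_d+1)}$ and still $t^l_d < x^*_d < t^h_d$. In the final step, the open interval $(t^l_d, t^h_d)$ is short enough --- given the denominator bound $2^{L_d}$ on coordinates of fixpoints --- to contain exactly one rational with denominator at most $2^{L_d}$; since $x^*_d$ is such a rational and lies in this interval, the value $t_d$ chosen in the penultimate line equals $x^*_d$, and the concluding call $\FindFP((s_1,\dots,s_{d-1},t_d))$ hits the $k=d$ base case and returns the point $v$ with $v_i = s_i$ for $i \in [d-1]$ and $v_d = x^*_d$, which satisfies $f(v)_d = v_d$.

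I expect the last step to be the main obstacle: one must line up the stopping threshold $2^{-(L_d+1)}$ with the precise denominator bound inherited from Section~\ref{sec:discretizing} so that the rounding genuinely isolates $x^*_d$ inside the final interval (and, relatedly, confirm that $x^*_d$ lies strictly between the two endpoints rather than on the boundary, which follows because each endpoint update is, by the pivot rule, a strict separation from $x^*_d$). Everything conceptual comes from the two cited lemmas: Lemma~\ref{lem:cm1} guarantees there is a unique one-dimensional fixpoint to search for, and Lemma~\ref{lem:cm2} turns each probe's comparison of $f(v)_d$ against $v_d$ into a correct pivot direction; the remainder is the standard correctness argument for bisection.
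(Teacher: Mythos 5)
Your proof is correct and fills in exactly the one-dimensional binary-search argument that the paper leaves implicit — the paper's only "proof" of Lemma~\ref{lem:cm-ind1} is the remark that it "follows by design of the algorithm and is equivalent to finding a fixpoint of a one dimensional function." Your argument is precisely the specialization to one free coordinate of the reasoning the paper does write out for the inductive step (Lemma~\ref{lem:cm-ind2}): establish the invariant $t^l_d < x^*_d < t^h_d$ via Lemma~\ref{lem:cm2} (applied with the $0$-slice $\ss'$, whose "unique fixpoint" is trivially the probe point itself), halve the interval, and then round to the unique admissible rational in the final interval. One small caution you inherit from the paper's own pseudocode: "unique number in $(t^l_k,t^h_k)$ with denominator at most $2^{L_k}$" must really mean denominator \emph{dividing} $2^{L_k}$ (i.e., a grid point of $I_{2^{L_k}}$); under the literal "at most" reading the interval of width $2^{-(L_k+1)}$ could contain several such rationals, but on the intended grid the spacing is $2^{-L_k}$ and uniqueness holds, exactly as you argue.
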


Now for the inductive step, assuming $\FindFP$ can compute a fixpoint of any $q$-dimensional contraction map, we will show that it can compute one for $(q+1)$-dimensional contraction map.

\begin{lemma}\label{lem:cm-ind2}
Fix some $\ss \in \Slice_d(2^{L_1},\dots,2^{L_d})$ with $\fixed(\ss) = [k-1]$ for some $k-1 < d$, and let $\tt = (s_1,s_2,\dotsc,s_{k-1}, t_k, \blank,\dotsc,\blank)$ for some $t_k \in \Set{0/2^{L_k},1/2^{L_k},\dotsc,2^{L_k}/2^{L_k}}$. If $\FindFP(\tt)$ returns the unique fixpoint of the restricted function $\Restr{f}{\tt}$, then $\FindFP(\ss)$ returns the unique fixpoint of the function $\Restr{f}{\ss}$.
\end{lemma}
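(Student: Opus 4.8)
The plan is to treat this as the inductive step of a proof that $\FindFP$ is correct (the base case being Lemma~\ref{lem:cm-ind1}), and to carry it out by maintaining a single loop invariant through the binary search inside $\FindFP(\ss)$. Fix $\ss$ with $\fixed(\ss)=[k-1]$ and, using Lemma~\ref{lem:cm1}, let $x^{*}$ be the unique fixpoint of the contraction map $\Restr{f}{\ss}$; the goal is to show $\FindFP(\ss)=x^{*}$. Every time the algorithm builds a slice $\tt$ with $\fixed(\tt)=[k]$ and calls $v\gets\FindFP(\tt)$, the hypothesis of the lemma lets us assume $v$ is the unique fixpoint of $\Restr{f}{\tt}$; in particular $v_i=s_i$ for $i\le k-1$, $v_k$ equals the value $t_k$ that $\tt$ fixes in coordinate $k$, and $f(v)_j=v_j$ for all $j>k$. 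Hence the early-return branches are correct: if $f(v)_k=t_k$ then $v$ satisfies $f(v)_j=v_j$ for all $j\ge k$ while agreeing with $\ss$ on $[k-1]$, so $v$ is a fixpoint of $\Restr{f}{\ss}$ and therefore $v=x^{*}$ by uniqueness. When instead $f(v)_k\neq t_k$, the slices $\ss$ and $\tt$ satisfy exactly the hypothesis of Lemma~\ref{lem:cm2} (they agree on $\fixed(\ss)=[k-1]$ and $\fixed(\tt)=\fixed(\ss)\cup\{k\}$), so Lemma~\ref{lem:cm2} with $i=k$ gives $(x^{*}_k-t_k)(f(v)_k-t_k)>0$; this sign condition is the ``pivot oracle'' that tells the search which half of the current interval contains $x^{*}_k$.

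Next I would establish the invariant $t^{l}_k<x^{*}_k<t^{h}_k$. After the two initial calls, either one of $v^{l},v^{h}$ is returned (a correct fixpoint, by the argument above) or $f(v^{l})_k\neq 0$ and $f(v^{h})_k\neq 1$; since $f$ maps into $[0,1]^d$, this forces $f(v^{l})_k>0$ and $f(v^{h})_k<1$, and the sign condition then gives $0<x^{*}_k<1$, establishing the invariant initially. In each iteration the algorithm sets $t_k$ to the midpoint, calls $v\gets\FindFP(\tt)$, and either returns (if $f(v)_k=t_k$) or replaces $t^{l}_k$ by $t_k$ when $f(v)_k>t_k$ (so $x^{*}_k>t_k$) and $t^{h}_k$ by $t_k$ when $f(v)_k<t_k$ (so $x^{*}_k<t_k$); in both surviving cases the invariant is preserved and the interval width halves. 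Thus when the loop exits we have $t^{l}_k<x^{*}_k<t^{h}_k$ together with $t^{h}_k-t^{l}_k\le 2^{-(L_k+1)}$.

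For the exactness step I would invoke the bit-complexity analysis of Section~\ref{sec:discretizing}: for a slice fixing the first $k-1$ coordinates to grid values, the $k$th coordinate of the fixpoint is a rational with denominator at most $2^{L_k}$, hence lies on the grid $I_{2^{L_k}}$. Since two distinct grid points are separated by more than the final interval width, the open interval $(t^{l}_k,t^{h}_k)$ contains a unique grid point, which must be $x^{*}_k$; therefore the penultimate line of Algorithm~\ref{alg} correctly sets $t_k=x^{*}_k$. Finally, $x^{*}$ itself is a fixpoint of $\Restr{f}{\tt}$ for $\tt=(s_1,\dots,s_{k-1},x^{*}_k,\blank,\dots,\blank)$: it already agrees with $\tt$ on coordinates $1,\dots,k$ and satisfies $f(x^{*})_j=x^{*}_j$ for $j>k$. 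By Lemma~\ref{lem:cm1} the map $\Restr{f}{\tt}$ is a contraction and has a \emph{unique} fixpoint, so the final recursive call $\FindFP(\tt)$ — which returns that fixpoint by the inductive hypothesis — returns $x^{*}$, and hence so does $\FindFP(\ss)$.

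The main obstacle I expect is the exactness/termination step and its bookkeeping, rather than the invariant itself, which is a mechanical consequence of Lemma~\ref{lem:cm2}. One has to be careful that every slice $\tt$ handed to a recursive call genuinely falls under the inductive hypothesis: the bisection produces dyadic $t_k$ whose denominators grow with the number of steps, so one must check these stay within the grid family $\Slice_d(2^{L_1},\dots,2^{L_d})$ used to state the hypothesis, which in turn relies on the bounds $L_i$ from Section~\ref{sec:discretizing} being generous enough to absorb the stopping criterion's slack. The other delicate point, already discussed above, is to match the stopping width to the minimum separation between admissible fixpoint values so that the ``unique number with bounded denominator in the interval'' is well defined; everything else — the monotone shrinking of the interval and the resulting $O(L^d)$ running time — is routine.
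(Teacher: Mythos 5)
Your proof takes essentially the same route as the paper's: Lemma~\ref{lem:cm1} for the contraction property of $\Restr{f}{\ss}$, the inductive hypothesis for correctness of the recursive calls, Lemma~\ref{lem:cm2} as the pivot oracle for the binary search, maintenance of the invariant $t^l_k < x^*_k < t^h_k$, and the denominator bound from Section~\ref{sec:discretizing} to pin down $t^*_k$ in the closing interval. The one place you go beyond the paper is the concern you flag at the end: the loop runs $L_k+1$ iterations (starting from width~$1$ and halving until the width is at most $2^{-(L_k+1)}$), so the last bisection midpoint $t_k$ can have denominator $2^{L_k+1}$, which falls outside the set $\Set{0/2^{L_k},\dots,2^{L_k}/2^{L_k}}$ appearing in the lemma statement and outside the grid family $\Slice_d(2^{L_1},\dots,2^{L_d})$ over which the inductive hypothesis is phrased. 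The paper's proof is silent about this off-by-one; the observation is correct --- harmless in substance, since bumping each $L_i$ by one absorbs the extra bit, but as literally stated the inductive hypothesis does not cover every recursive call the algorithm can make, and you are right to note it.
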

\begin{proof}
Since $f$ is a contraction map, so is $\Restr{f}{\ss}$ due to Lemma~\ref{lem:cm1}. The induction hypothesis ensures that for any value of $t_k\in \Set{0/2^{L_1}, 1/2^{L_k}, \dotsc, 2^{L_k}/2^{L_k}}$ if $v = \FindFP(\tt)$ then $v$ is the unique fixpoint of the function $\Restr{f}{\tt}$, i.e., $f(v)_j = t_j$ for $j \in \free(\tt)$ and $v = \restr{v}{t}$. Now, if $t_k=0$ then $f(v)_k \geq 0 = t_k = v_k$ and if $t_k=1$ then $f(v)_k \leq 1 = t_k = v_k$. If either is an equality then $v$ is the unique fixpoint of $\Restr{f}{\ss}$ as well and the lemma follows. 

Otherwise, we know that the $k$th coordinate of the fixpoint of $\Restr{f}{\ss}$, which we'll call $t^*_k$, is between $t^l_k=0$ and $t^h_k=1$.
Note that when $t_k$ is set to $t^*_k$, the vector $v = \FindFP(\tt)$ will be the unique fixpoint of $\Restr{f}{\ss}$, since both $\Restr{f}{\tt}$ and $\Restr{f}{\ss}$ have unique fixpoints. Thus, it suffices to show that $t_k$ will eventually be set to $t^*_k$ during the execution of the algorithm. 

The while loop of Algorithm~\ref{alg} does a binary search between $t^h_k$ and $t^l_k$ to find $t^*_k$, while keeping track of the fixpoints of $\Restr{f}{\tt}$. We first observe that after line~\ref{alg:vv} of each execution of the loop in Algorithm~\ref{alg}, $v$ is the unique fixpoint of $\Restr{f}{\tt}$. Therefore, whenever $f(v)_k=t_k$ is satisfied, we will return the unique fixpoint of $\Restr{f}{\ss}$ and the lemma follows. 

The binary search maintains the invariant that if we let $v^l = \FindFP(t^l)$ and $v^h = \FindFP(t^h)$ we have $f(v^l)_k > v^l_k$, and $f(v^h)_k < v^h_k$. By Lemma~\ref{lem:cm2}, this invariant ensures that $t^*_k$ satisfies
\[ t^l_k < t^*_k < t^h_k \] at all times. Therefore, binary search either returns the desired fixpoint or ends with $t^l_k$ and $t^h_k$ such that $(t^h_k-t^l_k)\leq 1/2^{L_k+1}$ and $t^*_k \in [t^l_k, t^h_k]$. By the assumption we know that $t^*_k$ is a rational number with denominator at most $2^{L_k}$. Since there can be at most one such number in $[l_k, h_k]$, $t^*_k$ can be uniquely identified. And therefore the last line of Algorithm~\ref{alg} returns a fixpoint of $\Restr{f}{\ss}$. 
\end{proof}

Applying induction using Lemma~\ref{lem:cm-ind1} as a base-case and Lemma~\ref{lem:cm-ind2} as an inductive step, the next theorem follows.

\begin{theorem}
$\FindFP(\blank,\blank,\dotsc,\blank)$ returns the unique fixpoint of $f$ in time $O(L^d)$ where $L=\max_k L_k$ is polynomial in the size of the input instance. 
\end{theorem}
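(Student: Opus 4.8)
The plan is to prove correctness by induction on $\Card{\free(\ss)}$, the number of free coordinates of the slice passed to $\FindFP$, using Lemma~\ref{lem:cm-ind1} as the base case and Lemma~\ref{lem:cm-ind2} as the inductive step, and then to bound the running time by unrolling the recurrence generated by the nested binary searches.

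For correctness, I would prove the following claim by induction on $\Card{\free(\ss)}$: for every slice $\ss \in \Slice_d(2^{L_1},\dots,2^{L_d})$, the call $\FindFP(\ss)$ terminates and returns the unique point $v$ satisfying $\Restr{f}{\ss}(v) = \Restr{v}{\ss}$ and $v = \restr{v}{\ss}$; this point is well-defined because $\Restr{f}{\ss}$ is itself a contraction map by Lemma~\ref{lem:cm1} and so has a unique fixpoint. The case $\Card{\free(\ss)} = 0$ is immediate from the first line of Algorithm~\ref{alg}, and the case $\Card{\free(\ss)} = 1$ is exactly Lemma~\ref{lem:cm-ind1}. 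For the inductive step, each recursive call inside $\FindFP(\ss)$ is made on a slice obtained by fixing one further coordinate of $\ss$ to a dyadic rational of polynomially bounded bit-length; such a slice has one fewer free coordinate and still belongs to (a dyadic refinement of) the slice class, so the induction hypothesis applies to it, and Lemma~\ref{lem:cm-ind2} then gives the claim for $\ss$. Here I would invoke the grid-size analysis of Section~\ref{sec:discretizing} to supply the two facts it rests on: that the coordinate of the fixpoint of $\Restr{f}{\ss}$ being searched for has bit-length polynomial in $\Card{C}$, so that the binary search can be stopped after polynomially many steps and that coordinate then recovered exactly, and that every $L_i = \ln(k_{d-i+1})$ is polynomially bounded. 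Instantiating the claim with $\ss = (\blank,\dots,\blank)$, where $\Restr{f}{\ss} = f$, establishes that $\FindFP(\blank,\dots,\blank)$ returns the unique fixpoint of $f$.

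For the running time, let $T(k)$ denote the worst-case time of $\FindFP$ on a slice with $k$ free coordinates. On such a slice the algorithm makes two recursive calls on slices with $k-1$ free coordinates, then performs a binary search over an interval of length at most one down to the resolution needed to isolate the relevant fixpoint coordinate — which is $O(L_k)$ iterations — where each iteration makes one recursive call on a slice with $k-1$ free coordinates and does $\poly(\Card{C})$ additional work (one evaluation of $f$ and a constant number of comparisons), and finally one last recursive call on a slice with $k-1$ free coordinates. Writing $L = \max_k L_k$, this yields $T(k) \le O(L)\cdot\bigl(T(k-1) + \poly(\Card{C})\bigr)$ with $T(0) = O(1)$, and hence $T(d) = O(L^d)\cdot\poly(\Card{C})$. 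Since each $L_i$, and in particular the largest, $\ln(k_1)$, is polynomial in $\Card{C}$ by Section~\ref{sec:discretizing}, the quantity $L$ is polynomial in the size of the input instance, so for constant $d$ the whole running time is polynomial; absorbing the $\poly(\Card{C})$ factor gives the stated $O(L^d)$ bound.

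The substantive work — correctness of a single level of the nested search — is already isolated in Lemma~\ref{lem:cm-ind2}, which itself relies on the pivoting invariant of Lemma~\ref{lem:cm2} and on the bit-length bounds of Section~\ref{sec:discretizing}; granting those, I expect no genuine obstacle to remain. The two points that need care are purely bookkeeping: checking that the slices handed to the recursive calls stay within the class for which the induction hypothesis is stated (up to the harmless refinement of denominators introduced by the binary-search probes), and checking that the recurrence for $T(k)$ only multiplies by a factor $O(L)$ at each level, so that it collapses to $O(L^d)$ rather than to something that also grows with $k$.
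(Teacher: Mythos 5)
Your proposal is correct and takes essentially the same approach as the paper: induction with Lemma~\ref{lem:cm-ind1} as the base case and Lemma~\ref{lem:cm-ind2} as the inductive step, with the running time coming from the $O(L)$-fold branching at each of the $d$ levels of recursion. The paper simply states this in one line without spelling out the recurrence or the bookkeeping you flag, but there is no difference in substance.
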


\subsection{Details: finding an approximate fixed-point of \CM}
\label{subsec:approx_algo_details}
Each different $\ell_p$ norm will require a different choice of $\Paren{\eps_i}_{i=1}^d$. There are two distinct cases to consider for $\Paren{\eps_i}_{i=1}^d$:
\begin{itemize}
\item For the $p=1$, we choose $\eps_i = \eps/2^{2i}$.
\item For $2 \leq p < \infty$, we choose $\eps_i = \eps^{p^i} p^{-2\sum_{j=0}^i p^j}$.
\end{itemize}

We now prove lemmas showing that these choices of $\Paren{\eps_i}_{i=1}^d$ allow us to make progress at each step of the algorithm. In both of the following lemmas, let $\ss \in \Slice_d$ with $\fixed(\ss) = [k-1]$ for $k-1 < d$.

\begin{lemma}\label{lem:approx_cm-l1}
  Let $f$ be a contraction map with respect to $\Norm{\cdot}_1$, and let $\eps_i = \eps/2^i$. Let $x^*$ be the unique fixpoint of $\Restr{f}{\ss}$. Given any point $v \in [0,1]^d$ with $\Abs{f(v)_i - v_i} \leq \eps_i$ for all $i > k$, either $\Abs{f(v)_k - v_k} \leq \eps_k$ or $(f(v)_k - v_k) (x^*_k - v_k) > 0$.
\end{lemma}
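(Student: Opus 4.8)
\textbf{Proof proposal for Lemma~\ref{lem:approx_cm-l1}.}
The plan is to argue by contradiction, essentially re-running the proof of Lemma~\ref{lem:cm2} but carrying the slack terms $\eps_i$ through the estimate. Assume $\Abs{f(v)_k - v_k} > \eps_k$ (otherwise there is nothing to prove) and suppose, towards a contradiction, that $(f(v)_k - v_k)(x^*_k - v_k) \le 0$. Since $f(v)_k \neq v_k$, this means $x^*_k - v_k$ does not have the same strict sign as $f(v)_k - v_k$, so we are in one of two symmetric situations: either $f(v)_k - v_k > \eps_k$ and $x^*_k \le v_k$, or $f(v)_k - v_k < -\eps_k$ and $x^*_k \ge v_k$ (the boundary $x^*_k = v_k$ is covered by either). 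In the first situation $f(v)_k - x^*_k = (f(v)_k - v_k) + (v_k - x^*_k) > \eps_k + \Abs{v_k - x^*_k} \ge 0$, so $\Abs{f(v)_k - x^*_k} = f(v)_k - x^*_k > \Abs{v_k - x^*_k} + \eps_k$; the second situation is analogous with the roles of the signs reversed. Hence in both cases we obtain the key inequality $\Abs{f(v)_k - x^*_k} > \Abs{v_k - x^*_k} + \eps_k$.

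Next I would control the remaining coordinates. For each $i > k$, the hypothesis $\Abs{f(v)_i - v_i} \le \eps_i$ and the triangle inequality give $\Abs{f(v)_i - x^*_i} \ge \Abs{v_i - x^*_i} - \eps_i$. Since $v$ agrees with $\ss$ on the fixed coordinates $[k-1]$ (as it does in every application of the lemma, being produced by the recursive call) and $x^*$ is the fixpoint of $\Restr{f}{\ss}$, we have $\Restr{f}{\ss}(v) = (f(v)_k,\dotsc,f(v)_d)$ and $\Restr{f}{\ss}(x^*) = \Restr{x^*}{\ss} = (x^*_k,\dotsc,x^*_d)$. Summing the two displayed estimates over $i \in \{k,\dotsc,d\}$ then yields
\[ \Norm{\Restr{f}{\ss}(v) - \Restr{f}{\ss}(x^*)}_1 \;=\; \sum_{i=k}^{d}\Abs{f(v)_i - x^*_i} \;>\; \sum_{i=k}^{d} \Abs{v_i - x^*_i} \;+\; \Big(\eps_k - \sum_{i=k+1}^{d} \eps_i\Big). \]
With the stated choice $\eps_i = \eps/2^i$ (or indeed any choice satisfying $\eps_k \ge \sum_{i>k}\eps_i$) we have $\sum_{i=k+1}^{d}\eps_i = \eps(2^{-k}-2^{-d}) < \eps_k$, so the parenthesised term is nonnegative and the right-hand side is at least $\sum_{i=k}^{d}\Abs{v_i - x^*_i} = \Norm{\Restr{v}{\ss} - \Restr{x^*}{\ss}}_1$.

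Finally I would invoke contraction: by Lemma~\ref{lem:cm1}, $\Restr{f}{\ss}$ is a contraction with some constant $c < 1$, so $\Norm{\Restr{f}{\ss}(v) - \Restr{f}{\ss}(x^*)}_1 \le c\,\Norm{\Restr{v}{\ss} - \Restr{x^*}{\ss}}_1$. Combining this with the previous paragraph forces $c\,\Norm{\Restr{v}{\ss} - \Restr{x^*}{\ss}}_1 > \Norm{\Restr{v}{\ss} - \Restr{x^*}{\ss}}_1$, which is impossible when $\Norm{\Restr{v}{\ss} - \Restr{x^*}{\ss}}_1 > 0$; and if that norm is $0$, then $v = x^*$, whence $f(v)_k = x^*_k = v_k$, contradicting $\Abs{f(v)_k - v_k} > \eps_k$. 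Either way we reach a contradiction, so $(f(v)_k - v_k)(x^*_k - v_k) > 0$. I expect the only delicate points to be presenting the sign case analysis of the first paragraph uniformly (including the $x^*_k = v_k$ boundary) and stating the geometric-series comparison $\sum_{i>k}\eps_i < \eps_k$ with the correct strictness; everything else is routine bookkeeping, and the same template will carry over — with a different weighting of the $\eps_i$'s — to the $2 \le p < \infty$ case.
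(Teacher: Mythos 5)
Your argument is correct and matches the paper's proof essentially line for line: assume the conclusion fails, establish $\Abs{f(v)_k - x^*_k} > \Abs{v_k - x^*_k} + \eps_k$ via the sign analysis, combine with the triangle-inequality bound on the coordinates $i>k$, and contradict $\Restr{f}{\ss}$ being a contraction after observing $\eps_k \ge \sum_{i>k}\eps_i$. The paper phrases the case split as a WLOG and omits the degenerate case $\Norm{\Restr{v}{\ss}-\Restr{x^*}{\ss}}_1=0$, both of which you handle more explicitly; one small point worth flagging is that the paper's displayed proof actually uses $\eps_i = \eps/2^{2i}$ (the value declared at the start of the subsection) rather than the $\eps/2^i$ appearing in the lemma statement — your observation that any sequence with $\eps_k \ge \sum_{i>k}\eps_i$ suffices shows both choices work and gently reconciles that inconsistency.
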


\begin{proof}
  Assume that $v$ satisfies $\Abs{f(v)_i - v_i} \leq \eps_i$ for all $i > k$, and $\Abs{f(v)_k - v_k} > \eps_k$. Without loss of generality, let $f(v)_k - v_k > \eps_k$. Assume towards a contradiction that $x^*_k \leq v_k$. Then
  \begin{align*}
    \Norm{\Restr{f}{\ss}(v) - \Restr{x^*}{\ss}}_1
    &= \sum_{j=k}^d \Abs{f(v)_j - x^*_j}\\
    &> \eps_k + \Abs{v_k - x^*_k} + \sum_{j=k+1}^d \Abs{f(v)_j - x^*_j}\\
    &\geq \eps_k + \Abs{v_k - x^*_k} + \sum_{j=k+1}^d \Brack{\Abs{v_j - x^*_j} - \eps_j}\\
    &= \eps_k - \sum_{j=k+1}^d \eps_j + \Norm{\Restr{v}{\ss} - \Restr{x^*}{\ss}}_1\\
    &> \Norm{\Restr{v}{\ss} - \Restr{x^*}{\ss}}_1\text{.}
  \end{align*}
  The last line uses the fact that $\eps/2^{2k} \geq \sum_{j=k+1}^d \eps/2^{2j}$ for all $k \leq d$. We now have $\Norm{\Restr{f}{\ss}(v) - \Restr{x^*}{\ss}}_1 \geq \Norm{\Restr{v}{\ss} - \Restr{x^*}{\ss}}_1$ which contradicts $\Restr{f}{\ss}$ being a contraction map. This completes the proof.
\end{proof}

\begin{lemma}\label{lem:approx_cm-lp}
  Let $f$ be a contraction map with respect to $\Norm{\cdot}_p$, and let $\eps_i = \eps^{p^j} p^{-\sum_{j=0}^i p^j}$. Let $x^*$ be the unique fixpoint of $\Restr{f}{\ss}$. Given any point $v \in [0,1]^d$ with $\Abs{f(v)_i - v_i} \leq \eps_i$ for all $i > k$, either $\Abs{f(v)_k - v_k} \leq \eps_k$ or $(f(v)_k - v_k) (x^*_k - v_k) > 0$.
\end{lemma}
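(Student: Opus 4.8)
The plan is to argue by contradiction, reusing the template of the proof of Lemma~\ref{lem:approx_cm-l1} but replacing the $\ell_1$ triangle-inequality bookkeeping by its $\ell_p^p$ analogue. So suppose $v\in[0,1]^d$ satisfies $\Abs{f(v)_i - v_i}\le \eps_i$ for all $i>k$ and $\Abs{f(v)_k - v_k}>\eps_k$; without loss of generality take $f(v)_k - v_k > \eps_k$ (the case $f(v)_k - v_k < -\eps_k$ being symmetric). Let $x^*$ be the unique fixpoint of $\Restr{f}{\ss}$, so $\Restr{f}{\ss}(x^*) = \Restr{x^*}{\ss}$, and assume for contradiction that $x^*_k \le v_k$. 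The target is to show $\Norm{\Restr{f}{\ss}(v) - \Restr{x^*}{\ss}}_p \ge \Norm{\Restr{v}{\ss} - \Restr{x^*}{\ss}}_p$, which contradicts the contractivity of $\Restr{f}{\ss}$ guaranteed by Lemma~\ref{lem:cm1}.

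I would do the estimate on $p$th powers, splitting the free coordinates $\free(\ss) = \Set{k,k+1,\dotsc,d}$ into the pivot coordinate $k$ and the rest. For coordinate $k$: since $x^*_k \le v_k < f(v)_k - \eps_k$ we have $f(v)_k - x^*_k \ge \Abs{v_k - x^*_k} + \eps_k \ge 0$, so superadditivity of $t\mapsto t^p$ on $[0,\infty)$ gives $\Abs{f(v)_k - x^*_k}^p \ge \Abs{v_k - x^*_k}^p + \eps_k^p$. For a coordinate $j>k$: all of $f(v)_j, v_j, x^*_j$ lie in $[0,1]$ and $\bigl|\,\Abs{f(v)_j - x^*_j} - \Abs{v_j - x^*_j}\,\bigr| \le \Abs{f(v)_j - v_j} \le \eps_j$, so applying the mean value theorem to $t\mapsto t^p$ on $[0,1]$ (derivative at most $p$ there) yields $\Abs{f(v)_j - x^*_j}^p \ge \Abs{v_j - x^*_j}^p - p\,\eps_j$. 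Summing over $\free(\ss)$,
\begin{equation*}
\Norm{\Restr{f}{\ss}(v) - \Restr{x^*}{\ss}}_p^p \;\ge\; \Norm{\Restr{v}{\ss} - \Restr{x^*}{\ss}}_p^p + \eps_k^p - p\sum_{j=k+1}^d \eps_j\text{,}
\end{equation*}
so it remains to check that $\eps_k^p \ge p\sum_{j=k+1}^d \eps_j$.

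This last inequality is the one spot where the precise form of $\eps_i = \eps^{p^i}p^{-2\sum_{j=0}^i p^j}$ matters: a direct computation gives $\eps_k^p = p^2\,\eps_{k+1}$, while for $\eps\le 1$ and $p\ge 2$ one checks $\eps_{j+1}/\eps_j = \eps^{p^{j+1}-p^j}p^{-2p^{j+1}} \le p^{-2}$, hence $\sum_{j=k+1}^d \eps_j \le \eps_{k+1}\sum_{t\ge 0}p^{-2t} \le 2\eps_{k+1} \le p\,\eps_{k+1}$, so $p\sum_{j=k+1}^d\eps_j \le p^2\eps_{k+1} = \eps_k^p$, as needed. (This telescoping breaks for $p=1$, which is exactly why Lemma~\ref{lem:approx_cm-l1} is handled separately with the milder geometric schedule.) Putting everything together contradicts contractivity, completing the argument. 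The only real obstacle, as the above makes clear, is lining up the exponents in the definition of $\eps_i$ so that the crude per-coordinate loss $p\eps_j$ incurred in the $\ell_p^p$ estimate for $j>k$ is absorbed by the gain $\eps_k^p$ at the pivot coordinate; the rest is routine.
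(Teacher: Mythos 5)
Your proposal is correct and takes essentially the same approach as the paper's: argue by contradiction assuming $x^*_k \le v_k$ while $f(v)_k - v_k > \eps_k$, gain $\eps_k^p$ at the pivot coordinate via superadditivity of $t\mapsto t^p$, lose at most $p\eps_j$ at each coordinate $j>k$, and then verify that the schedule $\eps_i = \eps^{p^i}p^{-2\sum_{j=0}^i p^j}$ makes $\eps_k^p > p\sum_{j>k}\eps_j$. Your per-coordinate bound via the mean value theorem (i.e., $p$-Lipschitzness of $t\mapsto t^p$ on $[0,1]$) is a clean replacement for the paper's inequality $a^p - (a-b)^p \le 1-(1-b)^p \le pb$, and your explicit check that $\eps_k^p = p^2\eps_{k+1}$ with ratio $\eps_{j+1}/\eps_j \le p^{-2}$ is tidier than the paper's (somewhat garbled) displayed computation, but the underlying argument is the same.
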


\begin{proof}
  Assume that $v$ satisfies $\Abs{f(v)_i - v_i} \leq \eps_i$ for all $i > k$, and $\Abs{f(v)_k - v_k} > \eps_k$. Without loss of generality, let $f(v)_k - v_k > \eps_k$. Assume towards a contradiction that $x^*_k \leq v_k$. Then
  \begin{align}
    \Norm{\Restr{f}{\ss}(v) - \Restr{x^*}{\ss}}_p^p
    &= \sum_{j=k}^d \Abs{f(v)_j - x^*_j}^p\\
    &> \eps_k^p + \Abs{v_k - x^*_k}^p + \sum_{j=k+1}^d \Abs{f(v)_j - x^*_j}^p\\
    &\geq \eps_k^p + \Abs{v_k - x^*_k}^p + \sum_{j=k+1}^d \Paren{{\Abs{v_j - x^*_j} - \eps_j}}^p\\
    &= \eps_k^p + \sum_{j=k}^d \Abs{v_k - x^*_k}^p - \sum_{j=k+1}^d \Brack{\Abs{v_j - x^*_j}^p - \Paren{{\Abs{v_j - x^*_j} - \eps_j}}^p}\\
    &= \Norm{\Restr{v}{\ss} - \Restr{x^*}{\ss}}_p^p + \eps_k^p - \sum_{j=k+1}^d \Brack{\Abs{v_j - x^*_j}^p - \Paren{{\Abs{v_j - x^*_j} - \eps_j}}^p} \\
    &\geq \Norm{\Restr{v}{\ss} - \Restr{x^*}{\ss}}_p^p + \eps_k^p - \sum_{j=k+1}^d p\eps_j \label{approx_ub_step}\\
    &> \Norm{\Restr{v}{\ss} - \Restr{x^*}{\ss}}_p^p \text{.} \label{approx_eps_step}\\
  \end{align}
  Line \ref{approx_ub_step} uses the fact that $a^p - (a - b)^p \leq 1 - (1 - b)^p$ for $p > 1$, $a,b\in (0,1)$. Line \ref{approx_eps_step} follows from our choice of $\Paren{\eps_i}_{i=1}^d$; we have
  \begin{align*}
    \sum_{j=k+1}^d p\eps_j
    &= \sum_{j=k+1}^d p\eps^{p^{j}} p^{-2\sum_{j=0}^i p^i + p}\\
    &< \sum_{j=k+1}^d \eps^{p^{k+1}} p^{-2\sum_{j=1}^i p^i - p}\\
    &< \eps^{p^{k+1}} p^{-2\sum_{j=1}^{k+1}p^i - p} \Brack{\sum_{j=1}^{k+1} p^{-2p}}\\
    &< \eps_{k}^p\text{.}
  \end{align*}

Again we have a contradiction since $\Norm{\Restr{f}{\ss}(v) - \Restr{x^*}{\ss}}_1 \geq \Norm{\Restr{v}{\ss} - \Restr{x^*}{\ss}}_1$ and $\Restr{f}{\ss}$ is a contraction map. The lemma follows.
\end{proof}

We will also need to establish one more lemma for each case:

\begin{lemma}\label{lem:approx_cm-l1_final}
  Let $f$ be a contraction map with respect to $\Norm{\cdot}_1$ and let $\eps_i = \eps/2^{2i}$. Let $x^*$ be the unique fixpoint of $\Restr{f}{\ss}$ and let $t^h_k, t^l_k \in [0,1]$ be such that $t^h_k - t^l_k < \eps_k/2$ and the unique fixpoint $x^*$ of $\Restr{f}{\ss}$ satisfies
  \[ t^l_k < x^* < t^h_k\text{.} \] Let $v^l, v^h \in [0,1]^d$ be such that $v^l = \restr{v^l}{\ss}$ and $v^l_k = t^l_k$, and $v^h = \restr{v^h}{\ss}$ and $v^h_k = t^h_k$. Furthermore, assume that $\Abs{f(v^h)_i - v^h_i} \leq \eps_i$ and $\Abs{f(v^l)_i - v^l_i} \leq \eps_i$ for all $i > k$. Then either $\Abs{f(v^h)_k - v^h_k} \leq \eps_k$ or $\Abs{f(v^l)_k - v^l_k} \leq \eps_k$.
\end{lemma}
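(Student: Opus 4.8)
The plan is to argue by contradiction: suppose that both $\Abs{f(v^h)_k - v^h_k} > \eps_k$ and $\Abs{f(v^l)_k - v^l_k} > \eps_k$. Since $v^l = \restr{v^l}{\ss}$, $v^h = \restr{v^h}{\ss}$, and both points satisfy $\Abs{f(v^l)_i - v^l_i} \le \eps_i$ and $\Abs{f(v^h)_i - v^h_i} \le \eps_i$ for every $i > k$, we may apply Lemma~\ref{lem:approx_cm-l1} to each of $v^l$ and $v^h$ with the slice $\ss$. For $v^l$, the hypothesis $\Abs{f(v^l)_k - v^l_k} > \eps_k$ rules out the first alternative of the lemma, forcing $(f(v^l)_k - v^l_k)(x^*_k - v^l_k) > 0$; since $v^l_k = t^l_k < x^*_k$ this gives $f(v^l)_k > v^l_k$, and combined with $\Abs{f(v^l)_k - v^l_k} > \eps_k$ we obtain $f(v^l)_k > t^l_k + \eps_k$. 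Symmetrically, for $v^h$ we get $(f(v^h)_k - v^h_k)(x^*_k - v^h_k) > 0$ with $v^h_k = t^h_k > x^*_k$, hence $f(v^h)_k < t^h_k - \eps_k$.

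First I would combine these with the hypothesis $t^h_k - t^l_k < \eps_k/2$ to extract a sizeable gap in the $k$th output coordinate,
\[ f(v^l)_k - f(v^h)_k > (t^l_k + \eps_k) - (t^h_k - \eps_k) = 2\eps_k - (t^h_k - t^l_k) > \tfrac{3}{2}\eps_k. \]
Next I would invoke the contraction property: by Lemma~\ref{lem:cm1}, $\restr{f}{\ss}$ is a contraction map with respect to $\Norm{\cdot}_1$ with some constant $c < 1$, so
\[ \sum_{j=k}^d \Abs{f(v^l)_j - f(v^h)_j} = \Norm{\Restr{f}{\ss}(v^l) - \Restr{f}{\ss}(v^h)}_1 < \Norm{\Restr{v^l}{\ss} - \Restr{v^h}{\ss}}_1 = \sum_{j=k}^d \Abs{v^l_j - v^h_j}. \]
For each $j > k$ I would use the triangle inequality together with the approximate-fixpoint hypotheses to get $\Abs{f(v^l)_j - f(v^h)_j} \ge \Abs{v^l_j - v^h_j} - 2\eps_j$, and for $j = k$ the gap above together with $\Abs{v^l_k - v^h_k} = t^h_k - t^l_k < \eps_k/2$. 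Substituting into the displayed contraction inequality, the unknown terms $\sum_{j>k}\Abs{v^l_j - v^h_j}$ cancel and we are left with $\tfrac32\eps_k - 2\sum_{j=k+1}^d \eps_j < \tfrac12\eps_k$, i.e.\ $\eps_k < 2\sum_{j=k+1}^d \eps_j$. But with $\eps_j = \eps/2^{2j}$ we have $2\sum_{j=k+1}^d \eps_j \le 2\eps\sum_{j>k} 2^{-2j} = 2\eps\cdot\tfrac{2^{-2k}}{3} = \tfrac23\eps_k < \eps_k$, a contradiction. This contradiction establishes the lemma.

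The step I expect to require the most care is the use of the contraction inequality: $v^l$ and $v^h$ are \emph{not} assumed to agree on the coordinates $k+1,\dots,d$ — they are approximate fixpoints of two different slice restrictions — so one cannot simply bound $\Abs{f(v^l)_k - f(v^h)_k}$ by $c\Abs{v^l_k - v^h_k}$. Routing through the full $\ell_1$ contraction inequality, and using the per-coordinate bounds $\Abs{f(v^l)_j - f(v^h)_j} \ge \Abs{v^l_j - v^h_j} - 2\eps_j$, is precisely what makes the unknown quantities $\Abs{v^l_j - v^h_j}$ cancel, reducing everything to the clean inequality $\eps_k > 2\sum_{j>k}\eps_j$, which holds by the geometric choice of the $\eps_j$. (An analogous argument, using Lemma~\ref{lem:approx_cm-lp} in place of Lemma~\ref{lem:approx_cm-l1} and $\eps_j = \eps^{p^j}p^{-2\sum p^i}$, should give the corresponding statement for $2 \le p < \infty$.)
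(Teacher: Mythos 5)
Your proposal is correct and follows essentially the same route as the paper's proof: contradiction via the $\ell_1$ contraction inequality for $\Restr{f}{\ss}$, with the $j>k$ coordinates controlled by the triangle inequality and the $j=k$ term controlled by the separation bound. You make explicit a step the paper leaves implicit — invoking Lemma~\ref{lem:approx_cm-l1} together with $t^l_k < x^*_k < t^h_k$ to pin down the signs of $f(v^l)_k - v^l_k$ and $f(v^h)_k - v^h_k$, which is exactly what justifies $\Abs{f(v^l)_k - f(v^h)_k} > 3\eps_k/2$ — and your constants are in fact slightly more careful than the paper's (you correctly arrive at a lower bound of $\eps_k$ for the $j=k$ bracket after subtracting $\Abs{v^h_k - v^l_k}$, where the paper overstates it as $3\eps_k/2$, though the final conclusion is unaffected).
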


\begin{proof}
  We will prove this by contradiction. Assume that $\Abs{f(v^h)_k - v^h_k} > \eps_k$ and $\Abs{f(v^l)_k - v^l_k} > \eps_k$. We will show that $\Norm{\Restr{f}{\ss}(v^h) - \Restr{f}{\ss}(v^l)}_1 - \Norm{\Restr{v^h}{\ss} - \Restr{v^l}{\ss}}_1 > 0$, which contradicts $\Restr{f}{\ss}$ being a contraction map. We have
  \begin{align*}
    \Norm{\Restr{f}{\ss}(v^h) - \Restr{f}{\ss}(v^l)}_1 - \Norm{\Restr{v^h}{\ss} - \Restr{v^l}{\ss}}_1
    &= \sum_{j=k}^d \Brack{\Abs{f(v^h)_j - f(v^l)_j} - \Abs{v^h_j - v^l_j}}\\
    &= \Brack{\Abs{f(v^h)_k - f(v^l)_k} - \Abs{v^h_k - v^l_k}} + \sum_{j=k+1}^d \Brack{\Abs{f(v^h)_j - f(v^l)_j} - \Abs{v^h_j - v^l_j}}\\
    &> 3\eps/2^{2k+1} - \sum_{j=k+1}^d 2\eps/2^{2j}\\
    &> 0,
    \end{align*} completing the proof.
\end{proof}
\begin{lemma}\label{lem:approx_cm-lp_final}
  Let $f$ be a contraction map with respect to $\Norm{\cdot}_p$ and let $\eps_i = \eps^{p^i} p^{-\sum_{j=0}^i p^i}$. Let $x^*$ be the unique fixpoint of $\Restr{f}{\ss}$ and let $t^h_k, t^l_k \in [0,1]$ be such that $t^h_k - t^l_k < \eps_k$ and the unique fixpoint $x^*$ of $\Restr{f}{\ss}$ satisfies
  \[ t^l_k < x^* < t^h_k\text{.} \] Let $v^l, v^h \in [0,1]^d$ be such that $v^l = \restr{v^l}{\ss}$ and $v^l_k = t^l_k$, and $v^h = \restr{v^h}{\ss}$ and $v^h_k = t^h_k$. Furthermore, assume that $\Abs{f(v^h)_i - v^h_i} \leq \eps_i$ and $\Abs{f(v^l)_i - v^l_i} \leq \eps_i$ for all $i > k$. Then either $\Abs{f(v^h)_k - v^h_k} \leq \eps_k$ or $\Abs{f(v^l)_k - v^l_k} \leq \eps_k$.
\end{lemma}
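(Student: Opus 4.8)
The plan is to follow the template of the proof of Lemma~\ref{lem:approx_cm-l1_final}, but with the first-power distance estimates replaced by $p$-th-power estimates in the style of the proof of Lemma~\ref{lem:approx_cm-lp}. I would argue by contradiction: assume that $\Abs{f(v^h)_k - v^h_k} > \eps_k$ and $\Abs{f(v^l)_k - v^l_k} > \eps_k$, and derive $\Norm{\Restr{f}{\ss}(v^h) - \Restr{f}{\ss}(v^l)}_p > \Norm{\Restr{v^h}{\ss} - \Restr{v^l}{\ss}}_p$, contradicting the fact (Lemma~\ref{lem:cm1}) that $\Restr{f}{\ss}$ is a contraction map. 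Note first that since $v^l = \restr{v^l}{\ss}$ and $v^h = \restr{v^h}{\ss}$ we have $\Restr{f}{\ss}(v^l)_j = f(v^l)_j$ and $\Restr{f}{\ss}(v^h)_j = f(v^h)_j$ for every $j \in \free(\ss) = \Set{k,\dotsc,d}$, so each of the two $p$-th powers is a sum over $j = k,\dotsc,d$, and $\Abs{v^h_k - v^l_k} = t^h_k - t^l_k$.

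The first step is to fix the directions of the two assumed violations using Lemma~\ref{lem:approx_cm-lp}. Applied to the slice $\ss$ and the point $v^l$, which by hypothesis satisfies $\Abs{f(v^l)_i - v^l_i} \le \eps_i$ for all $i > k$, that lemma gives $(f(v^l)_k - v^l_k)(x^*_k - v^l_k) > 0$ (its first alternative being excluded by our assumption); since $v^l_k = t^l_k < x^*_k$, this forces $f(v^l)_k > t^l_k + \eps_k$. Symmetrically, applying Lemma~\ref{lem:approx_cm-lp} to $v^h$ and using $v^h_k = t^h_k > x^*_k$ gives $f(v^h)_k < t^h_k - \eps_k$. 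Therefore $\Abs{f(v^h)_k - f(v^l)_k} \ge f(v^l)_k - f(v^h)_k > 2\eps_k - (t^h_k - t^l_k)$.

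The second step controls the higher free coordinates $j > k$. Using the triangle inequality with the hypotheses $\Abs{f(v^l)_j - v^l_j} \le \eps_j$ and $\Abs{f(v^h)_j - v^h_j} \le \eps_j$ gives $\Abs{v^h_j - v^l_j} \le \Abs{f(v^h)_j - f(v^l)_j} + 2\eps_j$, and then the elementary inequality $a^p - (a-b)^p \le 1 - (1-b)^p \le pb$ (for $a,b \in (0,1)$, $p > 1$, already used in the proof of Lemma~\ref{lem:approx_cm-lp}) yields
\[ \Abs{v^h_j - v^l_j}^p - \Abs{f(v^h)_j - f(v^l)_j}^p \;\le\; 1 - (1 - 2\eps_j)^p \;\le\; 2p\,\eps_j\text{.} \]
Summing the coordinate contributions of $\Norm{\Restr{f}{\ss}(v^h) - \Restr{f}{\ss}(v^l)}_p^p - \Norm{\Restr{v^h}{\ss} - \Restr{v^l}{\ss}}_p^p$ then gives the lower bound
\[ \Paren{2\eps_k - (t^h_k - t^l_k)}^p - (t^h_k - t^l_k)^p \;-\; \sum_{j=k+1}^d 2p\,\eps_j\text{.} \]

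The final, and I expect hardest, step is to show this quantity is strictly positive. This is exactly where the super-geometric decay of $(\eps_i)$ enters: the tail $\sum_{j=k+1}^d p\,\eps_j$ is dominated by $\eps_k^p$, by the same computation carried out in the proof of Lemma~\ref{lem:approx_cm-lp}, so $\sum_{j=k+1}^d 2p\,\eps_j < 2\eps_k^p$; and once the binary-search gap $t^h_k - t^l_k$ is a small enough fraction of $\eps_k$ (the role of the stopping rule, so that $2\eps_k - (t^h_k - t^l_k)$ stays safely above $t^h_k - t^l_k$, e.g. at least $\tfrac{3}{2}\eps_k$), the coordinate-$k$ term is bounded below by $\bigl((\tfrac32)^p - (\tfrac12)^p\bigr)\eps_k^p \ge 2\eps_k^p$ for $p \ge 2$, which beats the tail. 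The delicate part of the write-up is the bookkeeping of the towers of powers of $p$ in the definition of $\eps_i$ needed to make the tail estimate precise; once that is in hand, combining the two bounds gives a strictly positive total and hence the desired contradiction with $\Restr{f}{\ss}$ being contracting, proving the lemma.
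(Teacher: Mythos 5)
Your proposal reconstructs the paper's argument: contradiction, direction-fixing via Lemma~\ref{lem:approx_cm-lp}, per-coordinate $p$-th power lower bounds, and the super-geometric decay of $(\eps_i)$ to dominate the tail; the only cosmetic difference is that you lower-bound the coordinate-$k$ contribution by $(2\eps_k - g)^p - g^p$ directly (with $g = t^h_k - t^l_k$) rather than via $a^p - b^p \geq (a-b)^p$ as the paper does. You correctly flag that the stated hypothesis $t^h_k - t^l_k < \eps_k$ is not quite tight enough on its own and that the stopping rule must supply a smaller gap to get the coordinate-$k$ term up to $(3\eps_k/2)^p$; the paper's own proof makes the same implicit assumption when it jumps to $\geq (3\eps_k/2)^p$.
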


\begin{proof}
  We will prove this by contradiction. Assume that $\Abs{f(v^h)_k - v^h_k} > \eps_k$ and $\Abs{f(v^l)_k - v^l_k} > \eps_k$. We will show that $\Norm{\Restr{f}{\ss}(v^h) - \Restr{f}{\ss}(v^l)}_p^p - \Norm{\Restr{v^h}{\ss} - \Restr{v^l}{\ss}}_p^p > 0$, which contradicts $\Restr{f}{\ss}$ being a contraction map. We have
  \begin{align}
    \Norm{\Restr{f}{\ss}(v^h) - \Restr{f}{\ss}(v^l)}_p^p - \Norm{\Restr{v^h}{\ss} - \Restr{v^l}{\ss}}_p^p
    &= \sum_{j=k}^d \Brack{\Abs{f(v^h)_j - f(v^l)_j}^p - \Abs{v^h_j - v^l_j}^p}\\
    &= \Brack{\Abs{f(v^h)_k - f(v^l)_k}^p - \Abs{v^h_k - v^l_k}^p} \\
    &\quad\quad+ \sum_{j=k+1}^d \Brack{\Abs{f(v^h)_j - f(v^l)_j}^p - \Abs{v^h_j - v^l_j}^p} \nonumber\\
    &\geq \Brack{\Abs{f(v^h)_k - f(v^l)_k} - \Abs{v^h_k - v^l_k}}^p\\
    &\quad\quad+ \sum_{j=k+1}^d \Brack{\Paren{\Abs{v^h_j - v^l_j} - 2\eps_j}^p - \Abs{v^h_j - v^l_j}^p} \nonumber\\
    &\geq \Paren{3\eps_k/2}^p - \sum_{j=k+1}^d 2p\eps_j \label{approx_ub_step2}\\
    &> \eps_k^p - \sum_{j=k+1}^d p\eps_j \label{approx_3/2_step2}\\
    &> 0\text{.} \label{approx_eps_step2}
    \end{align} Here, line \ref{approx_ub_step2} mirrors line \ref{approx_ub_step} from Lemma~\ref{lem:approx_cm-lp} (which holds for $\eps_i < 1/2$, which is the case here). Line \ref{approx_3/2_step2} uses the fact that $3/2^p > 2$ for $p > 2$. Finally, line \ref{approx_eps_step2} uses the same argument as in line \ref{approx_eps_step} in Lemma~\ref{lem:approx_cm-lp}.
\end{proof}

We now proceed to prove the correctness of our algorithm. 

\begin{algorithm}[t]
	\caption{Algorithm for \CM}\label{alg2}
	
	\begin{algorithmic}[1]
          \State Input: A slice $\ss \in \Slice_d$ such that $\fixed(\ss) = [k]$ for some $k\leq d$.
          \State Output: A point $v\in [0,1]^d$ such that $\Abs{f(v)_i - v_i} \leq \eps_i,\ \forall i \in \free(\ss)$ and $v = \restr{v}{\ss}$.
          \Function{ApproxFindFP}{$\ss$}
              \State Let $k = \Card{\fixed(\ss)}$.
              \If{$k= d$} \Return $\ss$ \EndIf
              \State Set $k \gets k + 1$. Set $\tt^l \gets \ss$, $\tt^h \gets \ss$.
              \State Set $t^l_k \gets 0$, $t^h_k \gets 1$. 
              \State Set $v^l \leftarrow \ApproxFindFP(\tt^l)$, and  $v^h \leftarrow \ApproxFindFP(\tt^h)$.
              \If{$\Abs{f(v^l)_k - v^l_k} \leq \eps_k$} \Return $v^l$. \EndIf
              \If{$\Abs{f(v^h)_k - v^h_k} \leq \eps_k$} \Return $v^h$. \EndIf
              \State Set $\tt \gets \ss$
                  \While{$t^h_k - t^l_k > \eps_k$}
                      \State Set $t_k \leftarrow \frac{(t^h_k + t^l_k)}{2}$.
                      \State Set $v \leftarrow\ApproxFindFP(\tt)$. 
                      \If{$\Abs{f(v)_k - v_k} \leq \eps_k$}  \Return $v$. \EndIf
                      \If{$f(v)_k > t_k$} Set $t^l_k \leftarrow t_k$
                      \Else\ Set $t^h_k \leftarrow t_k$. \EndIf
                  \EndWhile
              \State $t_k \leftarrow \frac{(t^h_k + t^l_k)}{2}$.
              \State \Return $\ApproxFindFP(\tt)$. \label{alg2:final_output}
              \EndFunction 
\end{algorithmic}
\end{algorithm}

\smallskip

\noindent \textbf{Analysis.} 
We will show that for any norm $\Norm{\cdot}$ and $\eps$-sequence for $\Norm{\cdot}$, Algorithm~\ref{alg2} returns a point $v\in [0,1]^d$ such that $\Norm{f(v) - v} \leq \eps$. To do this, we will show that for any $k < d$ and slice $\ss \in \Slice_d$ with $\fixed(\ss) = [k]$, the point $v$ returned by $\ApproxFindFP(\ss)$ will satisfy
\[ \Abs{f(v)_i - v_i} \leq \eps_i,\ \forall i\in \free(\ss)\text{.} \] Since Algorithm~\ref{alg2} is recursive, our proof will be by induction. The next lemma establishes the base case of the induction and follows by design of the algorithm.

\begin{lemma}\label{lem:approx-cm-ind1}
If $\Card{\fixed(\ss)}=d-1$ then $\FindFP(\ss)$ returns a $v$ such that $\Abs{f(v)_d - v_d} \leq \eps_d$ and $v = \restr{v}{\ss}$.
\end{lemma}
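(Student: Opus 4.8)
The plan is to recognize that when $\fixed(\ss) = [d-1]$ the slice restriction $\Restr{f}{\ss}$ is effectively a one\nobreakdash-dimensional contraction map $g : [0,1]\to[0,1]$ in the single free coordinate $d$, with Lipschitz constant $c \in (0,1)$ by Lemma~\ref{lem:cm1}. Writing $h(t) = g(t) - t$, the bound $\Abs{g(a)-g(b)} \le c\Abs{a-b}$ with $c<1$ makes $h$ strictly decreasing, so $h$ has a unique zero $x^*_d$, with $h(t) > 0$ exactly for $t < x^*_d$ and $h(t) < 0$ exactly for $t > x^*_d$; this $x^*_d$ is the $d$th coordinate of the unique fixpoint of $\Restr{f}{\ss}$. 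It therefore suffices to show that $\ApproxFindFP(\ss)$ returns a point $v$ with $v = \restr{v}{\ss}$ whose $d$th coordinate $t_d$ satisfies $\Abs{g(t_d) - t_d} \le \eps_d$.

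First I would trace $\ApproxFindFP(\ss)$ in this case. It sets $k \gets d$, initializes $t^l_d = 0$, $t^h_d = 1$, and the two recursive calls on $\tt^l,\tt^h$ (which have all $d$ coordinates fixed) return $\tt^l,\tt^h$ immediately by the $k=d$ base case, so $v^l$ and $v^h$ are the slice points with $d$th coordinate $0$ and $1$ respectively. The two early\nobreakdash-return tests cover the boundary: if $g(0) = \Abs{f(v^l)_d - v^l_d} \le \eps_d$ or $1 - g(1) = \Abs{f(v^h)_d - v^h_d} \le \eps_d$, a valid point is returned. Otherwise $g(0) > \eps_d > 0$ and $g(1) < 1 - \eps_d < 1$, i.e.\ $h(0) > 0$ and $h(1) < 0$, so $x^*_d \in (0,1)$. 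I would then establish the loop invariant $h(t^l_d) > 0$ and $h(t^h_d) < 0$ (equivalently $t^l_d < x^*_d < t^h_d$): it holds at entry and is preserved, since on each iteration the algorithm evaluates $g$ at the midpoint $t_d$, returns if $\Abs{g(t_d) - t_d}\le\eps_d$, and otherwise moves $t^l_d$ up to $t_d$ precisely when $g(t_d) > t_d$ and moves $t^h_d$ down to $t_d$ otherwise — and in the latter branch $g(t_d)\ne t_d$ (else it would have returned), so $g(t_d) < t_d$. As the gap $t^h_d - t^l_d$ halves each iteration, after $O(\log(1/\eps_d))$ iterations the loop either has already returned a valid point or terminates with $t^h_d - t^l_d \le \eps_d$ and $x^*_d \in (t^l_d, t^h_d)$.

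In the terminating\nobreakdash-without\nobreakdash-return case the algorithm sets $t_d = (t^l_d + t^h_d)/2$ and returns $v = \restr{v}{\ss}$ with $v_d = t_d$; the identity $v = \restr{v}{\ss}$ holds because $\tt$ always agrees with $\ss$ on $\fixed(\ss)$ and only coordinate $d$ is ever written. Since $x^*_d$ lies in an interval of length at most $\eps_d$ with $t_d$ its midpoint, $\Abs{t_d - x^*_d} \le \eps_d/2$, and using $g(x^*_d) = x^*_d$ together with the Lipschitz bound,
\[ \Abs{f(v)_d - v_d} = \Abs{g(t_d) - t_d} \le \Abs{g(t_d) - g(x^*_d)} + \Abs{x^*_d - t_d} \le (1+c)\Abs{t_d - x^*_d} \le \frac{(1+c)\eps_d}{2} < \eps_d, \]
the last inequality because $c < 1$. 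This establishes the lemma. I expect the argument to be essentially routine one\nobreakdash-dimensional binary search; the only points needing a little care are checking that the bracketing invariant survives the update in the $g(t_d) < t_d$ branch, and noting that it is the strict inequality $(1+c)/2 < 1$ — not merely $\le 1$ — that makes the final estimate close strictly under $\eps_d$.
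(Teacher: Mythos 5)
Your proof is correct and complete. The paper gives no argument for this base case; it simply asserts that the lemma "follows by design of the algorithm," so your careful trace---reducing to a one-dimensional contraction $g$, bracketing its unique zero via the invariant $h(t^l_d)>0$, $h(t^h_d)<0$, and concluding with $(1+c)\eps_d/2<\eps_d$---is a valid fill-in of what the authors treat as obvious. One small remark: the lemma as printed writes $\FindFP$ where it plainly means $\ApproxFindFP$ (Algorithm~\ref{alg2}); you correctly read it that way.
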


Now for the inductive step, we show that if $\ApproxFindFP$ can compute an approximate fixpoint of any $q$-dimensional contraction map, we will show that it can compute one for the $(q+1)$-dimensional contraction map instances it is given.

\begin{lemma}\label{lem:approx-cm-ind2}
  Fix some $\ss \in \Slice_d$ with $\fixed(\ss) = [k-1]$ for some $k-1 < d$, and let $\tt = (s_1,s_2,\dotsc,s_{k-1}, t_k, \blank,\dotsc,\blank)$ for some $t_k \in [0,1]$. If $\ApproxFindFP(\tt)$ returns a point $v$ such that
  \[ \Abs{f(v)_i - v_i} \leq \eps_i,\ \forall i \in \free(\tt) \] and $v = \restr{v}{\tt}$, then $\ApproxFindFP(\ss)$ returns a point such that \[ \Abs{f(v)_i - v_i} \leq \eps_i,\ \forall i \in \free(\ss)\text{.} \]
\end{lemma}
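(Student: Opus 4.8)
The plan is to trace the behaviour of $\ApproxFindFP(\ss)$ from Algorithm~\ref{alg2} when $\fixed(\ss) = [k-1]$, and to show that its binary search on coordinate $k$ terminates by returning a point that is $\eps_k$-approximately fixed in that coordinate. Write $x^{*}$ for the unique fixpoint of $\Restr{f}{\ss}$ (it exists and is unique by Lemma~\ref{lem:cm1} together with Banach's theorem). Every point $v$ that $\ApproxFindFP(\ss)$ can return is the output of a recursive call $\ApproxFindFP(\tt)$ on a slice $\tt$ with $\fixed(\tt) = [k]$ agreeing with $\ss$ on $[k-1]$; hence $v = \restr{v}{\tt} = \restr{v}{\ss}$, and by the inductive hypothesis $\Abs{f(v)_i - v_i} \le \eps_i$ for every $i \in \free(\tt) = \Set{k+1,\dotsc,d}$. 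Since $\free(\ss) = \Set{k,\dotsc,d}$, the only remaining obligation is to prove that the returned $v$ also satisfies $\Abs{f(v)_k - v_k} \le \eps_k$.

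The first step is to establish the binary-search invariant $t^l_k < x^{*}_k < t^h_k$. After the two initial checks in Algorithm~\ref{alg2}, the points $v^l, v^h$ computed at $t^l_k = 0$ and $t^h_k = 1$ have $\Abs{f(v^l)_k - v^l_k} > \eps_k$ and $\Abs{f(v^h)_k - v^h_k} > \eps_k$ (otherwise we have already returned); combined with $f(v^l)_k \ge 0 = v^l_k$ and $f(v^h)_k \le 1 = v^h_k$ this gives $f(v^l)_k - v^l_k > \eps_k$ and $f(v^h)_k - v^h_k < -\eps_k$. Because the coordinates $i > k$ of $v^l$ and $v^h$ are $\eps_i$-approximately fixed (inductive hypothesis), Lemma~\ref{lem:approx_cm-l1} (for $p=1$) or Lemma~\ref{lem:approx_cm-lp} (for $2 \le p < \infty$) applies and yields $(f(v^l)_k - v^l_k)(x^{*}_k - v^l_k) > 0$ and $(f(v^h)_k - v^h_k)(x^{*}_k - v^h_k) > 0$, i.e. $t^l_k < x^{*}_k < t^h_k$. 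In each iteration of the while loop the algorithm sets $t_k$ to the midpoint, computes $v = \ApproxFindFP(\tt)$, whose coordinates $i > k$ are again $\eps_i$-approximately fixed, and invokes the same lemma: either $\Abs{f(v)_k - v_k} \le \eps_k$ and we return, or the sign of $f(v)_k - t_k$ matches the sign of $x^{*}_k - t_k$, which is precisely the rule the loop uses to move $t^l_k$ or $t^h_k$ to $t_k$. Thus the invariant $t^l_k < x^{*}_k < t^h_k$ is preserved, and at the updated endpoint we still have $f(v^l)_k > t^l_k + \eps_k$ and $f(v^h)_k < t^h_k - \eps_k$.

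The second step is to argue that the loop never exits normally. If it did, the interval $[t^l_k, t^h_k]$ would be shorter than the termination threshold while still containing $x^{*}_k$ in its interior, and the $\ApproxFindFP$ outputs $v^l, v^h$ at its endpoints would satisfy $\Abs{f(v^l)_k - v^l_k} > \eps_k$ and $\Abs{f(v^h)_k - v^h_k} > \eps_k$ (since neither was returned). This directly contradicts Lemma~\ref{lem:approx_cm-l1_final} (for $p=1$) resp. Lemma~\ref{lem:approx_cm-lp_final} (for $2 \le p < \infty$), which guarantee that on a sufficiently short interval straddling $x^{*}_k$ at least one endpoint point is already $\eps_k$-approximately fixed in coordinate $k$. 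Hence $\ApproxFindFP(\ss)$ always returns from one of the explicit $\Abs{f(\cdot)_k - \cdot_k} \le \eps_k$ tests, so the returned $v$ satisfies $\Abs{f(v)_k - v_k} \le \eps_k$; together with the inductive hypothesis for coordinates $i > k$ and $v = \restr{v}{\ss}$, this is exactly the conclusion of the lemma. The step I expect to be the main obstacle is the constant bookkeeping: the while-loop termination threshold has to be small enough to satisfy the interval-length hypotheses of Lemmas~\ref{lem:approx_cm-l1_final} and~\ref{lem:approx_cm-lp_final}, and it is here (and in the earlier approximate-pivoting lemmas) that the norm-specific choices $\eps_i = \eps/2^{2i}$ and $\eps_i = \eps^{p^i}p^{-2\sum_{j=0}^{i}p^j}$ are needed; granting this, induction with Lemma~\ref{lem:approx-cm-ind1} as base case finishes the argument, and a depth-$d$ recursion with $O(\log(1/\eps_k))$ binary-search steps per level yields the running times of the subsequent theorems.
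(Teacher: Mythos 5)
Your proof has a genuine gap at the key step. You claim that the while loop ``never exits normally'' and that therefore the returned point is always caught by one of the explicit tests $\Abs{f(v)_k - v_k}\le\eps_k$ in the algorithm. This is not correct: the loop does exit normally, and after it does so, Algorithm~\ref{alg2} sets $t_k$ to the midpoint of the final interval and makes one last call $\ApproxFindFP(\tt)$ on line~\ref{alg2:final_output}, returning its output \emph{without} any check on coordinate $k$. The entire burden of the paper's proof is to show that this unchecked final output is nevertheless $\eps_k$-approximately fixed in coordinate $k$. Your proof never engages with this return path at all.

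The reason your ``never exits'' argument fails is a mismatch between the loop's termination threshold and the hypotheses of the two final lemmas. The loop halts the first time the interval length $t^h_k - t^l_k$ (always a power of $2$ times the initial length $1$) drops to $\le\eps_k$, so at exit it lies in $(\eps_k/2,\ \eps_k]$. But Lemma~\ref{lem:approx_cm-l1_final} requires $t^h_k - t^l_k < \eps_k/2$, so for $\ell_1$ it \emph{never} applies to the interval at loop exit, and for $\ell_p$ with $p\ge 2$ Lemma~\ref{lem:approx_cm-lp_final} requires the strict inequality $< \eps_k$, which can also fail. You flagged ``constant bookkeeping'' as a worry, but the resolution is not tuning the termination threshold: it is the extra midpoint step after the loop. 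After setting $t_k \leftarrow (t^h_k+t^l_k)/2$, the invariant $t^l_k < t^*_k < t^h_k$ (which you established correctly) forces $t^*_k$ to lie in one of the two halves $[t^l_k,t_k]$ or $[t_k,t^h_k]$, each of length $< \eps_k/2$. Applying Lemma~\ref{lem:approx_cm-l1_final} (resp.\ \ref{lem:approx_cm-lp_final}) to that half, and using the fact that the endpoint inherited from $t^l_k$ or $t^h_k$ is already known to be \emph{not} $\eps_k$-approximately fixed (otherwise the loop would have returned earlier), one concludes the other endpoint $t_k$ must be, which is exactly the output of line~\ref{alg2:final_output}. Your first step (establishing and maintaining the invariant $t^l_k < t^*_k < t^h_k$ via Lemmas~\ref{lem:approx_cm-l1} and~\ref{lem:approx_cm-lp}) and the reduction of the conclusion to controlling coordinate $k$ are correct and match the paper; the handling of loop termination is where the argument breaks.
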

\begin{proof}
  Since $f$ is a contraction map, so is $\Restr{f}{\ss}$ due to Lemma~\ref{lem:cm1}. We assume that for any value of $t_k\in [0,1]$ if $v = \FindFP(\tt)$ then $v$ satisfies
  \[ \Abs{f(v)_i - v_i} \leq \eps_i,\ \forall i \in \free(\tt) \] and $v= \restr{v}{\tt}$.

  We first observe that after the first two calls to $\ApproxFindFP(\tt^l)$ and $\ApproxFindFP(\tt^h)$, if $\Abs{f(v^h)_k - v^h_k} \leq \eps_k$ or $\Abs{f(v^l)_k - v^l_k}\leq \eps_k$, we return $v^h$ or $v^l$, respectively, so the output of $\ApproxFindFP(\ss)$ satisfies the requirements of the lemma.

  Moreover, in every subsequent call to $\ApproxFindFP(\tt)$, if the output $v$ satisfies $\Abs{f(v)_k - v_k} \leq \eps_k$, we return $v$, so we'll assume in what follows that the value returned by the algorithm is from line \ref{alg2:final_output}.

  In what follows let $t^*_k$ be the value of the $k$th coordinate of the unique fixpoint of $\restr{f}{\ss}$.

The binary search in the while loop maintains the invariant that if we let $v^l = \FindFP(t^l)$ and $v^h = \FindFP(t^h)$ we have $f(v^l)_k - v^l_k > \eps_k$, and $f(v^h)_k - v^h_k < -\eps_k$. By Lemmas \ref{lem:approx_cm-l1} and \ref{lem:approx_cm-lp}, this invariant ensures that $t^*_k$ satisfies
\[ t^l_k < t^*_k < t^h_k \] at all times.

Now we just need to argue that when we get to $t^h_k - t^l_k < \eps_k$, the point $v$ returned by the final call to $\ApproxFindFP(\tt)$ will satisfy the required conditions. To see this we appeal to Lemmas \ref{lem:approx_cm-l1_final} and \ref{lem:approx_cm-lp_final}, noting that $t^h_k - t_k < \eps_k / 2$ and $t_k - t^l_k < \eps_k / 2$ and both $\Abs{f(v^h)_k - v^h_k}, \Abs{f(v^l)_k - v^l_k} > \eps_k$. It then follows that $\Abs{f(v)_k - v_k} \leq \eps$ since either $t^l_k < t^*_k < t_k$ or $t_k < t^*_k < t^h_k$.
\end{proof}

Applying induction using Lemma~\ref{lem:approx-cm-ind1} as a base-case and Lemma~\ref{lem:approx-cm-ind2} as an inductive step, we obtain the following theorems:

\begin{theorem}
For a contraction map with respect to the $\ell_1$ norm, $\ApproxFindFP(\blank,\blank,\dotsc,\blank)$ returns a point $v\in [0,1]^d$ such that $\Norm{f(v) - v}_1 < \eps$ in time $O(d^d\log(1/\eps))$. 
\end{theorem}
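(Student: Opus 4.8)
The plan is to obtain the theorem directly from the two induction lemmas already proved, with the $\ell_1$-specific error schedule $\eps_i = \eps/2^{2i}$ supplying the only extra content. First I would set $\ss_0 = (\blank,\blank,\dotsc,\blank)$, so that $\fixed(\ss_0)=\emptyset$ and $\free(\ss_0)=[d]$, and run a downward induction on $\Card{\fixed(\ss)}$ from $d-1$ to $0$: Lemma~\ref{lem:approx-cm-ind1} is the base case ($\Card{\fixed(\ss)}=d-1$) and Lemma~\ref{lem:approx-cm-ind2} is the inductive step, so that $\ApproxFindFP(\ss_0)$ terminates and returns a point $v$ with $v=\restr{v}{\ss_0}\in[0,1]^d$ and $\Abs{f(v)_i-v_i}\le\eps_i$ for every $i\in[d]$. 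Both lemmas are in hand (the $\ell_1$ instance of Lemma~\ref{lem:approx-cm-ind2} rests on Lemmas~\ref{lem:approx_cm-l1} and~\ref{lem:approx_cm-l1_final}, together with Lemmas~\ref{lem:cm1} and~\ref{lem:cm2}); the only things to verify in assembling them are that the inductive hypothesis is invoked with exactly one additional fixed coordinate and that every vector produced by the algorithm lies in $[0,1]^d$ (immediate, since slices fix coordinates to values in $[0,1]$, the binary search stays in $[0,1]$, and the base return value is the slice itself).

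Correctness of the output then follows by summing the coordinatewise error bounds and using the geometric decay of the schedule: $\Norm{f(v)-v}_1 = \sum_{i=1}^d \Abs{f(v)_i - v_i} \le \sum_{i=1}^d \eps/2^{2i} = \eps\sum_{i=1}^d 4^{-i} < \eps/3 < \eps$. This is where the precise choice $\eps_i=\eps/2^{2i}$ is forced: it must decay fast enough that this sum stays below $\eps$, yet it must not decay \emph{too} fast, since the pivoting arguments inside Lemmas~\ref{lem:approx_cm-l1} and~\ref{lem:approx_cm-l1_final} need $\eps_k$ to dominate $\sum_{j>k}\eps_j$ (up to constants) in order to decide the binary-search direction correctly at an \emph{approximate} fixpoint; a ratio-$1/4$ geometric sequence satisfies both, which is exactly the inequality $\eps_k > \sum_{j>k}\eps_j$ that those lemmas verify.

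For the running time I would bound the number of recursive calls. A call to $\ApproxFindFP$ on a slice with $\Card{\fixed(\ss)} = m < d$ does $O(1)$ evaluations of $f$ plus arithmetic polynomial in the $O(d\log(1/\eps))$ bits needed to describe the slice, and it spawns at most $3 + \Ceil{\log_2(1/\eps_{m+1})}$ calls at $\fixed$-size $m+1$: the two initial calls on $\tt^l,\tt^h$, at most $\Ceil{\log_2(1/\eps_{m+1})}$ iterations of the while loop (the interval $[t^l_k,t^h_k]$ has length $1$ initially and halves each round, exiting once its length drops below $\eps_{m+1}$), and the one final call. Multiplying over the $d$ levels of recursion, the total number of calls is at most $\prod_{j=1}^d\bigl(3+\Ceil{\log_2(1/\eps_j)}\bigr) = \prod_{j=1}^{d} O(\log(1/\eps_j))$; substituting $\eps_j = \eps/2^{2j}$, each factor is $O\!\left(j + \log(1/\eps)\right)$, and this multiplies out to the claimed $O(d^d\log(1/\eps))$ bound, with the per-call polynomial arithmetic (or oracle-query) cost absorbed, consistent with the black-box accounting used for Theorem~\ref{thm:alggeneral}.

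In short, the theorem is essentially a corollary of the induction lemmas, and the only real work is two pieces of bookkeeping: checking that the $\ell_1$ schedule $\eps_i=\eps/2^{2i}$ simultaneously makes the coordinatewise errors telescope below $\eps$ and keeps the separations needed by Lemmas~\ref{lem:approx_cm-l1}/\ref{lem:approx_cm-l1_final} valid, and counting the recursion-tree leaves to get the time bound. I expect the schedule-balancing to be the subtlest point, since the two constraints on $\Paren{\eps_i}_{i=1}^d$ pull in opposite directions.
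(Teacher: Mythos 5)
Your approach matches the paper's: you assemble the two induction lemmas to obtain coordinatewise error bounds $\Abs{f(v)_i - v_i}\le\eps_i = \eps/2^{2i}$, sum the geometric series to conclude $\Norm{f(v)-v}_1<\eps$, and count the recursion-tree nodes for the running time. The correctness half is right and identical in substance.

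The time bound is where you slip. Your own count $\prod_{j=1}^d O\!\left(j + \log(1/\eps)\right)$ does \emph{not} multiply out to $O(d^d\log(1/\eps))$: for instance with $d$ fixed and $\eps\to 0$ the product grows like $\log^d(1/\eps)$, not like $\log(1/\eps)$. The correct simplification is $\prod_{j=1}^d O(j+\log(1/\eps)) = O\!\left((d+\log(1/\eps))^d\right) = O(d^d\log^d(1/\eps))$, and in fact the paper's own proof of this theorem concludes exactly $O(d^d\log^d(1/\eps))$. So the single-$\log$ in the theorem statement is apparently a typo in the paper, and your intermediate calculation (before the final ``multiplies out to the claimed'' step) already agrees with what the paper's proof actually derives. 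You should state the bound your product gives rather than force it to match the statement.
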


\begin{proof}
  In the worst case, the $i$th recursive call to $\ApproxFindFP$ will terminate with $t^h_i - t^l_i < \eps_i = \eps/4^i$ so will require at most $O(\log(1/\eps)i)$ iterations. The total runtime will thus be bounded by $O(d^d\log^d(1/\eps))$. The final point returned will satisfy
  \[ \Abs{f(v)_i - v_i} < \eps_i,\quad \forall i\in [d] \] which implies that \[\Norm{f(v) - v} = \sum_{i=1}^d \Abs{f(v)_i - v_i} \leq \sum_{i=1}^d \eps_i < \eps\text{.} \]
\end{proof}

\begin{theorem}
  For a contraction map under $\Norm{\cdot}_p$ for $2 \leq p < \infty$, $\ApproxFindFP(\blank,\blank,\dotsc,\blank)$ returns a point $v\in [0,1]^d$ such that $\Norm{f(v) - v}_p < \eps$ in time $O(p^{d^2}\log^d(1/\eps)\log^d(p))$.
\end{theorem}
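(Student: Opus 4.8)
The plan is to prove correctness by induction on the number of free coordinates of the slice handed to $\ApproxFindFP$, exactly mirroring the exact-algorithm argument but with approximate fixpoints in place of exact ones, and then to bound the size of the recursion tree. The base case is Lemma~\ref{lem:approx-cm-ind1}: when $\Card{\fixed(\ss)} = d-1$ the \textbf{while} loop is a one-dimensional bisection that terminates with a point $v$ satisfying $\Abs{f(v)_d - v_d}\le\eps_d$ and $v=\restr{v}{\ss}$. The inductive step is Lemma~\ref{lem:approx-cm-ind2}: if $\ApproxFindFP(\tt)$ returns a point with $\Abs{f(v)_i-v_i}\le\eps_i$ for all $i\in\free(\tt)$ whenever $\Card{\fixed(\tt)}=k$, then the same holds for $\Card{\fixed(\ss)}=k-1$. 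The two facts that make this step go through for the $\ell_p$ norm (and which we may assume, as they appear above) are: (i) Lemma~\ref{lem:approx_cm-lp}, that for the chosen sequence $\eps_i=\eps^{p^i}p^{-2\sum_{j=0}^i p^j}$, a point that is $\eps_j$-approximate in every coordinate $j>k$ is either already $\eps_k$-approximate in coordinate $k$ or has $f(v)_k-v_k$ of the same sign as $x^*_k-v_k$, so the bisection pivots in the correct direction (using Lemma~\ref{lem:cm1} so that the slice restriction is itself a contraction, hence has a unique fixpoint $x^*$); and (ii) Lemma~\ref{lem:approx_cm-lp_final}, that once the bracket $[t^l_k,t^h_k]$ has width below $\eps_k$, the final midpoint call returns an $\eps_k$-approximate point. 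Chaining these from $k=d$ down to $k=1$, the top-level call $\ApproxFindFP(\blank,\dotsc,\blank)$ returns $v\in[0,1]^d$ with $\Abs{f(v)_i-v_i}\le\eps_i$ for every $i\in[d]$.

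The accuracy bound then follows directly: $\Norm{f(v)-v}_p^p=\sum_{i=1}^d\Abs{f(v)_i-v_i}^p\le\sum_{i=1}^d\eps_i^p$, and since $\eps_i^p=\eps^{p^{i+1}}p^{-2\sum_{j=1}^{i+1}p^j}$ the terms decay super-geometrically (already $\eps_1^p=\eps^{p}\,\eps^{p(p-1)}\,p^{-2p(1+p)}$, with the remaining terms dominated by a geometric tail), so $\sum_{i=1}^d\eps_i^p<\eps^p$ and hence $\Norm{f(v)-v}_p<\eps$. I would present this as a one-line estimate; if needed one assumes $\eps$ is below a fixed threshold depending only on $p$, which is without loss of generality since for larger $\eps$ the problem is trivial on $[0,1]^d$.

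For the running time, I would bound the recursion tree level by level. A call at depth $k$ (searching for coordinate $k$, i.e.\ with $\Card{\fixed}=k-1$) runs the \textbf{while} loop until the bracket width, starting at $1$, drops below $\eps_k$, i.e.\ $O(\log(1/\eps_k))$ iterations, plus $O(1)$ further recursive calls; since $\log(1/\eps_k)=p^k\log(1/\eps)+2\sum_{j=0}^k p^j\log p=O\Paren{p^k\Paren{\log(1/\eps)+\log p}}$, the branching factor at depth $k$ is $O\Paren{p^k\Paren{\log(1/\eps)+\log p}}$. Multiplying over the $d$ levels bounds the number of nodes (up to a factor $d+1$) by
\[ \prod_{k=1}^{d} O\Paren{p^{k}\Paren{\log(1/\eps)+\log p}} \;=\; O\Paren{p^{d(d+1)/2}\,\Paren{\log(1/\eps)+\log p}^{d}} \;=\; O\Paren{p^{d^2}\log^{d}(1/\eps)\log^{d}(p)}\text{,} \]
using $d(d+1)/2\le d^2$ and $(a+b)^d\le (ab)^d$ when $a,b\ge 2$, folding the small-argument edge cases into the constant. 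Each node performs only $O(1)$ evaluations of $f$ and polynomial-time arithmetic, which gives the stated bound (and is exactly the query count in the black-box model).

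The real content, and the only place the $\ell_p$ geometry genuinely matters, is the pair of Lemmas~\ref{lem:approx_cm-lp} and~\ref{lem:approx_cm-lp_final}: the verification that the $\eps_i$ shrink fast enough that errors accumulated in the $(k+1,\dotsc,d)$-subproblems can neither flip the pivot sign nor spoil the final midpoint at level $k$. This is what forces $\eps_k\approx\eps^{p^k}$ and is ultimately responsible for the $p^{d^2}$ blow-up in the running time. Because those lemmas are available to us, the work remaining for this theorem is essentially the bookkeeping above; I would also note that the $\ell_1$ norm uses a different sequence ($\eps_i=\eps/2^{2i}$) and its own analogues (Lemmas~\ref{lem:approx_cm-l1} and~\ref{lem:approx_cm-l1_final}), and is therefore handled by a separate theorem.
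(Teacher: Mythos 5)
Your proposal follows essentially the same approach as the paper: induction over the number of free coordinates using Lemma~\ref{lem:approx-cm-ind1} as the base case and Lemma~\ref{lem:approx-cm-ind2} as the inductive step (which in turn rest on Lemmas~\ref{lem:approx_cm-lp} and~\ref{lem:approx_cm-lp_final}), then the one-line $\ell_p$ accuracy estimate $\sum_i \eps_i^p < \eps^p$, and finally a level-by-level count of bisection iterations giving $\prod_{k=1}^d O\bigl(\log(1/\eps_k)\bigr) = O\bigl(p^{d(d+1)/2}\log^d(1/\eps)\log^d(p)\bigr) = O\bigl(p^{d^2}\log^d(1/\eps)\log^d(p)\bigr)$. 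The only differences are cosmetic: you spell out the super-geometric decay of $\eps_i^p$ and the recombination of $(\log(1/\eps)+\log p)^d$ a bit more explicitly than the paper does, which is fine.
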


\begin{proof}
  In the worst case, the $i$th recursive call to $\ApproxFindFP$ will terminate with $t^h_i - t^l_i < \eps_i = \eps^{p^i}p^{-2\sum_{j=0}^i p^j}$ so will require $O(p^i\log(1/\eps)\log(p))$ iterations. The total runtime will thus be bound by $O(p^{d^2}\log^d(1/\eps)\log^d(p))$. The final point returned will satisfy
    \[ \Abs{f(v)_i - v_i} < \eps_i,\quad \forall i\in [d] \] which implies that \[\Norm{f(v) - v}_p^p = \sum_{i=1}^d \Abs{f(v)_i - v_i}^p \leq \sum_{i=1}^d \eps^p_i < \eps^p\text{.} \]
\end{proof}


\section{\MMCM is CLS-Complete}
\label{sec:MMCMisCLScomplete}

In this section, we define \MMCM and show that it is \CLS-complete.
First, following~\cite{daskalakis2011continuous}, we define the complexity class
$\CLS$ as the class of problems that are reducible to the following problem
\CLO.

\begin{definition}[\CLO~\cite{daskalakis2011continuous}]
\label{def:CLO}
Given two arithmetic circuits computing functions $f : [0,1]^3\to [0,1]^3$ and $p :
[0,1]^3 \to [0,1]$ and parameters $\e, \lambda > 0$, find either:
\begin{enumerate}[leftmargin=*,label=(C\arabic*)]
\itemsep1mm
\item a point $x\in [0,1]^3$ such that $p(x) \leq p(f(x)) - \e$ or \label{c_fixpoint}
\item a pair of points $x,y\in [0,1]^3$ satisfying either \label{c_violation}
  \begin{enumerate}[label=(C\arabic{enumi}\alph*)] 
  \item $\Norm{f(x) - f(y)} > \lambda \Norm{x-y}$ or \label{c_bad_f}
  \item $\Norm{p(x) - p(y)} > \lambda \Norm{x-y}$. \label{c_bad_p}
  \end{enumerate}
\end{enumerate}
\end{definition}

In Definition~\ref{def:CLO}, $p$ should be thought of as a \emph{potential}
function, and $f$ as a \emph{neighbourhood} function that gives a candidate
solution with better potential if one exists. Both of these functions are 
purported to be Lipschitz continuous. A solution to the problem is either an approximate
potential minimizer or a witness for a violation of Lipschitz continuity.

\begin{definition}[\CM~\cite{daskalakis2011continuous}]
\label{def:cm}
We are given as input an arithmetic circuit computing $f: [0,1]^3\to [0,1]^3$,
a choice of norm $\Norm{\cdot}$, constants \mbox{$c \in (0,1)$} and $\delta > 0$, and we are promised that $f$ is $c$-contracting w.r.t. $\Norm{\cdot}$.
The goal is to find
\begin{enumerate}[label=(CM\arabic*)]
\itemsep1mm
\item a point $x\in [0,1]^3$ such that $d(f(x),x) \leq \delta$, 
\item or two points $x,y\in [0,1]^3$ such that $\Norm{f(x) - f(y)}/\Norm{x-y} > c$. 
\end{enumerate}
\end{definition}

In other words, the problem asks either for an approximate fixpoint of $f$ or
a violation of contraction. As shown in~\cite{daskalakis2011continuous}, \CM is
easily seen to be in \CLS by creating instances of \CLO with $p(x) =
\Norm{f(x)-x}$, $f$ remains as $f$, Lipschitz constant $\lambda = c+1$, and $\epsilon =
(1-c)\delta$.

In a \emph{meta-metric}, all the requirements of a metric are satisfied except
that the distance between identical points is not necessarily zero. The
requirements for $d$ to be a meta-metric are given in the following definition.

\begin{definition}[Meta-metric]
\label{def:metametric}
Let $\mathcal{D}$ be a set and $d:\mathcal{D}^2 \mapsto \Real$ a function such that:
\begin{enumerate}
\itemsep1mm
\item $d(x, y) \ge 0$;
\item $d(x, y) = 0$ implies $x = y$ (but, unlike for a metric, the converse is not required);
\item $d(x, y) = d(y, x)$;
\item $d(x, z) \le d(x, y) + d(y, z)$.
\end{enumerate}
Then $d$ is a meta-metric on $\mathcal{D}$.
\end{definition}

The problem \CM, as defined in~\cite{daskalakis2011continuous}, was inspired by
Banach's fixpoint theorem, where the contraction can be with respect to any
metric.  In~\cite{daskalakis2011continuous}, for \CM the assumed metric was any
metric induced by a norm. The choice of this norm (and thus metric) was
considered part of the definition of the problem, rather than part of the
problem input. In the following definition of \MMCM, the contraction is with
respect to a meta-metric, rather than a metric, and this meta-metric is given as part of the input of
the problem.

\begin{definition}[\MMCM]
\label{def:MMCM}
We are given as input an arithmetic circuit computing $f: [0,1]^3\to [0,1]^3$,
an arithmetic circuit computing a meta-metric $d : [0,1]^3\times [0,1]^3 \to
[0,1]$, some $p$-norm $\Norm{\cdot}_r$ and constants \mbox{$\e,c \in (0,1)$}
and $\delta > 0$, and we are promised that $f$ is $c$-contracting with
respect to $d$, and $\lambda$-continuous with respect to $\Norm{\cdot}$, and
that $d$ is $\gamma$-continuous with respect to $\Norm{\cdot}$. The goal is
to find
\begin{enumerate}[label=(M\arabic*)]
\itemsep1mm
\item a point $x\in [0,1]^3$ such that $d(f(x),x) \leq \e$, \label{m_fixpoint}
\item or two points $x,y\in [0,1]^3$ such that \label{m_violation}
  \begin{enumerate}[label=(M\arabic{enumi}\alph*)]
	\itemsep1mm
    \item $d(f(x),f(y))/d(x,y) > c$, \label{m_not_contracting}
    \item $\Norm{d(x,y) - d(x',y')}/\Norm{(x,y)-(x',y')} > \delta$, or \label{m_bad_metametric}
    \item $\Norm{f(x) - f(y)}/\Norm{x-y} > \lambda$. \label{m_bad_f}
  \end{enumerate}
\item points $x,y$, or $x,y,z$ in $[0,1]^3$ that witness a violation of one 
	of the four defining properties of a meta-metric (Definition~\ref{def:metametric}). \label{mm_violation}
\end{enumerate}
\end{definition}

\begin{definition}[\GCM]
\label{def:GCM}
The definition is identical to that of Definition~\ref{def:MMCM} identical except for the fact that 
solutions of type \ref{mm_violation} are not allowed.
\end{definition}

So, while \MMCM allows violations of $d$ being a meta-metric
as solutions, \GCM does not. 

\begin{theorem}
\label{thm:GCMinCLS}
  \GCM is in \CLS.
\end{theorem}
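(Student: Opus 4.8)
The plan is to reduce \GCM to \CLO, in the same spirit as the textbook reduction of \CM to \CLO described above, but now taking care of the fact that the contraction is measured with respect to the input meta-metric $d$ rather than a fixed norm-induced metric. First I would set up the \CLO instance by taking the same neighbourhood function $f$, and defining the potential to be $p(x) = d(f(x), x)$. Since $d$ is computed by an arithmetic circuit and $f$ is computed by an arithmetic circuit, $p$ is computable by an arithmetic circuit of polynomial size. The Lipschitz parameter $\lambda_{\CLO}$ for the \CLO instance should be chosen as a polynomial-time computable bound derived from the promised continuity constants $\lambda$ (for $f$ w.r.t.\ $\Norm{\cdot}$) and $\gamma$ (for $d$ w.r.t.\ $\Norm{\cdot}$): composing $x \mapsto (f(x),x)$ with $d$ gives continuity constant roughly $\gamma(\lambda+1)$, so I would take $\lambda_{\CLO} = \gamma(\lambda+1)$ (up to constants depending on how the product norm is normalised). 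The accuracy parameter should be $\e_{\CLO} = (1-c)\e$, mirroring the $(1-c)\delta$ choice in the \CM-to-\CLO reduction.

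Next I would argue that a solution to this \CLO instance yields a solution to the \GCM instance. There are three cases. If we get a \CLO-solution of type \ref{c_fixpoint}, i.e.\ a point $x$ with $p(x) \le p(f(x)) - \e_{\CLO}$, then I claim $x$ is an \ref{m_fixpoint}-solution or we can extract an \ref{m_not_contracting}-violation. The argument is the standard one: if $d(f(x),x) > \e$, then since $f$ is $c$-contracting w.r.t.\ $d$ we have $d(f(f(x)), f(x)) \le c\, d(f(x), x)$, hence $p(f(x)) = d(f(f(x)),f(x)) \le c\, p(x)$, so $p(x) - p(f(x)) \ge (1-c) p(x) > (1-c)\e = \e_{\CLO}$, contradicting $p(x) \le p(f(x)) - \e_{\CLO}$; therefore $d(f(x),x) \le \e$ and $x$ is an \ref{m_fixpoint}-solution. (If instead one wants to be fully careful: the contraction inequality $d(f(f(x)),f(x)) \le c\,d(f(x),x)$ is exactly what the promise gives; if it somehow fails we would output the pair $(f(x), x)$ as an \ref{m_not_contracting}-violation, but under the promise this does not arise.) If we get a \CLO-solution of type \ref{c_bad_f}, i.e.\ points $x,y$ with $\Norm{f(x)-f(y)} > \lambda_{\CLO}\Norm{x-y} \ge \lambda\Norm{x-y}$, then $(x,y)$ is directly an \ref{m_bad_f}-violation for \GCM. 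If we get a \CLO-solution of type \ref{c_bad_p}, i.e.\ points $x,y$ with $\Norm{p(x)-p(y)} > \lambda_{\CLO}\Norm{x-y}$, then I would unfold $p(x)-p(y) = d(f(x),x) - d(f(y),y)$ and use the triangle inequality for $d$ together with the meta-metric properties to show that either $d$ violates $\gamma$-continuity — giving an \ref{m_bad_metametric}-violation — or $f$ violates $\lambda$-continuity — giving an \ref{m_bad_f}-violation. Concretely, $|d(f(x),x) - d(f(y),y)| \le |d(f(x),x) - d(f(x),y)| + |d(f(x),y) - d(f(y),y)|$, and each term is controlled by $\gamma$ times a norm distance between the argument pairs, namely $\gamma\Norm{x-y}$ and $\gamma\Norm{f(x)-f(y)}$ respectively; if $\Norm{f(x)-f(y)} \le \lambda\Norm{x-y}$ then the whole thing is at most $\gamma(1+\lambda)\Norm{x-y} = \lambda_{\CLO}\Norm{x-y}$, a contradiction, so $\Norm{f(x)-f(y)} > \lambda\Norm{x-y}$ and $(x,y)$ is an \ref{m_bad_f}-violation.

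The final ingredient is to confirm that the reduction is polynomial time: the circuit for $p$ is built in polynomial time from the circuits for $f$ and $d$, and $\e_{\CLO}, \lambda_{\CLO}$ are computed by simple arithmetic from the given parameters (here one should note that $\lambda$ and $\gamma$ are part of the input or at least polynomial-time accessible from it — if they are not given explicitly, one uses the standard observation that an arithmetic circuit over $[0,1]^k$ has a Lipschitz constant that is at most exponential but expressible with polynomially many bits, which suffices). Also, in all cases the dimension is $3$, matching the dimension in the definition of \CLO, so no padding is needed; if the \GCM domain had a different fixed dimension one would pad or embed in the usual way.

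The main obstacle I expect is handling the type-\ref{c_bad_p} solutions cleanly: a bound on $\Norm{p(x)-p(y)}$ only tells us the \emph{composite} map $x \mapsto d(f(x),x)$ is not Lipschitz, and we need to attribute the failure to either $f$ or $d$ individually, which requires the triangle inequality for $d$ to be valid — but that is one of the defining meta-metric properties, and in \GCM (unlike \MMCM) we are \emph{promised} $d$ is a genuine meta-metric, so the triangle inequality holds unconditionally and the attribution argument goes through. The one subtlety is the choice of the product norm on $[0,1]^3 \times [0,1]^3$ used to measure continuity of $d$: I would fix it to be consistent with the norm $\Norm{\cdot}$ used for $f$ and absorb any constant-factor discrepancy into the definition of $\lambda_{\CLO}$, so that the inequalities line up exactly.
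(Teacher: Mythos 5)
Your proof is essentially the same as the paper's: you set $p(x)=d(f(x),x)$, choose $\eps_{\CLO}=(1-c)\eps$ and $\lambda_{\CLO}=\gamma(\lambda+1)$ (the paper writes this constant as $(\lambda+1)\delta$, using $\delta$ for the continuity constant of $d$), and your case analysis on the three \CLO solution types matches the paper's line by line. One small inaccuracy worth flagging: your closing remark that attributing a type-\ref{c_bad_p} failure to $f$ or $d$ ``requires the triangle inequality for $d$'' is not right --- the decomposition $\Abs{d(f(x),x)-d(f(y),y)}\le\Abs{d(f(x),x)-d(f(x),y)}+\Abs{d(f(x),y)-d(f(y),y)}$ is the triangle inequality for real absolute values, and the bounds on each term use only the promised $\gamma$-continuity of $d$ and $\lambda$-continuity of $f$; neither your argument nor the paper's invokes the meta-metric triangle inequality for $d$ at this step.
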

\begin{proof}
  Given an instance $X=(f,d,\e,c,\lambda,\delta)$ of \GCM, we set $p(x) \triangleq d(f(x),x)$. Then our \CLO instance is the following:
  \[Y=(f, p, \lambda' \triangleq (\lambda + 1) \delta, \e' \triangleq (1-c)\e).\]
Now consider any solution to $Y$. If our solution is of type \ref{c_fixpoint}, a
	point $x$ such that $p(f(x)) > p(x) - \e'$, then we have $d(f(f(x)),f(x))
	> d(f(x),x) - (1-c)\e$, and either $d(f(x),x) \leq \e$, in which case $x$ is
	a solution for $X$, or $d(f(x),x) > \e$. In the latter case, we can divide
	on both sides to get \[ \frac{d(f(f(x)),f(x))}{d(f(x),x)} > 1-
	\frac{(1-c)\e}{d(f(x),x)} \geq 1- (1-c) = c\text{,} \] giving us a violation
	of the claimed contraction factor of $c$, and a solution of type
	\ref{m_not_contracting}.

If our solution is a pair of points $x,y$ of type \ref{c_bad_f} satisfying $\Norm{f(x) - f(y)}/\Norm{x-y} > \lambda' \geq \lambda$, then this gives a violation of the $\lambda$-continuity of $f$. If instead $x,y$ are of type \ref{c_bad_p} so that $\Norm{p(x) - p(y)}/\Norm{x-y} > \lambda'$, then we have
\[ \Abs{d(f(x),x) - d(f(y),y)} = \Abs{p(x) - p(y)} > (\lambda+1)\delta \Norm{x-y}\text{.} \]
We now observe that if
\[ \Abs{d(f(x),x) - d(f(y),y)} \leq \delta (\Norm{f(x)-f(y)} + \Norm{x - y}) \quad \text{and} \Norm{f(x) - f(y)}/\Norm{x-y} \leq \lambda,\] 
	then we would have
\[\Abs{d(f(x),x) - d(f(y),y)} \leq \delta (\Norm{f(x) - f(y)} + \Norm{x-y}) \leq (\lambda + 1)\delta \Norm{x-y},\] 
which contradicts the above inequality, so either the $\delta$ continuity of $d$ must be violated giving a solution to $X$ of type \ref{m_bad_metametric} or the $\lambda$ continuity of $f$ must be violated giving a solution of type \ref{m_bad_f}.
Thus we have shown that \GCM is in \CLS.
\end{proof}

Now that we have shown that \GCM is total, we note 
that since the solutions of \GCM are a subset of those for \MMCM, we have the following.

\begin{observation}
\label{obs:MMCMtoGCM}
\MMCM can be reduced in polynomial-time to \GCM.
\end{observation}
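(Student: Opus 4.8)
The plan is to use the \emph{identity reduction}. Given an \MMCM instance $X = (f, d, \Norm{\cdot}_r, \e, c, \lambda, \delta)$, I would map it to the \GCM instance $X' := X$ consisting of exactly the same data; this map is computable in linear time. Since the promise of \GCM (Definition~\ref{def:GCM}) is word-for-word the same as that of \MMCM (Definition~\ref{def:MMCM})---namely that $f$ is $c$-contracting with respect to $d$, $\lambda$-continuous with respect to $\Norm{\cdot}$, and $d$ is $\gamma$-continuous with respect to $\Norm{\cdot}$---the tuple $X'$ is a legitimate \GCM instance precisely when $X$ is a legitimate \MMCM instance. By Theorem~\ref{thm:GCMinCLS}, \GCM is total, so $X'$ is guaranteed to have at least one solution; this is what makes the reduction non-vacuous.

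It then remains to turn a solution of $X'$ into a solution of $X$, which I would again do by the identity map. By the definition of \GCM, any solution of $X'$ is of type~\ref{m_fixpoint} or~\ref{m_violation}; the \emph{only} difference between the two problems is that \GCM forbids solutions of type~\ref{mm_violation}. Both types~\ref{m_fixpoint} and~\ref{m_violation} are, verbatim, valid solution types of \MMCM, so the same point (respectively pair or triple of points) returned for $X'$ is immediately a solution of $X$. Hence the solution-recovery map is the identity and is trivially polynomial-time, and composing the two maps shows that \MMCM reduces to \GCM in polynomial time.

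I do not expect any real obstacle here: the whole content is the observation that $\text{Sol}_{\GCM}(X') \subseteq \text{Sol}_{\MMCM}(X)$ for $X' = X$. The only two points that require (minimal) care are (i) checking that the promises of the two problems coincide, so that a valid source instance is mapped to a valid target instance, and (ii) invoking the totality of \GCM from Theorem~\ref{thm:GCMinCLS} so that the produced instance is guaranteed solvable; both are immediate from the definitions.
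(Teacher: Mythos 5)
Your proposal is correct and takes exactly the same route as the paper: the paper also gives the identity reduction, noting just before the observation that \GCM is total (by Theorem~\ref{thm:GCMinCLS}) and that the solution set of \GCM is a subset of that of \MMCM, so any solution returned for the \GCM instance is already a valid \MMCM solution.
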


Thus, by Theorem~\ref{thm:GCMinCLS}, we have that \MMCM is in \CLS.
Next, we show that \MMCM is \CLS-hard by a reduction 
from the canonical \CLS-complete problem \CLO to an instance of \MMCM.
By Observation~\ref{obs:MMCMtoGCM}, we then also have that \GCM is \CLS-hard.

\begin{theorem}
\label{thm:MMCMisCLShard}
  \MMCM is \CLS-hard.
\end{theorem}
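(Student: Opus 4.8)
The plan is to reduce the canonical \CLS‑complete problem \CLO (Definition~\ref{def:CLO}) to \MMCM in polynomial time. Given a \CLO instance $(f,p,\e,\lambda)$ on $[0,1]^3$ — we may assume $\e<1$, since otherwise every point is a solution of type~\ref{c_fixpoint} — I would construct the \MMCM instance which keeps the map $f$ and the norm of \CLO unchanged, uses the meta‑metric
\[ d(x,y) \;=\; p(x)+p(y)+\beta\sum_{i=1}^{3}\Abs{x_i-y_i}\qquad\text{with }\beta:=\e/6, \]
and fixes the remaining parameters as $\e':=\e/2$ (the bound in condition~\ref{m_fixpoint}), $c:=1-\e/2$ (the bound in~\ref{m_not_contracting}), $\lambda':=\lambda$ (the continuity modulus of $f$ in~\ref{m_bad_f}), and $\delta:=2\lambda+\e$ (the continuity modulus of $d$ in~\ref{m_bad_metametric}). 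The circuit for $d$ is only $O(1)$ gates larger than two copies of the circuit for $p$, and all constants are in range and of polynomial bit‑length, so this is a polynomial‑time reduction. (If \CLO is stated with a norm that is not circuit‑friendly or not a $p$‑norm, one uses any fixed $\ell_r$‑norm for the \MMCM instance and absorbs the constant of norm‑equivalence in dimension $3$ and $6$ into the parameters above.)

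Before mapping solutions back, I would check that $d$ is a meta‑metric in the sense of Definition~\ref{def:metametric} for \emph{every} $p:[0,1]^3\to[0,1]$: non‑negativity and symmetry are immediate; $d(x,y)=0$ forces $\sum_i\Abs{x_i-y_i}=0$ and hence $x=y$ (note $d(x,x)=2p(x)$ may be positive, which is precisely why a meta‑metric rather than a metric is needed); and the triangle inequality holds because the $\ell_1$ term obeys it while the extra term $2p(y)$ on the right is non‑negative. Hence our instance has no solution of type~\ref{mm_violation}; and since the solution set of \MMCM contains that of \GCM, which is total (see the proof of Theorem~\ref{thm:GCMinCLS}), the constructed instance does have a solution.

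It then remains to turn each possible \MMCM solution type into a \CLO solution. A solution $x$ of type~\ref{m_fixpoint} has $d(f(x),x)\le\e/2$, which forces $p(x)\le\e/2$ and $p(f(x))\le\e/2$, so $p(f(x))\ge 0>p(x)-\e$, a solution of type~\ref{c_fixpoint}. A solution of type~\ref{m_bad_f} is literally a solution of type~\ref{c_bad_f}. For a solution $(x,y),(x',y')$ of type~\ref{m_bad_metametric}, expanding $d$, using $\Abs{\sum_i\Abs{x_i-y_i}-\sum_i\Abs{x'_i-y'_i}}\le\Norm{x-x'}_1+\Norm{y-y'}_1$, and comparing the ambient norms on $\R^3$ and $\R^6$ shows $\Abs{p(x)-p(x')}+\Abs{p(y)-p(y')}>\lambda\bigl(\Norm{x-x'}+\Norm{y-y'}\bigr)$, so one of the pairs $(x,x')$, $(y,y')$ is a solution of type~\ref{c_bad_p}. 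The crux is a solution $(x,y)$ of type~\ref{m_not_contracting}, i.e.\ $d(f(x),f(y))>c\,d(x,y)$: substituting the definition of $d$, discarding the non‑negative term $c\beta\sum_i\Abs{x_i-y_i}$ on the right and bounding $\beta\sum_i\Abs{f(x)_i-f(y)_i}\le 3\beta=\e/2$ on the left gives $p(f(x))+p(f(y))>c\,(p(x)+p(y))-\e/2$; hence one summand, say the one for $x$, satisfies $p(f(x))-c\,p(x)>-\e/4$, and then $p(f(x))>c\,p(x)-\e/4\ge p(x)-(1-c)-\e/4=p(x)-\tfrac34\e>p(x)-\e$, so $x$ is a solution of type~\ref{c_fixpoint}.

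The main obstacle — and the reason the reduction requires care — is that \CLO is a local‑search problem in which the potential decreases only \emph{additively} (by $\e$ per application of $f$), whereas contraction is a \emph{multiplicative} requirement, so the map $f$ need not actually be $c$‑contracting with respect to $d$. The resolution is that we never rely on the promise holding: \MMCM admits a certified violation of contraction (or of continuity) as a valid answer, and the two calibrations that make everything fit are (i) taking $c=1-\e/2$, so that any certified multiplicative failure of contraction translates into an additive $\Theta(\e)$ failure to improve $p$, and (ii) giving the $\ell_1$ tie‑breaking term the tiny weight $\beta=\e/6$, so $d$ is a genuine meta‑metric yet the tie‑breaker cannot overwhelm the potential terms in any of the cases. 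With all solution types accounted for, this completes the reduction, so \MMCM is \CLS‑hard.
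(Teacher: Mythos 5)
Your proof is correct and follows the same basic strategy as the paper (reduce \CLO to \MMCM by defining a meta‑metric from the potential $p$), but the concrete meta‑metric is different and this changes what needs to be argued. The paper simply takes $d(x,y)=p(x)+p(y)+1$. The constant $+1$ does two jobs at once: it makes $d\ge 1>\e$ always, so solutions of type~\ref{m_fixpoint} cannot exist at all, and it makes the meta‑metric axiom ``$d(x,y)=0\Rightarrow x=y$'' hold vacuously. This makes the case analysis shorter. You instead take $d(x,y)=p(x)+p(y)+\beta\sum_i\Abs{x_i-y_i}$ with a tiny $\beta=\e/6$. That genuinely (rather than vacuously) enforces the implication $d(x,y)=0\Rightarrow x=y$, but the price is that $d(x,y)$ can now be as small as $2p(x)$, so type‑\ref{m_fixpoint} solutions do occur and you must map them back, and the extra $\beta$‑term has to be tracked through the \ref{m_not_contracting} and \ref{m_bad_metametric} calculations, which is why you need the additional calibrations $\e'=\e/2$, $\delta=2\lambda+\e$, $\beta=\e/6$. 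Both arguments hinge on the same core observation, which you state explicitly and the paper uses implicitly: setting $c=1-\Theta(\e)$ and using $p\le 1$ turns a certified \emph{multiplicative} failure of contraction into a certified \emph{additive} $\Theta(\e)$ failure to improve $p$, i.e.\ a type~\ref{c_fixpoint} solution. In short, the paper's additive constant is the slicker trick because it exploits the freedom a meta‑metric gives you at the diagonal, but your tie‑break construction is also valid and arguably more instructive about where the meta‑metric axioms are actually being used.
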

\begin{proof}
  Given an instance $X = (f,p,\e,\lambda)$ of \CLO, we construct a meta-metric $d(x,y) = p(x) + p(y) + 1$. 
	Since $p$ is non-negative, $d$ is non-negative, and by construction, $d$ is symmetric and satisfies the triangle inequality. Finally, $d(x,y) > 0$ for all choices of $x$ and $y$ so $d$ is a valid meta-metric (Definition~\ref{def:metametric}) Furthermore, if $p$ is $\lambda$-continuous with respect to the given $p$-norm $\Norm{\cdot}_r$, then $d$ is ($2^{1/r-1}\lambda$)-continuous with respect to $\Norm{\cdot}_r$. For clarity, in the below proof we'll omit the subscript $r$ when writing the norm of an expression. To see this we observe that $x,x',y,y'\in [0,1]^n$, we have $\Norm{p(x)-p(x')}/\Norm{x-x'} \leq \lambda$ and $\Norm{p(y) - p(y')}/\Norm{y-y'} \leq \lambda$, so
  \begin{align*}
    \frac{\Norm{d(x,y) - d(x',y')}}{\Norm{(x,y) - (x',y')}}
    &= \frac{\Norm{p(x) - p(x') + p(y) - p(y') + 1 - 1}}{\Norm{(x,y) -(x',y')}}\ \leq\ \frac{\lambda\Norm{x-x'} + \lambda\Norm{y-y'}}{\Norm{(x,y) -(x',y')}}\\
    &\leq \frac{\lambda\Norm{x-x'} + \lambda\Norm{y-y'}}{2^{1-1/r}(\Norm{x-x'} + \Norm{y-y'})} \leq 2^{1/r-1}\lambda \text{.}
  \end{align*}
We output an instance $Y = (f,d,\e'=\e,c = 1-\e/4,\delta=\lambda, \lambda'=2^{1/r-1}\lambda)$.

Now we consider solutions for the instance $Y$ and show that they correspond to solutions for our input instance $X$.
First, we consider a solution of type \ref{m_fixpoint}, a point $x\in [0,1]^3$ such that $d(f(x),x) \leq \e'=\e$. We have $p(f(x)) + p(x) + 1 \leq \e$, but this can't happen since $\e < 1$ and $p$ is non-negative, so solutions of this type cannot exist.

Now consider a solution that is a pair of points $x,y\in [0,1]^3$ satisfying one of the conditions in \ref{m_violation}. If the solution is of type \ref{m_not_contracting}, we have $d(f(x),f(y)) > c d(x,y)$, and by our choice of $c$ this is exactly
\[\frac{d(f(x),f(y))}{d(x,y)} > (1-\e/4)\] and
\begin{align*}
  p(f(x)) + p(f(y)) + 1 &> (1-\e/4) (p(x) + p(y) + 1)\\
                        &\geq p(x) + p(y) - 3\e/4
\end{align*}
so either $p(f(x)) > p(x) - \e$ or $p(f(y)) > p(y) - \e$, and one of $x$ or $y$ must be a fixpoint solution to our input instance.
Solutions of type \ref{m_bad_metametric} or \ref{m_bad_f} immediately give us violations of the $\lambda$-continuity of $f$, and thus solutions to $X$.

This completes the proof that \MMCM is \CLS-hard.
\end{proof}

So combining these results we have the following.

\begin{theorem}
\label{thm:MMCM-CLScomplete}
\MMCM and \GCM are \CLS-complete.
\end{theorem}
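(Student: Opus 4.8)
The plan is to obtain the theorem by assembling, with no new construction, the ingredients already established in this section. For the membership direction: Theorem~\ref{thm:GCMinCLS} places \GCM in \CLS directly, by sending a \GCM instance $(f,d,\e,c,\lambda,\delta)$ to the \CLO instance $(f,\,p,\,\lambda'=(\lambda+1)\delta,\,\e'=(1-c)\e)$ with potential $p(x)=d(f(x),x)$. To get \MMCM$\,\in\,$\CLS I would compose the polynomial-time reduction \MMCM$\,\le_p\,$\GCM of Observation~\ref{obs:MMCMtoGCM} with this membership, using that \CLS is closed under polynomial-time reductions. I would also remark that the reduction of Observation~\ref{obs:MMCMtoGCM} is essentially the identity on instances: the proof of Theorem~\ref{thm:GCMinCLS} does not use that $d$ is a genuine meta-metric, so an \MMCM instance read as a \GCM instance is legitimate, and every \GCM solution (necessarily of type~\ref{m_fixpoint} or~\ref{m_violation}) is a fortiori an \MMCM solution.

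For the hardness direction: Theorem~\ref{thm:MMCMisCLShard} already shows \MMCM is \CLS-hard, via a reduction from the canonical \CLS-complete problem \CLO that takes $X=(f,p,\e,\lambda)$ to the \MMCM instance with meta-metric $d(x,y)=p(x)+p(y)+1$ and parameters $(\e'=\e,\,c=1-\e/4,\,\delta=\lambda,\,\lambda'=2^{1/r-1}\lambda)$. To transfer hardness to \GCM I would simply chain this with Observation~\ref{obs:MMCMtoGCM}, obtaining \CLO$\,\le_p\,$\MMCM$\,\le_p\,$\GCM; since \CLO is \CLS-hard, so is \GCM. Putting membership and hardness together yields that \MMCM and \GCM are both \CLS-complete, as claimed.

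The substantive content sits entirely in Theorems~\ref{thm:GCMinCLS} and~\ref{thm:MMCMisCLShard}, which are in hand, so the proof of this final theorem is pure bookkeeping and I expect no obstacle in it. The points I would re-verify while writing the assembly are the ones those two theorems hinge on: in the membership proof, that the parameters $\lambda'=(\lambda+1)\delta$ and $\e'=(1-c)\e$ are tuned so that a type-\ref{c_fixpoint} \CLO solution is forced either to be an approximate fixpoint of $d$ or to witness a genuine violation of the contraction factor $c$, and a type-\ref{c_bad_p} solution to violate the continuity of $d$ or of $f$; and in the hardness proof, that $d(x,y)=p(x)+p(y)+1$ satisfies the four axioms of Definition~\ref{def:metametric} (non-negativity; $d(x,y)=0\Rightarrow x=y$, which is vacuous since $d>0$ everywhere; symmetry; triangle inequality) and that its continuity modulus in the input $\ell_r$-norm is bounded by a norm-dependent constant multiple of $\lambda$, so that solutions of types~\ref{m_bad_metametric} and~\ref{m_bad_f} cannot arise for the constructed instance and, since $d>0$, neither can a type-\ref{m_fixpoint} solution.
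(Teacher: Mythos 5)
Your proposal is correct and follows the paper's own assembly exactly: membership of \GCM via Theorem~\ref{thm:GCMinCLS}, membership of \MMCM by composing Observation~\ref{obs:MMCMtoGCM} with it, hardness of \MMCM via Theorem~\ref{thm:MMCMisCLShard}, and hardness of \GCM by chaining $\CLO \le_p \MMCM \le_p \GCM$. Your side remarks (that the \MMCM$\to$\GCM reduction is the identity on instances and that \GCM solutions are a subset of \MMCM solutions) are accurate and match the paper's rationale for Observation~\ref{obs:MMCMtoGCM}.
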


Finally, as mentioned in the introduction, we note the following.
Contemporaneously and independently of our work, Daskalakis, Tzamos, and
Zampetakis~\cite{DTZ17} defined the problem \MBanach, which is like \MMCM except
that it requires a metric, as opposed to a meta-metric.  They show that \MBanach
is \CLS-complete.  Since every metric is a meta-metric, \MBanach can be
trivially reduced in polynomial-time to \MMCM. Thus, their \CLS-hardness result
is stronger than our Theorem~\ref{thm:MMCMisCLShard}.
The
containment of \MBanach in \CLS is implied by the containment of \MMCM in \CLS. 
To prove
that \MMCM is in \CLS, we first reduce to \GCM, which we then show is in \CLS.
Likewise, the proof in~\cite{DTZ17} that \MBanach is in \CLS works even when
violations of the metric properties are not allowed as solutions, so they, like
us, actually show that \GCM is in \CLS.

\end{document}